\definecolor{mygray}{gray}{0.75}
\let\oldReturn\Return
\renewcommand{\Return}{\State\oldReturn}
\theoremstyle{definition}
\declaretheorem{theorem}
\newtheorem{observation}{Observation}
\newtheorem{lemma}{Lemma}
\newtheorem{definition}{Definition}
\newtheorem{example}{Example}
\DeclareMathOperator*{\argmax}{arg\,max}
\DeclareMathOperator*{\argmin}{arg\,min}
\newcommand{\para}[1]{\vskip 0.06in\noindent {\bf #1} }
\newcommand{\etal}{\textit{et al}. }
\newcommand{\notion}{CID}
\newcommand{\notionfull}{Customized Interactive Display}
\newcommand{\scenario}{SAVG}
\newcommand{\scenariofull}{Social-Aware VR Group}
\newcommand{\configuration}{\scenario\ $k$-Configuration}
\newcommand{\configurationfull}{\scenariofull\ $k$-Configuration}
\newcommand{\personalized}{personalized}
\newcommand{\group}{group}
\newcommand{\preference}{subgroup-by-preference}
\newcommand{\friendship}{subgroup-by-friendship}
\newcommand{\prob}{SVGIC}
\newcommand{\problem}{\underline{S}ocial-aware \underline{V}R \underline{G}roup-\underline{I}tem \underline{C}onfiguration}
\newcommand{\probtwo}{SVGIC-ST}
\newcommand{\problemtwo}{\underline{S}ocial-aware \underline{V}R \underline{G}roup-\underline{I}tem \underline{C}onfiguration with \underline{T}eleportation and \underline{S}ize constraint}
\newcommand{\algo}{AVG}
\newcommand{\algofull}{\underline{A}lignment-aware \underline{V}R Sub\underline{g}roup Formation}
\newcommand{\algor}{AVG}
\newcommand{\algod}{AVG-D}
\newcommand{\algodfull}{\underline{D}eterministic \underline{A}lignment-aware \underline{V}R Sub\underline{g}roup Formation}
\newcommand{\fragsol}{fractional solution}
\newcommand{\roundingitem}{focal item}
\newcommand{\roundingslot}{focal slot}
\newcommand{\roundingalpha}{grouping threshold}
\newcommand{\stageone}{config phase}
\newcommand{\stagetwo}{focus phase}
\newcommand{\stagethree}{CSF}
\newcommand{\stagethreefull}{Co-display Subgroup Formation}
\newcommand{\weight}{utility factor}
\begin{document}

\title{Optimizing Item and Subgroup Configurations for Social-Aware VR Shopping}

\numberofauthors{1} 

\DeclareRobustCommand*{\IEEEauthorrefmark}[1]{\raisebox{0pt}[0pt][0pt]{\textsuperscript{\footnotesize\ensuremath{\ifcase#1\or *\or \dagger\or \ddagger\or%
    \mathsection\or \mathparagraph\or \|\or **\or \dagger\dagger%
    \or \ddagger\ddagger \else\textsuperscript{\expandafter\romannumeral#1}\fi}}}}

\author{
\alignauthor
Shao-Heng Ko\IEEEauthorrefmark{1},
Hsu-Chao Lai\IEEEauthorrefmark{1}\IEEEauthorrefmark{2},
Hong-Han Shuai\IEEEauthorrefmark{2},\\
De-Nian Yang\IEEEauthorrefmark{1},
Wang-Chien Lee\IEEEauthorrefmark{3},
Philip S. Yu\IEEEauthorrefmark{4}\\
\affaddr{\IEEEauthorrefmark{1}\textit{Academia Sinica, Taiwan} \IEEEauthorrefmark{2}\textit{National Chiao Tung University, Taiwan}\\
\IEEEauthorrefmark{3}\textit{The Pennsylvania State University, USA} \IEEEauthorrefmark{4}\textit{University of Illinois at Chicago, USA}}
\email{\IEEEauthorrefmark{1}\{arsenefrog, hclai0806, dnyang\}@iis.sinica.edu.tw \IEEEauthorrefmark{2}hhshuai@g2.nctu.edu.tw\\
\IEEEauthorrefmark{3}wlee@cse.psu.edu \IEEEauthorrefmark{4}psyu@cs.uic.edu}
}

\maketitle

\begin{abstract}
Shopping in VR malls has been regarded as a paradigm shift for E-commerce, but most of the conventional VR shopping platforms are designed for a single user. In this paper, we envisage a scenario of VR group shopping, which brings major advantages over conventional group shopping in brick-and-mortar stores and Web shopping: 1) configure flexible display of items and partitioning of subgroups to address individual interests in the group, and 2) support social interactions in the subgroups to boost sales. Accordingly, we formulate the \problem\ (\prob) problem to configure a set of displayed items for flexibly partitioned subgroups of users in VR group shopping. We prove \prob\ is $\mathsf{APX}$-hard \revise{and also $\mathsf{NP}$-hard to approximate within $\frac{32}{31} - \epsilon$}. We design an approximation algorithm based on the idea of \stagethreefull\ (\stagethree) to configure proper items for display to different subgroups of friends. Experimental results on real VR datasets and a user study with hTC VIVE manifest that our algorithms outperform baseline approaches by at least 30.1\% of solution quality.
\end{abstract}

\section{Introduction} \label{sec:intro}

Virtual Reality (VR) has emerged as a disruptive technology for social \cite{FBsocial}, travel \cite{VRtourism}, and E-commerce applications. \revise{Particularly, a marketing report about future retails from Oracle \cite{Oracle} manifests that 78\% of online retailers already have implemented or are planning to implement VR and AI. Recently, International Data Corporation (IDC) forecasts the worldwide spending on VR/AR to reach 18.8 billion USD in 2020 \cite{IDC}, including \$1.5 billion in retails. It also foresees the VR/AR market to continue an annual growth rate of 77\% through at least 2023. Moreover, shopping in VR malls is regarded as a paradigm shift for E-commerce stores, evident by emerging VR stores such as Amazon's VR kiosks \cite{Amazon}, eBay and Myer's VR department store \cite{eBay}, Alibaba Buy+ \cite{Alibaba}, and IKEA VR Store \cite{IKEA}.} Although these VR shopping platforms look promising, most of them are designed only for a single user instead of a group of friends, who often appear in brick-and-mortar stores. As a result, existing approaches for configuring the displayed items in VR shopping malls are based on personal preference (similar to online web shopping) without considering potential social discussions amongst friends on items of interests, which is reported as beneficial for boosting sales in the marketing literature \cite{MY14, ZX14, XZ18}. In this paper, with the support of \textit{\notionfull} (\notion), we envisage the scenario of group shopping with families and friends in the next-generation VR malls, where item placement is customized flexibly in accordance with both user preferences and potential social interactions during shopping.

The \notion\ technology \cite{RL14,LH18} naturally enables VR group shopping systems with two unique strengths:  1) \textit{Customization}. \revise{IKEA (see video \cite{IKEA} at 0:40) and Lowe's \cite{Lowes} respectively launch VR store applications where the displayed furniture may adapt to the preferences of their users, which attracts more than 73\% of the users to recommend IKEA VR store \cite{IKEAsteam}. Alibaba's \cite{AlibabaForbes} and eBay's VR stores \cite{ebaymacy} also devote themselves to provide personalized product offers. According to a marketing survey, 79\% of surveyed US, UK, and China consumers are likely to visit a VR store displaying customized products \cite{forbessurvey}.} Similar to group traveling and gaming in VR, the virtual environments (VEs) for individual users in VR group shopping need not be identical. While it is desirable to have consistent high-level layout and user interface for all users, the displayed items for each user can be customized based on her preferences. As CID allows different users to view different items at the same display slot, personalized recommendation is supported.

2) \textit{Social Interaction}. While a specific slot no longer needs to display the same items to all users, users viewing a common item may engage a discussion on the item together, potentially driving up the engagement and purchase conversion \cite{ZX14, XZ18}. As a result, the displayed items could be tailored to maximize potential discussions during group shopping. \revise{A survey of 1,400 consumers in 2017 shows that 65\% of the consumers are excited about VR shopping, while 54\% of them acknowledge social shopping as their ways to purchase products~\cite{wssurvey}. The L.E.K. consulting survey \cite{leksurvey} shows that 70\% of 1,000 consumers who had already experienced VR technology are strongly interested in virtual shopping with friends who are not physically present. Embracing the trend, Shopify \cite{shopify} and its technical partner Qbit \cite{Qbit} build a social VR store supporting attendance of multiple users with customized display \cite{Qbitstore}.} In summary, compared with brick-and-mortar shopping, VR group shopping can better address the preferences of individuals in the group due to the new-found flexibility in item placement, which can be configured not only for the group as a whole but also for individuals and subgroups. On the other hand, compared with conventional E-commerce shopping on the Web, VR group shopping can boost sales by facilitating social interactions and providing an immersive experience.

\begin{figure}[tp]
	\centering
	\subfigure[b][Comparison of different approaches.] {\
		\centering \includegraphics[width = 0.95 \columnwidth]{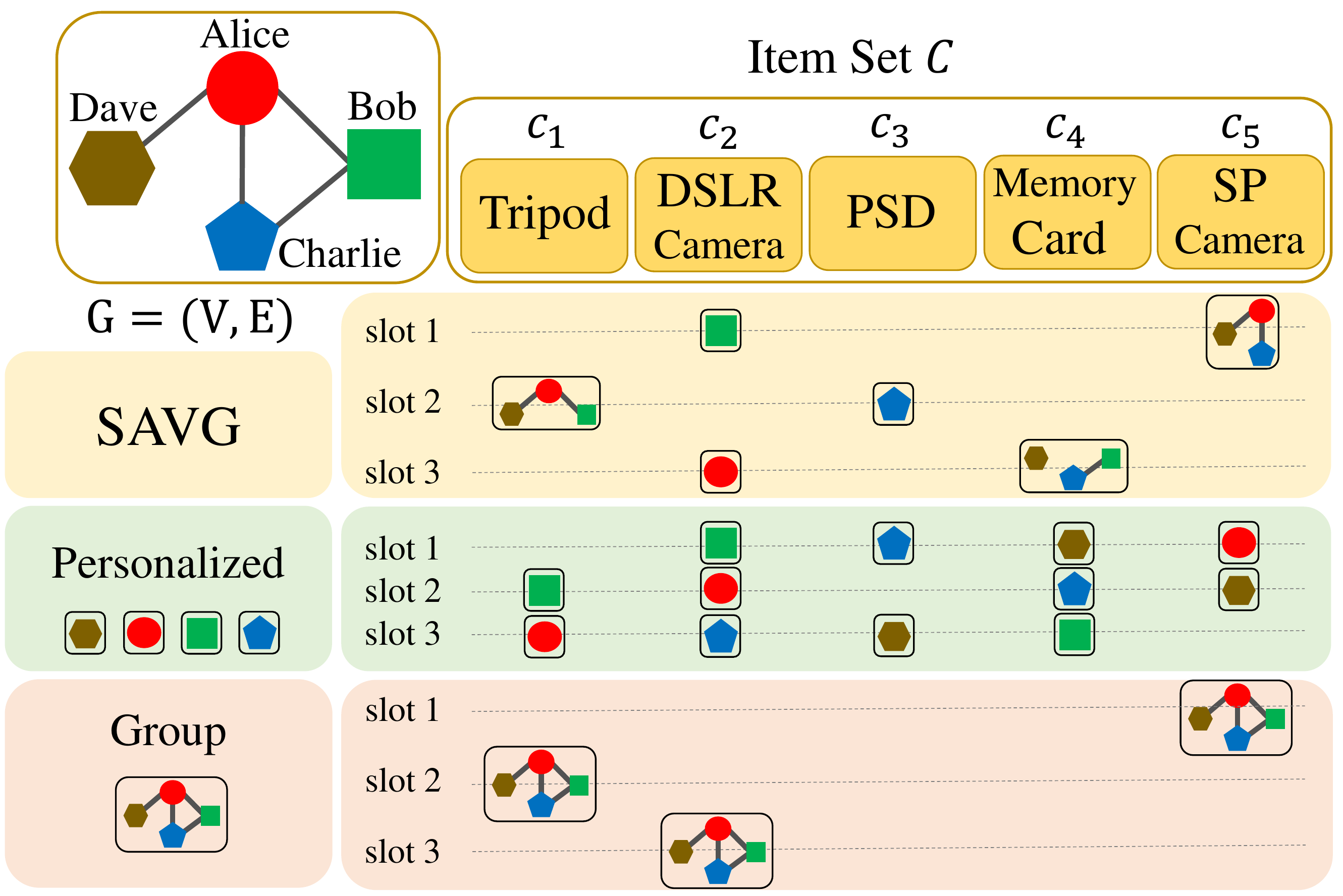}
		\label{fig:example} }
	\subfigure[b][\revise{Item assignments of different approaches.}] {\
		\centering \includegraphics[width = 0.95 \columnwidth]{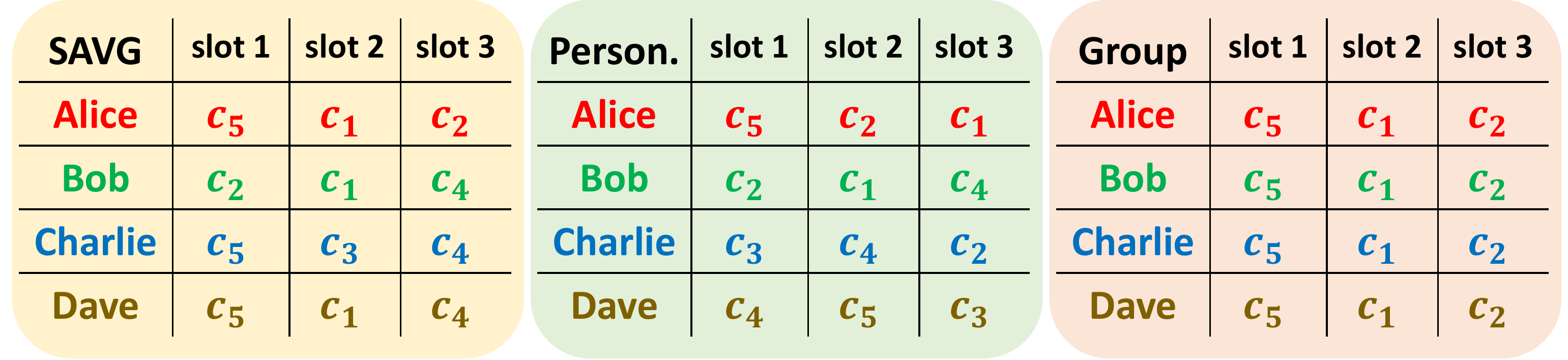}
		\label{fig:example_add} }
	\subfigure[b][\centering Alice's view at slot 1.] {\
		\centering \includegraphics[width = 0.45 \columnwidth]{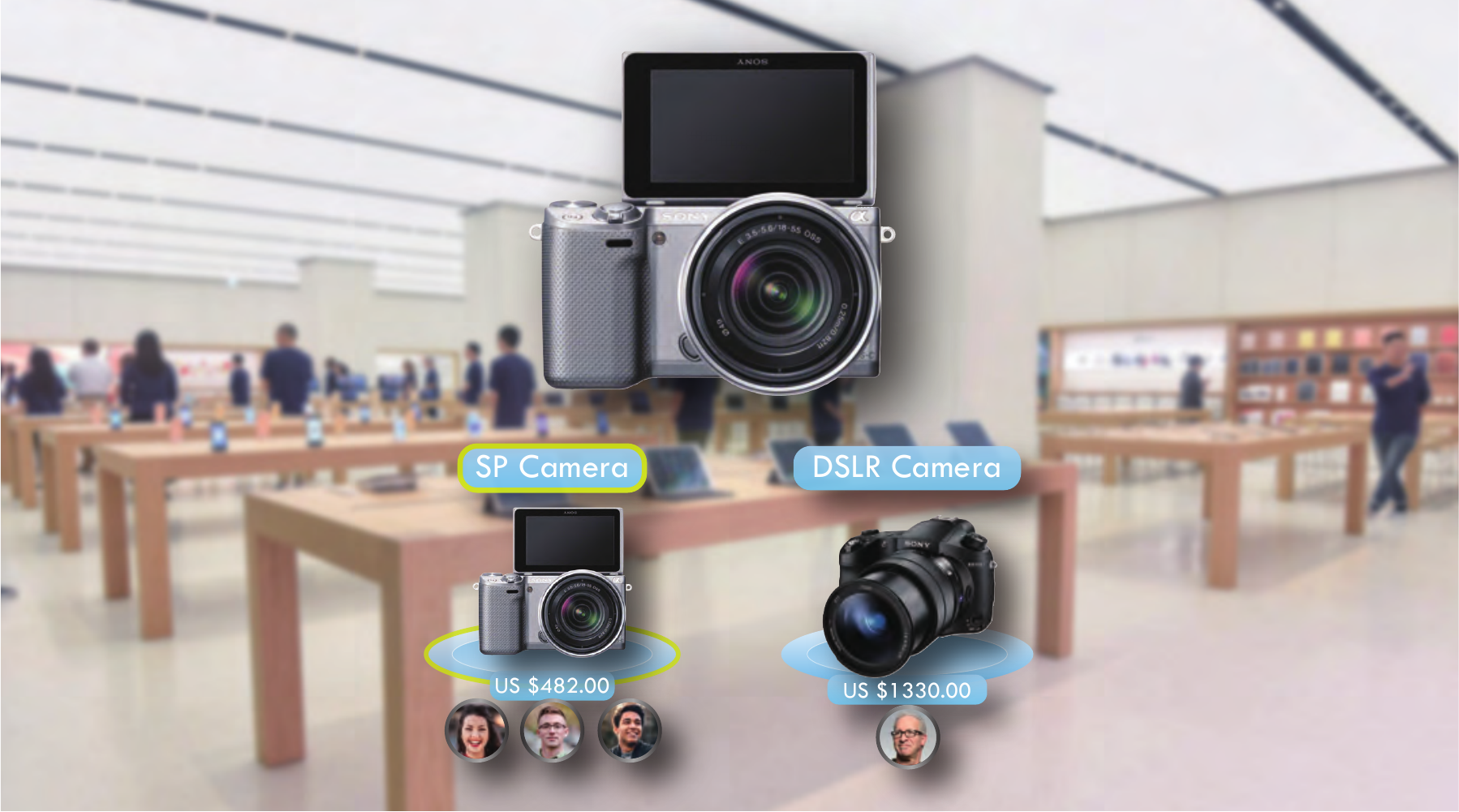}
		\label{fig:example_slot1} } 
		\hfill
	\subfigure[b][\centering Alice's view at slot 2.] {\
		\centering \includegraphics[width = 0.45 \columnwidth]{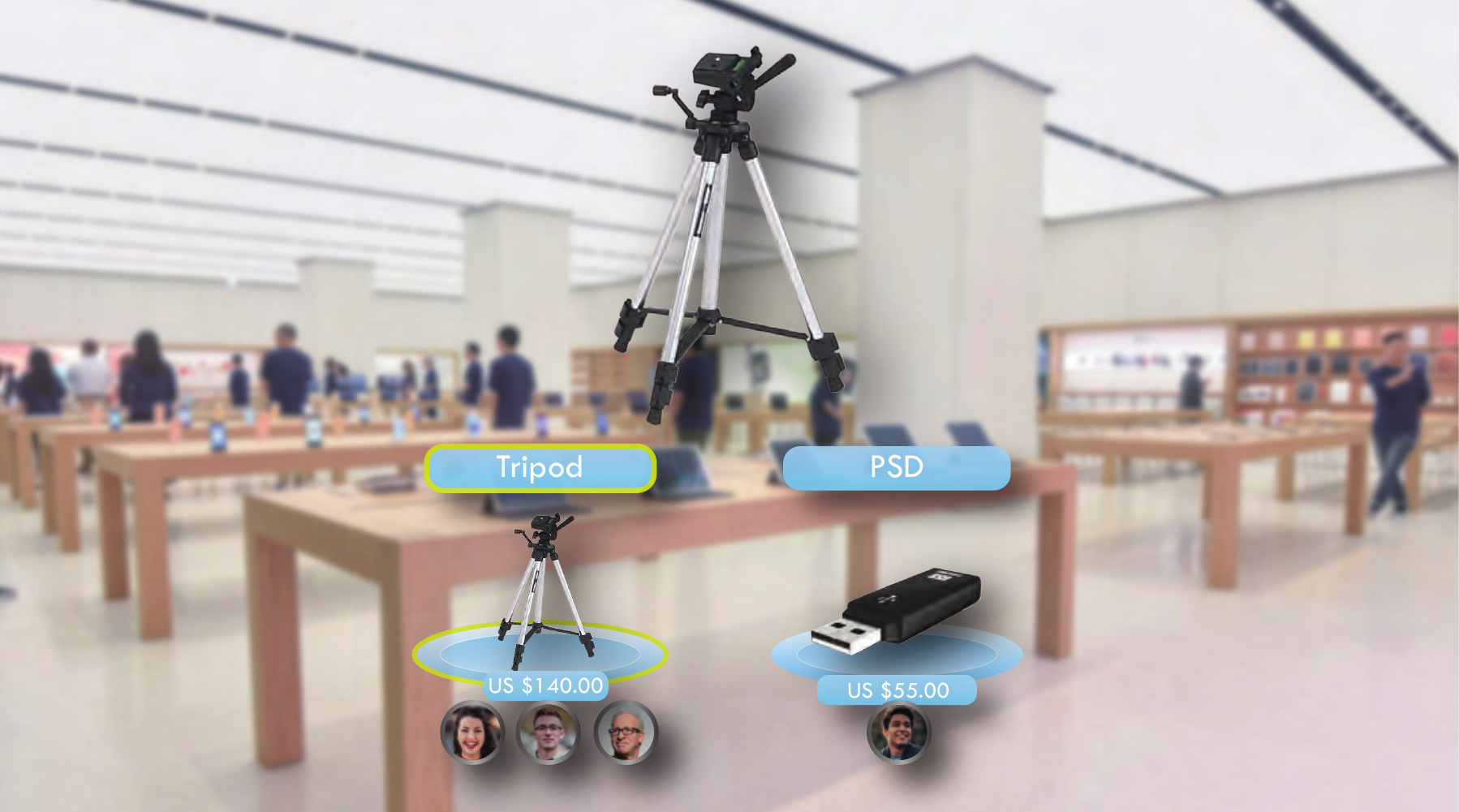}
		\label{fig:example_slot2} }
	\caption{Illustrative example.}
	\label{fig:comp_example}
\end{figure}

\revise{Encouraged by the above evidence, we make the first attempt to formulate the problem of configuring displayed items for {\em \scenariofull} (\scenario) shopping. } Our strategy is to meet the item preferences of individual users via customization while enhancing potential discussions by displaying common items to a shopping group (and its subgroups). For customization, one possible approach is to use existing personalized recommendation techniques \cite{HR16AAAI, XH17, JC17SIGIR} to infer individual user preferences, and then retrieve the top-$k$ favorite items for each user. However, this \personalized\ approach fails to promote items of common interests that may trigger social interactions. 
To encourage social discussions, 
conventional group recommendation systems \cite{LH14,SQ16,SD17, CYS18} may be leveraged to retrieve a bundled itemset for all users. Nevertheless, by presenting the same configuration for the whole group of all users, this approach may sacrifice the diverse individual interests. 
In the following example, we illustrate the difference amongst the aforementioned approaches, in contrast to the desirable \scenario\ configuration we target on.

\begin{example} [Illustrative Example] \label{example:illustrative}
Figure \ref{fig:example} depicts a scenario of group shopping 
for a VR store of digital photography. At the upper left is a social network $G = (V, E)$ of four VR users, Alice, Bob, Charlie, and Dave (indicated by red circles, green squares, blue pentagons, and brown hexagons, respectively). On top is an item set $C$ consisting of five items: tripod, DSLR camera, portable storage device (PSD), memory card, and self-portrait (SP) camera. Given three display slots, the shaded areas, corresponding to different configuration approaches, illustrate how items are displayed to individuals or subgroups (in black rectangles), respectively. For instance, in \scenario, the SP camera is displayed at slot 1 to Alice, Charlie, and Dave to stimulate their discussion. \revise{Figure \ref{fig:example_add} shows the same configuration as Figure \ref{fig:example} by depicting the individual item assignments for each user with different approaches.} 

A configuration based on the \personalized\ approach is shown in the light-green shaded area. It displays the top-3 items of interests to individual users based on their preferences (\revise{which is consistent with the numerical example shown in Table \ref{table:sat} in Section \ref{sec:problem}}). This configuration, aiming to increase the exposure of items of interests to customers, does not have users seeing the same item at the same slot. Next, the configuration shaded in light-orange, based on the \group\ approach, displays exactly the same items to every user in the group. While encouraging discussions in the whole group, this configuration may sacrifice some individual interests and opportunities to sell some items, e.g., Dave may not find his favorite item (the memory card). 
Aiming to strike a balance between the factors of individual preferences and social interactions, the \scenario\ configuration forms subgroups flexibly across the displayed slots, having some users seeing the same items at the same slots (to encourage discussions), yet finding items of individual interests at the remaining slots. For example, the tripod is displayed at slot 2 to all users except for Charlie, who sees the PSD on his own. The DSLR camera is displayed to Bob at slot 1 and to Alice at slot 3, respectively, satisfying their individual interests. Figures \ref{fig:example_slot1} and \ref{fig:example_slot2} show Alice's view at slot 1 and slot 2, respectively, in this configuration. As shown, Alice is co-displayed the SP camera with Charlie and Dave at slot 1 (informed by the user icons below the primary view of the item), then co-displayed the Tripod with Bob and Dave at slot 2. As shown, \scenario\ shopping displays items of interests to individuals or different subgroups at each slot, and thereby is more flexible than other approaches.\hfill \qedsymbol
\end{example}

As illustrated above, in addition to identifying which items to be displayed in which slots, properly partitioning subgroups by balancing both factors of personal preferences and social discussions is critical for the above-depicted \scenario\ shopping. In this work, we define the notion of \textit{\configuration}, which \textit{specifies the partitioned subgroups (or individuals) and corresponding items for each of the allocated $k$ slots.} We also introduce the notion of \textit{co-display} that represents users sharing views on common items. Formally, we formulate a new optimization problem, namely \textit{\problem} (\prob), to find the optimal \configuration\ that maximizes the overall 1) \textit{preference utility} from users viewing their allocated items and 2) \textit{social utility} from all pairs of friends having potential discussions on co-displayed items, where the basic preference utility of an individual user on a particular item and the basic social utility of two given users on a co-displayed item are provided as inputs. Meanwhile, we ensure that no duplicated items are displayed at different slots to a user. 
The problem is very challenging due to a complex trade-off between prioritizing personalized item configuration and facilitating social interactions. 
Indeed, we prove that \revise{\prob\ is $\mathsf{NP}$-hard to approximate \prob\ within a ratio of $\frac{32}{31}-\epsilon$.}


To solve \prob, we first present an Integer Program (IP) as a baseline to find the exact solution which requires superpolynomial time. To address the efficiency issue while ensuring good solution quality, we then propose a novel approximation algorithm, namely \textit{\algofull} (\algo), to approach the optimal solution in polynomial time. Specifically, \algo\ first relaxes the IP to allow fragmented \configuration s, and obtains the optimal (fractional) solution of the relaxed linear program. It then assigns the fractional decision variables derived from the optimal solution as the \textit{\weight s} for various potential allocations of user-item-slot in the solution. Items of high \weight s are thus desirable as they are preferred to individuals or encouraging social discussions. Moreover, by leveraging an idea of \textit{dependent rounding}, \algo\ introduces the notion of \textit{\stagethreefull} (\stagethree) to strike a balance between personal preferences and social interactions in forming subgroups. \stagethree\ forms a \textit{target subgroup} of socially connected users with similar interests to display an item, according to a randomized \textit{\roundingalpha} on \weight s to determine the membership of the target subgroup. With \stagethree, \algo\ finds the subgroups (for all slots) and selects appropriate items simultaneously, and thereby is more effective than other approaches that complete these tasks separately in predetermined steps. Theoretically, we prove that \algo\ achieves 4-approximation in expectation. We then show that \algo\ can be derandomized into a deterministic 4-approximation algorithm for \prob. \revise{We design further LP transformation and sampling techniques to improve the efficiency of \algo.}


Next, we enrich \prob\ by taking into account some practical VR operations and constraints. We define the notion of \textit{indirect co-display} to capture the potential social utility obtained from friends displayed a common item at two different slots in their VEs, where discussion is facilitated via the \textit{teleportation} \cite{BE16} function in VR. We also consider a \textit{subgroup size constraint} on the partitioned subgroups at each display slot due to practical limits in VR applications. Accordingly, we formulate the \problemtwo\ (\probtwo) problem, and prove that \probtwo\ is unlikely to be approximated within a constant ratio in polynomial time. Nevertheless, we extend \algo\ to support \probtwo\ and guarantee feasibility of the solution. \revise{Moreover, we have extended \algo\ to support a series of practical scenarios. 1) \textit{Commodity values}. Each item is associated with a commodity value to maximize the total profit.  2) \textit{Layout slot significance}. Each slot location is associated with a different significance (e.g., center is better) according to retailing researches \cite{slot, shelf}. 3) \textit{Multi-View Display}, where a user can be displayed multiple items in a slot, including one default personally preferred item in the \textit{primary view}, and multiple items to view with friends in \textit{group views}, whereas the primary and group views can be freely switched. 4) \textit{Generalized social benefits}, where social utility can be measured from not only the pairwise (each pair of friends) but also group-wise (any group of friends) perspectives. 5) \textit{Subgroup change}, where the fluctuations (i.e., change of members) between the partitioned subgroups at consecutive slots are limited to ensure smooth social interactions as the elapse of time. 6) \textit{Dynamic scenario}, where users dynamic join and leave the system with different moving speeds. Finally, we have also identified \textit{Social Event Organization} (SEO) as another important application of the targeted problem. \opt{short}{Due to space limit, the details of \probtwo\ and the above extensions are presented in the full version \cite{Online} of this paper.}} 

The contributions of this work are summarized as follows:
\begin{itemize}
    \sloppy \item We coin the notion of \configuration\ under the context of VR group shopping and formulate the \prob\ problem, aiming to return an \configuration\ that facilitates social interactions while not sacrificing the members' individual preferences. We prove \prob\ is \revise{$\mathsf{APX}$-hard and also $\mathsf{NP}$-hard to approximate within $\frac{32}{31} - \epsilon$.}
    \item We systematically tackle \prob\ by introducing an IP model and designing an approximation algorithm, \algo, based on the idea of \stagethreefull\ (\stagethree) that leverages the flexibility of \notion\ to partition subgroups for each slot and display common items to subgroup members. 
    \opt{full}{\item We define the \probtwo\ problem to incorporate teleportation and subgroup size constraint. We prove \probtwo\ admits no constant-ratio polynomial-time approximation algorithms unless ETH fails. We extend \algo\ to obtain feasible solutions of \probtwo.}
    \item A comprehensive evaluation on real VR datasets and a user study implemented in Unity and hTC VIVE manifest that our algorithms outperform the state-of-the-art recommendation schemes by at least 30.1\% in terms of solution quality. \opt{short}{The code and VR implementation can be downloaded at \cite{Online}.}
\end{itemize}

This paper is organized as follows. Section \ref{sec:related} reviews the related work. \opt{short}{Section \ref{sec:problem} formally defines the notion of \configuration, then formulates the \prob\ problem and an IP model.}\opt{full}{Section \ref{sec:problem} formally defines the notion of \configuration, then formulates the \prob\ and \probtwo\ problems and IP models for them.} Then, Section \ref{sec:algo} details the proposed \algo\ algorithm and the theoretical results. \revise{Section \ref{sec:extension} discusses the directions for extensions.} Section \ref{sec:exp} reports the experimental results, and Section \ref{sec:conclusion} concludes this paper.

\section{Related Work} \label{sec:related}
\para{Group Recommendation.}
\sloppy Various schemes for estimating group preference have been proposed by aggregating features from different users in a group \cite{SQ16,SD17}. Cao \etal \cite{DC18} propose an attention network for finding the group consensus. However, sacrificing personal preferences, the above group recommenders assign a unified set of items for the entire group based only on the aggregate preference without considering social topologies. For advanced approaches, recommendation-aware group formation \cite{SBR15} forms subgroups according to item preferences. SDSSel \cite{CYS18} finds dense subgroups and diverse itemsets. Shuai \etal \cite{HHS18WWW} customize the sequence of shops without considering \notion. However, the above recommenders find \textit{static} subgroups, i.e., a universal partition, and still assign a fixed configuration of items to each subgroup, where every subgroup member sees the same item in the same slot. In contrast, the \scenario\ approach considered in this paper allows the partitioned subgroups to vary across all display slots and thereby is more flexible than the above works.

\para{Personalized Recommendation.}
Personalized recommendation, a cornerstone of E-commerce, has been widely investigated. A recent line of study integrates deep learning with Collaborative Filtering (CF) \cite{LZ15,JC17SIGIR} on heterogeneous applications \cite{RG17}, while Bayesian Personalized Ranking \cite{SR09, HC16CIKM} is proposed to learn the ranking of recommended items. However, the above works fail to consider social interactions. While social relations have been leveraged to infer preferences of products \cite{WXZ14KDD,WXZ16TKDE}, POIs \cite{HL16KDD}, and social events \cite{YL18TOIS}, they do not take into account the social interactions among users in recommendation of items. Thus, they fail to consider the trade-off between social and personal factors. In this paper, we exploit the preferences obtained from such studies to serve as inputs for the tackled problems.

\para{Social Group Search and Formation.}
Research on finding various groups from online social networks (OSNs) for different purposes has drawn increasing attention in recent years. 
Community search finds similar communities containing a given set of query nodes \cite{JL17, LC18}. 
Group formation organizes a group of experts with low communication costs and specific skill combinations \cite{MK13, RSS13}. In addition, organizing a group in location-based social networks based on certain spatial factors has also gained more attention \cite{CYS16TKDE, CYS16TKDD}. 
The problems tackled in the above studies are fundamentally different from the scenario in this paper, since they focus on retrieving only \textit{parts} of the social network according to some criteria, whereas item selection across multiple slots is not addressed. Instead, \scenario\ group shopping aims to configure item display for all VR shopping users, whereas the subgroups partition the entire social network.

\revise{
\para{Related Combinatorial Optimization Problems.}
\prob\ is related to the \textit{Multi-Labeling} (ML) \cite{multilabeling} problem and its variations, including {Multiway Partition} \cite{multilabeling, multiwaypartition}, \textit{Maximum Happy Vertices/Edges} (MHV/MHE) \cite{homophyly}, and \textit{Multiway Cut} \cite{multiwaycut} in graphs. 
\opt{short}{In Section \ref{subsec:hardness_ip}, we revisit the challenging combinatorial nature of the proposed \prob\ problem and its relation with the related problems, whereas detailed introduction of each problem, as well as a summary table, are presented in the full version \cite{Online}.}\opt{full}{In Section \ref{subsec:combinatorial}, we revisit the challenging combinatorial nature of the proposed \prob\ problem and its relation with the related problems with detailed introduction of all problems and a summary table.}
}

\section{Problem Formulation and Hardness Results} \label{sec:problem}

\newcommand{\specialcell}[2][c]{\begin{tabular}[#1]{@{}c@{}}#2\end{tabular}}
\small
\begin{table*}[t]
\centering
\caption{Preference and social utility values in Example \ref{example:running}.}
\label{table:sat}
\begin{tabular}{c||c|c|c|c||c|c|c|c|c|c|c|c}
      & $\text{p}(u_A,\cdot)$ & $\text{p}(u_B,\cdot)$ & $\text{p}(u_C,\cdot)$ & $\text{p}(u_D,\cdot)$ & \specialcell{$\tau(u_A,$\\$u_B,\cdot)$} & \specialcell{$\tau(u_A,$\\$u_C,\cdot)$} & \specialcell{$\tau(u_A,$\\$u_D,\cdot)$} & \specialcell{$\tau(u_B,$\\$u_A,\cdot)$} & \specialcell{$\tau(u_B,$\\$u_C,\cdot)$} & \specialcell{$\tau(u_C,$\\$u_A,\cdot)$} & \specialcell{$\tau(u_C,$\\$u_B,\cdot)$} & \specialcell{$\tau(u_D,$\\$u_A,\cdot)$} \\ \hline
$c_1$ & 0.8 & 0.7 & 0 & 0.1               & 0.2 & 0 & 0.2 & 0.2 & 0 & 0 & 0.1 & 0.3\\
$c_2$ & 0.85 & 1.0 & 0.15 & 0          & 0.05 & 0.05 & 0.05 & 0.05 & 0.05 & 0.05 & 0.05 & 0.05\\
$c_3$ & 0.1 & 0.15 & 0.7 & 0.3            & 0.1 & 0.1 & 0.1 & 0.1 & 0.1 & 0.1 & 0.1 & 0.05\\
$c_4$ & 0.05 & 0.2 & 0.6 & 1.0            & 0  & 0 & 0.05 & 0.05 & 0.2 & 0.05 & 0.2 & 0\\
$c_5$ & 1.0 & 0.1 & 0.1 & 0.95            & 0.05 & 0.3 & 0.2 & 0.05 & 0 & 0.3 & 0.05 & 0.25\\
\end{tabular}
\end{table*}
\normalsize

\opt{short}{In this section, we first define the notion of \configuration\ and then formally introduce the \prob\ problem. We also prove its hardness of approximation, and introduce an integer program for \prob\ as a cornerstone for the \algo\ algorithm.}
\opt{full}{In this section, we first define the notion of \configuration\ and then formally introduce the \prob\ and \probtwo\ problems. We also prove their hardness of approximation, and introduce integer programs for \prob\ and \probtwo\ as cornerstones for the \algo\ algorithm.}

\subsection{Problem Formulation} \label{subsec:problem}

Given a collection $\mathcal{C}$ of $m$ items (called the \textit{Universal Item Set}), a directed social network $G=(V,E)$ with a vertex set $V$ of $n$ users (i.e., shoppers to visit a VR store) and an edge set $E$ specifying their social relationships. In the following, we use the terms \textit{user set} and \textit{shopping group} interchangeably to refer to $V$, and define \configuration\ to represent the partitioned subgroups and the corresponding items displayed at their allocated slots.

\begin{definition}{\textit{\configurationfull} \newline (\configuration).}
Given $k$ display slots for displaying items to users in a VR shopping group, an \configuration\ is a function $\mathbf{A}(\cdot, \cdot): (V \times [k]) \rightarrow \mathcal{C}$ mapping a tuple $(u,s)$ of a user $u$ and a slot $s$ to an item $c$. $\mathbf{A}(u,s) = c$ means that the configuration has item $c$ displayed at slot $s$ to user $u$, and $\mathbf{A}(u,:) = \langle \mathbf{A}(u,1), \mathbf{A}(u,2), \ldots, \mathbf{A}(u,k) \rangle$ are the $k$ items displayed to $u$. Furthermore, the function is regulated by a \textit{no-duplication constraint} which ensures the $k$ items displayed to $u$ are distinct, i.e., $\mathbf{A}(u,s) \neq \mathbf{A}(u,s'), \forall s \neq s'$.
\end{definition}

For shopping with families and friends, previous research \cite{MY14, ZX14, XZ18} demonstrates that discussions and interactions are inclined to trigger more purchases of an item. \revise{Specifically, the VR technology supporting social interactions has already been employed in existing VR social platforms, e.g., VRChat \cite{VRChat}, AltspaceVR \cite{AltspaceVR}, and Facebook Horizon \cite{fbHorizon}, where users, represented by customized avatars, interact with friends in various ways: 1) \textit{Movement and gesture.} Commercial VR devices, e.g., Oculus Quest and HTC Vive Pro, support six-degree-of-freedom headsets and hand-held controllers to detect the user movements and gestures in real-time. These movements are immediately updated to the server and then perceived by other users. For example, a high-five gesture mechanism is used in Rec Room \cite{RecRoom} to let users befriend others. 2) \textit{Emotes and Emojis.} Similar to emoticons in social network platforms, VR users can present their emotions through basic expressions (happy face, sad face, etc.) or advanced motions (e.g., dance, backflip, and die in VRChat \cite{VRChatGuide}). 3) \textit{Real sound.} This allows users to communicate with friends verbally through their microphones. As such, current VR applications support immersive and seamless social interactions, rendering social interactions realistic in shopping systems.}

\revise{With non-customized user environments in brick-and-mortar stores, discussing a specific item near its location is the default behavior. On the other hand, as users are displayed completely customized items in personalized e-commerce (e.g., the 20 items appearing on the main page of a shopping site can be completely disjoint for two friends), it often requires a ``calibration'' step (e.g., sharing the exact hyperlink of the product page) to share the view on the target item before e-commerce users can engage in discussions on some items. Since shopping users are used to having discussions on the common item they are viewing at the identical slot in their own Virtual Environments (VEs), we devote to enhance the possibility of this intuitive setting in our envisioned VR shopping VEs even though other settings are also possible in VR.}

\revise{Therefore, for a pair of friends displayed a common item, it is most convenient to display the item at the same slot in the two users' respective VEs such that they can locate the exact position easily (e.g., ``come to the second shelf; you should see this.''). Accordingly, upon viewing any item, it is expected that the user interface indicates a list of friends being co-displayed the specific item with the current user, so that she is aware of the potential candidates to start a discussion with.} To display an item at the same slot to a pair of friends, we define the notion of \textit{co-display} as follows.

\begin{definition}{\textit{Co-display} ($u \xleftrightarrow[s]{c} v$).} \label{def:co-display} Let $u \xleftrightarrow[s]{c} v$ represent that users $u$ and $v$ are \textit{co-displayed} 
an identical item $c$ at slot $s$, i.e., $\mathbf{A}(u,s) = \mathbf{A}(v,s) = c$. Let $u \xleftrightarrow{c} v$ denote that there exists at least one $s$ such that $u \xleftrightarrow[s]{c} v$.
\end{definition}

\noindent Naturally, the original group of users are partitioned into subgroups in correspondence with the displayed items. \revise{That is, for each slot $s \in [k]$, the \configuration\ implicitly partitions the user set $V$ into a collection of disjoint subsets $V^s = \{ V^s_1, V^s_2, \ldots, V^s_{\text{N}_{\text{p}}(s)} \}$, where $\text{N}_{\text{p}}(s)$ is the number of partitioned subgroups at slot $s$, such that $u \xleftrightarrow[s]{c} v$ if and only if $u,v \in V^s_i, i = 1, \ldots, \text{N}_{\text{p}}(s)$.}

\revise{Under the scenario of brick-and-mortar group shopping, previous research \cite{XL05, LQS13, MY14, LB16, XZ18} indicates that the satisfaction of a group shopping user is affected by two factors: \textit{personal preference} and \textit{social interaction}. Furthermore, empirical research on dynamic group interaction in retails finds that shopping while discussing with others consistently boosts sales \cite{MY14}, while intra-group talk frequency has a significant impact on users' shopping tendencies and purchase frequencies \cite{ZX14, XZ18}. Accordingly, we capture the overall satisfaction by combining 1) the aggregated personal preferences and 2) the retailing benefits from facilitating social interactions and discussions. We note that, from an optimization perspective, these two goals form a trade-off because close friends/families may have diverse personal preferences over items. One possible way to simultaneously consider both factors is to use an end-to-end machine learning-based approach, which generates the \textit{user embeddings} and \textit{item embeddings} \cite{DC18} and designs a neural network aggregator to derive the total user satisfaction. However, this approach requires an algorithm to generate candidate configurations for ranking and relies on a huge amount of data to train the parameters in the aggregator. On the other hand, previous research \cite{HHS13, XW18WWW} has demonstrated that a \textit{weighted combination} of the preferential and social factors is effective in measuring user satisfaction, where the weights can be directly set by a user or implicitly learned from existing models \cite{ZT14,YL18TOIS}. Another possible objective is a weighted sum of the previous objective metric (the aggregated preference) with the aggregated social satisfaction as the constraint, or vice versa, which is equivalent to our problem after applying Lagrangian relaxation to the constraint. Therefore, we follow \cite{HHS13, XW18WWW} to define the \scenario\ utility as a combination of \textit{aggregated preference utility} and \textit{aggregated social utility}, weighted by a parameter $\lambda$.} 


Specifically, for a given pair of friends $u$ and $v$, i.e., $(u,v) \in E$, and an item $c \in \mathcal{C}$, let $\text{p}(u,c) \geq 0$ denote the \textit{preference utility} of user $u$ for item $c$, and let $\tau(u,v,c) \geq 0$ denote the \textit{social utility} of user $u$ from viewing item $c$ together with user $v$, 
where $\tau(u,v,c)$ can be different from $\tau(v,u,c)$.
Following \cite{HHS13, XW18WWW}, the \textit{\scenario\ utility} is defined as follows.

\begin{definition}{\textit{\scenario\ utility} \label{def:savg_utility} ($\mathbf{w}_{\textbf{A}}(u,c)$).}
Given an \configuration\ $\textbf{A}$, the \scenario\ utility of user $u$ on item $c$ in $\textbf{A}$ represents a combination of the preference and social utilities, where $\lambda \in [0,1]$ represents their relative weight.

\opt{short}{\vskip -10pt}

\begin{align*}
    \text{w}_{\textbf{A}}(u,c) = (1-\lambda)\cdot\text{p}(u,c) + \lambda\cdot\sum\limits_{v | u \xleftrightarrow{c} v} \tau(u,v,c)
\end{align*}

\opt{short}{\vskip -30pt}

\end{definition}

\noindent \revise{The preference and social utility values can be directly given by the users or obtained from social-aware personalized recommendation learning models \cite{ZT14,YL18TOIS}. Those models, learning from purchase history, are able to infer $(u,c)$ and $(u,v,c)$ tuples to relieve users from filling those utilities manually.} The weight $\lambda$ can also be directly set by a user or implicitly learned from existing models \cite{ZT14,YL18TOIS}. 
\revise{To validate the effectiveness of this objective model, we have conducted a user study in real \notion\ VR shopping system which shows high correlations between user satisfaction and the proposed problem formulation (Please see Section \ref{sec:user_study} for the results).} 

\begin{example} \label{example:running}
Revisit Example \ref{example:illustrative} where $\mathcal{C} = \{c_1, c_2, \ldots, c_5\}$ and let $V = \{u_A,u_B,u_C,u_D\}$ denote Alice, Bob, Charlie, and Dave, respectively. Table \ref{table:sat} summarizes the given preference and social utility values. For example, the preference utility of Alice to the tripod is $\text{p}(u_A, c_1) = 0.8$. In the \scenario\ 3-configuration described at the top of Figure \ref{fig:comp_example}, $\mathbf{A}(u_A,\cdot)=\langle c_5, c_1, c_2 \rangle$, meaning that the SP camera, the tripod, and the DSLR camera are displayed to Alice at slots 1, 2, and 3, respectively. Let $\lambda = 0.4$. At slot 2, Alice is \textit{co-displayed} the tripod with Bob and Dave. Therefore, $\mathbf{w}_{\textbf{A}}(u_A, c_1) = 0.6 \cdot 0.8 + 0.4 \cdot (0.2 + 0.2) = 0.64$. \hfill \qedsymbol
\end{example}

\opt{full}{
\begin{table}[t!]
\caption{Notations used in \prob.}
\begin{center}
\begin{tabular}{|c|l|}\hline
Symbol & Description \\\hline\hline
$n$ & number of users \\\hline
$m$ & number of items \\\hline
$k$ & number of display slots \\\hline
$G = (V,E)$ & social network \\\hline
$V$ & user set/shopping group \\\hline
$E$ & (social) edge set \\\hline
$\mathcal{C}$ & Universal Item Set \\\hline
$\mathbf{A}(\cdot, \cdot)$ & \configuration\ \\\hline
$\mathbf{A}^{\ast}$ & optimal \configuration\ \\\hline
$\text{p}(u,c)$ & preference utility \\\hline
$\tau(u,v,c)$ & social utility \\\hline
$\text{w}_{\textbf{A}}(u,c)$ & \scenario\ utility \\\hline
$\lambda$ & weight of social utility\\\hline
$\text{p'}(u,c)$ & scaled preference utility \\\hline
$u \xleftrightarrow[s]{c} v$ & Co-display (at slot $s$) \\\hline
$u \xleftrightarrow{c} v$ & Co-display (at some slot) \\\hline
$\text{N}_{\text{p}}(s)$ & number of user subgroups at slot $s$ \\\hline
$V^s$ & collection of subgroups at slot $s$ \\\hline
\multirow{2}{*}{$x^c_{u,s}$} & decision variable of \\ & $u$ displayed $c$ at slot $s$\\\hline
\multirow{2}{*}{$x^c_u$} & decision variable of \\ & $u$ displayed $c$\\\hline
\multirow{2}{*}{$y^c_{e,s}$} & decision variable of \\ & $u,v$ co-displayed $c$ at slot $s$\\\hline
\multirow{2}{*}{$y^c_{e}$} & decision variable of \\ & $u,v$ co-displayed $c$\\\hline
\end{tabular}
\label{tab:symbols_prob1}
\end{center}
\end{table}
}

\noindent We then formally introduce the \prob\ problem as follows. \opt{full}{All used notations in \prob\ are summarized in Table \ref{tab:symbols_prob1}.}

\vskip 0.06in
\hrule
\vskip 0.03in

\noindent \textbf{Problem:} Social-aware VR Group-Item Configuration.

\noindent \textbf{Given:} A social network $G=(V,E)$, a universal item set $\mathcal{C}$, preference utility $\text{p}(u,c)$ for all $u \in V$ and $c \in \mathcal{C}$, social utility $\tau(u,v,c)$ for all $(u,v) \in E$ and $c \in \mathcal{C}$, the weight $\lambda$, and the number of slots $k$.

\noindent \textbf{Find:} An \configuration\ $\mathbf{A}^{\ast}$ to maximize the total \scenario\ utility 
\opt{short}{\vskip -0.3in}
$$\sum_{u \in V} \sum_{c \in \mathbf{A}^{\ast}(u,\cdot)} \text{w}_{\mathbf{A}^{\ast}}(u,c),$$
\noindent where $\text{w}_{\mathbf{A}^{\ast}}(u,c)$ is the \scenario\ utility defined in Definition \ref{def:savg_utility}, and the no-duplication constraint is ensured.

\vskip 0.03in
\hrule
\vskip 0.06in

\opt{short}{Finally, we remark here that the above additive objective function in \prob\ can be viewed as a generalization of more limited variations of objectives. For instance, some objectives in personal/group recommendation systems without considering social discussions, e.g., the AV semantics in \cite{SBR15}, 
are special cases of \prob\ where $\lambda = 0$. We detail this in Section \ref{subsec:problem} in the full version \cite{Online}.}

\opt{full}{
\begin{table}[t!]
\caption{Notations used in \probtwo.}
\begin{center}
\begin{tabular}{|c|l|}\hline
Symbol & Description \\\hline\hline
$\text{w'}_{\textbf{A}}(u,c)$ & \scenario\ utility with indirect co-display \\\hline
$u \xleftrightarrow[s,s']{c} v$ & Indirect Co-display (at slots $s,s'$) \\\hline
$u \xleftrightarrow[\text{ind}]{c} v$ & Indirect Co-display (at some slots) \\\hline
$\text{d}_{\text{tel}}$ & discount factor \\\hline
$M$ & subgroup size constraint \\\hline
$z^c_e$ & decision variable of $u,v$ co-displayed $c$\\\hline
\end{tabular}
\label{tab:symbols_prob2}
\end{center}
\end{table}
}



\opt{full}{
Based on \prob, we have the following observation on several special cases of this problem. The \personalized\ approach, introduced in Section \ref{sec:intro}, is a special case of \prob\ with $\lambda$ set to $0$. In this special case, it is sufficient to leverage personalized (or social-aware personalized) recommendation systems to infer the personal preferences, because the problem reduces into trivially recommending the top-$k$ favorite items for each user. On the other hand, the \group\ approach is also a special case of \prob\ with $\text{N}_{\text{p}}(s) = 1$ for each $s$, i.e., all users view the same item at the same slot. However, the combinatorial complexity of the solution space for \prob\ is $\Theta(m^{nk})$ due to the flexible partitioning of subgroups across different display slots in \scenario. Formally, $\textbf{A}(u,s) = \textbf{A}(v,s), \forall u,v \in V, s = 1,\ldots, k$, which corresponds to a group recommendation scenario.  

Due to the complicated interplay and trade-off of preference and social utility, the following theorem manifests that the optimal objective value of \prob\ can be significantly larger than the optimal objective values of the special cases above, corresponding to \personalized\ and \group\ approaches. Specifically, for each problem instance $\mathbf{I}$, let $\text{OPT}(\mathbf{I})$ denote the optimal objective value of \prob. Let $\text{OPT}_{\text{P}}(\mathbf{I})$ and $\text{OPT}_{\text{G}}(\mathbf{I})$ respectively denote the optimal objective values of the special cases of \personalized\ and \group\ approaches in the above discussion, i.e., \prob\ with $\lambda = 1$, and \prob\ with $\text{N}_{\text{p}}(s) = 1$ for each slot $s$, respectively.

\begin{restatable}{theorem}{gap} \label{thm:problem_gap}
Given any $n$, there exists 1) a \prob\ instance $\mathbf{I}_\text{G}$ with $n$ users such that $\frac{\text{OPT}(\mathbf{I}_\text{G})}{\text{OPT}_{\text{G}}(\mathbf{I}_\text{G})} = n$. 2) a \prob\ instance $\mathbf{I}_\text{P}$ such that $\frac{\text{OPT}(\mathbf{I}_\text{P})}{\text{OPT}_{\text{P}}(\mathbf{I}_\text{P})} \geq \frac{\lambda}{1-\lambda} \cdot \frac{n-1}{2} = O(n)$.
\end{restatable}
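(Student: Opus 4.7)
The plan is to exhibit, for each part, a concrete \prob\ instance whose structure isolates the exact quantity that the corresponding restricted approach cannot exploit; the ratio computation then reduces to elementary arithmetic. Both instances will use $k=1$, which makes the no-duplication constraint vacuous and eliminates subtle interactions across slots.

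For part~(1), I would take $k=1$ together with $n$ users $u_1,\dots,u_n$ and $n$ items $c_1,\dots,c_n$, define $\text{p}(u_i,c_i)=1$ and $\text{p}(u_i,c_j)=0$ for $i\neq j$, and let the edge set $E$ be empty so that every social utility vanishes. Under the unrestricted \prob\ formulation the optimum assigns $\mathbf{A}(u_i,1)=c_i$, so $\text{OPT}(\mathbf{I}_{\text{G}})=(1-\lambda)n$. Under the group constraint $\text{N}_{\text{p}}(1)=1$, every user must be shown the same item, which is the favorite of exactly one user, whence $\text{OPT}_{\text{G}}(\mathbf{I}_{\text{G}})=(1-\lambda)$ and the ratio is exactly $n$.

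For part~(2), I would again use $k=1$, place the $n$ users in a complete directed friendship graph, and introduce $n+1$ items $\{c_0,c_1,\dots,c_n\}$. Set $\text{p}(u_i,c_i)=1$, $\text{p}(u_i,c_0)=1-\epsilon$, and $\text{p}(u_i,c_j)=0$ otherwise; set $\tau(u_i,u_j,c_0)=1$ for every $i\neq j$ and zero elsewhere. The \personalized\ approach, which picks each user's top item by preference alone, outputs the pairwise-distinct items $c_1,\dots,c_n$, giving no co-display and $\text{OPT}_{\text{P}}(\mathbf{I}_{\text{P}})=(1-\lambda)n$. In contrast, co-displaying $c_0$ to everyone yields $(1-\lambda)n(1-\epsilon)+\lambda\,n(n-1)$, and letting $\epsilon\to 0$ the ratio approaches $1+\frac{\lambda(n-1)}{1-\lambda}$, which dominates the target $\frac{\lambda}{1-\lambda}\cdot\frac{n-1}{2}$ for every $\lambda\in[0,1)$ and $n\geq 1$ by a one-line inequality after clearing denominators.

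The main obstacle is conceptual rather than computational: one must guarantee that each restricted approach cannot accidentally harvest the benefit it is supposed to forgo. In part~(1) this is automatic because $\tau$ is identically zero. In part~(2) it hinges on forcing the per-user preference argmaxes to be pairwise distinct, so that the co-display set $\{v:\,u_i\xleftrightarrow{c_i}v\}$ is empty whenever the \personalized\ approach is applied; this is precisely why we introduce the $(1-\epsilon)$ gap between $c_0$ and each $c_i$. Once this separation is in place, the rest of the argument is an easy substitution into the objective of Definition~\ref{def:savg_utility}.
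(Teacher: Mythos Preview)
Your proposal is correct and follows essentially the same approach as the paper: for part~(1), users with disjoint singleton preferences and an empty edge set; for part~(2), a complete social graph with near-uniform preferences and an $\epsilon$-tiebreak forcing the \personalized\ heuristic onto pairwise-distinct items. The only differences are cosmetic---you work with $k=1$ and concentrate all social utility on a single extra item $c_0$, whereas the paper keeps a general $k$ and makes every item carry uniform social utility---and your part-(2) ratio $1+\tfrac{\lambda(n-1)}{1-\lambda}$ is in fact a factor of~2 sharper than the paper's stated $1+\tfrac{\lambda}{1-\lambda}\cdot\tfrac{n-1}{2}$, since you count the directed social contributions as $n(n-1)$ rather than $\tfrac{n(n-1)}{2}$.
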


\begin{proof}
We construct $\mathbf{I}_\text{G}$ and $\mathbf{I}_\text{P}$ as follows. For $\mathbf{I}_\text{G}$, each user $u_i$ in $\mathbf{I}_\text{G}$ prefers exactly $k$ items $\mathcal{C}_i = \{ c_i, c_{n+i}, \ldots, c_{(k-1)n+i} \}$, i.e., $\text{p}(u_i, c_j) = 1$ for $c_j \in \mathcal{C}_i$, and 0 for $c_j \notin \mathcal{C}_i$. Let $E = \emptyset$, and $\tau(u,v,c)$ thereby is 0 for all $(u,v,c)$. In the optimal solution of the \group\ approach, each slot $s$ contributes at most $1-\lambda$ to the total \scenario\ utility because every item is preferred by exactly one user. By contrast, the optimal solution in \prob\ recommends $u_i$ all the $k$ items in $\mathcal{C}_i$. Each $u_i$ then contributes $k(1-\lambda)$ to the total \scenario\ utility, and thus $\frac{\text{OPT}(\mathbf{I}_G)}{\text{OPT}_{\text{G}}(\mathbf{I}_G)} = \frac{nk(1-\lambda)}{k(1-\lambda)} = n$. 

\sloppy For $\mathbf{I}_\text{P}$, each user $u_i$ in $\mathbf{I}_\text{P}$ prefers the items in $\mathcal{C}_i$ only \textit{slightly more than} all the other items. Specifically, let $\text{p}(u_i, c_j) = 1$ for $c_j \in \mathcal{C}_i$, and $\text{p}(u_i, c_j) = 1 - \epsilon$ for $c_j \notin \mathcal{C}_i$. Let $G$ represent a complete graph here, and $\tau(u,v,c) = 1$ for all $u,v$ and $c$. The optimal solution of the \personalized\ approach selects all the $k$ items in $\mathcal{C}_i$ for each $u_i$ and contributes $(1-\lambda) \cdot n \cdot k$ to the total \scenario\ utility. Note that co-display is not facilitated because no common item is chosen. By contrast, one possible solution in \prob\ recommends to all users the same $k$ items in $\mathcal{C}_1$, where each item contributes $(1-\lambda) \cdot ((1-\epsilon)n + \epsilon)$ to the total preference and $\lambda \cdot \frac{n(n-1)}{2}$ to the social utility. Thus, as $\epsilon$ approaches 0, $\frac{\text{OPT}(\mathbf{I}_P)}{\text{OPT}_{\text{P}}(\mathbf{I}_P)} \geq \frac{k[(1-\lambda) \cdot ((1-\epsilon)n + \epsilon) + \lambda \cdot \frac{n(n-1)}{2}]}{(1-\lambda) \cdot n \cdot k} \approx 1 + \frac{\lambda}{1-\lambda} \cdot \frac{n-1}{2} = O(n)$, regarding $\lambda$ as a constant.
\end{proof}
}

\subsection{Indirect Co-display and \probtwo}

\revise{In \prob, we characterize the merit of social discussions by the social utilities to a pair of users when they see the same item at the \textit{same slot}. On the other hand, if a common item is displayed at different slots in two users' VEs, a prompt discussion is more difficult to initiate since the users need to locate and move to the exact position of the item first. \textit{Teleportation} \cite{BE16}, widely used in VR tourism and gaming applications, allows VR users to directly transport between different positions in the VE. Thus, for a pair of users Alice and Bob co-displayed an item at different slots, as long as they are aware of where the item is displayed, one (or both) of them can teleport to the respective display slot of the item to trigger a discussion. To model this event that requires more efforts from users, we further propose a generalized notion of \textit{indirect co-display} to consider the above-described scenario.}

\revise{\begin{definition}{\textit{Indirect Co-display.} ($u \xleftrightarrow[s,s']{c} v$).} Let $u \xleftrightarrow[s,s']{c} v$ denote that users $u$ and $v$ are \textit{indirectly} co-displayed an item $c$ at slots $s$ and $s'$ respectively in their VEs, i.e., $\mathbf{A}(u,s) = \mathbf{A}(v,s') = c$. We use $u \xleftrightarrow[\text{ind}]{c} v$ to denote that there exist different slots $s \neq s'$ such that $u \xleftrightarrow[s,s']{c} v$.
\end{definition}}

\opt{full}{\noindent Note that we use $u \xleftrightarrow{c} v$ to denote \textit{direct co-display}, i.e., there exists at least one $s$ such that $u \xleftrightarrow[s]{c} v$, while using $u \xleftrightarrow[\text{ind}]{c} v$ to denote \textit{indirect co-display}. Also note that these two events are mutually exclusive due to the no-duplication constraint. Also note that $u \xleftrightarrow[s,s]{c} v$ is equivalent to direct co-display: $u \xleftrightarrow[s]{c} v$.}

\revise{As social discussions on indirectly co-displayed items are less immersive and require intentional user effort, we introduce a discount factor $\text{d}_{\text{tel}} < 1$ on teleportation to downgrade the social utility obtained via indirect co-display. Therefore, the total \scenario\ utility incorporating indirect co-display is as follows.}

\opt{short}{
\revise{In the full version of this paper \cite{Online}, we explicitly formulate a more generalized \probtwo\ problem with the above notions of indirect display and a modified \scenario\ utility, where the problem is associated with an additional \textit{subgroup size constraint} to accommodate practical limits in existing VR applications. Tables of all used notations in both \prob\ and \probtwo\ are also provided in the full version \cite{Online}.}
}
\opt{full}{
\begin{definition}{\textit{\scenario\ utility with indirect co-display} \label{def:savg_utility_unaligned} \newline ($\text{w'}_{\textbf{A}}(u,c)$).}
Given an \configuration\ $\textbf{A}$, the \textit{\scenario\ utility with indirect co-display} of user $u$ on item $c$ in $\textbf{A}$ is defined as
\begin{align*}
     \text{w'}_{\textbf{A}}(u,c) &= (1-\lambda)\cdot\text{p}(u,c)\\ &+ \lambda\cdot \big( \sum\limits_{v | u \xleftrightarrow{c} v} \tau(u,v,c) + \sum\limits_{v | u \xleftrightarrow[\text{ind}]{c} v} \text{d}_{\text{tel}} \cdot \tau(u,v,c) \big).
\end{align*}
\end{definition}

Finally, we note that existing VR applications often do not accommodate an unlimited number of users in one shared VE, e.g., VRChat \cite{VRChat} allows at most 16 users, and IrisVR \cite{IrisVR} allows up to 12 users within the same VE. Therefore, we further incorporate a \textit{subgroup size constraint} $M$ as an upper bound on the number of users directly co-displayed the same item at the same location. More specifically, for all slot $s \in [k]$, all partitioned subgroups $V^s_i \in V^s$ contain no more than $M$ users; or equivalently, given a fixed slot $s$ and a fixed item $c$, $\mathbf{A}^{\ast}(u,s) = c$ for at most $M$ different users $u$. In the following, we introduce the \probtwo\ problem that incorporates indirect co-display and the subgroup size constraint.

\vskip 0.06in
\hrule
\vskip 0.03in

\noindent \textbf{Problem: \problemtwo} (\probtwo).

\noindent \textbf{Given:} A social network $G=(V,E)$, a universal item set $\mathcal{C}$, preference utility $\text{p}(u,c)$ for all $u$ and $c$, social utility $\tau(u,v,c)$ for all $u$,$v$, and $c$, the weight $\lambda$, the number of slots $k$, the teleportation discount $\text{d}_{\text{tel}}$, and the subgroup size upper bound $M$.

\noindent \textbf{Find:} An \configuration\ $\mathbf{A}^{\ast}$ to maximize the total \scenario\ utility (with indirect co-display)
$$\sum_{u \in V} \sum_{c \in \mathbf{A}^{\ast}(u,\cdot)} \text{w'}_{\mathbf{A}^{\ast}}(u,c),$$
\noindent \sloppy such that the subgroup size constraint is satisfied, and $\text{w'}_{\mathbf{A}^{\ast}}(u,c)$ is defined as in Definition \ref{def:savg_utility_unaligned}.

\vskip 0.03in
\hrule
\vskip 0.06in

Notations in \probtwo\ are summarized in Table \ref{tab:symbols_prob2}.
}

\subsection{Hardness Result and Integer Program} \label{subsec:hardness_ip}
\opt{short}{In the following, we state theoretical hardness results for \prob\ and briefly discuss its relation with other combinatorial problems.}
\opt{full}{In the following, we prove that \prob\ is $\mathsf{APX}$-hard and also $\mathsf{NP}$-hard to approximate within a ratio of $\frac{32}{31} - \epsilon$. We then prove that \probtwo\ does not admit any constant-factor polynomial-time approximation algorithms, unless the Exponential Time Hypothesis (ETH) fails.}

\begin{theorem} \label{thm:new_hardness}
\revise{\prob\ is $\mathsf{APX}$-hard. Moreover, for any $\epsilon > 0$, it is $\mathsf{NP}$-hard to approximate \prob\ within a ratio of $\frac{32}{31} - \epsilon$.}
\end{theorem}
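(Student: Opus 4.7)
The plan is to prove the hardness via a gap-preserving reduction that specializes \prob\ to a pure graph labeling problem and then imports a known inapproximability result whose threshold matches $\frac{32}{31}$. First, I would observe that \prob\ admits the following restriction: set $k=1$, $\lambda=1$, and $\text{p}(u,c)\equiv 0$ for all $u,c$. The \scenario\ utility in Definition~\ref{def:savg_utility} then reduces to the purely social labeling objective
\begin{equation*}
\sum_{(u,v)\in E}\sum_{c\in\mathcal{C}}\bigl(\tau(u,v,c)+\tau(v,u,c)\bigr)\cdot \mathbf{1}\!\left[\mathbf{A}(u,1)=\mathbf{A}(v,1)=c\right],
\end{equation*}
which is exactly a weighted variant of the Multiway Uncut / Maximum Happy Edges problems singled out as closely related in Section~\ref{sec:related}, and the no-duplication constraint becomes vacuous.

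Next, I would select a source problem whose $\mathsf{NP}$-hard completeness-to-soundness gap is $\frac{32}{31}-\epsilon$. Natural candidates include a Max-$k$-Colorable-Subgraph style inapproximability obtained via H\r{a}stad-type PCP reductions, or a constrained Max-E3-Lin-2 variant whose standard gadget realizes this precise ratio. Given an instance of the source problem, I would construct a \prob\ instance by letting each source vertex become a user, letting the social graph $G$ inherit the source edges (duplicated in both directions for the symmetry of $\tau$), letting each label correspond to a distinct item in $\mathcal{C}$, and setting $\tau(u,v,c)=\tau(v,u,c)=w_{uv}/2$ whenever the source rewards co-labeling of edge $(u,v)$ with label $c$ (and zero otherwise). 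Each labeling of the source instance then corresponds bijectively to an \configuration, and the social utility of that \configuration\ equals the source objective value. Consequently, the multiplicative gap $\frac{32}{31}-\epsilon$ transfers verbatim, and $\mathsf{APX}$-hardness is an immediate corollary.

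The principal obstacle is pinning down the \emph{exact} factor $\frac{32}{31}$: a generic reduction from any APX-hard labeling problem would only yield some unspecified constant inapproximability bound, whereas the statement requires hitting this threshold precisely. Overcoming this forces either identifying an off-the-shelf hardness result for a labeling problem at exactly this gap, or, failing that, building custom PCP gadgets whose completeness and soundness realize the $\frac{32}{31}$ ratio; care must be taken that no ``free'' utility common to YES and NO instances is introduced, since any such additive offset would multiplicatively shrink the gap. Once the right source hardness is secured, the bridge to \prob\ through the $k=1$, $\lambda=1$, $\text{p}\equiv 0$ specialization is routine, since the mapping between labelings and \configuration s is an exact bijection that preserves the objective up to multiplicative scaling.
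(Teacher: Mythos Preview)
Your framework is sound: specializing to $k=1$, $\lambda=1$, $\text{p}\equiv 0$ is exactly what the paper does, and the resulting pure social-labeling objective is the right target. However, the proposal has a genuine gap at precisely the step you flag as the obstacle, and your suggested way around it points in the wrong direction.

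There is no off-the-shelf labeling problem with inapproximability threshold exactly $\tfrac{32}{31}$ to import bijectively. The paper instead reduces from $\mathsf{MAX\text{-}E3SAT}$ (H{\aa}stad's $\tfrac{8}{7}-\epsilon$ hardness) and \emph{deliberately} builds a gadget with a large additive baseline: for each clause it creates clause-, literal-, and variable-vertices so that every clause contributes a fixed $6$ units of social utility regardless of whether it is satisfied, plus $2$ extra units if satisfied. This is precisely the ``free utility common to YES and NO instances'' you warn must be avoided---but it is unavoidable here, because every user must be assigned some item and the consistency-enforcing vertices earn utility in either case. The ratio $\tfrac{32}{31}$ then emerges by combining the resulting gap $(2\chi + 6m_{\text{cla}})$ vs.\ $(\tfrac{7}{4}\chi + 6m_{\text{cla}})$ with the folklore lower bound $\mathsf{OPT}_{\mathsf{E3SAT}} \geq \tfrac{7}{8}\,m_{\text{cla}}$, which forces $m_{\text{cla}} < \chi$ in the NO case and hence lets one replace the baseline $6m_{\text{cla}}$ by $6\chi$ on both sides, giving $8\chi$ vs.\ $\tfrac{31}{4}\chi$. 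So the missing ingredient is not a cleaner bijection or a custom PCP; it is a concrete $\mathsf{MAX\text{-}E3SAT}$ gadget together with the observation that the random-assignment lower bound controls the baseline term. Your stated methodology of avoiding additive offsets would actually block you from reaching this argument.
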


\opt{short}{
\vskip -15pt
\begin{proof}
\textcolor{blue}{We prove the theorem with a gap-preserving reduction from the $\mathsf{MAX-E3SAT}$ problem~\cite{MAX3SAT}. Due to space constraint, please see the full version \cite{Online} for the proof.}
\end{proof}
\vskip -15pt

\label{para:comparison_combinatorial} \textcolor{blue}{\para{Comparison with Related Combinatorial Problems.} Among the related \textit{Multi-Labeling} (ML)-type problems detailed in Section \ref{sec:related}, \prob\ is particularly related to the graph coloring problem MHE~\cite{homophyly}, where, different from traditional graph coloring, vertices are encouraged to share the same color with neighbors. The optimization goal of MHE is to maximize the number of \textit{happy edges}, i.e., edges with same-color vertices. Regarding the displayed items in \prob\ as colors (labels) in MHE, the social utility achieved in \prob\ is closely related to the number of preserved happy edges in MHE, and \prob\ thereby encourages partitioning all users into dense subgroups to preserve the most social utility. However, \prob\ is more difficult than the ML-type problems due to the following reasons. 1) The ML-type problems find a strict partition that maps each entity/vertex to only one color (label), while \prob\ assigns $k$ items to each user, implying that any direct reduction from the above problems can only map to the $k=1$ special case in \prob. 2) Most of the ML-type problems do not discriminate among different labels, i.e., switching the assigned labels of two different subgroups does not change their objective functions. This corresponds to the special case of \prob\ where all preference utility $\text{p}(
u, c)$ and social utility $\tau(u, v, c)$ do not depend on the item $c$. 3) Most of the ML-type problems admit a partial labeling (pre-labeling) in the input such that some entities have predefined fixed labels (otherwise labeling every entity with the same label is optimal, rendering the problem trivial), while \prob\ does not specify any item to be displayed to specific users. However, \prob\ requires $k$ items for each user; moreover, even in the $k=1$ special case, simply displaying the same item to all users in \prob\ is not optimal due to the item-dependent preference and social utility.
}}
\opt{full}{
In the following, we prove the hardness result with a gap-preserving reduction from the $\mathsf{MAX-E3SAT}$ problem~\cite{MAX3SAT}. Given a conjunctive normal form (CNF) formula $\phi$ with exactly 3 literals in each clause, $\mathsf{MAX-E3SAT}$ asks for a truth assignment of all variables such that the number of satisfied clauses in $\phi$ is maximized. The following lemma (from~\cite{MAX3SAT}) states the hardness of approximation of $\mathsf{MAX-E3SAT}$.

\begin{figure}
    \centering
    \includegraphics[width=\columnwidth]{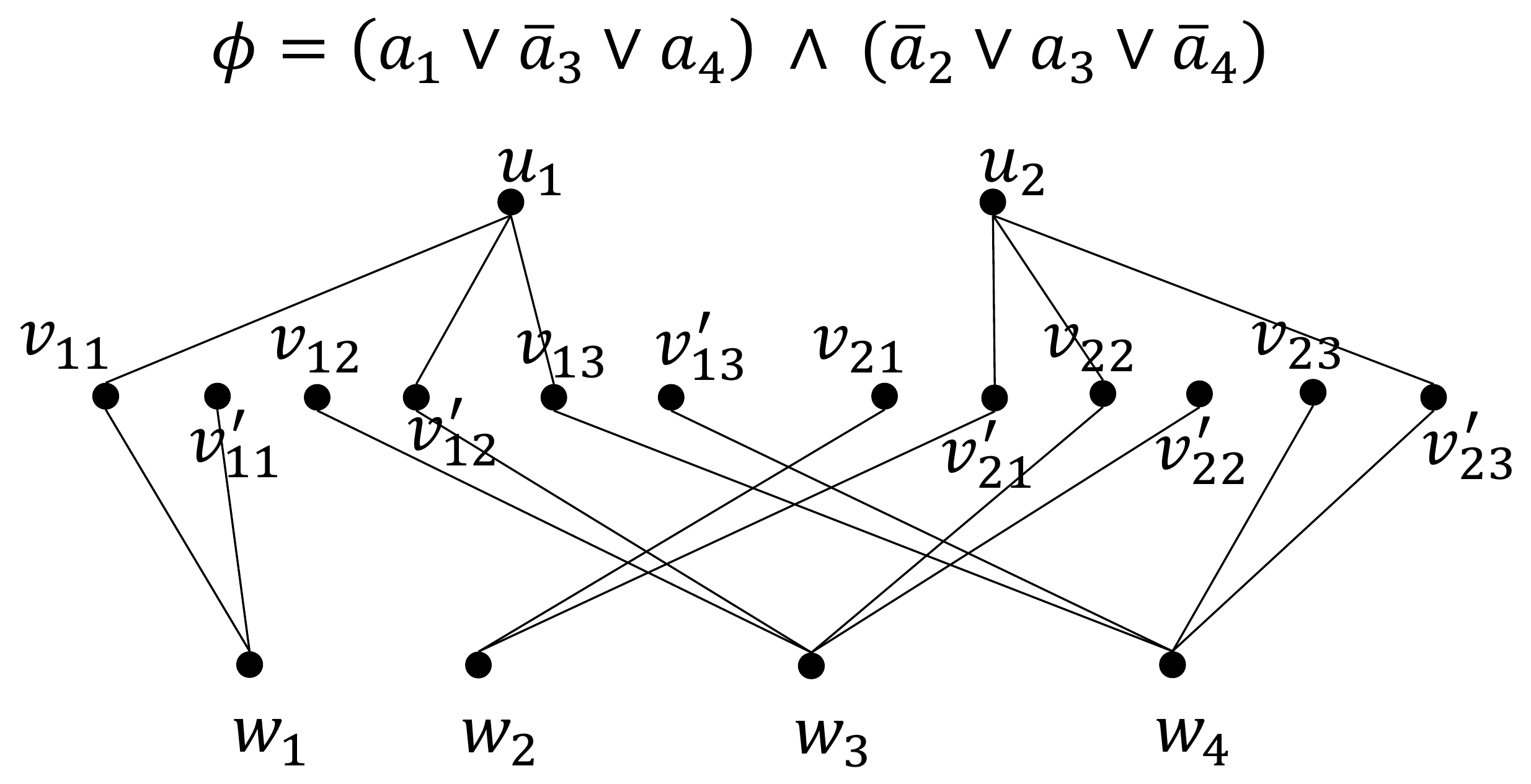}
    \caption{An illustration instance built for $\mathsf{MAX-E3SAT}$}
    \label{fig:reduction}
\end{figure}

\begin{lemma} \label{lemma:hardnesse3sat}
$\mathsf{MAX-E3SAT}$ is $\mathsf{NP}$-hard to approximate within a ratio of $\frac{8}{7}-\epsilon$.
\end{lemma}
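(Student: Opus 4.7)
The plan is to give a gap-preserving reduction from $\mathsf{MAX-E3SAT}$ to \prob\ that, combined with Lemma~\ref{lemma:hardnesse3sat}, transfers the $(8/7-\epsilon)$-inapproximability of $\mathsf{MAX-E3SAT}$ into the claimed $(32/31-\epsilon)$-inapproximability of \prob, which in turn implies $\mathsf{APX}$-hardness since the inapproximability threshold is a universal constant bounded away from $1$.

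Given a $\mathsf{MAX-E3SAT}$ instance $\phi$ with $n$ variables $x_1,\ldots,x_n$ and $m$ clauses $C_1,\ldots,C_m$, I would construct a \prob\ instance as follows. The item universe contains two ``truth items'' $T_i$ and $F_i$ for every variable $x_i$. The user set contains one \emph{variable user} $u_i$ for each variable and one \emph{clause user} $b_j$ for each clause; draw a social edge between $b_j$ and $u_i$ whenever $x_i$ occurs (positively or negatively) in $C_j$. Set $k = 1$, so each user is displayed exactly one item and the no-duplication constraint is vacuous. For preference utilities, let each variable user $u_i$ have utility $P$ on both $T_i$ and $F_i$ and $0$ on all other items, which forces any reasonable configuration to display one of the two truth items to $u_i$ and thus encodes a boolean assignment; let each clause user have zero preference utility. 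For social utilities, set $\tau(b_j,u_i,T_i)=\tau(u_i,b_j,T_i)=H$ whenever $x_i$ appears positively in $C_j$, $\tau(b_j,u_i,F_i)=\tau(u_i,b_j,F_i)=H$ whenever $x_i$ appears negatively, and $0$ elsewhere.

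Under this construction, assigning $u_i\mapsto T_i$ (resp.\ $F_i$) encodes $x_i=$ true (resp.\ false). For a clause $C_j$, the clause user $b_j$ can collect the social bonus $H$ from exactly those edges $(b_j,u_i)$ for which $b_j$ is co-displayed the same item as $u_i$ and that item is the ``satisfying'' one for the corresponding literal. Since $k=1$, $b_j$ picks a single item; hence $b_j$ gains the bonus $H$ on some edge iff at least one literal of $C_j$ evaluates to true under the encoded assignment, i.e., iff $C_j$ is satisfied. Writing $L$ for the guaranteed preference contribution from the variable users and $H$ for the per-clause social bonus, the total \scenario\ utility of a clean assignment-induced configuration equals $L + H\cdot(\text{\# satisfied clauses})$. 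Calibrate $P$, $H$, and $\lambda$ so that $L=3Hm$; completeness then yields value $L+Hm=4Hm$ when $\phi$ is satisfiable, while soundness bounds the value by $L+H(7/8+\epsilon')m=(31/8+\epsilon')Hm$ when no assignment satisfies more than a $(7/8+\epsilon')$ fraction of the clauses. The resulting gap is $4Hm/\left((31/8+\epsilon')Hm\right) = 32/(31+8\epsilon')$, which exceeds $32/31-\epsilon$ for sufficiently small $\epsilon'$.

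The main obstacle will be the soundness argument: I must show that any near-optimal configuration essentially corresponds to a legitimate boolean assignment. Concretely, this requires a ``cleanup'' lemma proving that (i) displaying any $u_i$ an item outside $\{T_i,F_i\}$ can only decrease the objective by at least $P$ (so for large enough $P$ this is strictly worse than the best assignment-induced configuration), and (ii) once each $u_i$ is fixed on $\{T_i,F_i\}$, the optimal choice for each clause user $b_j$ is precisely one of the six literal-related items and collects bonus iff the clause is satisfied. A secondary technicality is tuning the constants $P$, $H$, and $\lambda$---together with the bidirectional counting of $\tau$ and the possibly unbalanced ratio $m/n$---to land the multiplicative gap exactly at $32/31$ rather than some loser constant; this is a routine but delicate balancing exercise.
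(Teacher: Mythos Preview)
You have proved the wrong statement. Lemma~\ref{lemma:hardnesse3sat} asserts the inapproximability of $\mathsf{MAX\mbox{-}E3SAT}$ itself; it is H\aa stad's classical result, and the paper does not prove it at all---it simply cites it from~\cite{MAX3SAT} and then \emph{uses} it as a black box. Your proposal, by contrast, explicitly takes Lemma~\ref{lemma:hardnesse3sat} as given and builds a gap-preserving reduction from $\mathsf{MAX\mbox{-}E3SAT}$ to \prob. That is the content of Theorem~\ref{thm:new_hardness} (and, more precisely, of the paper's Lemma~\ref{lemma:gapreduction}), not of Lemma~\ref{lemma:hardnesse3sat}. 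So as a proof of the stated lemma your write-up is a non sequitur: you never establish anything about $\mathsf{MAX\mbox{-}E3SAT}$.

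If we instead read your proposal as an alternative proof of Theorem~\ref{thm:new_hardness}, it is in the right spirit but simpler than the paper's construction and would need some care. The paper uses a three-layer gadget ($V_1,V_2,V_3$ with seven vertices per clause) precisely so that the ``fixed'' part of the objective is tied to the number of clauses $m_{\text{cla}}$ via the $6m_{\text{cla}}$ term, making the completeness/soundness calibration automatic. In your construction the fixed preference mass is $L=nP$ with $n$ the number of variables, so you must set $P$ as a function of both $n$ and $m$; this is permissible but should be stated. Two further points to tighten: (i) social utility in \prob\ is counted in both directions, so each satisfied clause contributes $2H$, not $H$, which shifts your calibration of $L$; (ii) your soundness cleanup must also rule out a clause user $b_j$ picking, say, $T_i$ and being co-displayed with \emph{another} clause user $b_{j'}$ that also picked $T_i$---this is harmless only because you set $\tau$ to zero on clause--clause pairs (there are no such edges), but you should say so explicitly.
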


Our reduction builds on the following lemma.

\begin{lemma} \label{lemma:gapreduction}
There exists a gap-preserving reduction from $\mathsf{MAX-E3SAT}$ to $\mathsf{\prob}$ which translates from a formula $\phi$ to an $\mathsf{\prob}$ instance $\mathbf{I}$ such that
\begin{itemize}
    \item If at least $\chi$ clauses can be satisfied in $\phi$, then the optimal value of $\mathbf{I}$ is at least $8 \cdot \chi$,
    \item If less than $\frac{7}{8} \cdot \chi$ clauses can be satisfied in $\phi$, then the optimal value of $\mathbf{I}$ is less than $\frac{31}{4} \cdot \chi$.
\end{itemize}
\end{lemma}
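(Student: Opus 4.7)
The plan is a gap-preserving reduction from MAX-E3SAT, building on Lemma \ref{lemma:hardnesse3sat}. Given a 3CNF formula $\phi$ with $m$ clauses, I would construct an SVGIC instance $\mathbf{I}$ whose structure forces every clause of $\phi$ to contribute exactly $8$ units of \scenario\ utility to the optimum when satisfied and exactly $6$ units when unsatisfied. Under this design, if $\chi$ clauses of $\phi$ are simultaneously satisfiable then the optimum of $\mathbf{I}$ is at least $8\chi$, while if no assignment satisfies even $\tfrac{7\chi}{8}$ clauses then the optimum is strictly less than $8 \cdot \tfrac{7\chi}{8} + 6 \cdot \tfrac{\chi}{8} = \tfrac{31\chi}{4}$. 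A $(\tfrac{32}{31}-\epsilon)$-approximation for \prob\ would then distinguish the two regimes and yield a $(\tfrac{8}{7}-\epsilon)$-approximation for MAX-E3SAT, contradicting Lemma \ref{lemma:hardnesse3sat}; APX-hardness follows analogously from the APX-hardness of MAX-E3SAT.

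For the construction I would use two gadget types, mirroring the illustration in Figure \ref{fig:reduction}. For each variable $x_i$, I would introduce a \emph{variable user} $u_i$ equipped with two ``truth items'' $c_i^T, c_i^F$ and equal preference utility on both, combined with a slot allocation (e.g., a dedicated single slot) that forces $u_i$ to display exactly one of them; this binary choice encodes the truth assignment of $x_i$. For each clause $C_j = \ell_{j,1} \vee \ell_{j,2} \vee \ell_{j,3}$, I would introduce a small clique of auxiliary users together with a dedicated clause item that they all co-display, producing the baseline $6$ units from the pairwise co-display social utilities inside the clique. The clique is also socially connected (via $E$) to the three variable users underlying $\ell_{j,1}, \ell_{j,2}, \ell_{j,3}$, and an additional $2$ units of social utility are unlocked whenever at least one of those variable users displays the truth item matching its literal.

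Completeness is immediate: given a truth assignment satisfying $\chi$ clauses of $\phi$, configuring each $u_i$ according to the assignment and each clause clique to its baseline collects a total utility of at least $8\chi + 6(m-\chi) \geq 8\chi$. Soundness requires (i) showing that any SVGIC optimum induces a consistent truth assignment (enforced by choosing preference utilities large enough that variable users cannot profitably leave the single-truth-item regime) and (ii) that unsatisfied clauses are capped at $6$ units. The main obstacle is engineering the clause gadget so the $2$-unit bonus \emph{saturates}: because \scenario\ utility sums over co-displayed friend pairs, a naive encoding yields a reward that scales linearly in the number of satisfied literals, which would destroy the $8$-versus-$6$ gap. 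I would enforce saturation by allocating only a single ``bonus slot'' inside the clause gadget, into which at most one literal-truth item may be placed, rendering the second and third satisfying literals redundant for the gadget's contribution. Calibrating $\lambda$ and the magnitudes of preference/social utilities so that these per-gadget local analyses cleanly aggregate into the claimed global $8\chi$-versus-$\tfrac{31\chi}{4}$ gap is the remaining technical task.
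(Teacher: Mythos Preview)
Your arithmetic framework is exactly right: forcing each satisfied clause to contribute $8$ and each unsatisfied clause $6$ yields the $8\chi$-versus-$\tfrac{31}{4}\chi$ gap, and you correctly identify saturation of the clause bonus as the central difficulty. However, the gadget you sketch has a genuine gap. In \prob, social utility between two users is earned only when they are co-displayed the \emph{same item at the same slot}. Your clause clique co-displays a dedicated ``clause item'' to collect the baseline $6$, while your variable user $u_i$ displays a ``truth item'' $c_i^T$ or $c_i^F$; these are different items, so no social utility flows across the clause--variable edges as written. If instead some clause-gadget user switches to display $c_i^T$ to collect the bonus, that user (with $k=1$) leaves the clique and the baseline drops---so the $6$ and the $2$ compete rather than add. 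Your proposed ``single bonus slot'' does not resolve this, because slots in \prob\ are per-user, not per-gadget.

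The paper's construction avoids this tension with a three-layer design and $k=\lambda=1$, with \emph{all} preference utilities set to zero. Layer $V_1$ has one clause-user $u_j$ per clause; layer $V_3$ has one variable-user $w_i$ per variable; layer $V_2$ contains, for each literal occurrence $\ell_{j_t}$ on variable $a_i$, a \emph{pair} of fresh vertices $v_{j_t},v'_{j_t}$ (one for each polarity). Each such pair is linked both to $u_j$ and to $w_i$, and each edge carries its own dedicated item. The baseline $6\cdot m_{\text{cla}}$ comes entirely from the $V_2$--$V_3$ side: the six edges per clause form three copies of $P_3$ centered at some $w_i$, and since $w_i$ has one slot it can satisfy exactly one edge of each $P_3$, giving $2$ per $P_3$ and $6$ per clause. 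The clause bonus of $2$ comes from $u_j$ (one slot, hence at most one of its three $V_1$--$V_2$ edges) being co-displayed with some $v_{j_t}$; crucially, that $v_{j_t}$ then abandons its $V_2$--$V_3$ edge, but its partner $v'_{j_t}$ still covers that $P_3$, so the baseline is preserved. The soundness argument is an exchange: any optimal solution can be normalized so that $\tau(V_2,V_3)=6\,m_{\text{cla}}$ exactly, after which the $V_1$--$V_2$ contribution reads off a consistent truth assignment. The intermediate $V_2$ layer---one copy of each variable per literal occurrence---is precisely the missing ingredient that decouples the baseline from the bonus and makes both the saturation and the consistency-enforcement arguments go through.
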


\begin{proof}
Given a $\mathsf{MAX-E3SAT}$ instance $\phi$ with $n_{\text{var}}$ Boolean variables $\mathtt{a}_1, \ldots, \mathtt{a}_{n_{\text{var}}}$ and $m_{\text{cla}}$ clauses $\mathtt{C}_1, \ldots, \mathtt{C}_{m_{\text{cla}}}$, we construct an \prob\ instance as follows. 1) Each clause $\mathtt{C}_j = (\mathtt{l}_{j_1} \lor \mathtt{l}_{j_2} \lor \mathtt{l}_{j_3})$ corresponds to i) a vertex $u_j \in V_1$, and ii) six vertices $v_{j_{1}}, v'_{j_{1}}, v_{j_{2}}, v'_{j_{2}}, v_{j_{3}}, v'_{j_{3}} \in V_2$, where $v_{j_{t}}$ corresponds to $\mathtt{a}_{i}$, and $v'_{j_{t}}$ corresponds to $\overline{\mathtt{a}_{i}}$ (the negation of $\mathtt{a}_{i}$), if and only if $\mathtt{l}_{j_t}$ is $\mathtt{a}_{i}$ or $\overline{\mathtt{a}_{i}}$. 2) Each Boolean variable $\mathtt{a}_i$ corresponds to a vertex $w_i \in V_3$. (Thus, $V = (V_1 \cup V_2 \cup V_3)$ has $(n_{\text{var}}+7 \cdot m_{\text{cla}})$ vertices.) 3) For each clause $\mathtt{C}_j$, we add an edge between $u_j$ and each of the three vertices in $V_2$ that correspond to $\mathsf{TRUE}$ assignments of the literals in $\mathtt{C}_j$. 4) For each literal $\mathtt{l}_{j_t}$, if $\mathtt{l}_{j_t}$ is $\mathtt{a}_{i}$ or $\overline{\mathtt{a}_{i}}$, we add an edge between $w_i$ and $v_{j_{t}}$ and also an edge between $w_i$ and $v'_{j_{t}}$. (Therefore, $G$ has $(9 \cdot m_{\text{cla}})$ edges.) 5) For each edge between $V_1$ and $V_2$ with the form $(u_j, v_{j_{t}})$, we add an item $c_{j_t}$; similarly, for each edge between $V_1$ and $V_2$ with the form $(u_j, v'_{j_{t}})$, we add an item $c'_{j_t}$. 6) For each Boolean variable $\mathtt{a}_i$, we add two items $c_{i}$ and $c'_{i}$. 7) For the preference utility values, let $\text{p}(v,c) = 0$ for all $v \in V, c \in \mathcal{C}$. 8) For the social utility values, let $\tau(u_j,v_{j_{t}},c_{j_t}) = \tau(v_{j_{t}},u_j,c_{j_t}) = 1$ for every $(u_j, v_{j_{t}}) \in E$; similarly, let $\tau(u_j,v'_{j_{t}},c'_{j_t}) = \tau(v'_{j_{t}},u_j,c'_{j_t}) = 1$ for every $(u_j, v'_{j_{t}}) \in E$. Moreover, let $\tau(w_i, v_{j_{t}}, c_{i}) = \tau(v_{j_{t}}, w_i, c_{i}) = 1$ if and only if $\mathtt{l}_{j_t} = \mathtt{a}_{i}$; similarly, let $\tau(w_i, v'_{j_{t}}, c'_{i}) = \tau(v'_{j_{t}}, w_i, c'_{i}) = 1$ if and only if $\mathtt{l}_{j_t} = \overline{\mathtt{a}_{i}}$. All other $\tau$ values are zero. An example is illustrated in Figure \ref{fig:reduction}. Finally, let $k=\lambda=1$.

We first prove the sufficient condition. Assume that there exists a truth assignment in $\phi$ satisfying at least $\mathtt{m}$ clauses. We construct a feasible solution for \prob\ as follows. 1) For each satisfied clause $\mathtt{C}_j$, let ${\underline{t}_j}$ be the smallest $t$ such that $\mathtt{l}_{j_t}$ is assigned $\mathsf{TRUE}$. $u_j$ is displayed item $c_{j_{\underline{t}_j}}$ if $\mathtt{l}_{j_{\underline{t}_j}}$ is of the form $\mathtt{a}_{i}$; otherwise $u_j$ is displayed item $c'_{j_{\underline{t}_j}}$; 2) For each literal $\mathtt{l}_{j_t}$ with the form $\mathtt{a}_{i}$ and assigned $\mathsf{TRUE}$, $v_{j_{t}}$ is displayed item $c_{j_t}$; for each literal $\mathtt{l}_{j_t}$ with the form $\overline{\mathtt{a}_{i}}$ and assigned $\mathsf{TRUE}$, $v'_{j_{t}}$ is displayed item $c'_{j_t}$; 3) For each literal $\mathtt{l}_{j_t}$ with the form $\mathtt{a}_{i}$ assigned $\mathsf{FALSE}$, $v_{j_{t}}$ is displayed item $c_{i}$; for each literal $\mathtt{l}_{j_t}$ with the form $\overline{\mathtt{a}_{i}}$ assigned $\mathsf{FALSE}$, $v'_{j_{t}}$ is displayed item $c'_{i}$; 4) For each Boolean variable $\mathtt{a}_i$, $w_i$ is displayed item $c_i$ if and only if $\mathtt{a}_i$ is $\mathsf{FALSE}$ in the assignment; otherwise $w_i$ is displayed item $c'_i$; 5) For an unsatisfied clause $\mathtt{C}_j$, $u_j$ can be displayed any item.

In the following, we prove that this solution achieves an objective of at least $(2 \cdot \mathtt{m}+6 \cdot m_{\text{cla}})$ in \prob. We first observe that for each satisfied clause $\mathtt{C}_j$, either $c_{j_{\underline{t}_j}}$ is co-displayed to $u_j$ and $v_{j_{\underline{t}_j}}$, or $c'_{j_{\underline{t}_j}}$ is co-displayed to $u_j$ and $v'_{j_{\underline{t}_j}}$; each of the above cases achieves an objective value of 2. 
As there are at least $\mathtt{m}$ satisfied clauses, the above two cases in total contribute at least $2 \cdot \mathtt{m}$ to the objective of \prob. Next, for each Boolean variable $\mathtt{a}_i$ assigned $\mathsf{TRUE}$, $w_i$ is co-displayed $c'_i$ with every $v'_{j_{t}}$ such that $\mathtt{l}_{j_t}$ is $\mathtt{a}_{i}$ or $\overline{\mathtt{a}_{i}}$; similarly, for each Boolean variable $\mathtt{a}_i$ assigned $\mathsf{FALSE}$, $w_i$ is co-displayed $c_i$ with every $v_{j_{t}}$ such that $\mathtt{l}_{j_t}$ is $\mathtt{a}_{i}$ or $\overline{\mathtt{a}_{i}}$. 
The above two cases achieve an objective value of 2 for each co-display pairs. As there are exactly $(3 \cdot m_{\text{cla}})$ such pairs of $(w_i, v_{j_{t}})$ or $(w_i, v'_{j_{t}})$, they in total contributes $(6 \cdot m_{\text{cla}})$ to the objective of \prob. Combining the above two parts, the objective value is at least $(2 \cdot \mathtt{m} + 6 \cdot m_{\text{cla}})$.

We then prove the necessary condition. Assume that all possible truth assignments in $\phi$ satisfy fewer than $\mathtt{m}'$ clauses. We then prove that the optimal solution in the \prob\ instance is strictly less than $(2 \cdot \mathtt{m}'+6 \cdot m_{\text{cla}})$. We prove it by contradiction. If there exists a feasible solution of \prob\ with an objective at least $(2 \cdot \mathtt{m}'+6 \cdot m_{\text{cla}})$, we prove that there exists at least one feasible solution of \prob\ such that \textit{the total social utility from co-displaying items to one vertex in $V_2$ and another vertex in $V_3$}, denoted by $\tau(V_2, V_3)$, is exactly $6 \cdot m_{\text{cla}}$. To prove the above argument, we first observe that the total number of edges between vertices in $V_2$ and vertices in $V_3$ is $6 \cdot m_{\text{cla}}$. Furthermore, these edges form exactly $3 \cdot m_{\text{cla}}$ copies of $\text{P}_3$ (where every $\text{P}_3$ is formed by a $w_i$ and two vertices $v_{j_{t}}$ and $v'_{j_{t}}$ such that $\mathtt{l}_{j_t}$ is $\mathtt{a}_{i}$ or is $\overline{\mathtt{a}_{i}}$). Since the two edges in each $\text{P}_3$ can only contribute to the final objective through different co-displayed items, these edges in total contribute at most $6 \cdot m_{\text{cla}}$ to the final objective. Therefore, $\tau(V_2, V_3) \leq 6 \cdot m_{\text{cla}}$. Next, suppose a solution with $\tau(V_2, V_3) < 6 \cdot m_{\text{cla}}$ is given. For an arbitrary literal $\mathtt{l}_{j_t}$, let $\mathtt{a}_i$ be the corresponding variable. We examine the following cases: 1) $w_i$ is co-displayed $c_i$ with $v_{j_{t}}$ or $c'_i$ with $v'_{j_{t}}$. In this case, a social utility of 2 is achieved. 2) $w_i$ is not co-displayed any item with $v_{j_{t}}$ or $v'_{j_{t}}$. As $w_i$ does not contribute to the objective if it is displayed neither $c_i$ nor $c'_i$, it is feasible to display $c_i$ to $w_i$ without decreasing the total objective. Next, it is also feasible to co-display $c_i$ to \textit{both} $v_{j_{t}}$ and $v'_{j_{t}}$ while not decreasing the total objective, since the most social utility that $v_{j_{t}}$ or $v'_{j_{t}}$ could contribute previously from being co-displayed $c_{j_{t}}$ (or $c'_{j_{t}}$) is also at most 2. By repeatedly applying the two cases, we conclude that there exists a solution where each literal $\mathtt{l}_{j_t}$ contributes exactly 2 to $\tau(V_2, V_3)$, which implies $\tau(V_2, V_3) = 6 \cdot m_{\text{cla}}$.

Given a feasible solution with $\tau(V_2, V_3) = 6 \cdot m_{\text{cla}}$, we construct a truth assignment in $\phi$ according to this solution as follows. Let every variable $\mathtt{a}_{i}$ be $\mathsf{TRUE}$ if and only if $w_{i}$ is displayed $c'_i$; otherwise $\mathtt{a}_{i}$ is $\mathsf{FALSE}$. Note that this is consistent to the construction in the sufficient condition. For each pair of vertices $(u_j, v_{j_{t}})$ co-displayed $c_{j_{t}}$ in the solution, since $v_{j_{t}}$ is not co-displayed an item with any vertex in $V_3$, it follows that $v'_{j_{t}}$ must be co-displayed $c'_i$ with some $w_i \in V_3$ (otherwise $\tau(V_2, V_3) = 6 \cdot m_{\text{cla}}$ does not hold), which in turn implies $a_i$ (appearing in clause $\mathtt{C}_j$) is assigned $\mathsf{TRUE}$ in this truth assignment, and $\mathtt{C}_j$ is therefore satisfied. Analogously, for $(u_j, v'_{j_{t}})$ co-displayed $c'_{j_{t}}$ in the solution, $\mathtt{C}_j$ is also satisfied by this truth assignment. As the total objective is at least $(2 \cdot \mathtt{m}'+6 \cdot m_{\text{cla}})$ and $\tau(V_2, V_3) = 6 \cdot m_{\text{cla}}$, \textit{the total social utility from co-displaying items to one vertex in $V_1$ and another vertex in $V_2$}, denoted by $\tau(V_1, V_2)$, is at least $(2 \cdot \mathtt{m}')$, meaning that at least $\mathtt{m}'$ clauses are satisfied in $\phi$ by the constructed truth assignment, leading to a contradiction. Therefore, we conclude that the optimal solution in the \prob\ instance is strictly less than $(2 \cdot \mathtt{m}'+6 \cdot m_{\text{cla}})$.

Finally, for any instance of $\mathsf{MAX-E3SAT}$, a folklore random assignment of variables satisfies $\frac{7}{8} \cdot m_{\text{cla}}$ clauses in expectation (see, for example, Theorem 2.15 in \cite{MAX3SAT}). Therefore, for any instance of $\mathsf{MAX-E3SAT}$ with $m_{\text{cla}}$ clauses, the optimal objective is always at least $\frac{7}{8} \cdot m_{\text{cla}}$. Therefore, let $\mathsf{OPT}_{\mathsf{E3SAT}}$ denote the optimal objective in $\mathsf{MAX-E3SAT}$, we have
$$ \frac{7}{8} \cdot m_{\text{cla}} \leq \mathsf{OPT}_{\mathsf{E3SAT}} \leq m_{\text{cla}}.$$
Let $\chi$ be a constant. Based on this lower bound, we have:

\begin{align*}
    \mathsf{OPT}_{\mathsf{E3SAT}} \geq \chi \implies \mathsf{OPT}_{\mathsf{SVGIC}} &\geq 2\chi + 6m_{\text{cla}}\\
    &\geq 8\chi\\
    \mathsf{OPT}_{\mathsf{E3SAT}} < \frac{7}{8} \cdot \chi \implies \mathsf{OPT}_{\mathsf{SVGIC}} &< 2 \cdot \frac{7}{8} \cdot \chi + 6m_{\text{cla}}\\
    &< \frac{31}{4} \cdot \chi,
\end{align*}

\noindent where the second inequality holds since $ \frac{7}{8} \cdot m_{\text{cla}} \leq \mathsf{OPT}_{\mathsf{E3SAT}} < \frac{7}{8} \cdot \chi$ implies $ m_{\text{cla}} < \chi$. The lemma follows.
\end{proof}

Therefore, combining Lemma \ref{lemma:hardnesse3sat} and \ref{lemma:gapreduction}, if there exists an algorithm $\mathcal{ALG}$ that approximates \prob\ within a ratio of $\frac{8}{\frac{31}{4}} = \frac{32}{31}$ (or equivalently, solves the $\frac{32}{31}$-gap version of \prob), then it also solves the $\frac{8}{7}$-gap version of $\mathsf{MAX-E3SAT}$, which is known to be $\mathsf{NP}$-hard. This completes the proof of Theorem \ref{thm:new_hardness}.

We then prove the $\mathsf{APX}$-hardness through an exact reduction from the maximum $\mathcal{K}_3$-packing problem (Max-K3P) \cite{packing} as follows.

\begin{proof}
Given a graph $\hat{G} = (\hat{V}, \hat{E})$, Max-K3P aims to find a subgraph $\hat{H} \subseteq \hat{G}$ with the largest number of edges such that $\hat{H}$ is a union of vertex-disjoint edges and triangles. Given a Max-K3P instance $\hat{G} = (\hat{V}, \hat{E})$, we construct an \prob\ instance as follows. Let $G = \hat{G}$. For each edge $e = (u,v) \in \hat{E}$, we construct an item $c_{e}$. Let social utility $\tau(u,v,c_e) = \tau(v,u,c_e) = 0.5$. For each triangle $\Delta$ consisting of vertices $u,v$, and $w$ in $\hat{G}$, we construct an item $c_\Delta$. Let social utility $\tau(u,v,c_\Delta) = \tau(v,u,c_\Delta) = \tau(u,w,c_\Delta) = \tau(w,u,c_\Delta) = \tau(v,w,c_\Delta) = \tau(w,v,c_\Delta) = 0.5$. Let all other social utility values be 0. Also, let preference utility $p(u,c) = 0$ for all users $u$ and items $c$. Finally, let $k= \lambda = 1$.

We first prove the sufficient condition. Assume there exists a feasible $\hat{H} \subseteq \hat{G}$ with $x$ edges. For each disjoint edge $e = (u,v) \in \hat{H}$, let $\mathbf{A}(u,1) = \mathbf{A}(v,1) = c_{e}$ in \prob. Similarly, for each disjoint triangle $\Delta$ consisting of vertices $u,v$, and $w$ in $\hat{H}$, let $\mathbf{A}(u,1) = \mathbf{A}(v,1) = \mathbf{A}(w,1) = c_{\Delta}$. Each edge in $\hat{H}$ achieves an \scenario\ utility of 1. Therefore, this configuration achieves a total \scenario\ utility of $x$. We then prove the necessary condition. If the optimal objective in the \prob\ instance is $x$, we let $\hat{H}$ include all edges $e = (u,v)$ such that $\mathbf{A}(u,1) = \mathbf{A}(v,1) = c$ and $\tau(u,v,c) = 0.5$ (Note that $c$ could be $c_e$ or some $c_{\Delta}$). It is straightforward to verify that any connected component in $\hat{H}$ is either an edge or a triangle. As each edge in $\hat{H}$ contributes 1 to the objective in \prob, there are exactly $x$ edges in $\hat{H}$. Finally, since Max-K3P is $\mathsf{APX}$-hard \cite{packing}, \prob\ is also $\mathsf{APX}$-hard. 
\end{proof}
}

\opt{full}{
We then prove the hardness of approximation for the advanced \probtwo\ problem.

\begin{theorem} \label{thm:veryhard}
There exists no polynomial-time algorithm that approximates \probtwo\ within any constant factor, unless the exponential time hypothesis (ETH) fails.
\end{theorem}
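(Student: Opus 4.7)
The plan is to exhibit a gap-preserving reduction from a known hard selection problem---concretely, Densest-$\hat{k}$-Subgraph (DkS), or alternatively Label Cover with a sufficiently large alphabet---to \probtwo. The crux is to leverage the new ingredient of \probtwo\ that is absent in \prob, namely the subgroup size constraint $M$, as the device that enforces the ``pick a bounded subset'' step of DkS. Since \prob\ admits only $\tfrac{32}{31}-\epsilon$ inapproximability, the strengthening to ``any constant factor'' in \probtwo\ must be powered by this new constraint.

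Concretely, given a DkS instance $(\hat{G} = (\hat{V}, \hat{E}), \hat{k})$ with $\hat{n} = |\hat{V}|$, I would build an \probtwo\ instance as follows. Let $V = \hat{V}$, $E = \hat{E}$, and $\mathcal{C} = \{c^{\star}\} \cup \{c^{\text{d}}_1, \ldots, c^{\text{d}}_{\hat{n}}\}$, where $c^{\star}$ is the ``focal'' item and the rest are utility-free ``dummies''. Set $\text{p} \equiv 0$, and set $\tau(u, v, c^{\star}) = \tau(v, u, c^{\star}) = 1$ iff $(u, v) \in E$, with every other $\tau$-value zero. Take $\lambda = 1$, $k = 1$ (so teleportation plays no role and $\text{d}_{\text{tel}}$ can be set arbitrarily), and $M = \hat{k}$.

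The correctness argument then reduces to two observations. First, the subgroup-size constraint caps at $\hat{k}$ the number of users that can be co-displayed $c^{\star}$ in the single slot; the remaining users receive pairwise distinct dummy items, which is feasible (the no-duplication constraint is vacuous with $k=1$) and contributes $0$ to the objective. Second, writing $S \subseteq V$ for the set of users displayed $c^{\star}$, the total \scenario\ utility (with indirect co-display) collapses to
\begin{equation*}
    \sum_{u \in S} \sum_{\substack{v \in S \\ (u,v) \in E}} \tau(u, v, c^{\star}) \;=\; 2 \cdot \bigl| E(\hat{G}[S]) \bigr|,
\end{equation*}
i.e., twice the DkS objective on the chosen $\hat{k}$-vertex set. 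Hence any polynomial-time $\alpha$-approximation for \probtwo\ directly yields an $\alpha$-approximation for DkS.

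The main obstacle is the final hardness-importing step, since the sharpest constant-factor inapproximability for DkS is usually stated under Gap-ETH, a mild strengthening of ETH. To close the argument under classical ETH, I would instead reduce from Label Cover with a large alphabet $\Sigma$---inapproximable within any constant factor under ETH via the PCP theorem composed with Raz's parallel repetition applied to $3$-SAT. Each possible label becomes an item, the size-constraint mechanism enforces ``at most one label per vertex'', and social utilities are placed on user pairs exactly when the chosen labels satisfy the edge projection $\pi_e$. The overall skeleton---size constraint encodes the combinatorial selection, social utilities encode the reward---is identical to the DkS construction; only the bookkeeping over edges and label pairs becomes more intricate.
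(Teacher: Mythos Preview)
Your DkS reduction is essentially the paper's proof: the paper also sets $k=\lambda=1$, $M=\hat{k}$, puts all preference utilities to zero, and loads the social utility onto a single distinguished item so that the \probtwo\ objective equals (a constant times) the number of edges inside the chosen $\hat{k}$-set. The only cosmetic difference is that the paper pads $V$ with isolated dummy vertices so that $|V|$ is a multiple of $\hat{k}$ and uses exactly $|V|/\hat{k}$ items (forcing every item to be displayed to exactly $\hat{k}$ users), whereas you instead add $\hat{n}$ utility-free dummy items to absorb the unselected users; both devices achieve the same thing.

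Your worry about ETH versus Gap-ETH is unnecessary. The paper simply imports Manurangsi's result (STOC 2017) that, assuming classical ETH, DkS admits no polynomial-time constant-factor approximation (in fact no $n^{1/(\log\log n)^c}$-approximation). So the DkS route already closes the argument under ETH, and you do not need the Label Cover detour.

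As a side remark, your Label Cover sketch has a gap: in \probtwo, social utility on an edge $(u,w)$ is collected only when $u$ and $w$ are (indirectly) co-displayed the \emph{same} item, whereas Label Cover rewards assigning \emph{different} labels $\sigma_u,\sigma_w$ that satisfy the projection $\pi_e(\sigma_u)=\sigma_w$. Encoding ``item = label'' therefore does not capture the Label Cover reward; you would need a more elaborate gadget (e.g., items indexed by label \emph{pairs}, plus a mechanism to enforce consistency of a vertex's label across its incident edges), which the size constraint alone does not supply. Since the DkS route already suffices under ETH, this is moot.
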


\begin{proof}
We prove the theorem with a gap-preserving reduction from the Densest k-Subgraph problem (DkS) \cite{eth}. Given a graph $\hat{G} = (\hat{V}, \hat{E})$, DkS aims to find a subgraph $\hat{H} \subseteq \hat{G}$ with $\hat{k}$ vertices with the largest number of induced edges. (Note that for a fixed $\hat{k}$, maximizing the density is equivalent to maximizing the number of edges. We use $\hat{k}$ instead of the conventional $k$ to avoid confusion with the number of slots in \probtwo.) Given a DkS instance $\hat{G} = (\hat{V}, \hat{E})$ and $\hat{k}$, we construct an \prob\ instance as follows. Let $V = \hat{V} \cup S$, where $S = \{w_1, w_2, \ldots, w_{|S|} \}$ consists of $|S| = \hat{k} - (|\hat{V}|$ mod $\hat{k})$ additional vertices if $\hat{k} \nmid |\hat{V}|$, and $S = \emptyset$ if $\hat{k} \mid |\hat{V}|$. Let $E = \hat{E}$. Therefore, all vertices in $S$ are singletons. Let $\mathcal{C} = \{c_1, c_2, \ldots, c_m\}$, where $m = \frac{|V|}{\hat{k}}$. Let preference utility $p(u,c) = 0$ for all $u \in V$ and $c \in \mathcal{C}$. For each edge $e = (u,v) \in \hat{E}$, let $\tau(u,v,c_1) = \tau(v,u,c_1) = 0.5$. Let $\tau(u,v,c) = 0$ for all $(u,v) \notin \hat{E}$ or $c \neq c_1$. Finally, let $k=1$, $\lambda = 1$, and $M = \hat{k}$.

We first prove the sufficient condition. Assume that there exists a subgraph $\hat{H} \subseteq \hat{G}$ with $\hat{k}$ vertices and $x$ edges. We construct an \probtwo\ solution with objective $x$ by co-displaying $c_1$ to all vertices in $\hat{H}$ (and thereby obtaining social utility of 1 on each edge) and then randomly partitioning the vertices not in $\hat{H}$ to $m-1$ sets of cardinality $\hat{k}$. We then prove the necessary condition. Since there are $m = \frac{|V|}{\hat{k}}$ items, and the subgroup size constraint is $M = \hat{k}$, it follows that each item is co-displayed to exactly $M = \hat{k}$ users in a feasible solution. Assume the optimal objective in the \probtwo\ instance is $x$, then the $M = \hat{k}$ users co-displayed item $c_1$ collectively form a induced subgraph of exactly $x$ edges. Let $\hat{H}$ be the induced subgraph formed by these vertices in $\hat{G}$, then $\hat{H}$ has exactly $x$ edges. Finally, according to \cite{eth}, assuming the exponential time hypothesis (ETH) holds, there is no polynomial-time algorithm that approximates DkS to within a constant factor; in fact, a stronger inapproximability of $n^{(1/ \log \log n)^c}$ holds. The result for \probtwo\ directly follows.
\end{proof}

Here we further point out that both the hardness results of \prob\ and \probtwo\ are from a simple case where $k=1$, i.e., the considered problems are already very hard when only one display slot is in consideration. Therefore, the hardness of \prob\ and \probtwo\ with general values of $k$ may be even harder. We revisit this issue in Section \ref{subsec:combinatorial}.
}

Next, we propose an Integer Programming (IP) model for \prob\ \opt{full}{and \probtwo\ }to serve as the cornerstone for the approximation algorithm proposed later in Section \ref{sec:algo}.\opt{full}{ We begin with the IP for \prob.} Let binary variable $x^c_{u,s}$ denote whether user $u$ is displayed item $c$ at slot $s$, i.e., $x^c_{u,s} = 1$ if and only if $\mathbf{A}(u,s) = c$. Let $x^c_u$ indicate whether $u$ is displayed $c$ at any slot in the \configuration. Moreover, for each pair of friends $e = (u,v) \in E$, let binary variable $y^c_{e,s}$ denote whether $u$ and $v$ are co-displayed item $c$ at slot $s$, i.e., $y^c_{e,s} = 1$ if and only if $u \xleftrightarrow[s]{c} v$. Similarly, variable $y^c_{e} = 1$ if and only if $u \xleftrightarrow{c} v$. The objective of \prob\ is specified as follows.

$$ \text{max} \sum\limits_{u \in V} \sum\limits_{c \in \mathcal{C}} [ (1-\lambda) \cdot \text{p}(u, c) \cdot x^c_u + \lambda \cdot \sum\limits_{e = (u,v) \in E} ( \tau(u,v,c) \cdot y^c_e) ] $$

\noindent subject to the following constraints,
\small
\begin{align}
    \sum_{s=1}^{k} x^c_{u,s} \leq 1, \quad & \forall u \in V, c \in \mathcal{C} \label{ilp:no_replicated_item}\\
    \sum_{c \in \mathcal{C}} x^c_{u,s} = 1, \quad & \forall u \in V, s \in [k] \label{ilp:one_slot_one_item}\\
    x^c_u = \sum_{s=1}^{k} x^c_{u,s}, \quad & \forall u \in V, c \in \mathcal{C} \label{ilp:user_recommended}\\
    y^c_{e} = \sum_{s=1}^{k} y^c_{e,s}, \quad & \forall e = (u,v) \in E, c \in \mathcal{C} \label{ilp:edge_recommended}\\
    y^c_{e,s} \leq x^c_{u,s}, \quad & \forall e = (u,v) \in E, s \in [k], c \in \mathcal{C} \label{ilp:direct_relation_1} \\
    y^c_{e,s} \leq x^c_{v,s}, \quad & \forall e = (u,v) \in E, s \in [k], c \in \mathcal{C} \label{ilp:direct_relation_2} \\
    x^c_{u,s}, x^c_u, y^c_{e,s}, y^c_e \in \{ 0,1 \}, \quad & \forall u \in V, e \in E, s \in [k], c \in \mathcal{C} \label{ilp:integrality}.
\end{align}

\normalsize
Constraint (\ref{ilp:no_replicated_item}) states that each item $c$ can only be displayed at most once to a user $u$ (i.e., the no-duplication constraint). Constraint (\ref{ilp:one_slot_one_item}) guarantees that each user $u$ is displayed exactly one item at each slot $s$. Constraint (\ref{ilp:user_recommended}) ensures that $x^c_u = 1$ ($u$ is displayed $c$ in the configuration) if and only if there is exactly one slot $s$ with $x^c_{u,s} = 1$. Similarly, constraint (\ref{ilp:edge_recommended}) ensures $y^c_{e} = 1$ if and only if $y^c_{e,s} =1 $ for exactly one $s$. Constraints (\ref{ilp:direct_relation_1}) and (\ref{ilp:direct_relation_2}) specify the co-display, where $y^c_{e,s}$ is allowed to be $1$ only if $c$ is displayed to both $u$ and $v$ at slot $s$, i.e., $x^c_{u,s} = x^c_{v,s} = 1$. Finally, constraint (\ref{ilp:integrality}) ensures all decision variables are binary. \revise{Note that the $x$ variables in the IP model are sufficient to represent the solution of \prob\ (i.e., $x^c_{u,s}$ denotes whether an item $c$ is displayed at slot $s$ for user $u$), whereas the $y$ variables are auxiliary to enable \prob\ to be formulated as an IP. Without incorporating the $y$ variables, the formulation will become nonlinear.}



\opt{full}{
Next, for each $e = (u,v) \in E$, let binary variable $z^c_e$ denote whether $u$ and $v$ are co-displayed (both directed \textit{or} indirected) item $c$. Note that $y^c_{e} = 1$ implies $z^c_{e} = 1$. To avoid repetitive calculation of the social utility, the coefficient before $y^c_{e}$ in the objective is modified to be $(1-\text{d}_{\text{tel}})$, such that a social utility of $\tau(u,v,c)$ is obtained when $y^c_{e} = z^c_{e} = 1$. The objective of \probtwo\ is thus

\begin{align*} \text{max} \sum\limits_{u \in V} \sum\limits_{c \in \mathcal{C}} &[ (1-\lambda) \cdot \text{p}(u, c) \cdot x^c_u + \lambda \cdot \sum\limits_{e = (u,v) \in E} \tau(u,v,c) \\ &\cdot \big( (1-\text{d}_{\text{tel}}) \cdot y^c_e + \text{d}_{\text{tel}} \cdot z^c_e \big) ] 
\end{align*}

\noindent subject to constraints (\ref{ilp:no_replicated_item}) to (\ref{ilp:direct_relation_2}) above, as well as the following constraints:
\small
\begin{align}
    z^c_{e} \leq x^c_u, \quad & \forall e = (u,v) \in E, c \in \mathcal{C} \label{ilp2:indirect_relation_1} \\
    z^c_{e} \leq x^c_v, \quad & \forall e = (u,v) \in E, c \in \mathcal{C} \label{ilp2:indirect_relation_2} \\
    x^c_{u,s}, x^c_u, y^c_{e,s}, y^c_e, z^c_e \in \{ 0,1 \}, \quad & \forall u \in V, e \in E, s \in [k], c \in \mathcal{C} \label{ilp2:integrality}.
\end{align}

\normalsize
Constraints \ref{ilp2:indirect_relation_1} and \ref{ilp2:indirect_relation_2} specify the co-display, where $z^c_{e}$ is allowed to be $1$ only if $c$ is displayed to $u$ and $v$ in some (not necessarily the same) slots, i.e., $x^c_u = x^c_v = 1$. Therefore, $y^c_{e} = z^c_{e} = 1$ for the case of direct co-display. Constraint \ref{ilp2:integrality} is the integrality constraint. The other constraints follow from the basic \prob\ problem. Note that in the optimal solution, whenever $y^c_{e} = 1$ for some edge $e$, the corresponding $z^c_{e}$ will also naturally equal $1$, as it is always better to have a larger $z^c_{e}$. Therefore, no additional constraint needs to be introduced to enforce this relation. Note that, similar to that in the IP for \prob, the $y$ and $z$ variables are auxiliary variables to enable linearity in this IP.
}

\opt{full}{
\subsection{Related Combinatorial Problems} \label{subsec:combinatorial}

\begin{table*}[t]
\begin{small}
\caption{Related Combinatorial Problems (approximation ratio for MAX problems inversed).}
\begin{center}
\begin{tabular}{|l|l|l|l|l|l|l|l|l|}
\hline
\begin{tabular}[c]{@{}l@{}}Problem\\ Name\end{tabular} & Abbrev. & Prelabeling & Type & Objective & \begin{tabular}[c]{@{}l@{}}Hardness\\ Result\end{tabular} & \begin{tabular}[c]{@{}l@{}}Approx.\\ Result\end{tabular} & \begin{tabular}[c]{@{}l@{}}Is a special\\ case of\end{tabular} & \begin{tabular}[c]{@{}l@{}}Is the comple-\\ ment of\end{tabular} \\ \hline \hline
\begin{tabular}[c]{@{}l@{}}Submodular\\ Multi-Labeling \cite{multilabeling} \end{tabular} & Sub-ML & General & MIN & \begin{tabular}[c]{@{}l@{}}submodular\\ set function\end{tabular} & \begin{tabular}[c]{@{}l@{}}$2-\frac{2}{m}-\epsilon$\\ (UGC) \cite{multilabeling} \end{tabular}  & $2-\frac{2}{m}$ \cite{multilabeling} &  & Sup-ML \\ \hline
\begin{tabular}[c]{@{}l@{}}Supermodular\\ Multi-Labeling \cite{multilabeling} \end{tabular} & Sup-ML & General & MAX & \begin{tabular}[c]{@{}l@{}}supermodular\\ set function\end{tabular} & $\mathsf{NP}$-hard & $\frac{m}{2}$ \cite{multilabeling}  &  & Sub-ML \\ \hline
\begin{tabular}[c]{@{}l@{}}Submodular\\ Multiway Partition \cite{multiwaypartition} \end{tabular} & Sub-MP & $m$ terminals & MIN & \begin{tabular}[c]{@{}l@{}}submodular\\ set function\end{tabular} &  \begin{tabular}[c]{@{}l@{}}$2-\frac{2}{m}-\epsilon$\\ ($\mathsf{NP}=\mathsf{RP}$) \cite{mp13soda} \end{tabular} & $2-\frac{2}{m}$ \cite{mp13soda} & Sub-ML & Sup-MP \\ \hline
\begin{tabular}[c]{@{}l@{}}Supermodular\\ Multiway Partition \cite{multilabeling} \end{tabular} & Sup-MP & $m$ terminals & MAX & \begin{tabular}[c]{@{}l@{}}supermodular\\ set function\end{tabular} & $\mathsf{NP}$-hard &  & Sup-ML & Sub-MP \\ \hline
\begin{tabular}[c]{@{}l@{}}Maximum\\ Happy Vertices \cite{homophyly} \end{tabular} & MHV & General & MAX & \begin{tabular}[c]{@{}l@{}}\# happy\\ vertices\end{tabular} & $\mathsf{NP}$-hard & $\frac{m}{2}$ \cite{multilabeling} & Sup-ML & MUHV \\ \hline
\begin{tabular}[c]{@{}l@{}}Minimum\\ Unhappy Vertices \cite{multilabeling} \end{tabular} & MUHV & General & MIN & \begin{tabular}[c]{@{}l@{}}\# unhappy\\ vertices\end{tabular} & \begin{tabular}[c]{@{}l@{}}$2-\frac{2}{m}-\epsilon$\\ (UGC) \cite{multilabeling} \end{tabular}  & $2-\frac{2}{m}$ \cite{multilabeling} & Sub-ML & MHV \\ \hline
\begin{tabular}[c]{@{}l@{}}Maximum\\ Happy Edges \cite{homophyly} \end{tabular} & MHE & General & MAX & \begin{tabular}[c]{@{}l@{}}\# happy\\ edges\end{tabular} & $\mathsf{NP}$-hard & 1.1716 \cite{mhe} & Sup-ML &  \\ \hline
\multicolumn{2}{|l|}{Multiway Cut \cite{multiwaycut}} & $m$ terminals & MIN & \begin{tabular}[c]{@{}l@{}}\# edges\\ removed\end{tabular} & $\frac{8}{7}$ \cite{sdp} & 1.2965 \cite{multiwaycutapprox} & UML & \begin{tabular}[c]{@{}l@{}}Multiway\\ Uncut\end{tabular} \\ \hline
\multicolumn{2}{|l|}{Multiway Uncut \cite{multiwayuncut}} & $m$ terminals & MAX & \begin{tabular}[c]{@{}l@{}}\# edges\\ preserved\end{tabular} &$\mathsf{NP}$-hard  & 1.1716 \cite{multiwayuncut} & MHE & \begin{tabular}[c]{@{}l@{}}Multiway\\ Cut\end{tabular} \\ \hline \hline
\begin{tabular}[c]{@{}l@{}}Uniform Metric\\ Labeling \cite{JK02} \end{tabular} & UML & $m$ terminals & MIN & \begin{tabular}[c]{@{}l@{}}\# total cost of\\edges/vertices\end{tabular} & \begin{tabular}[c]{@{}l@{}}$2-\frac{2}{m}$\\ (UGC) \cite{sdp} \end{tabular} & $2$ \cite{JK02} & \begin{tabular}[c]{@{}l@{}}Metric\\ Labeling\end{tabular} & \\ \hline
\end{tabular}
\label{tab:theory}
\end{center}
\end{small}
\end{table*}

\prob\ is related to the \textit{Multi-Labeling} (ML) \cite{multilabeling} problem and its variations, including {Multiway Partition} \cite{multilabeling, multiwaypartition}, \textit{Maximum Happy Vertices/Edges} (MHV/MHE) \cite{homophyly}, and \textit{Multiway Cut} \cite{multiwaycut} in graphs. In the Multi-Labeling (ML) problem, the general inputs are a ground set of entities $V$, a set of labels $L = \{1, 2, \ldots, m\}$, and a \textit{partial} labeling function $\ell : V \mapsto L$ that pre-assigns each label $i$ to a non-empty subset $T_i \subseteq V$, and a set function $f: 2^V \mapsto \mathcal{R}$. The goal of ML is to partition the ground set into $m$ subsets to optimize the aggregated set function on the partitioned subsets, i.e., $\sum\limits_{i=1}^{m} f(T_i)$. The special cases with set function $f$ being submodular or supermodular is called \textit{Submodular Multi-Labeling} (Sub-ML) or \textit{Supermodular Multi-Labeling} (Sup-ML) \cite{multilabeling}. Another work \cite{multiwaypartition} studies the case where the partial labeling function is required to assign a label $i$ to only one entity $v_i$. The above problems become \textit{Submodular Multiway Partition} (Sub-MP) and \textit{Supermodular Multiway Partition} (Sup-MP) in this context. Some special cases of Sub-ML and Sup-ML are described as graph problems, where the ground set $V$ is the vertex set on a graph $G = (V,E)$. In this context, the above assumption on $f$ in Sub-MP and Sup-MP can be interpreted as $m$ specified terminals on the graph. Two representative special cases are the \textit{Maximum Happy Vertices} (MHV) problem and the \textit{Maximum Happy Edges} (MHE) problem  \cite{homophyly}. Originally proposed to study homophyly effects in networks, MHV and MHE are described as coloring problems where, different from traditional graph coloring, vertices are encouraged to share the same color with neighbors. \textit{Edges with same-color vertices} and \textit{vertices sharing the same color with all neighbors} are characterized as \textit{happy edges} and \textit{happy vertices}, respectively. The optimization goals of MHV and MHE are thus to maximize the number of happy vertices and edges. The complement problem of MHV, \textit{Minimum Unhappy Vertices} (MUHV) \cite{multilabeling}, minimizes the number of unhappy vertices. When the above optimization goal of MHE is combined with the assumption of $m$ terminals, the resultant problem becomes the \textit{Multiway Uncut} problem \cite{multiwayuncut}, which is the complement of the classical and extensively studied \textit{Multiway Cut} problem \cite{multiwaycut}. The Multiway Cut problem asks for the minimum number of edges removed to partition a graph into $m$ subgraph, each containing one terminal. The Metric Labeling problem \cite{JK02} is a generalization of Multiway Cut considering edge weights and also a cost function on labeling the vertices. We summarize the problem characteristics, hardness results, currently best algorithmic results, and their interdependency in Table \ref{tab:theory}. All the problems are $\mathsf{NP}$-hard, while more advanced inapproximability results, relying on other complexity conjectures, are stated for some problems. Note that the approximation ratios for maximization problems are shown as inverses (i.e., values larger than 1) for better comparison and consistency with this paper.

Among the ML-type problems, \prob\ is particularly correlated to the MHE problem. Regarding the displayed items in \prob\ as colors(labels) in MHE, the social utility achieved in \prob\ is closely related to preserved number of happy edges in MHE, and \prob\ thereby encourages partitioning all users into dense subgroups to preserve the most social utility. However, \prob\ is more difficult than the above ML-type problems due to the following reasons. 1) The ML-type problems find a strict partition that maps each entity/vertex to only one color(label), while \prob\ assigns $k$ items to each user, implying that any direct reduction from the above problems can only map to the $k=1$ special case in \prob, where \algo\ can achieve a 2-approximation. 
2) The above problems do not discriminate between different labels, i.e., switching the assigned labels of two different subgroups will not change their objective functions. This corresponds to the special case of \prob\ where all preference utility $\text{p}(u, c)$ and social utility $\tau(u, v, c)$ do not depend on the item $c$.\opt{fullrevise}{ In contrast, \prob\ captures the item preferences as both utility measures in \prob\ depend on the item identity.}
3) The above problems admit a partial labeling (pre-labeling) in the input such that some entities have predefined fixed labels (otherwise labeling every entity with the same label is optimal and renders the problem trivial), while \prob\ does not specify any items to be displayed to specific users. However, \prob\ requires $k$ items for each user; moreover, even in the $k=1$ special case, it is still not optimal to simply display the same item to all users in \prob\ due to the item-dependent preference and social utility.

Finally, we note that while most of the minimization problems admit stronger hardness results than conventional $\mathsf{NP}$-hardness, no such type of results are known for all maximization problems above. As \prob\ is also a maximization problem, its hardness does not directly follow from any complementary minimization problems. Interestingly, the algorithmic aspect of all these problems share an obvious trend: almost all approximation results here, including our algorithm for \prob, utilizes some kind of (dependent) randomized rounding on some formulation of relaxed problem (not limited to linear relaxation). In this regard, this paper extends the scheme beyond labeling problems to the \prob\ problem that assigns a $k$-itemset to each user, which has an even more complicated combinatorial nature.
}

\section{Algorithm Design}\label{sec:algo}

\opt{full}{
\begin{table}[t!]
\caption{Notations used in \algo.}
\begin{center}
\begin{tabular}{|c|l|}\hline
Symbol & Description \\\hline\hline
$X^\ast$ & optimal \fragsol\ \\\hline
${x^\ast}^c_{u,s}$ & optimal decision variable \\\hline
$(c,s,\alpha)$ & set of focal parameters \\\hline
$c/\hat{c}$ & \roundingitem\ \\\hline
$s/\hat{s}$ & \roundingslot\ \\\hline
$u/\hat{u}$ & current user/eligible user\\\hline
$\alpha$ & \roundingalpha\ \\\hline
$U$ & target subgroup \\\hline
$\text{OPT}$ & optimal \scenario\ utility \\\hline
$\mathcal{R}$ & achieved utility \\\hline
$\mathcal{R}_{\text{per}}$ & achieved preference utility \\\hline
$\mathcal{R}_{\text{soc}}$ & achieved social utility \\\hline
$S_{\text{cur}}$ & available display units \\\hline
$S_{\text{tar}}(c,s,{x^\ast}^c_{u,s})$ & display units to be filled \\\hline
$S_{\text{fut}}(c,s,{x^\ast}^c_{u,s})$ & display units not to be filled \\\hline
${\text{ALG}}(S_{\text{tar}}(c,s,{x^\ast}^c_{u,s}))$ & \scenario\ utility gained by \stagethree\ \\\hline
${\text{OPT}}_{\text{LP}}(S_{\text{fut}}(c,s,{x^\ast}^c_{u,s}))$ & expected future \scenario\ utility \\\hline
$\text{w}^c_e$ & $\tau(\hat{u},\hat{v},c)+\tau(\hat{v},\hat{u},c)$\\\hline
$r$ & balancing ratio \\\hline
\end{tabular}
\label{tab:symbols_algo}
\end{center}
\end{table}
}

In this section, we introduce the \algofull\ (\algo) algorithm to tackle \prob. As shown in Example \ref{example:illustrative}, the \personalized\ and \group\ approaches do not solve \prob\ effectively, as the former misses on social utility from co-display while the latter fails to leverage the flexibility of \notion\ to preserve personal preference. An alternative idea, called the \textit{subgroup} approach, is to first pre-partition the shopping group (i.e., the whole user set) into some smaller social-aware subgroups (e.g., using traditional community detection techniques), and then determine the displayed items based on preferences of the subgroups. While this idea is effective for social event organization \cite{XW18WWW} where each user is assigned to exactly one social activity, it renders the partitioning of subgroups static across all display slots in \prob, i.e., a user is always co-displayed common items only with other users in the same subgroup. Therefore, this approach does not fully exploit the \notion\ flexibility, leaving some room for better results.

Instead of using a universal partition of subgroups as in the aforementioned subgroup approach, we aim to devise a more sophisticated approach that allows varied co-display subgroups across the display slots to maximize the user experience. Accordingly, we leverage Linear Programming (LP) relaxation strategies that build on the solution of the Integer Program formulated in Section \ref{subsec:hardness_ip} because it naturally facilitates different subgroup partitions across all slots while allocating proper items for those subgroups with CID. In other words, our framework partitions the subgroups (for each slot) and selects the items simultaneously, thus avoiding any possible shortcomings of two-phased approaches that finish these two tasks sequentially. By relaxing all the integrality constraints in the IP, we obtain a relaxed linear program whose \textit{fractional} optimal solution can be explicitly solved in polynomial time. For an item $c$ to be displayed to a user $u$ at a certain slot $s$, the fractional decision variable ${x^\ast}^c_{u,s}$ obtained from the optimal solution of the LP relaxation problem can be assigned as its \emph{\weight}. Items with larger \weight s\ are thus inclined to contribute more \scenario\ utility (i.e., the objective value), since they are preferred by the users or more capable of triggering social interactions.

Next, it is important to design an effective rounding procedure to construct a promising \configuration\ according to the \weight s. We observe that simple \textit{independent} rounding schemes may perform egregiously in \prob\ because they do not facilitate effective co-displaying of common items, thereby losing a huge amount of potential social utility upon constructing the \configuration, especially in the cases where the item preferences are not diverse. Indeed, we prove that independent rounding schemes may achieve an expected total objective of only $O(1/m)$ of the optimal amount. Motivated by the incompetence of independent rounding, our idea is to leverage \textit{dependent rounding} schemes that encourage co-display of items of common interests, i.e., with high \weight s to multiple users in the optimal LP solution. 

Based on the idea of dependent rounding schemes in \cite{JK02}, we introduce the idea of \textit{\stagethreefull} (\stagethree) that co-displays a \textit{\roundingitem} $c$ at a specific \textit{\roundingslot} $s$ to every user $u$ with a \weight\ $x^c_{u,s}$ greater than a \textit{\roundingalpha} $\alpha$. In other words, \stagethree\ clusters the users with high  \weight s to a \roundingitem\ $c$ to form a \textit{target subgroup} in order to co-display $c$ to the subgroup at a specific display slot $s$. Depending on the randomly chosen set of \textit{focal parameters}, including the \roundingitem, the \roundingslot, and the \roundingalpha, the size of the created target subgroups can span a wide spectrum, i.e., as small as a single user and as large as the whole user set $V$, to effectively avoid the pitfalls of \personalized\ and \group\ approaches. The randomness naturally makes the algorithm less vulnerable to extreme-case inputs, therefore resulting in a good approximation guarantee. Moreover, \stagethree\ allows the partitions of subgroups to vary across all slots in the returned \configuration, exploiting the flexibility provided by \notion. However, different from the dependent rounding schemes in \cite{JK02}, the construction of \configuration s in \prob\ faces an additional challenge -- it is necessary to carefully choose the displayed items at multiple slots to ensure the no-duplication constraint. 

We prove that \algo\ is a 4-approximation algorithm for \prob\ and also show that \algo\ can be fully derandomized into a deterministic approximation algorithm. \opt{full}{We also tailor \stagethree\ to consider the size constraint so as to extend \algo\ for the more complicated \probtwo. }
In the following, we first deal with the case with $\lambda = \frac{1}{2}$. We observe that all other cases with $\lambda \neq 0$ can be reduced to this case by proper scaling of the inputs, i.e., $\text{p'}(u,c) = \frac{1-\lambda}{\lambda}\text{p}(u,c)$, whereas $\lambda = 0$ makes the problem become trivial. \opt{short}{We explicitly prove this property in Section \ref{subsec:enhancements} in the full version \cite{Online}.}\opt{full}{We explicitly prove this property in Section \ref{subsec:enhancements}.} Moreover, for brevity, the total \scenario\ utility is scaled up by 2 so that it is a direct sum of the preference and social utility. \opt{short}{A table of all notations used in \algo\ is also provided in \cite{Online}.}\opt{full}{All used notations in \algo\ are summarized in Table \ref{tab:symbols_algo}.}

\subsection{LP Relaxation and an Independent Rounding Scheme} \label{subsec:independent}

\revise{Following the standard linear relaxation technique \cite{approxalg}, the LP relaxation of \prob\ is formulated by replacing the integrality constraint (constraint (\ref{ilp:integrality})) in the IP model, i.e., $ x^c_{u,s}, x^c_u, y^c_{e,s}, y^c_e \in \{ 0,1 \}$, with linear upper/lower bound constraints, i.e., $0 \leq  x^c_{u,s}, x^c_u, y^c_{e,s}, y^c_e \leq 1$. The optimal \fragsol\ of the relaxed problem can be acquired in polynomial time with commercial solvers, e.g., CPLEX \cite{CPLEX} or Gurobi \cite{Gurobi}. Recall that the $x$-variables are sufficient to represent the solution of \prob\ (i.e., $x^c_{u,s}$ denotes whether an item $c$ is displayed at slot $s$ for user $u$), whereas the $y$-variables are auxiliary. Therefore, the optimal solution can be fully represented by $X^\ast$ (the set of optimal $x$ variables).} The fractional decision variable ${x^\ast}^c_{u,s}$ in $X^\ast$ is then taken as the \textit{\weight} of item $c$ at slot $s$ for user $u$. Note that the optimal objective in the relaxed LP is an upper bound of the optimal total \scenario\ utility in \prob, because the optimal solution in \prob\ is guaranteed to be a feasible solution of the LP relaxation problem.

\begin{example} \label{example:lp_relax}
Table \ref{table:fragsol} shows the \weight s in Example \ref{example:running}, where the \fragsol\ is identical for all slots 1-3 (thereby only slot 1 is shown). For example, the \weight\ of $c_1$ (the tripod) to Alice at each slot is ${x^\ast}^{c_1}_{u,1} = {x^\ast}^{c_1}_{u,2} = {x^\ast}^{c_1}_{u,3} = 0.33$. 
\end{example}

\small
\begin{table}[t]
	\centering
	\caption{Optimal \fragsol\ for slot 1 in Example \ref{example:running}.}
	\label{table:fragsol}
	\begin{tabular}{l|ccccc}
  &  ${x^\ast}^{c_1}_{\cdot,1}$ & ${x^\ast}^{c_2}_{\cdot,1}$ & ${x^\ast}^{c_3}_{\cdot,1}$ & ${x^\ast}^{c_4}_{\cdot,1}$ & ${x^\ast}^{c_5}_{\cdot,1}$ \\ \hline 
 $u_A$ &  $0.33$  & $0.33$  &  $0$    &  $0$    &  $0.33$ \\
 $u_B$ &  $0.33$  & $0.33$  &  $0$    &  $0.33$ &  $0$    \\
 $u_C$ &  $0$     & $0$     & $0.33$  &  $0.33$ &  $0.33$ \\
 $u_D$ &  $0.33$  & $0$     & $0$     &  $0.33$ &  $0.33$ \\
	\end{tabular}
\end{table}
\normalsize

Note that three items ($c_1, c_2$, and $c_5$) have nonzero \weight s to Alice at slot 1 in Example \ref{example:lp_relax}, which manifests that the optimal LP solution may not construct a valid \configuration\ because each user is allowed to display exactly one item at each display slot in \prob. Therefore, a rounding scheme is needed to construct appropriate \configuration s from the \weight s. Given $X^\ast$, a simple rounding scheme is to randomly (and independently) assign item $c$ to user $u$ at slot $s$ with probability ${x^\ast}^c_{u,s}$, i.e., the \weight\ of $c$ to $u$ at $s$, so that more favorable items are more inclined to be actually displayed to the users. \opt{full}{This rounding scheme is summarized in Algorithm \ref{alg:trivial_rounding}. 

\begin{algorithm}[h]  
    \caption{Trivial Rounding Scheme}
    \label{alg:trivial_rounding}
    \begin{algorithmic}[1]
        \Require $X^\ast$
        \Ensure An \configuration\ $\mathbf{A}$
        \For {$u \in V$}
            \For {$s \in \{1,2,\ldots,k\}$}
                \State Display item $c$ to user $u$ at slot $s$ independently with probability ${x^\ast}^c_{u,s}$
            \EndFor
        \EndFor
    \end{algorithmic}
\end{algorithm}
}

However, as this strategy selects the displayed items independently, for a pair of friends $u$ and $v$, the chance that the algorithm obtains high social utility by facilitating co-display is small, since it requires the randomized rounding process to hit on the same item for both $u$ and $v$ simultaneously. Furthermore, this strategy could not ensure the final \configuration\ to follow the no-duplication constraint, as an item $c$ can be displayed to a user $u$ at any slot with a nonzero \weight. The following lemma demonstrates the ineffectiveness of this rounding scheme.

\begin{lemma} \label{lemma:trivial_rounding}
There exists an \prob\ instance $I$ on which the above rounding scheme achieves only a total \scenario\ utility of $O(\frac{1}{m})$ of the optimal value in expectation.
\end{lemma}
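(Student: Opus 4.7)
The plan is to exhibit a concrete instance on which the LP relaxation admits a highly symmetric fractional optimum that the independent rounding scheme provably cannot exploit. A minimal construction I would use is a complete social graph on $n$ users with $m$ items, a single display slot ($k=1$), all preference utilities set to zero, and uniform social utilities $\tau(u,v,c) = 1$ for every item $c$ and every edge $(u,v) \in E$. The instance is stripped down so that every term in the LP and every rounding probability can be written in closed form, and the adversarial structure comes entirely from the symmetry between items rather than from pathological utility values.

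The first step is to identify the ``bad'' LP optimum. Under the constraints $\sum_{c} x^c_{u,1} = 1$ and $y^c_{e,1} \leq \min(x^c_{u,1}, x^c_{v,1})$, it is straightforward to verify that the uniform assignment ${x^\ast}^c_{u,s} = 1/m$ for every $(c,u,s)$ attains the LP objective $n(n-1)$ (under the $\lambda = \tfrac{1}{2}$ scaling) --- the same value achieved by the integral optimum that displays a single common item to the entire clique. The second step is to compute the expected utility of independent rounding from this $X^\ast$: for a fixed edge $(u,v)$ and a fixed item $c$, the event that both users are displayed $c$ at slot $1$ is a product of independent Bernoullis with probability $(1/m)(1/m) = 1/m^2$. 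Summing over $m$ items, each edge contributes expected social utility $2 \cdot m \cdot (1/m^2) = 2/m$. Multiplying by $\binom{n}{2}$ edges yields expected total \scenario\ utility $n(n-1)/m$. Dividing by the optimum $n(n-1)$ gives the claimed ratio $O(1/m)$.

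The main obstacle I anticipate is the LP's multiple optima: a solver that happened to return the integral solution would make rounding trivially optimal, so I must justify that the uniform fractional solution is a legitimate $X^\ast$. I would resolve this either by (a) perturbing the preference utilities by infinitesimal $\epsilon$-terms chosen so that the uniform fractional solution becomes the \emph{unique} LP optimum while leaving the $\Theta(1/m)$ gap asymptotically intact, or (b) observing that the lemma only asserts existence of a bad instance, and a solver applied to this fully symmetric instance may naturally return the symmetric optimum (e.g., via the analytic center or after a symmetrization step). Everything else --- verifying that the uniform assignment is LP-optimal and that the integral optimum equals $n(n-1)$ under the scaling convention --- reduces to direct substitution into Definition \ref{def:savg_utility} and the IP constraints and does not require further machinery.
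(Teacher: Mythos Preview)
Your construction is essentially identical to the paper's: zero preferences, uniform social utilities across all items, the symmetric fractional optimum ${x^\ast}^c_{u,s}=1/m$, and the same $1/m$ co-display probability computation. The paper keeps $k$ general (the factor of $k$ cancels in the ratio) rather than fixing $k=1$, but this is immaterial; your additional discussion of the multiple-optima issue is a point the paper glosses over with ``a trivial optimal solution for the relaxed LP can be found,'' so you are if anything slightly more careful.
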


\opt{short}{
\begin{proof}
Due to space constraints, please see Section \ref{subsec:independent} in the full version \cite{Online} for the proof.
\end{proof}
}
\opt{full}{
\begin{proof}
Assume that for all users $u,v \in V$ and $c \in \mathcal{C}$, we have $\text{p}(u,c) = 0$ and $\tau(u,v,c) = \tau$ for a constant $\tau > 0$. Intuitively, every user is indifferent among all items. In this case, a trivial optimal solution for the relaxed LP can be found by setting ${x^\ast}^c_{u,s} = \frac{1}{m}$ for all $c,u,s$. As the trivial rounding scheme determines the displayed items independently, for any pair of users $u,v$ and any slot $s$, the probability that $u$ and $v$ are co-displayed any item at slot $s$ is only $\frac{1}{m}$. Therefore, the expected total \scenario\ utility achieved is $\frac{n(n-1)}{m} \cdot \tau \cdot k$. On the other hand, co-displaying an arbitrary item to all users achieves a social utility of $n(n-1) \cdot \tau$, and repeating this with distinct items for all slots yields a total \scenario\ utility of $n(n-1) \cdot \tau \cdot k$. This independent rounding scheme thus achieves only $\frac{(n-1)\tau}{m(n-1)\tau} = O(\frac{1}{m})$ of the optimal value. Moreover, the resulting \configuration\ is highly unlikely to satisfy the no-duplication constraint.
\end{proof}
}

\subsection{Alignment-aware Algorithm} \label{subsec:details}

To address the above issues, 
we devise the \textit{\stagethreefull} (\stagethree) rounding scheme, inspired by the dependent rounding scheme for labeling problems \cite{JK02}, as the cornerstone of \algo\ to find a target subgroup $U$ according to a set of focal parameters for co-display of the \roundingitem\ to all users in $U$. Given the optimal \fragsol\ $X^\ast$ to the LP relaxation problem, \algo\ iteratively 1) samples a set of focal parameters $(c,s,\alpha)$ with $c \in \mathcal{C}, s \in \{1, 2, \ldots, k\}$, and $\alpha \in [0,1]$ uniformly at random; it then 2) conducts \stagethree\ according to the selected set of parameters $(c,s,\alpha)$ until a complete \configuration\ is constructed. It is summarized in Algorithm \ref{alg:alignment_aware_rounding}.


\begin{algorithm}[h]  
    \caption{\algofull\ (\algo)}
    \label{alg:alignment_aware_rounding}
    \begin{algorithmic}[1]
        \Require $X^\ast$
        \Ensure An \configuration\ $\mathbf{A}$
        \State $\mathbf{A}(\hat{u},\hat{s}) \gets \text{NULL}$ for all $\hat{u},\hat{s}$
        \State $X^\ast \gets X^\ast_{\text{LP}}$ 
        \While {some entry in $\mathbf{A}$ is $\text{NULL}$} 
                \State Sample $c \in \mathcal{C}$, $s \in [k]$, $\alpha \in [0,1]$ randomly
            \For{$\hat{u} \in V$}
                \If{$\mathbf{A}(\hat{u},s) = \text{NULL}$ and $\mathbf{A}(\hat{u},t) \neq c \, \forall t \neq s$} \\ \hfill \Comment{$\hat{u}$ eligible for $(c,s)$}
                    \If{${x^\ast}^c_{\hat{u},s} \geq \alpha$} 
                        \State $\mathbf{A}(\hat{u},s) \gets c$
                    \EndIf
                \EndIf
            \EndFor
        \EndWhile
        \Return $\mathbf{A}$
    \end{algorithmic}
\end{algorithm}
\normalsize


\para{\stagethreefull.} Given the randomly sampled set of parameters $(c,s,\alpha)$, \stagethree\ finds the target subgroup as follows. With the \roundingitem\ $c$ and the \roundingslot\ $s$, a user $\hat{u}$ is \textit{eligible} for $(c,s)$ in \stagethree\ if and only if 1) $\hat{u}$ has not been displayed any item at slot $s$, and 2) $c$ has not been displayed to $\hat{u}$ at any slot. Users not eligible for $(c,s)$ are not displayed any item in \stagethree\ to ensure the no-duplication constraint. For each eligible user $\hat{u}$, \stagethree\ selects $c$ for $\hat{u}$ at slot $s$ (i.e., $\textbf{A}(\hat{u},s) \leftarrow c$) if and only if ${x^\ast}^c_{\hat{u},s}$ is no smaller than the \roundingalpha\ $\alpha$. In other words, given $(c,s,\alpha)$, \stagethree\ co-displays $c$ to a target subgroup $U$ that consists of every eligible user $\hat{u}$ with ${x^\ast}^c_{\hat{u},s} \geq \alpha$. Therefore, the \roundingalpha\ $\alpha$ plays a key role to the performance bound in the formation of subgroups. Later we prove that with the above strategy, for any pair of users $u,v$ and any item $c$, $\text{Pr}(u \xleftrightarrow{c} v) \geq \frac{{y^\ast}^c_e}{4}$; or equivalently, the expected social utility of $u$ from viewing $c$ with $v$ obtained in the final \configuration\ is at least a constant factor within that in the optimal LP solution.

\algo\ repeats the process of parameter sampling and \stagethree\ until a \textit{feasible} \configuration\ is fully constructed, i.e., each user is displayed exactly one item at each slot, and the no-duplication constraint is satisfied.

\opt{full}{
\para{Revisiting independence vs. dependence.} Recall the troublesome input instance in the proof of Lemma \ref{lemma:trivial_rounding} where independent rounding performs poorly. By exploiting the dependent rounding scheme in \stagethree, since ${x^\ast}^c_{u,s} = \frac{1}{m}$ for all $c,u,s$, upon the first time a \roundingalpha\ $\alpha \leq \frac{1}{m}$ is sampled, \stagethree\ co-displays the \roundingitem\ $c$ to \textit{every user} in the shopping group, which is the optimal solution. On the other hand, independent rounding could not facilitate co-displaying an item to all users as effectively.
}

\begin{example} \label{example:random}
\sloppy For Example \ref{example:running} with the \weight s shown in Table \ref{table:fragsol}, assume that the set of focal parameters are sampled as $(c,s,\alpha) = (c_1,3,0.06).$ Since ${x^\ast}^{c_1}_{u_A,3} = {x^\ast}^{c_1}_{u_B,3} = {x^\ast}^{c_1}_{u_D,3} = 0.33 > 0.06 > {x^\ast}^{c_1}_{u_C,3} = 0$, \stagethree\ co-displays the tripod to the subgroup \{Alice, Bob, Dave\} at slot 3. \opt{full}{Note that this solution is not the \scenario\ configuration in Figure \ref{fig:comp_example}. }
Next, for the second set of parameters $(c,s,\alpha) = (c_4,2,0.22)$, \{Bob, Charlie, Dave\} is formed and co-displayed the memory card at slot 2, since ${x^\ast}^{c_4}_{u_B,2} = {x^\ast}^{c_4}_{u_C,2} = {x^\ast}^{c_4}_{u_D,2} = 0.33 > 0.22 > {x^\ast}^{c_4}_{u_A,2} = 0$. \opt{short}{The subsequent sets of parameters $(c,s,\alpha)$ are respectively $(c_3,1,0.04)$, $(c_5,3,0.2)$, $(c_5,1,0.31)$, $(c_2,1,0.01)$, and $(c_2,2,0.19)$ in the next five iterations, achieving a total \scenario\ utility of 9.75. Please see Section \ref{subsec:details} in the full version \cite{Online} for more details.}
\opt{full}{In the third iteration, RFS selects $(c,s,\alpha) = (c_3,1,0.04)$. As only ${x^\ast}^{c_3}_{u_C,1} = 0.33$ is nonzero among the \weight s for $c_3$ at slot 1, \stagethree\ assigns PSD to \{Charlie\} alone at slot 1. Next, in iteration 4, $(c,s,\alpha) = (c_5,3,0.2)$. At this moment, only Charlie has not been assigned an item at slot 3 since the others are co-displayed the tripod earlier. Because \weight\ ${x^\ast}^{c_3}_{u_C,5} = 0.33 > 0.2$, the SP camera is displayed to \{Charlie\} at slot 3. Iteration 5 generates $(c,s,\alpha) = (c_5,1,0.31)$. For the users without displayed items at slot 1, only Alice and Dave (but not Bob) have their \weight s of the SP camera larger than 0.31. Thus \{Alice, Dave\} are co-displayed the item at slot 1. The final two iterations with ${x^\ast}^{c_3}_{u_C,5}$ as $(c_2,1,0.01)$ and $(c_2,2,0.19)$ displays the DSLR camera to \{Bob\} at slot 1 and \{Alice\} at slot 2. This finalizes the construction of a \configuration\ as represented in Table \ref{table:algr_result}, achieving a total \scenario\ utility of 9.75.\footnote{It is worth noting that those iterations with focal parameters not leading to any item display are omitted from this example. Indeed, it suffices for RFS to sample from the combinations of focal parameters that does result in item displays, which also help improve the practical efficiency of \algor. We revisit this issue in the proof of Theorem \ref{thm:randomized_bound_long} and also in the enhancement strategies in Section \ref{subsec:enhancements}.}}
\hfill \qedsymbol
\end{example}

\opt{short}{The theoretical guarantee of the \algo\ algorithm is given in the following results, which we explicitly prove in Section \ref{subsec:details} in the full version \cite{Online} due to the space constraint.} \opt{full}{We then show that \algo\ is a 4-approximation algorithm for \prob\ in expectation.}

\begin{restatable}{theorem}{randombound} \label{thm:randomized_bound_long}
\sloppy Given the optimal \fragsol\ $X^\ast$, \algo\ returns an expected 4-approximate \configuration\ in $O(n^2 \cdot k)$-time.
\end{restatable}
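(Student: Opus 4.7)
The plan is to prove Theorem~\ref{thm:randomized_bound_long} in two independent parts: the expected $4$-approximation ratio and the $O(n^2 k)$ runtime. Since the LP relaxation is valid, its optimum $\text{OPT}_{\text{LP}}$ upper-bounds $\text{OPT}$, so it suffices to show $\mathbb{E}[\mathcal{R}] \geq \tfrac{1}{4}\text{OPT}_{\text{LP}}$, where $\mathcal{R}$ is the (doubled) \scenario\ utility produced by \algo. By linearity of expectation the LP objective splits into per-$(u,c)$ preference terms and per-$(e,c)$ social terms, so the approximation reduces to the two local bounds
\[
\Pr[u \text{ is eventually displayed } c]\geq\tfrac{1}{4}{x^\ast}^c_u \quad\text{and}\quad \Pr[u \xleftrightarrow{c} v]\geq\tfrac{1}{4}{y^\ast}^c_e.
\]
The no-duplication constraint makes the events $\{\mathbf{A}(u,s)=c\}_{s=1}^{k}$ mutually exclusive in $s$, so each bound in turn follows by summing its per-slot counterpart $\Pr[\mathbf{A}(u,s)=c]\geq\tfrac{1}{4}{x^\ast}^c_{u,s}$ or $\Pr[u\xleftrightarrow[s]{c}v]\geq\tfrac{1}{4}{y^\ast}^c_{e,s}$ over $s$.

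To prove a per-slot bound, I would fix the triple $(u,c,s)$ (resp.\ quadruple $(u,v,c,s)$) and partition the iterations according to their focal tuple into three kinds: (a) \emph{target} iterations with focal $(c,s,\alpha)$ and $\alpha\leq{x^\ast}^c_{u,s}$ (resp.\ $\alpha\leq\min({x^\ast}^c_{u,s},{x^\ast}^c_{v,s})$, which is at least ${y^\ast}^c_{e,s}$ by LP constraints~(\ref{ilp:direct_relation_1})--(\ref{ilp:direct_relation_2})); (b) \emph{slot-blocking} iterations $(c',s,\alpha)$ with $c'\neq c$ that could first fill slot $s$ with a different item; and (c) \emph{item-blocking} iterations $(c,s',\alpha)$ with $s'\neq s$ that could place $c$ at another slot and trip no-duplication. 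The key observation is that the focal tuples are drawn i.i.d.\ uniformly and are therefore exchangeable, so a target iteration appears before any slot-blocker with probability at least $\tfrac{1}{2}$ and independently before any item-blocker with probability at least $\tfrac{1}{2}$; multiplying by the probability ${x^\ast}^c_{u,s}$ (resp.\ ${y^\ast}^c_{e,s}$) that the target iteration's own $\alpha$ succeeds yields the factor $\tfrac{1}{4}$. The main obstacle is formalizing this two-axis blocking argument: the losses on the slot axis and on the item axis must be shown to compose multiplicatively rather than additively, which I expect to handle by coupling the true execution with two auxiliary processes — one restricting only axis (b), one restricting only axis (c) — each losing at most a factor $\tfrac{1}{2}$, and then combining via the independence of which focal tuple hits (b) versus (c). For the social bound, the same coupling applies simultaneously to the endpoints $u$ and $v$, since a target iteration succeeds for both endpoints as soon as $\alpha\leq\min({x^\ast}^c_{u,s},{x^\ast}^c_{v,s})$.

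For the runtime, each \textbf{while}-loop iteration performs an $O(n)$ sweep to test eligibility and compare thresholds, so it remains to bound the expected number of iterations. Restricting attention to focal tuples whose $(c,s)$ still has at least one eligible user (iterations outside this set make no assignment and can be safely skipped in implementation), every productive iteration fills at least one fresh $(\hat u,s)$ entry in expectation, and there are only $nk$ such entries to fill; this gives $O(nk)$ productive iterations in expectation and an overall $O(n^2 k)$ bound. Combining the approximation guarantee with the runtime bound yields the theorem.
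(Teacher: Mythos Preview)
Your high-level decomposition (reduce to per-slot bounds, separate preference and social terms, analyse a ``race'' between a desired assignment and blocking assignments) matches the paper's strategy, but the core probabilistic step is incorrect as stated. You claim that ``a target iteration appears before any slot-blocker with probability at least $\tfrac{1}{2}$'' and likewise for item-blockers, and then multiply these two $\tfrac{1}{2}$'s together with the $\alpha$-success probability ${x^\ast}^c_{u,s}$. But the $\tfrac{1}{2}$ claims are false: among iterations with focal slot $s$, the effective mass of the target $(c,s)$ draw is proportional to ${x^\ast}^c_{u,s}$ while the effective mass of all slot-blockers $(c',s)$ is proportional to $\sum_{c'\neq c}{x^\ast}^{c'}_{u,s}$, so the probability the target comes first is ${x^\ast}^c_{u,s}$, not a constant $\geq\tfrac{1}{2}$. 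The two blocking axes do not compose multiplicatively into two independent $\tfrac{1}{2}$ factors; your proposed coupling would have to manufacture independence that is not there, and you acknowledge this is ``the main obstacle'' without supplying the mechanism.

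The paper avoids this entirely by composing the two axes \emph{additively}. In any iteration where $u$ is still eligible for $(c,s)$, the probability of the desired assignment is $P_{\text{rec}}={x^\ast}^c_{u,s}/(km)$, and the probability $u$ loses eligibility (via either axis) is at most $P_{\text{ne}}\leq\tfrac{1}{km}\bigl(\sum_{\hat s}{x^\ast}^c_{u,\hat s}+\sum_{\hat c}{x^\ast}^{\hat c}_{u,s}\bigr)\leq\tfrac{2}{km}$ by LP constraints~(\ref{ilp:no_replicated_item}) and~(\ref{ilp:one_slot_one_item}). A geometric-series (equivalently, race) argument then gives $\Pr[\mathbf{A}(u,s)=c]\geq P_{\text{rec}}/P_{\text{ne}}\geq\tfrac{1}{2}{x^\ast}^c_{u,s}$, which is actually tighter than your claimed $\tfrac{1}{4}{x^\ast}^c_{u,s}$ on the preference side. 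For the social term, the same additive bound applied to the pair $(u,v)$ gives $P^e_{\text{ne}}\leq\tfrac{4}{km}$ (each endpoint contributes at most $\tfrac{2}{km}$), and with $P^e_{\text{rec}}=\min\{{x^\ast}^c_{u,s},{x^\ast}^c_{v,s}\}/(km)\geq{y^\ast}^c_{e,s}/(km)$ one obtains $\Pr[u\xleftrightarrow[s]{c}v]\geq\tfrac{1}{4}{y^\ast}^c_{e,s}$. This is where the factor $4$ actually arises: two axes times two endpoints, summed not multiplied. Your runtime sketch is essentially the paper's (restrict to productive iterations, each fills at least one of the $nk$ cells), though note that a productive iteration fills at least one cell deterministically, not merely in expectation.
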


\opt{full}{
Let $\text{OPT}$ be the optimal total \scenario\ utility in \prob. An \textit{iteration} of \algor\ includes 1) \stagetwo\ sampling a $(c,s,\alpha)$ and 2) \stagethree\ with $(c,s,\alpha)$. Let $\mathcal{R}$ denote the total \scenario\ utility achieved by \algor. Moreover, let $\mathcal{R}_{\text{pre}}$ and $\mathcal{R}_{\text{soc}}$ denote the total preference and social utilities achieved by \algor, respectively. Let $\text{OPT}$ be the optimal total \scenario\ utility of \prob. Let $X$ and $Y$ be the solutions found by \algor, i.e., feasible solutions to $x^c_{u,s}$, $x^c_u$, $y^c_{e,s}$, and $y^c_e$ in Section \ref{subsec:hardness_ip}. Let $\text{w}^c_e = \tau(u,v,c)+\tau(v,u,c)$ for all $e = (u,v) \in E$. Therefore, $ \mathbb{E}[\mathcal{R}] = \mathbb{E}[\mathcal{R}_{\text{pre}}] + \mathbb{E}[\mathcal{R}_{\text{soc}}],$ where 

\begin{align*}
    \mathbb{E}[\mathcal{R}_{\text{pre}}] &= \mathbb{E}[\sum_{c \in \mathcal{C}} \sum_{u \in V} \text{p}(u, c) \cdot x^c_u]\\
    \mathbb{E}[\mathcal{R}_{\text{soc}}] &= \mathbb{E}[\sum_{c \in \mathcal{C}} \sum_{e \in E} \text{w}^c_e \cdot y^c_e]
\end{align*}

We first prove that 
$ \mathbb{E}[\mathcal{R}_{\text{per}}] \geq \sum\limits_{c \in \mathcal{C}} \sum\limits_{u \in V} \big( \text{p}(u, c) \cdot \frac{{x^\ast}^c_u}{2} \big).$ Based on the definition of \algor, we have the following observation.

\begin{lemma} \label{observation:x}
In any iteration $t$, if $u$ is eligible for $(c,s)$, the probability $P^u_{\text{rec}}$ that $\mathbf{A}(u,s) \gets c$ is $\frac{{x^\ast}^c_{u,s}}{k \cdot m}$ (where $k$ and $m$ are respectively the numbers of slots and items) since $c$ and $s$ are selected randomly, 
and $\alpha$ is uniformly chosen from $[0,1]$.
\end{lemma}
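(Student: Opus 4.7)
The plan is to prove Lemma \ref{observation:x} by directly decomposing the event $\{\mathbf{A}(u,s) \gets c \text{ in iteration } t\}$ into the three independent random choices made in the \stageonefull\ step of iteration $t$, and then multiplying their probabilities. Since the lemma is conditioned on the eligibility of $u$ for the pair $(c,s)$, the eligibility check in \stagethree\ (lines checking $\mathbf{A}(\hat{u},s) = \text{NULL}$ and $\mathbf{A}(\hat{u},t) \neq c$ for all $t \neq s$) contributes no additional restriction: whenever the sampled focal item and focal slot happen to coincide with the given $(c,s)$, $u$ will pass the eligibility test by assumption.

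Concretely, I will argue as follows. In iteration $t$, the algorithm samples three random quantities: a focal item $\hat{c} \in \mathcal{C}$ uniformly at random (so $\Pr[\hat{c}=c] = \tfrac{1}{m}$), a focal slot $\hat{s} \in [k]$ uniformly at random (so $\Pr[\hat{s}=s] = \tfrac{1}{k}$), and a grouping threshold $\alpha \in [0,1]$ uniformly at random. The assignment $\mathbf{A}(u,s) \gets c$ happens precisely when three conditions are jointly met: (i) $\hat{c}=c$, (ii) $\hat{s}=s$, and (iii) $\alpha \leq {x^\ast}^c_{u,s}$, which in turn occurs with probability ${x^\ast}^c_{u,s}$ by uniformity of $\alpha$ on $[0,1]$. (Condition (iii) is well defined because ${x^\ast}^c_{u,s}\in[0,1]$ by the LP relaxation bounds.)

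Because $\hat{c}$, $\hat{s}$, and $\alpha$ are sampled independently, the joint probability of all three events is the product of the marginals, yielding
\[
P^u_{\text{rec}} \;=\; \Pr[\hat{c}=c]\cdot\Pr[\hat{s}=s]\cdot\Pr[\alpha \leq {x^\ast}^c_{u,s}] \;=\; \frac{1}{m}\cdot\frac{1}{k}\cdot {x^\ast}^c_{u,s} \;=\; \frac{{x^\ast}^c_{u,s}}{k\cdot m}.
\]
There is no real obstacle here; the proof is essentially a direct rewriting of the algorithm's sampling step. The only subtlety worth an explicit remark is that the conditioning on eligibility is what allows us to ignore the no-duplication/feasibility checks in \stagethree: without that hypothesis, one would have to subtract the probability mass corresponding to cases where $u$ is already filled at slot $s$ or has already received $c$ at some other slot in prior iterations.
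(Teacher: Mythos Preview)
Your proposal is correct and is exactly the argument the paper has in mind: the paper itself presents this lemma as an immediate observation from the definition of \algor\ (with no separate proof), and your decomposition into the three independent uniform samples $\hat{c}\in\mathcal{C}$, $\hat{s}\in[k]$, and $\alpha\in[0,1]$ is precisely the intended justification.
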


\noindent Note that the above observation only gives the conditional probability of $u$ being assigned $c$ at slot $s$ when $u$ is still eligible for $(c,s)$ at the beginning of iteration $t$. Thus, we also need to derive the probability that $u$ is eligible for $(c,s)$ for each iteration $t$. 

\begin{lemma} \label{lemma:noteligible}
In any iteration $t$, for any $c$, $s$ and a user $u$ eligible for $(c,s)$, the probability $P^u_{\text{ne}}$ that $u$ is not eligible for $(c,s)$ in iteration $(t+1)$ is at most $\frac{2}{k \cdot m}$.
\end{lemma}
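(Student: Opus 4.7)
The plan is to directly enumerate the events in iteration $t$ that can cause $u$ to lose eligibility for $(c,s)$, and then bound the total probability by leveraging the LP constraints (\ref{ilp:no_replicated_item}) and (\ref{ilp:one_slot_one_item}) that the fractional solution ${x^\ast}$ satisfies.

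First, I would unpack the definition of eligibility. Since $u$ is eligible for $(c,s)$ at the start of iteration $t$, we have $\mathbf{A}(u,s)=\text{NULL}$ and $\mathbf{A}(u,s'')\neq c$ for every $s''\neq s$. The only way this status can change during iteration $t$ is for \stagethree\ to actually assign some item to $u$ during this iteration; recall that \stagethree\ only ever writes into the slot $s'$ specified by the sampled parameters. Thus $u$ becomes ineligible for $(c,s)$ iff the iteration assigns $u$ an item $c'$ at a slot $s'$ such that either $s'=s$ (violating the first eligibility clause) or $c'=c$ with $s'\neq s$ (violating the second clause). The remaining case $c'\neq c$, $s'\neq s$ preserves eligibility.

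Next, I would compute the probability. A given triple $(c',s',\alpha')$ is sampled with density $\frac{1}{m\cdot k}$ over $c'$ and $s'$, and $\alpha'$ is uniform on $[0,1]$. Conditional on $(c',s')$ being sampled and $u$ being eligible for $(c',s')$ at the start of iteration $t$, Lemma~\ref{observation:x} (really the same uniform-threshold argument) gives that $u$ is assigned $c'$ at $s'$ with probability exactly ${x^\ast}^{c'}_{u,s'}$; if $u$ is not eligible for $(c',s')$, the probability is $0$. Dropping the eligibility indicator to obtain an upper bound, I get
\begin{align*}
P^u_{\text{ne}} \;\leq\; \frac{1}{m\cdot k}\!\!\sum_{(c',s'):\, s'=s \text{ or } c'=c}\!\! {x^\ast}^{c'}_{u,s'}.
\end{align*}

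Then I would split the sum by inclusion--exclusion to avoid double counting the pair $(c,s)$:
\begin{align*}
\sum_{(c',s'):\, s'=s \text{ or } c'=c} {x^\ast}^{c'}_{u,s'}
\;=\; \sum_{c'\in\mathcal{C}} {x^\ast}^{c'}_{u,s} \;+\; \sum_{s'\in[k]} {x^\ast}^{c}_{u,s'} \;-\; {x^\ast}^{c}_{u,s}.
\end{align*}
Now I apply the LP feasibility of ${x^\ast}$: constraint (\ref{ilp:one_slot_one_item}) gives $\sum_{c'}{x^\ast}^{c'}_{u,s}=1$, and constraint (\ref{ilp:no_replicated_item}) gives $\sum_{s'}{x^\ast}^{c}_{u,s'}\leq 1$. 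Combining yields the bracketed quantity is at most $2-{x^\ast}^{c}_{u,s}\leq 2$, so $P^u_{\text{ne}} \leq \frac{2}{m\cdot k}$, as claimed.

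The only subtle point, and what I expect to be the main obstacle to get right, is the bookkeeping in the inclusion--exclusion step: one must verify that the sampled event space in iteration $t$ factorizes cleanly (i.e., the conditional probability given $(c',s')$ really is ${x^\ast}^{c'}_{u,s'}$ whenever $u$ is eligible for $(c',s')$), and that relaxing the eligibility indicator preserves the upper bound rather than cuts from a possibly larger quantity. Everything else is an immediate application of the two LP constraints that the fractional optimum inherits from (\ref{ilp:no_replicated_item}) and (\ref{ilp:one_slot_one_item}).
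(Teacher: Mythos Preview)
Your proof is correct and follows essentially the same approach as the paper: enumerate the two ways $u$ can lose eligibility (being assigned $c$ at some slot, or being assigned something at slot $s$), bound each by $\frac{1}{km}$ via constraints (\ref{ilp:no_replicated_item}) and (\ref{ilp:one_slot_one_item}), and sum. The only cosmetic difference is that you apply inclusion--exclusion to subtract the overlap term ${x^\ast}^{c}_{u,s}$, whereas the paper simply uses the union bound and accepts the double-counting; both yield the same $\frac{2}{km}$ upper bound.
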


\begin{proof}
User $u$ is not eligible for $(c,s)$ in iteration $(t+1)$ when one of the following cases occurs in iteration $t$: 1) $u$ is displayed $c$ in some slot $\hat{s}$, or 2) $u$ is displayed some item $\hat{c}$ at slot $s$. From Lemma \ref{observation:x}, the probabilities for the above two cases are at most $\sum\limits_{\hat{s}=1}^k \frac{{x^\ast}^c_{u,\hat{s}}}{k \cdot m}$ and $\sum\limits_{\hat{c} \in \mathcal{C}} \frac{{x^\ast}^{\hat{c}}_{u,s}}{k \cdot m}$, respectively. Recall that for any $u$, $\sum\limits_{\hat{s}=1}^{k} {x^\ast}^c_{u,\hat{s}} \leq 1$ and $\sum\limits_{\hat{c} \in \mathcal{C}} {x^\ast}^{\hat{c}}_{u,s} = 1$ in LP relaxation. Therefore, the total probability of the above cases is at most 
$\frac{1}{k \cdot m} (\sum\limits_{\hat{s}=1}^k {x^\ast}^c_{u,\hat{s}} + \sum\limits_{\hat{c} \in \mathcal{C}} {x^\ast}^{\hat{c}}_{u,s}) \leq \frac{2}{k \cdot m}.$
\end{proof}

In the following, we first consider the case that $u$ is eligible for $(c,s)$ in the beginning of iteration $t$. According to Lemma \ref{observation:x}, the probability $P_{\text{rec}}$ that $u$ is displayed $c$ in slot $s$ in this iteration is $P_{\text{rec}} = \frac{{x^\ast}^c_{u,s}}{k \cdot m}$. Moreover, according to above, let $P_{\text{ne}}$ denote the probability of $u$ losing eligibility for $(c,s)$ in this iteration, then $P_{\text{ne}} \leq \frac{2}{k \cdot m}$. Therefore, we have

\begin{align*}
    & \text{Pr}(\mathbf{A}(u,s) = c) \\ &= \sum\limits^\infty_{t=1} \text{Pr($\mathbf{A}(u,s) \leftarrow c$ in iteration $t$)} \\
    &= \sum^\infty_{t=1} P^u_{\text{rec}} \cdot \text{Pr[$u$ is eligible for $(c,s)$ in the $t$-th iteration]}\\
    &= \sum^\infty_{t=1} P^u_{\text{rec}} \cdot (1-P^u_{\text{ne}})^{t-1} = \frac{P^u_{\text{rec}}}{P^u_{\text{ne}}} \geq \frac{{x^\ast}^c_{u,s}}{2},
\end{align*}

\noindent where $t = \infty$ is allowed in the analysis (but not in the algorithm design) since an empty target group can be randomly generated here (explained later). Thus, $\mathbb{E}[\mathcal{R}_{\text{per}}] \geq \sum_{c \in \mathcal{C}} \sum_{u \in V} \big( \text{p}(u, c) \cdot \frac{{x^\ast}^c_u}{2} \big).$ Next, we aim to use a similar approach to prove that $\mathbb{E}[\mathcal{R}_{\text{soc}}] \geq \frac{1}{4} \sum\limits_{c \in \mathcal{C}} \sum\limits_{e \in E} \text{w}^c_e \cdot {y^\ast}^c_e$. To prove this for the more complicated social utility, instead of directly analyzing $\mathbb{E}[\mathcal{R}_{\text{soc}}]$, 
we first consider the case that the social utility $\tau(u,v,c)$ is generated when both $u$ and $v$ are co-displayed $c$ in the same iteration, and let $\mathbb{E}[\mathcal{R}'_{\text{soc}}]$ denote the expected total social utility in this case. Clearly, $\mathbb{E}[\mathcal{R}_{\text{soc}}] \geq \mathbb{E}[\mathcal{R}'_{\text{soc}}]$. Similarly, we have the following observations.

\begin{lemma} \label{observation:y}
In any iteration $t$, for any pair of users $e = (u,v)$ with both $u$ and $v$ eligible for $(c,s)$, the probability that $\mathbf{A}(u,s) \gets c $ \textit{or} $\mathbf{A}(v,s) \gets c $ is $\frac{\max \{ {x^\ast}^c_{u,s}, {x^\ast}^c_{v,s} \}}{k \cdot m}$, and the probability $P^e_{\text{rec}}$ that $\mathbf{A}(u,s) \gets c $ \textit{and} $\mathbf{A}(v,s) \gets c $ is $\frac{\min \{ {x^\ast}^c_{u,s}, {x^\ast}^c_{v,s} \}}{k \cdot m}$.
\end{lemma}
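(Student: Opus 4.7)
The plan is to reason directly from the definition of a single iteration of \algo. Each iteration samples a focal item $\hat{c}$ uniformly from $\mathcal{C}$, a focal slot $\hat{s}$ uniformly from $[k]$, and a threshold $\alpha$ uniformly from $[0,1]$, all independently. Given a fixed pair $(c,s)$ and users $u,v$ both eligible for $(c,s)$ at the start of this iteration, the events $\mathbf{A}(u,s)\gets c$ and $\mathbf{A}(v,s)\gets c$ can only occur when the sampled focal parameters satisfy $(\hat{c},\hat{s})=(c,s)$, which happens with probability $\frac{1}{k\cdot m}$. I will therefore condition on this event and analyze the remaining randomness over $\alpha$ alone.

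Conditioned on $(\hat{c},\hat{s})=(c,s)$, the crux of the argument — and the distinguishing feature of the \stagethree\ dependent rounding scheme — is that both $u$ and $v$ are tested against the \emph{same} threshold $\alpha$. By the algorithm, $\mathbf{A}(u,s)\gets c$ iff ${x^\ast}^c_{u,s}\geq \alpha$, and $\mathbf{A}(v,s)\gets c$ iff ${x^\ast}^c_{v,s}\geq \alpha$. Consequently the OR event is equivalent to $\alpha \leq \max\{{x^\ast}^c_{u,s},{x^\ast}^c_{v,s}\}$, while the AND event is equivalent to $\alpha \leq \min\{{x^\ast}^c_{u,s},{x^\ast}^c_{v,s}\}$. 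Because $\alpha$ is uniform on $[0,1]$, the conditional probabilities of these two threshold events are exactly $\max\{{x^\ast}^c_{u,s},{x^\ast}^c_{v,s}\}$ and $\min\{{x^\ast}^c_{u,s},{x^\ast}^c_{v,s}\}$, respectively.

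Multiplying each conditional probability by the unconditional probability $\frac{1}{k\cdot m}$ of selecting the matching focal parameters yields the stated expressions $\frac{\max\{{x^\ast}^c_{u,s},{x^\ast}^c_{v,s}\}}{k\cdot m}$ and $\frac{\min\{{x^\ast}^c_{u,s},{x^\ast}^c_{v,s}\}}{k\cdot m}$. There is really no hard obstacle here, since the whole argument is a direct unwinding of the sampling procedure; the only subtle point worth emphasizing is that the correlation between the two users' assignment decisions comes entirely from the shared threshold $\alpha$. This is in sharp contrast with an independent rounding scheme (cf.\ Lemma \ref{lemma:trivial_rounding}), under which the AND event would instead have probability on the order of $\frac{{x^\ast}^c_{u,s}\cdot{x^\ast}^c_{v,s}}{(k\cdot m)^2}$, which is exactly the weakness that \stagethree\ is designed to overcome and the structural fact that the subsequent bound on $\mathbb{E}[\mathcal{R}'_{\text{soc}}]$ in Theorem \ref{thm:randomized_bound_long} exploits.
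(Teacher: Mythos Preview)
Your proposal is correct and follows essentially the same approach as the paper: condition on the sampled $(\hat{c},\hat{s})$ matching $(c,s)$ with probability $\frac{1}{k\cdot m}$, then use that the shared uniform threshold $\alpha$ gives conditional probabilities $\max\{{x^\ast}^c_{u,s},{x^\ast}^c_{v,s}\}$ and $\min\{{x^\ast}^c_{u,s},{x^\ast}^c_{v,s}\}$ for the OR and AND events respectively. The paper's justification is terser (two sentences), but the logic is identical.
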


The reason of Observation \ref{observation:y} is as follows. If $\alpha \leq \min\limits_{u \in U} {x^\ast}^c_{u,s}$, \stagethree\ assigns $c$ at $s$ to all users in $U$ since $x^c_{u,s} \geq \alpha$ for all $u \in U$. Similarly, if $\alpha \leq \max\limits_{u \in U} {x^\ast}^c_{u,s}$, \stagethree\ at least assigns $c$ to the user $u$ with the largest $x^c_{u,s}$. For each iteration $t$, the following lemma then bounds the probability that at least one user in a group $U$ \textit{loses eligibility} for $(c,s)$ in iteration $t$, either due to the no-duplication constraint or due to the assignment of some other item at slot $s$. 

\begin{lemma} \label{lemma:ratio}
In any iteration $t$, for any pair of users $e = (u,v)$ with $u$ and $v$ eligible for $(c,s)$, the probability $P^e_{\text{ne}}$ that \textit{at least one} of $u$ and $v$ is not eligible for $(c,s)$ in iteration $(t+1)$ is at most $\frac{4}{k \cdot m}$.
\end{lemma}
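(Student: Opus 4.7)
The plan is to obtain the bound directly from Lemma \ref{lemma:noteligible} via a union bound, rather than reproducing the case analysis (type 1: $u$ gets displayed $c$ at some slot; type 2: $u$ gets displayed some item at slot $s$) for a pair of users from scratch. The event whose probability is $P^e_{\text{ne}}$ is precisely the union of the single-user events ``$u$ loses eligibility for $(c,s)$ in iteration $t$'' and ``$v$ loses eligibility for $(c,s)$ in iteration $t$'', and the conditioning in the current lemma (both $u$ and $v$ eligible for $(c,s)$ at the start of iteration $t$) refines the conditioning in Lemma \ref{lemma:noteligible} (the relevant user alone eligible), so the per-user bound of $\frac{2}{k \cdot m}$ still applies to each endpoint individually.

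Concretely, I would first observe that the additional conditioning on the other endpoint being eligible is a property of the history up to iteration $t$ and hence independent of the fresh triple $(c',s',\alpha)$ sampled inside iteration $t$; therefore, the derivation in Lemma \ref{lemma:noteligible} that bounds each single-user loss probability by $\sum_{\hat{s}} \frac{{x^\ast}^c_{u,\hat{s}}}{k \cdot m} + \sum_{\hat{c}} \frac{{x^\ast}^{\hat{c}}_{u,s}}{k \cdot m} \leq \frac{2}{k \cdot m}$ goes through verbatim for $u$ (and symmetrically for $v$). Then, applying the union bound yields
\[
P^e_{\text{ne}} \;\leq\; \Pr[\,u \text{ ineligible in iter.}\ (t{+}1)\,] + \Pr[\,v \text{ ineligible in iter.}\ (t{+}1)\,] \;\leq\; \frac{2}{k \cdot m} + \frac{2}{k \cdot m} \;=\; \frac{4}{k \cdot m}.
\]

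The only step that requires a small amount of care, and which I would flag as the main (minor) obstacle, is verifying that the dependence between the two single-user ineligibility events does not cause any issue: the two events are driven by the same sampled triple $(c',s',\alpha)$ and are therefore positively correlated, but since the union bound is insensitive to correlation, this correlation cannot weaken the bound. I would make this explicit in one sentence to pre-empt any worry that the tight coupling of the two events (e.g., if $c' = c$ and some $\alpha$ is small enough to catch both $u$ and $v$, they both become ineligible in the same iteration) might interfere with the derivation; on the contrary, such joint hits only shrink the union probability relative to the sum.
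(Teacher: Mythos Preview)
Your argument is correct. The paper's own proof redoes the case analysis for the pair: it invokes Observation~\ref{observation:y} to bound the probability that \emph{at least one} of $u,v$ is assigned in iteration $t$ by $\frac{\max\{{x^\ast}^c_{u,\hat s},{x^\ast}^c_{v,\hat s}\}}{k\cdot m}$ (and analogously for the ``some item at slot $s$'' case), and then relaxes $\max\{a,b\}\le a+b$ before summing and applying the LP constraints. Your route instead reuses Lemma~\ref{lemma:noteligible} for each endpoint and applies a union bound. The two arguments coincide arithmetically---the paper's $\max\le a+b$ step is exactly where the union bound enters---so the final inequality chain is the same. Your packaging is a bit more modular (it does not need Observation~\ref{observation:y} here at all), while the paper's version makes the joint ``at least one'' event explicit before loosening; neither buys a tighter bound.
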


\begin{proof}
At least one of $u$ and $v$ is not eligible for $(c,s)$ in iteration $(t+1)$ when one of the following cases occurs in iteration $t$: 1) $u$ or $v$ is displayed $c$ in some slot $\hat{s}$, or 2) $u$ or $v$ is displayed some item $\hat{c}$ in slot $s$. From Lemma \ref{observation:y}, the probabilities for the above two cases are at most $\sum\limits_{\hat{s}=1}^k \frac{\max \{ {x^\ast}^c_{u,\hat{s}}, {x^\ast}^c_{v,\hat{s}} \}}{k \cdot m}$ and $\sum\limits_{\hat{c} \in \mathcal{C}} \frac{\max \{ {x^\ast}^{\hat{c}}_{u,s}, {x^\ast}^{\hat{c}}_{v,s} \}}{k \cdot m}$, respectively. Recall that for any $u$, $\sum\limits_{\hat{s}=1}^{k} {x^\ast}^c_{u,\hat{s}} \leq 1$ and $\sum\limits_{\hat{c} \in \mathcal{C}} {x^\ast}^{\hat{c}}_{u,s} = 1$ in LP relaxation. Therefore, the total probability of the above cases is at most 

\begin{align*}
    &\frac{1}{k \cdot m} (\sum_{\hat{s}=1}^k \max \{ {x^\ast}^c_{u,\hat{s}}, {x^\ast}^c_{v,\hat{s}} \} + \sum_{\hat{c} \in \mathcal{C}} \max \{ {x^\ast}^{\hat{c}}_{u,s}, {x^\ast}^{\hat{c}}_{v,s} \}) \\
    &\leq \frac{1}{k \cdot m} (\sum_{\hat{s}=1}^k ( {x^\ast}^c_{u,\hat{s}} + {x^\ast}^c_{v,\hat{s}} ) + \sum_{\hat{c} \in \mathcal{C}} ( {x^\ast}^{\hat{c}}_{u,s} + {x^\ast}^{\hat{c}}_{v,s} ) )\\
    &= \frac{1}{k \cdot m} (\sum_{\hat{s}=1}^k {x^\ast}^c_{u,\hat{s}} + \sum_{\hat{s}=1}^k {x^\ast}^c_{v,\hat{s}}  + \sum_{\hat{c} \in \mathcal{C}} {x^\ast}^{\hat{c}}_{u,s} + \sum_{\hat{c} \in \mathcal{C}} {x^\ast}^{\hat{c}}_{v,s}) \\
    &\leq \frac{4}{k \cdot m}.
\end{align*}
\end{proof}

Similarly, consider the case that in the beginning of iteration $t$, both $u,v$ are eligible for $(c,s)$. 
Let $u \xleftrightarrow[s]{c} v |_{t}$ denote that $u,v$ are co-displayed $c$ at slot $s$ in iteration $t$. Therefore, we have
\begin{align*}
    &\text{Pr}[u \xleftrightarrow[s]{c} v] = \sum\limits^\infty_{t=1} \text{Pr}[u \xleftrightarrow[s]{c} v |_{t}]\\
    &= \sum^\infty_{t=1} P^e_{\text{rec}} \cdot \text{Pr[$u,v$ are both eligible for $(c,s)$ in $t$-th iteration]}\\
    &=\sum\limits^\infty_{t=1} P^e_{\text{rec}} \cdot \text{Pr[$u,v$ both eligible for $(c,s)$ in $t$-th iteration]}\\
    &= \sum\limits^\infty_{t=1} P^e_{\text{rec}} \cdot (1-P^e_{\text{ne}})^{t-1} = \frac{P^e_{\text{rec}}}{P^e_{\text{ne}}}\\
    &\geq \frac{\frac{\min\{{x^\ast}^c_{u,s}, {x^\ast}^c_{v,s}\}}{k \cdot m}}{\frac{4}{k \cdot m}}\\
    &\geq \frac{\min\{{x^\ast}^c_{u,s}, {x^\ast}^c_{v,s}\}}{4} = \frac{{y^\ast}^c_{e,s}}{4}.
\end{align*}

Finally, because of the no-duplicate constraint, for all $u$, $v$ and $c$, the events $u \xleftrightarrow[s]{c} v |_{t}$ and $u \xleftrightarrow[s']{c} v |_{t}$ for different slots $s$ and $s'$ are mutually exclusive. Similarly, the events $u \xleftrightarrow[s]{c} v |_{t}$ and $u \xleftrightarrow[s]{c} v |_{t'}$ for $t \neq t'$ are also mutually exclusive because every user sees exactly one item at each slot. Therefore, 

\begin{align*}
    \text{Pr}(u \xleftrightarrow{c} v) &\geq \sum_{s=1}^{k}\sum_{t=1}^{\infty} \text{Pr}(u \xleftrightarrow[s]{c} v |_{t}),\\
    \mathbb{E}[\mathcal{R}_{\text{soc}}]  &\geq \mathbb{E}[\mathcal{R}'_{\text{soc}}]\\
    &= \sum_{c \in \mathcal{C}} \sum_{e \in E} \text{w}^c_e \cdot \sum^\infty_{t=1} \text{Pr}(u \xleftrightarrow[s]{c} v |_{t})\\
    &\geq \sum_{c \in \mathcal{C}} \sum_{e \in E} \text{w}^c_e \cdot \frac{{y^\ast}^c_e}{4}.
\end{align*}    
    
\noindent Therefore,

\begin{align*}
\mathbb{E}[\mathcal{R}] &= \mathbb{E}[\mathcal{R}_{\text{per}}] + \mathbb{E}[\mathcal{R}_{\text{soc}}]\\
&\geq \sum_{c \in \mathcal{C}} \sum_{u \in V} \text{p}(u, c) \cdot \frac{{x^\ast}^c_u}{2} +\sum_{c \in \mathcal{C}} \sum_{e \in E} \text{w}^c_e \cdot \frac{{y^\ast}^c_e}{4}\\
&\geq \frac{\text{OPT}}{4},
\end{align*}  

\noindent which proves the approximation ratio.  \hfill \qedsymbol

g
In the above derivation, 
a large $\alpha$ could lead to an empty target group if it exceeds ${x^\ast}^c_{u,s}$ for every $u$ eligible for $(c,s)$. Therefore, the total number of iterations could approach $\infty$. To address the above issue, instead of sampling $(c,s,\alpha)$ uniformly from all possible combinations, \algor\ samples $(c,s,\alpha)$ uniformly from only the combinations generating nonempty target groups (i.e., 
an enormous $\alpha$ is no longer chosen). Because the setting of $\alpha$ is independent for each iteration, and the probability of selecting each combination to generate a nonempty target group remains equal, the expected solution quality is also identical for \algor. Therefore, the number of iterations for \algo\ is effectively reduced to $O(nk)$, and \stagethree\ in each iteration requires $O(n)$-time. The total time complexity of \algor, including the \stageone, is thus $O(\text{LP}) + O(n^2 \cdot k)$, where $O(\text{LP})$ is the complexity\footnote{The current best time complexity for solving linear program equals the complexity of matrix multiplication, or roughly $O(N^{2.5})$ for $N$ variables \cite{LP}. However, practical computation generally takes much less time.} of solving $X^\ast$.
Some immediate corollaries are directly obtained from Theorem \ref{thm:randomized_bound_long} as follows.
}

\begin{restatable}{corollary}{repeatbound} \label{thm:randomized_corollary}
Repeating \algor\ and selecting the best output returns a $(4+\epsilon)$-approximate \configuration\ in $O(n^2 \cdot k \cdot \log_{\epsilon} n)$-time with high probability, i.e., with a probability $1-\frac{1}{n^{O(1)}}$. 
\end{restatable}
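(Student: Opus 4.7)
The plan is a standard probability-amplification argument on top of Theorem~\ref{thm:randomized_bound_long}. Since Theorem~\ref{thm:randomized_bound_long} establishes $\mathbb{E}[\mathcal{R}] \geq \mathrm{OPT}/4$ for a single run of \algor, and $\mathcal{R}$ is trivially upper-bounded by $\mathrm{OPT}$ (as $\mathrm{OPT}$ upper-bounds every feasible objective and \algor\ returns a feasible \configuration), I will apply a reverse Markov inequality to $\mathrm{OPT} - \mathcal{R}$ to lower-bound the probability that one run achieves utility at least $\mathrm{OPT}/(4+\epsilon)$.

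Concretely, first I would define the ``good'' event $\mathcal{E} = \{\mathcal{R} \geq \mathrm{OPT}/(4+\epsilon)\}$ and rewrite its complement as $\{\mathrm{OPT} - \mathcal{R} > (1 - 1/(4+\epsilon))\,\mathrm{OPT}\}$. Since $\mathrm{OPT} - \mathcal{R} \geq 0$ with expectation at most $(3/4)\,\mathrm{OPT}$, Markov's inequality on the nonnegative random variable $\mathrm{OPT}-\mathcal{R}$ gives
\[
\Pr[\neg \mathcal{E}] \;\leq\; \frac{(3/4)\,\mathrm{OPT}}{(1 - 1/(4+\epsilon))\,\mathrm{OPT}} \;=\; \frac{3(4+\epsilon)}{4(3+\epsilon)} \;=\; 1 - \frac{\epsilon}{4(3+\epsilon)},
\]
so $\Pr[\mathcal{E}] \geq \epsilon/(4(3+\epsilon)) = \Omega(\epsilon)$ for any fixed $\epsilon > 0$.

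Next I would amplify by independent repetition. Running \algor\ $T$ independent times and returning the best-scoring \configuration\ fails (produces no $(4+\epsilon)$-approximation) only if $\mathcal{E}$ fails in every trial, which by independence happens with probability at most $(1 - \Omega(\epsilon))^T$. Using $\ln(1/(1-x)) \geq x$, taking $T = \Theta(\log n / \epsilon)$ drives this failure probability below $1/n^{c}$ for any desired constant $c$, yielding the claimed high-probability guarantee. Multiplying by the single-run cost $O(n^2 \cdot k)$ from Theorem~\ref{thm:randomized_bound_long} gives the overall runtime $O(n^2 \cdot k \cdot \log_{\epsilon} n)$ stated in the corollary.

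The argument is essentially routine once Theorem~\ref{thm:randomized_bound_long} is in hand; the only subtle step is pinning down the one-shot success probability via reverse Markov, which crucially uses the deterministic upper bound $\mathcal{R} \leq \mathrm{OPT}$ --- without this ceiling we could not convert the expectation bound into a pointwise probability lower bound of $\Omega(\epsilon)$. I do not expect any obstacle beyond writing out these inequalities carefully and verifying the bookkeeping of the $\epsilon$ dependence in both the probability and the runtime.
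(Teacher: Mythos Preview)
Your proposal is correct and matches the paper's own proof essentially step for step: the paper also applies Markov's inequality to the nonnegative gap $\mathrm{OPT}-\mathcal{R}$ (obtaining the same failure bound $(12+3\epsilon)/(12+4\epsilon)$, i.e., success probability $\epsilon/(4(3+\epsilon))$) and then amplifies by $\Theta((\log n)/\epsilon)$ independent repetitions. Nothing substantive is missing.
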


\begin{restatable}{corollary}{compositebound} \label{thm:composite_corollary}
\opt{full}{\sloppy}Given a (non-optimal) \fragsol\ $\Tilde{X}^\ast$ as a $\beta$-approximation of the LP relaxation problem, \algo\ returns an expected $(4 \cdot \beta)$-approximate \configuration. 
\end{restatable}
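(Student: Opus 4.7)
The plan is to observe that the approximation analysis in Theorem \ref{thm:randomized_bound_long} never actually uses the fact that $X^\ast$ is \emph{optimal} for the LP relaxation; it only uses the \emph{feasibility} of $X^\ast$. Specifically, the two crucial bounds in Lemmas \ref{lemma:noteligible} and \ref{lemma:ratio}, which control the probability that a user (or a pair of users) loses eligibility in a single iteration, invoke only the LP constraints $\sum_{\hat{s}=1}^{k} x^c_{u,\hat{s}} \leq 1$ (constraint (\ref{ilp:no_replicated_item})) and $\sum_{\hat{c} \in \mathcal{C}} x^{\hat{c}}_{u,s} = 1$ (constraint (\ref{ilp:one_slot_one_item})). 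Both constraints hold for \emph{any} feasible fractional solution, including the given $\tilde{X}^\ast$.

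Therefore, I would substitute $\tilde{X}^\ast$ for $X^\ast$ throughout the argument of Theorem \ref{thm:randomized_bound_long} and conclude, by identical reasoning, that
\[
\mathbb{E}[\mathcal{R}_{\text{per}}] \;\geq\; \sum_{c \in \mathcal{C}} \sum_{u \in V} \text{p}(u,c) \cdot \frac{\tilde{x}^c_u}{2}, \qquad \mathbb{E}[\mathcal{R}_{\text{soc}}] \;\geq\; \sum_{c \in \mathcal{C}} \sum_{e \in E} \text{w}^c_e \cdot \frac{\tilde{y}^c_e}{4},
\]
so that $\mathbb{E}[\mathcal{R}] \geq \tfrac{1}{4}\cdot \mathrm{Obj}_{\text{LP}}(\tilde{X}^\ast)$, where $\mathrm{Obj}_{\text{LP}}(\cdot)$ denotes the LP objective.

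To finish, I would chain this with the $\beta$-approximation hypothesis: since $\tilde{X}^\ast$ is a $\beta$-approximate solution of the LP relaxation, $\mathrm{Obj}_{\text{LP}}(\tilde{X}^\ast) \geq \frac{1}{\beta} \cdot \mathrm{Obj}_{\text{LP}}(X^\ast)$, and since the LP is a relaxation of the IP, $\mathrm{Obj}_{\text{LP}}(X^\ast) \geq \text{OPT}$. Combining these yields $\mathbb{E}[\mathcal{R}] \geq \tfrac{\text{OPT}}{4\beta}$, which is the desired $(4\beta)$-approximation in expectation.

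The main obstacle is really a verification step rather than a creative one: I must carefully re-examine every line of the proof of Theorem \ref{thm:randomized_bound_long}, especially Lemmas \ref{observation:x}--\ref{lemma:ratio}, to confirm that no estimate silently leverages LP optimality (for example, via complementary-slackness-type identities) beyond pure feasibility. Once this audit confirms that only constraints (\ref{ilp:no_replicated_item})--(\ref{ilp:direct_relation_2}) are invoked, the corollary follows immediately from the two-step chain above.
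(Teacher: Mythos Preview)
Your proposal is correct and matches the paper's own proof essentially line for line: the paper also observes that the analysis of Theorem~\ref{thm:randomized_bound_long} applies verbatim to any feasible fractional solution, yielding $\mathbb{E}[\mathcal{R}] \geq \tfrac{1}{4}\cdot\mathrm{Obj}_{\text{LP}}(\tilde{X}^\ast)$, and then chains this with $\mathrm{Obj}_{\text{LP}}(\tilde{X}^\ast) \geq \tfrac{1}{\beta}\cdot\mathrm{Obj}_{\text{LP}}(X^\ast) \geq \tfrac{\text{OPT}}{\beta}$. Your explicit audit of which LP constraints are invoked in Lemmas~\ref{observation:x}--\ref{lemma:ratio} is a welcome clarification that the paper leaves implicit.
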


\opt{full}{\begin{restatable}{corollary}{kisone} \label{thm:kisone_corollary}
For $k=1$, given the optimal \fragsol\ $X^\ast$, \algo\ returns an expected 2-approximate \configuration\ in $O(n^2 \cdot k)$-time.
\end{restatable}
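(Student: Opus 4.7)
The plan is to specialize the proof of Theorem~\ref{thm:randomized_bound_long} to the degenerate setting $k=1$, where the no-duplication constraint becomes vacuous and the union bounds driving the approximation ratio can be tightened by a factor of two. First I would revisit Lemma~\ref{lemma:noteligible}: in its proof, Case~(1) (that $u$ is displayed $c$ at some slot $\hat{s}$) and Case~(2) (that $u$ is displayed some item $\hat{c}$ at slot $s$) are combined via a union bound. When $k=1$, however, the unique slot forces $\hat{s}=s$, so Case~(1) is \emph{set-theoretically contained} in Case~(2). Consequently, the bound reduces to Case~(2) alone, and constraint~(\ref{ilp:one_slot_one_item}) gives $P^u_{\text{ne}} \leq \frac{1}{m}$ rather than $\frac{2}{m}$. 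An identical subsumption argument applied to Lemma~\ref{lemma:ratio} yields $P^e_{\text{ne}} \leq \frac{2}{m}$ instead of $\frac{4}{m}$.

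Substituting these tightened bounds into the infinite-sum geometric argument from Theorem~\ref{thm:randomized_bound_long} (keeping $P^u_{\text{rec}} = {x^\ast}^c_{u,s}/m$ and $P^e_{\text{rec}} = \min\{{x^\ast}^c_{u,s}, {x^\ast}^c_{v,s}\}/m$ unchanged) gives
\begin{align*}
\text{Pr}(\mathbf{A}(u,s)=c) &\geq \frac{P^u_{\text{rec}}}{P^u_{\text{ne}}} \geq {x^\ast}^c_{u,s},\\
\text{Pr}(u \xleftrightarrow{c} v) &\geq \frac{P^e_{\text{rec}}}{P^e_{\text{ne}}} \geq \frac{{y^\ast}^c_e}{2}.
\end{align*}
Aggregating over users, edges, and items exactly as in the main proof then yields $\mathbb{E}[\mathcal{R}_{\text{per}}] \geq \sum_{c,u} \text{p}(u,c)\cdot {x^\ast}^c_u$ and $\mathbb{E}[\mathcal{R}_{\text{soc}}] \geq \tfrac{1}{2}\sum_{c,e} \text{w}^c_e \cdot {y^\ast}^c_e$, so $\mathbb{E}[\mathcal{R}] \geq \text{OPT}/2$. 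The bottleneck factor, previously $\min(2,4)=4$, is now $\min(1,2)=2$, which establishes the claimed approximation ratio.

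For the running time, the analysis from Theorem~\ref{thm:randomized_bound_long} transfers directly: restricting focal-parameter sampling to combinations that produce nonempty target subgroups keeps the number of iterations at $O(nk)$, each \stagethree\ step costs $O(n)$, and the overall complexity is $O(n^2 \cdot k)$ (equivalently $O(n^2)$ for $k=1$). The only point requiring care, and thus the main obstacle, is the subsumption claim: I must verify that for $k=1$ the ``loss of eligibility via duplication of item $c$ at a different slot'' and ``loss of eligibility via assignment of a different item $\hat{c}$ at slot $s$'' events genuinely coincide as subsets of the same probability space rather than being merely correlated, so that the union-bound slack in both Lemmas~\ref{lemma:noteligible} and~\ref{lemma:ratio} collapses simultaneously. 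Once verified, the improvement propagates mechanically to both the preference and social components of the objective.
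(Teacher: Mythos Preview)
Your proposal is correct and follows essentially the same route as the paper: the paper's proof simply notes that for $k=1$ the first case in Lemma~\ref{lemma:ratio} drops out, reducing the bound from $\frac{4}{km}$ to $\frac{2}{km}$, and then plugs this into the original argument. Your subsumption phrasing (Case~(1) $\subseteq$ Case~(2)) is in fact more precise than the paper's ``impossible to happen,'' and your additional tightening of Lemma~\ref{lemma:noteligible} is valid though not strictly needed since the social term is already the bottleneck.
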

}

\opt{full}{
\begin{proof}
For the first corollary, from Theorem \ref{thm:randomized_bound_long}, \algor\ achieves an expected total \scenario\ utility $\mathbb{E}[\mathcal{R}] \geq \frac{OPT}{4}$. Let $\mathcal{R}' = OPT - \mathcal{R}$ denote the gap between $\mathcal{R}$ (the objective achieved by \algor) and $OPT$ (the optimal objective). Clearly, $\mathcal{R}'$ is non-negative, and we have $\mathbb{E}[\mathcal{R}'] = \mathbb{E}[OPT - \mathcal{R}] = OPT - \mathbb{E}[\mathcal{R}] \leq \frac{3 \cdot OPT}{4}$. Therefore, by the Markov inequality, the probability that a single invocation of \algor\ failing to return a $(4+\epsilon)$-approximate \configuration\ is $\mathcal{R}' > (1 - \frac{1}{4+\epsilon}) \cdot OPT$ is 
\begin{align*}
    \text{Pr}[\mathcal{R}' \geq OPT \cdot (1 - \frac{1}{4+\epsilon})] &\leq \frac{\mathbb{E}[\mathcal{R}']}{(1 - \frac{1}{4+\epsilon}) \cdot OPT}\\
    &\leq \frac{\frac{3 \cdot OPT}{4}}{(1 - \frac{1}{4+\epsilon}) \cdot OPT}\\
    &= \frac{12+3\epsilon}{12+4\epsilon}.
\end{align*}

Therefore, if \algor\ is repeated $n_{\text{repeat}}$ times, the probability that the best solution is not a $(4+\epsilon)$-approximate \configuration\ is at most $ (\frac{12+3\epsilon}{12+4\epsilon})^{n_{\text{repeat}}}$. By setting $n_{\text{repeat}} = \log_{\frac{12+4\epsilon}{12+3\epsilon}} n$, the success rate will be at least $1-(\frac{12+3\epsilon}{12+4\epsilon})^{n_{\text{repeat}}} = 1 - \frac{1}{n}.$

A single run of \algor\ is in $O(n^2 \cdot k)$-time. Therefore, repeating it for $\log_{\frac{12+4\epsilon}{12+3\epsilon}} n$ times can be done in $O(n^2 \cdot k \cdot \log_{\frac{12+4\epsilon}{12+3\epsilon}} n) = O(n^2 \cdot k \cdot \log_{\epsilon} n)$-time.

For the second corollary, note that the total objective achieved by the non-optimal approximate LP solution $\Tilde{X}^\ast$ is at least $\beta \cdot OPT$, since it is a $\beta$-approximation of the LP relaxation problem, which has an optimal objective of at least $OPT$. Therefore, from Theorem \ref{thm:randomized_bound_long}, running \algo\ with $\Tilde{X}^\ast$ achieves an expected \scenario\ utility of at least $\frac{OPT}{4 \cdot \beta}$.

Finally, for the third corollary, simply note that for $k=1$, the final sum in Lemma \ref{lemma:ratio} changes from $\frac{4}{k \cdot m}$ to $\frac{2}{k \cdot m}$, since the first case in the proof of Lemma \ref{lemma:ratio} is impossible to happen in $k=1$. Following the original proof but plugging in this change implies the 2-approximation of \algo\ when $k=1$.
\end{proof}
}

The second corollary is particularly useful in practice for improving the efficiency of \algo\ since state-of-the-art LP solvers often reach a close-to-optimal solution in a short time but need a relatively long time to return the optimal solution, especially for large inputs. Therefore, it allows for a quality-efficiency trade-off in solving \prob.

\subsection{Derandomizing \algo} \label{subsec:derandomize}
\opt{full}{\sloppy}From the investigation of \algo, we observe that the \roundingalpha\ $\alpha$ plays a key role in forming effective target subgroups in \stagethree. If $\alpha$ is close to $0$, \stagethree\ easily forms a large subgroup consisting of all users and co-displays the \roundingitem\ to them, similar to the ineffective \group\ approach. On the other hand, large $\alpha$ values lead to small subgroups, not good for exploiting social interactions. Due to the randomness involved in \algo, these caveats cannot be completely avoided. To address these issues, we aim to further strengthen the performance guarantee of \algo\ by derandomizing the selection of focal parameters to obtain a stronger version of the algorithm, namely \algodfull\ (\algod), which is a \textit{deterministic} 4-approximation algorithm. First, we observe that $\alpha$ can be assigned in a discrete manner.

\begin{observation} \label{observation:poly_many}
    There are $O(knm)$ distinct possible outcomes in \stagethree, each corresponding to a \roundingalpha\ $\alpha = {x^\ast}^c_{u,s}$, i.e., the \weight\ of an item $c$ to a user $u$ at a slot $s$.
\end{observation}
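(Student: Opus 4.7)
My plan is to argue that the outcome of a single invocation of \stagethree\ is a deterministic function of the focal parameters $(c,s,\alpha)$ and, moreover, depends on $\alpha$ only through a finite set of breakpoints.

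First I would fix a focal item $c$ and a focal slot $s$, and examine what \stagethree\ does as a function of $\alpha \in [0,1]$. By the algorithm's definition, the target subgroup is
\[
U(c,s,\alpha) \;=\; \{\hat u \in V : \hat u \text{ eligible for } (c,s) \text{ and } {x^\ast}^c_{\hat u,s} \geq \alpha\}.
\]
Since eligibility depends only on the partially-built \configuration\ $\mathbf{A}$ (not on $\alpha$), and the condition ${x^\ast}^c_{\hat u,s} \geq \alpha$ is monotone in $\alpha$, the set $U(c,s,\alpha)$ changes only when $\alpha$ crosses one of the at most $n$ values $\{{x^\ast}^c_{\hat u,s} : \hat u \in V\}$. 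Hence, as $\alpha$ ranges over $[0,1]$, at most $n$ distinct subgroups (and therefore at most $n$ distinct outcomes) can arise for the fixed pair $(c,s)$; within each half-open interval between consecutive breakpoints the behavior is identical, so it suffices to let $\alpha$ equal one of these breakpoint weights ${x^\ast}^c_{\hat u,s}$.

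Next I would union over all focal $(c,s)$ pairs. Because there are $k$ slots and $m$ items, there are $km$ such pairs, each contributing at most $n$ distinct realizable outcomes, giving a total of at most $knm = O(knm)$ distinct outcomes for \stagethree. Finally, I would note that each surviving outcome is witnessed by a threshold $\alpha$ that coincides with some weight ${x^\ast}^c_{u,s}$, which is precisely the claim in the observation.

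I do not expect any serious obstacle here: the argument is a standard ``discrete level-set'' observation about thresholding a finite collection of real values. The only subtlety worth spelling out is tie-handling (users with identical weights must enter the subgroup together), which is handled automatically by using a weak inequality ${x^\ast}^c_{\hat u,s} \geq \alpha$ and choosing $\alpha$ itself to equal a realized weight, so no ``in between'' threshold produces new behavior.
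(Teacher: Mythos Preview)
Your proposal is correct and follows essentially the same approach as the paper: both fix $(c,s)$ and observe that the outcome of \stagethree\ with threshold $\alpha=x$ is equivalent to that with $\alpha$ set to the smallest ${x^\ast}^c_{u,s}\geq x$, so there are at most $n$ outcomes per $(c,s)$ pair and $O(knm)$ overall. Your write-up is slightly more detailed (explicitly defining $U(c,s,\alpha)$ and noting monotonicity), but the argument is the same ``discrete level-set'' observation.
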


The above observation can be verified as follows. Given $c$ and $s$, the outcome of \stagethree\ with \roundingalpha\ $\alpha = x$, for any $x \in [0,1]$, is equivalent to that with $\alpha$ set to the smallest ${x^\ast}^c_{u,s} \geq x$. It enables us to derandomize \algo\ effectively. Instead of randomly sampling $(c,s,\alpha)$, we carefully evaluate the outcomes (of \stagethree) from setting ${\alpha}$ to every possible ${x^\ast}^c_{u,s}$ in the optimal \fragsol. Intuitively, it is desirable to select an ${\alpha}$ that results in the largest increment in the total \scenario\ utility. However, this short-sighted approach ignores the potentially significant increase in total \scenario\ utility in the future from the remaining users and slots that have not been processed. In fact, it always selects an outcome with $\alpha = 0$ to maximize the current utility increment. Therefore, it is necessary for \algod\ to carefully evaluate all potential future allocations of items.

\opt{short}{\revise{Specificially, \algod\ selects the set of focal parameters $(c,s,\alpha)$ to maximize a weighted sum of 1) the increment of \scenario\ utility in the current iteration via \stagethree\ with $(c,s,\alpha)$, and 2) the expected \scenario\ utility in the future from item assignments at slots that are \textit{left unfilled} in \stagethree\ with $(c,s,\alpha)$. Due to the space constraint, please see Section \ref{subsec:derandomize} in the full version \cite{Online} for the details of \algod, and the proof of the following theoretical result.}}

\opt{full}{
Specifically, let $S_{\text{cur}} = \{ (\hat{u}, \hat{s}) | \mathbf{A}(\hat{u}, \hat{s}) = \text{NULL} \}$ denote the set of \textit{display units}, i.e., a slot $\hat{s}$ for a user $\hat{u}$, that have not been assigned an item before the current iteration. Let $S_{\text{tar}}(c,s,{x^\ast}^c_{u,s}) = \{ (\hat{u}, s) \in S_{\text{cur}} | {x^\ast}^{c}_{\hat{u},s} \geq {x^\ast}^c_{u,s}\}$ denote the set of display units to be assigned $c$ in \stagethree\ with ${\alpha} = {x^\ast}^c_{u,s}$. Let $S_{\text{fut}}(c,s,{x^\ast}^c_{u,s}) = S_{\text{cur}} \setminus S_{\text{tar}}(c,s,{x^\ast}^c_{u,s})$ denote the set of remaining display units to be processed in the future. Moreover, let ${\text{ALG}}(S_{\text{tar}}(c,s,{x^\ast}^c_{u,s}))$ be the \scenario\ utility gained by co-displaying $c$ to the target subgroup. Let ${\text{OPT}}_{\text{LP}}(S_{\text{fut}}(c,s,{x^\ast}^c_{u,s}))$ represent the expected \scenario\ utility acquired in the future from the display units in $S_{\text{fut}}(c,s,{x^\ast}^c_{u,s})$. To strike a balance between the increment of \scenario\ utility in the current iteration and in the future, \algod\ examines every ${\alpha} = {x^\ast}^c_{u,s}$ to maximize 
\begin{align*}
    f(c,s,{x^\ast}^c_{u,s}) &= {\text{ALG}}(S_{\text{tar}}(c,s,{x^\ast}^c_{u,s})) \\
    &+ r \cdot {\text{OPT}}_{\text{LP}}(S_{\text{fut}}(c,s,{x^\ast}^c_{u,s})),
\end{align*}
\vspace{-15pt}

\noindent where 
\vspace{-15pt}

\begin{small}
\begin{align*}
    {\text{ALG}}(S_{\text{tar}}(c,s,{x^\ast}^c_{u,s})) &= \sum\limits_{(\hat{u},s) \in S_{\text{tar}}} \text{p}(\hat{u}, c) + \sum\limits_{\substack{e = (\hat{u},\hat{v}) \\ (\hat{u},s), (\hat{v},s) \in S_{\text{tar}}}} \text{w}^c_e, \\
    {\text{OPT}}_{\text{LP}}(S_{\text{fut}}(c,s,{x^\ast}^c_{u,s})) &= \sum\limits_{c \in \mathcal{C}} \sum\limits_{s=1}^{k} [ \sum\limits_{(\hat{u},\hat{s}) \in S_{\text{fut}}} \text{p}(\hat{u}, c) {x^\ast}^c_{\hat{u},\hat{s}} \\ &+ \sum\limits_{\substack{e = (\hat{u},\hat{v}) \\ (\hat{u},\hat{s}), (\hat{v},\hat{s}) \in S_{\text{fut}}}} \text{w}^c_e {y^\ast}^c_{e,\hat{s}} ],
\end{align*}
\end{small}

\noindent $\text{w}^c_e = \tau(\hat{u},\hat{v},c)+\tau(\hat{v},\hat{u},c)$ for all $e = (\hat{u},\hat{v}) \in E$, and $r$ is the balancing ratio. In other words, \algod\ simultaneously optimizes the current and potential future \scenario\ utility in each iteration. It is summarized in Algorithm \ref{alg:deterministic_alignment_aware_rounding}.

\begin{algorithm}[h]  
    \caption{\algodfull}
    \label{alg:deterministic_alignment_aware_rounding}
    \begin{algorithmic}[1]
        \Require $X^\ast$
        \Ensure An \configuration\ $\mathbf{A}$
        \State $\mathbf{A}(\hat{u},\hat{s}) \gets \text{NULL}$ for all $\hat{u},\hat{s}$
        \State $X^\ast \gets X^\ast_{\text{LP}}$ 
        \While {some entry in $\mathbf{A}$ is $\text{NULL}$} 
                \State $c,s,u \gets \argmax\limits_{c,s,u} f(c,s,{x^\ast}^c_{u,s})$
                \State $\alpha \gets {x^\ast}^c_{u,s}$
            \For{$\hat{u} \in V$}
                \If{$\mathbf{A}(\hat{u},s) = \text{NULL}$ and $\mathbf{A}(\hat{u},t) \neq c \, \forall t \neq s$} 
                    \If{${x^\ast}^c_{\hat{u},s} \geq \alpha$} 
                        \State $\mathbf{A}(\hat{u},s) \gets c$
                    \EndIf
                \EndIf
            \EndFor
        \EndWhile
        \Return $\mathbf{A}$
    \end{algorithmic}
\end{algorithm}
\normalsize


\opt{full}{
\begin{table}[t]
	\centering
	\caption{\configuration\ returned by \algor\ for Example \ref{example:random}.}
	\label{table:algr_result}
	\begin{tabular}{l|ccc}
  &  Slot 1 & Slot 2 & Slot 3\\ \hline 
 $u_A$ &  $c_5$  & $c_2$  &  $c_1$    \\
 $u_B$ &  $c_2$  & $c_4$  &  $c_1$    \\
 $u_C$ &  $c_3$  & $c_4$  &  $c_5$  \\
 $u_D$ &  $c_5$  & $c_4$  &  $c_1$     \\
	\end{tabular}
\end{table}
\begin{table}[t]
	\centering
	\caption{\configuration\ returned by \algod\ for Example \ref{example:random}.}
	\label{table:algd_result}
	\begin{tabular}{l|ccc}
  &  Slot 1 & Slot 2 & Slot 3\\ \hline 
 $u_A$ &  $c_5$  & $c_1$  &  $c_2$    \\
 $u_B$ &  $c_5$  & $c_1$  &  $c_2$    \\
 $u_C$ &  $c_5$  & $c_3$  &  $c_2$  \\
 $u_D$ &  $c_5$  & $c_1$  &  $c_4$     \\
	\end{tabular}
\end{table}
\begin{table}[t]
	\centering
	\caption{Configuration by the baseline approaches for Example \ref{example:random}.}
	\label{table:baseline_result}
	\begin{tabular}{l|ccc}
 \personalized\ &  Slot 1 & Slot 2 & Slot 3\\ \hline 
 $u_A$ &  $c_5$  & $c_2$  &  $c_1$    \\
 $u_B$ &  $c_2$  & $c_1$  &  $c_4$    \\
 $u_C$ &  $c_3$  & $c_4$  &  $c_2$  \\
 $u_D$ &  $c_4$  & $c_5$  &  $c_3$     \\\\
 
 \group\ &&&\\ \hline 
 $\{u_A, u_B, u_C, u_D\}$ &  $c_5$  & $c_1$  &  $c_2$    \\\\
 
 \friendship\ &&&\\ \hline 
 $\{u_A,u_D\}$ &  $c_5$  & $c_1$  &  $c_4$    \\
 $\{u_B,u_C\}$ &  $c_2$  & $c_4$  &  $c_3$    \\\\
 
 \preference\ &&&\\ \hline 
 $\{u_A,u_B\}$ &  $c_2$  & $c_1$  &  $c_5$    \\
 $\{u_C,u_D\}$ &  $c_4$  & $c_5$  &  $c_3$    \\
	\end{tabular}
\end{table}
}

\begin{example}
\sloppy For the instance in Example \ref{example:running}, the first iteration of \algod\ with $r = \frac{1}{4}$ sets $\alpha = {x^\ast}^{c_5}_{u_B,1} = 0$, and \stagethree\ with $(c,s,{x^\ast}^c_{u,s}) = (c_5,1,0)$ co-displays the SP camera to everyone at slot 1, where $S_{\text{tar}}(c,s,{x^\ast}^c_{u,s}) = \{ (u_A,1), (u_B,1), (u_C,1), (u_D,1)\}$, $S_{\text{fut}}(c,s,{x^\ast}^c_{u,s}) = \{ (u_A,2), (u_B,2), (u_C,2), (u_D,2), (u_A,3), (u_B,3), (u_C,3),$ $(u_D,3)\}$, ${\text{ALG}}(S_{\text{tar}}(c,s,{x^\ast}^c_{u,s})) =  \text{p}(u_A,c_5) + \text{p}(u_B,c_5) + \text{p}(u_C,c_5) + \text{p}(u_D,c_5) + \text{w}^{c_5}_{(u_A,u_B)} + \text{w}^{c_4}_{(u_A,u_C)} + \text{w}^{c_4}_{(u_A,u_D)} + \text{w}^{c_4}_{(u_B,u_C)} = 3.35$, ${\text{OPT}}_{\text{LP}}(S_{\text{fut}}(c,s,{x^\ast}^c_{u,s})) = 6.97$, and $f(c,s,{x^\ast}^c_{u,s}) = 3.35 + \frac{1}{4} \cdot 6.97 = 5.09$ is maximized. The next iteration selects $\alpha = {x^\ast}^{c_1}_{u_A,2} = 0.33$ to co-display the tripod to \{Alice, Bob, Dave\} at slot 2. \algod\ selects $\alpha = {x^\ast}^{c_2}_{u_D,3} = 0$, $\alpha = {x^\ast}^{c_4}_{u_A,3} = 0$, and $\alpha = {x^\ast}^{c_3}_{u_D,2} = 0$ in the next three iterations, resulting in a total \scenario\ utility of 9.85, which is slightly larger than \algor\ (9.75) in Example \ref{example:random}. The returned \configuration\ is shown in Table \ref{table:algd_result}. 

For this running example, the optimal solution is the \configuration\ shown at the top of Figure \ref{fig:example}, with a total \scenario\ utility of 10.35. The \personalized\ approach retrieves the top-$3$ preferred items for each user, e.g., $\langle c_5, c_2, c_1 \rangle$ for Alice, and achieves a total \scenario\ utility of 8.25. For the \group\ approach, the total \scenario\ utility for the universal subgroup \{Alice, Bob, Charlie, Dave\} viewing each item is aggregated. For example, the total \scenario\ utility of the subgroup of all users viewing $c_1$ would be $\sum\limits_{u \in V} \text{p}(u,c_1) + \sum\limits_{u,u' in V} \tau(u,u',c_1)$, which equals the summation of all values in the first row in Table \ref{table:sat}, or 2.6 (note that $\lambda = 0.5$ and the objective is scaled up by 2 here). The top-$3$ items, i.e., $\langle c_5, c_1, c_2 \rangle$, is retrieved and co-displayed to all users. This achieves a total \scenario\ utility of 8.35. 

We also compare with two variations of the subgroup approach, namely the \friendship\ and \preference\ approaches, where the former pre-partitions the subgroups based on social relations among the users, while the latter finds subgroups with similar item preferences. The \friendship\ approach first partitions the social network into two equally sized and dense subgroups \{Alice, Dave\} and \{Bob, Charlie\}. Items are then determined analogously to the \group\ approach, resulting in a total \scenario\ utility of 8.4. Finally, the \preference\ approach partitions the network into \{Alice, Bob\} and \{Charlie, Dave\}. The total \scenario\ utility would be 8.7.

The configurations returned by these four baseline approaches are consistent with Figure \ref{fig:comp_example} and are also summarized in Table \ref{table:baseline_result}. To sum up, the total \scenario\ utility for \algor, \algod, and the \personalized, \group, \friendship\ and \preference\ approaches, are 9.75, 9.85, 8.25, 8.35, 8.4 and 8.7, respectively. Therefore, both \algor\ and \algod\ achieves a near-optimal solution. \hfill \qedsymbol
\end{example}

 
In \algod, $r$ serves as a turning knob to strike a good balance between the current increment in \scenario\ utility and the potential \scenario\ utility in the future. Specifically, we prove that by setting $r = \frac{1}{4}$ in each iteration, \algod\ is a deterministic 4-approximation algorithm for \prob. We also empirically evaluate the performance of \algod\ with different values of $r$ in Section \ref{subsec:diff_r}.
}

\begin{restatable}{theorem}{derandombound} \label{thm:derandomized_bound_long}
\sloppy Given the optimal \fragsol\ $X^\ast$, \algod\ returns a worst-case 4-approximate \configuration\ in $O(n \cdot m \cdot k \cdot |E|)$-time.
\end{restatable}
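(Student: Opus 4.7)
The plan is to derandomize \algor\ via the method of conditional expectations, using the LP value as a potential function. I would maintain the invariant that after iteration $t$ the accumulated \scenario\ utility $\Phi_t$ satisfies
\[
\Phi_t + \tfrac{1}{4}\text{OPT}_{\text{LP}}(S_{\text{cur}}^{(t)}) \;\geq\; \tfrac{1}{4}\text{OPT}_{\text{LP}}(S_{\text{cur}}^{(0)}),
\]
where $S_{\text{cur}}^{(t)}$ denotes the set of display units still unfilled. Since the initial LP value upper-bounds $\text{OPT}$ and $S_{\text{cur}}^{(T)} = \emptyset$ at termination, this immediately gives $\Phi_T \geq \text{OPT}/4$ as required.

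The inductive step reduces to showing that in every iteration there exists $(c, s, \alpha)$ with $\text{ALG}(S_{\text{tar}}) + \tfrac{1}{4}\text{OPT}_{\text{LP}}(S_{\text{fut}}) \geq \tfrac{1}{4}\text{OPT}_{\text{LP}}(S_{\text{cur}}^{(t)})$ --- precisely the quantity $f$ that \algod\ maximizes with $r = \tfrac{1}{4}$. I would establish this via averaging over the distribution used by \algor\ (uniform $c$, uniform $s$, uniform $\alpha \in [0,1]$). Using the LP constraints $\sum_c {x^\ast}^c_{u,s} = 1$ and ${y^\ast}^c_{e,s} \leq \min({x^\ast}^c_{u,s}, {x^\ast}^c_{v,s})$, a one-iteration computation yields $\mathbb{E}[\text{ALG preference}] = \tfrac{1}{mk}\text{LP}_{\text{pre}}(S_{\text{cur}}^{(t)}) = \mathbb{E}[\text{LP}_{\text{pre}}\text{ decrease}]$ and $\mathbb{E}[\text{ALG social}] = \tfrac{1}{mk}\text{LP}_{\text{soc}}(S_{\text{cur}}^{(t)}) \geq \tfrac{1}{2}\mathbb{E}[\text{LP}_{\text{soc}}\text{ decrease}]$. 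The factor-$\tfrac{1}{2}$ social slack arises because an edge's LP contribution at slot $s$ vanishes whenever \emph{either} endpoint's display unit is filled (upper bound $\tfrac{2}{mk}$ per iteration), whereas \algo\ only collects the edge's social utility when \emph{both} endpoints are filled jointly (exactly $\tfrac{1}{mk}\cdot{y^\ast}^c_{e,s}$ per iteration), mirroring the $P^e_{\text{ne}} \leq \tfrac{4}{mk}$ analysis in Theorem \ref{thm:randomized_bound_long}. Combining gives $\mathbb{E}[\text{ALG}] \geq \tfrac{1}{2}\mathbb{E}[\text{LP decrease}] \geq \tfrac{1}{4}\mathbb{E}[\text{LP decrease}]$, so by averaging the inequality holds for some specific sample; by Observation \ref{observation:poly_many}, this sample lies among the $O(knm)$ outcomes enumerated by \algod.

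The main technical obstacle is reconciling $S_{\text{tar}}$ as defined in the paper with the pseudocode's eligibility check, which further rules out users already displayed $c$ at another slot. I would argue that the restriction of $X^\ast$ to $S_{\text{cur}}^{(t)}$ remains a feasible fractional solution to the residual LP instance on $S_{\text{cur}}^{(t)}$: the no-duplication sums $\sum_{t' \neq s} {x^\ast}^c_{\hat{u}, t'}$ can only weaken as items are fixed, so $\text{OPT}_{\text{LP}}(S_{\text{cur}}^{(t)})$ stays a legitimate upper bound on any residual completion, and the eligibility-induced slack can only tighten the inequality in our favor. A secondary subtlety is that $\text{ALG}(S_{\text{tar}})$ only credits intra-iteration co-displays and not edges whose endpoints are filled in different iterations, so $\sum_t \text{ALG}(S_{\text{tar}}^{(t)})$ is itself a lower bound on the total \scenario\ utility of the returned configuration, which still suffices for the 4-approximation claim.

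For the time complexity, the outer loop runs at most $O(nk)$ iterations since each fills at least one display unit; within each iteration, by Observation \ref{observation:poly_many}, only $O(knm)$ distinct thresholds must be enumerated, and evaluating $f$ per candidate reduces to computing $\text{ALG}(S_{\text{tar}})$ (dominated by an $O(|E|)$ edge sum) plus an incrementally maintained $\text{OPT}_{\text{LP}}(S_{\text{fut}})$ update. Careful bookkeeping that subtracts contributions of newly filled display units once across all iterations (rather than recomputing from scratch) amortizes the total work to fit within the claimed $O(n \cdot m \cdot k \cdot |E|)$ bound.
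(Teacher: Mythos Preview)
Your proposal is essentially the paper's approach: the paper proves that under the randomized distribution $\mathbb{E}[f(c,s,\alpha)] \geq \tfrac{1}{4}\,\text{OPT}_{\text{LP}}(S_{\text{cur}})$ (Lemma \ref{lemma:derandomization}) via exactly your averaging computation, then telescopes $\text{ALG}(S^t_{\text{tar}}) + \tfrac{1}{4}\,\text{OPT}_{\text{LP}}(S^t_{\text{fut}}) \geq \tfrac{1}{4}\,\text{OPT}_{\text{LP}}(S^{t-1}_{\text{fut}})$ across iterations, which is your invariant in summed form. Your per-iteration social-decrease bound ($\tfrac{2}{km}$) is in fact tighter than the paper's ($\tfrac{4}{km}$, obtained by directly reusing Lemma \ref{lemma:ratio}), and your flagging of the eligibility mismatch between $S_{\text{tar}}$ and the pseudocode's second check is more explicit than the paper, which glosses over it.
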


\opt{full}{
In DPS, let ${\text{OPT}}_{\text{LP}}(S_{\text{cur}})$ denote the total \scenario\ utility from only the slots in $S_{\text{cur}}$ according to $X^\ast$. More specifically, 

\begin{align*}
    {\text{OPT}}_{\text{LP}}(S_{\text{cur}}) = \sum_{\hat{c} \in \mathcal{C}} \sum_{\hat{s}=1}^{k} &\sum_{\substack{(\hat{u},\hat{s}) \in\\ S_{\text{cur}}(c,s,{x^\ast}^c_{u,s})}} (\text{p}(\hat{u}, \hat{c}) {x^\ast}^{\hat{c}}_{\hat{u},\hat{s}} \\
    + &\sum_{\substack{e = (\hat{u},\hat{v}) \\ (\hat{v},\hat{s}) \in S_{\text{cur}}(c,s,{x^\ast}^c_{u,s})}} \text{w}^{\hat{c}}_e {y^\ast}^{\hat{c}}_{e,\hat{s}}).
\end{align*}

Similar to above, an \textit{iteration} for \algod\ refers to DPS selecting $(c,s,{x^\ast}^c_{u,s})$ and the subsequent \stagethree. We first outline the proof sketch. To prove the approximation ratio for \algod, we first need to show that, in an iteration of \algod, \textit{if $(c,s,{x^\ast}^c_{u,s})$ is selected with RPS instead of DPS}, so that the probability of selecting $(c,s,{x^\ast}^c_{u,s})$ is the sum of selecting all $(c,s,\alpha)$ that lead to an equivalent outcome as selecting $(c,s,{x^\ast}^c_{u,s})$, then the expected value of $f(c,s,{x^\ast}^c_{u,s})$ is at least $\frac{{\text{OPT}}_{\text{LP}}(S_{\text{cur}})}{4}$. Thus, for each iteration there exists at least one $(c,u,s)$ with $f(c,s,{x^\ast}^c_{u,s}) \geq \frac{{\text{OPT}}_{\text{LP}}(S_{\text{cur}})}{4}$, and \algod\ is therefore a 4-approximating algorithm.

We begin with proving that in every iteration $t$, if $(c,s,\alpha)$ is selected with RPS instead of DPS, $\mathbb{E}[f(c,s,\alpha)] \geq \frac{1}{4} \cdot {\text{OPT}}_{\text{LP}}(S_{\text{cur}})$, where the function $f(c,s,\alpha)$ is the evaluation criteria in DPS. Therefore, $f(c,s,{x^\ast}^c_{u,s})$ in DPS is also at least $\frac{1}{4} \cdot {\text{OPT}}_{\text{LP}}(S_{\text{cur}})$, since DPS maximizes $f(c,s,{x^\ast}^c_{u,s})$.

\begin{lemma} \label{lemma:derandomization}
In any iteration of \algo, if $(c,s,\alpha)$ is selected by RPS, $\mathbb{E}[f(c,s,\alpha)] \geq \frac{1}{4} \cdot {\text{OPT}}_{\text{LP}}(S_{\text{cur}})$.
\end{lemma}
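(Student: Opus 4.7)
The plan is to use linearity of expectation to reduce the lemma to a single inequality relating expected per-iteration gain to expected LP potential consumed, and then to verify this inequality via the probability calculations that underpin the analysis of \algo.

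First, since $S_{\text{tar}}(c,s,\alpha)$ and $S_{\text{fut}}(c,s,\alpha)$ partition $S_{\text{cur}}$ and ${\text{OPT}}_{\text{LP}}(\cdot)$ is an additive functional of the LP solution over display units, it holds that
\begin{align*}
    \mathbb{E}[{\text{OPT}}_{\text{LP}}(S_{\text{fut}})] = {\text{OPT}}_{\text{LP}}(S_{\text{cur}}) - \mathbb{E}[{\text{OPT}}_{\text{LP}}(S_{\text{tar}})].
\end{align*}
Substituting into the definition of $f$ with $r = \frac{1}{4}$, the desired bound $\mathbb{E}[f(c,s,\alpha)] \geq \frac{1}{4}\,{\text{OPT}}_{\text{LP}}(S_{\text{cur}})$ is equivalent to
\begin{align*}
    \mathbb{E}[{\text{ALG}}(S_{\text{tar}})] \geq \frac{1}{4}\,\mathbb{E}[{\text{OPT}}_{\text{LP}}(S_{\text{tar}})],
\end{align*}
so it suffices to lower-bound the left-hand side and upper-bound the right-hand side by comparable quantities.

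Next, I would evaluate both sides by per-term probability computations analogous to Lemmas \ref{observation:x}, \ref{observation:y}, and \ref{lemma:ratio}. Uniform sampling of $(c,s,\alpha)$ assigns any display unit $(u,s_0) \in S_{\text{cur}}$ to an eligible item $c_0$ with probability $\frac{{x^\ast}^{c_0}_{u,s_0}}{k \cdot m}$, and places both endpoints of an edge $e = (u,v)$ at slot $s_0$ into $S_{\text{tar}}$ with probability $\frac{{y^\ast}^{c_0}_{e,s_0}}{k \cdot m}$, using the LP-optimal identity ${y^\ast}^{c_0}_{e,s_0} = \min\{{x^\ast}^{c_0}_{u,s_0}, {x^\ast}^{c_0}_{v,s_0}\}$. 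Summing the weighted preference and social contributions yields $\mathbb{E}[{\text{ALG}}(S_{\text{tar}})] \geq \frac{1}{k \cdot m} \cdot {\text{OPT}}_{\text{LP}}(S_{\text{cur}})$, up to terms removed by eligibility restrictions. On the other side, the marginal $\Pr[(u,s_0) \in S_{\text{tar}}] = \frac{1}{k \cdot m}\sum_{c'} {x^\ast}^{c'}_{u,s_0} \leq \frac{1}{k \cdot m}$ by constraint (\ref{ilp:one_slot_one_item}), and analogously $\Pr[\{(u,s_0),(v,s_0)\} \subseteq S_{\text{tar}}] \leq \frac{1}{k \cdot m}\sum_{c'} {y^\ast}^{c'}_{e,s_0} \leq \frac{1}{k \cdot m}$ since $y \leq x$. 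Hence $\mathbb{E}[{\text{OPT}}_{\text{LP}}(S_{\text{tar}})] \leq \frac{1}{k \cdot m} \cdot {\text{OPT}}_{\text{LP}}(S_{\text{cur}})$; combining with the previous estimate, the ratio of the two expectations is bounded by a constant, and the slack of $\frac{1}{4}$ absorbs the loss arising from ineligible items that contribute to the LP potential but not to the realized gain.

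The main obstacle I anticipate is the careful bookkeeping around eligibility: items already displayed to $u$ at other slots, together with the no-duplication constraint, remove certain $(u,s,c)$ triples from the ALG sum while leaving them in the LP potential. The resolution is that at any iteration at most $k-1$ items are ineligible for any user $u$, so the per-unit loss remains uniformly bounded; the corresponding LP mass can be absorbed into the factor $\frac{1}{4}$ without violating the inequality, completing the proof.
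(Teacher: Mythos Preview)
Your reduction in the first step is flawed. You assert that ${\text{OPT}}_{\text{LP}}(\cdot)$ is additive over the partition $S_{\text{cur}} = S_{\text{tar}} \cup S_{\text{fut}}$, but this is false for the social (edge) terms: by definition, an edge $e=(\hat u,\hat v)$ at slot $\hat s$ contributes $\text{w}^{\hat c}_e\,{y^\ast}^{\hat c}_{e,\hat s}$ to ${\text{OPT}}_{\text{LP}}(S)$ only when \emph{both} $(\hat u,\hat s)$ and $(\hat v,\hat s)$ lie in $S$. Hence an edge with one endpoint in $S_{\text{tar}}$ and the other in $S_{\text{fut}}$ is counted in ${\text{OPT}}_{\text{LP}}(S_{\text{cur}})$ but in neither of the two parts, so in general
\[
{\text{OPT}}_{\text{LP}}(S_{\text{tar}}) + {\text{OPT}}_{\text{LP}}(S_{\text{fut}}) \;\le\; {\text{OPT}}_{\text{LP}}(S_{\text{cur}}),
\]
with strict inequality whenever such ``cut'' edges exist. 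Consequently your identity $\mathbb{E}[{\text{OPT}}_{\text{LP}}(S_{\text{fut}})] = {\text{OPT}}_{\text{LP}}(S_{\text{cur}}) - \mathbb{E}[{\text{OPT}}_{\text{LP}}(S_{\text{tar}})]$ is only an \emph{upper} bound on $\mathbb{E}[{\text{OPT}}_{\text{LP}}(S_{\text{fut}})]$, whereas the lemma requires a \emph{lower} bound. Your reduction to $\mathbb{E}[{\text{ALG}}(S_{\text{tar}})] \ge \tfrac14\,\mathbb{E}[{\text{OPT}}_{\text{LP}}(S_{\text{tar}})]$ therefore does not imply the conclusion, and the ``slack of $\tfrac14$'' in your last paragraph cannot absorb a loss of the wrong sign.

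The paper avoids this trap by lower-bounding $\mathbb{E}[{\text{OPT}}_{\text{LP}}(S_{\text{fut}})]$ directly. For each edge term with both endpoints in $S_{\text{cur}}$, Lemma~\ref{lemma:ratio} bounds the probability that at least one endpoint is removed in this iteration by $\tfrac{4}{km}$; together with Lemma~\ref{lemma:noteligible} for the preference terms, this yields $\mathbb{E}[{\text{OPT}}_{\text{LP}}(S_{\text{fut}})] \ge (1-\tfrac{4}{km})\,{\text{OPT}}_{\text{LP}}(S_{\text{cur}})$. Combined with $\mathbb{E}[{\text{ALG}}(S_{\text{tar}})] = \tfrac{1}{km}\,{\text{OPT}}_{\text{LP}}(S_{\text{cur}})$ (which you also compute), this gives exactly $\mathbb{E}[f] \ge \tfrac{1}{km} + \tfrac14(1-\tfrac{4}{km}) = \tfrac14$ times ${\text{OPT}}_{\text{LP}}(S_{\text{cur}})$. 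The factor $4$ thus comes from the pair-ineligibility bound of Lemma~\ref{lemma:ratio}, not from any comparison between $\mathbb{E}[{\text{ALG}}(S_{\text{tar}})]$ and $\mathbb{E}[{\text{OPT}}_{\text{LP}}(S_{\text{tar}})]$.
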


\begin{proof}
In any iteration $t$, for each $e = (\hat{u},\hat{v})$, $c$, and $s$ with $(\hat{u},s)$ and $(\hat{v},s)$ in $S_{\text{cur}}$ (i.e., $\hat{u}$ and $\hat{v}$ are eligible for $(c,s)$), by Lemma \ref{observation:y}, the probability that $\hat{u}$ and $\hat{v}$ are co-displayed $c$ at slot $s$ is 
$$ \frac{\min \{ {x^\ast}^{c}_{\hat{u},s}, {x^\ast}^{c}_{\hat{v},s}\}}{k \cdot m} = \frac{{y^\ast}^{c}_{e,s}}{k \cdot m}. $$ Therefore,  

\begin{align*}
    & \mathbb{E}[{\text{ALG}}(S_{\text{tar}}(c,s,\alpha))]\\
    =& \mathbb{E}[\sum_{(\hat{u},s) \in S_{\text{tar}}(c,s,\alpha)} \big( \text{p}(\hat{u}, c) + \sum_{(\hat{v},s) \in S_{\text{tar}}(c,s,\alpha)} \tau(\hat{u},\hat{v},c) \big)]\\
    =& \sum_{c \in \mathcal{C}} \sum_{s=1}^{k} \big( \sum_{(\hat{u}, s) \in S_{\text{cur}}} \text{p}(\hat{u}, c) \frac{{x^\ast}^c_{\hat{u}, s}}{k \cdot m} + \sum_{(\hat{u},s), (\hat{v},s) \in S_{\text{cur}}} \text{w}^c_e \frac{{y^\ast}^c_{e,s}}{k \cdot m} \big)\\
    =& \frac{1}{k \cdot m} \sum_{c \in \mathcal{C}} \sum_{s=1}^{k} \big( \sum_{(\hat{u},s) \in S_{\text{cur}}} \text{p}(\hat{u}, c) {x^\ast}^c_{\hat{u}, s} + \sum_{\substack{e = (\hat{u},\hat{v}) (\hat{u},s), \\ (\hat{v},s) \in S_{\text{cur}}}} \text{w}^c_e {y^\ast}^c_{e,s} \big)\\ 
    =&\frac{{\text{OPT}}_{\text{LP}}(S_{\text{cur}})}{k \cdot m}.
\end{align*}

\noindent Next, from Lemma \ref{lemma:noteligible} and \ref{lemma:ratio}, for each $e = (\hat{u},\hat{v})$, $\hat{c}$, and $\hat{s}$ such that $(\hat{u},\hat{s}), (\hat{v},\hat{s}) \in S_{\text{cur}}$, the probability that $(\hat{u},\hat{s}) \in S_{\text{fut}}(c,s,\alpha)$ is at least $1 - P^{\hat{u}}_{\text{ne}}$, and the probability that $(\hat{u},\hat{s}), (\hat{v},\hat{s}) \in S_{\text{fut}}(c,s,\alpha)$ is at least $1 - P^{e}_{\text{ne}}$, which implies the probability that both $\hat{u}$ and $\hat{v}$ are eligible for $(\hat{c},\hat{s})$ in iteration $(t+1)$ is at least $1 - \frac{4}{k \cdot m}$. Therefore, the expected contribution of the social utility from the co-display of $\hat{c}$ to $\hat{u}$ and $\hat{v}$ at slot $\hat{s}$ is at least $\text{w}^{\hat{c}}_e \cdot (1 - \frac{4}{k \cdot m})$. 
Thus, 

\begin{small}
\begin{align*}
    &\mathbb{E}[{\text{OPT}}_{\text{LP}}(S_{\text{fut}}(c,s,\alpha))]\\
    \geq& \sum_{\hat{c} \in \mathcal{C}}\sum_{\hat{s}=1}^{k} [ \sum_{(\hat{u},\hat{s}) \in S_{\text{cur}}}  (1-P^{\hat{u}}_{\text{ne}}) \text{p}^{\hat{c}}_{\hat{u},\hat{s}} {x^\ast}^{\hat{c}}_{\hat{u},\hat{s}} + \sum_{\substack{(\hat{u},\hat{s}),\\ (\hat{v},\hat{s}) \in S_{\text{cur}}}} (1-P^{e}_{\text{ne}}) \text{w}^{\hat{c}}_e {y^\ast}^{\hat{c}}_{e,\hat{s}} ]\\
    \geq& (1 - \frac{4}{k \cdot m}) \cdot \sum_{\hat{c} \in \mathcal{C}}\sum_{\hat{s}=1}^{k} [ \sum_{(\hat{u},\hat{s}) \in S_{\text{cur}}} \text{p}^{\hat{c}}_{\hat{u},\hat{s}} {x^\ast}^{\hat{c}}_{\hat{u},\hat{s}} + \sum_{\substack{(\hat{u},\hat{s}),\\ (\hat{v},\hat{s}) \in S_{\text{cur}}}} \text{w}^{\hat{c}}_e {y^\ast}^{\hat{c}}_{e,\hat{s}}]\\
    =& (1-\frac{4}{k \cdot m}){\text{OPT}}_{\text{LP}}(S_{\text{cur}}).
\end{align*}
\end{small}

\noindent Combining the above, we have 

\begin{small}
\begin{align*}
\mathbb{E}[f(c,s,\alpha)] &= \mathbb{E}[{\text{ALG}}(S_{\text{tar}}(c,s,\alpha))] + \frac{1}{4} \cdot \mathbb{E}[{\text{OPT}}_{\text{LP}}(S_{\text{fut}}(c,s,\alpha))]\\ 
&\geq \frac{{\text{OPT}}_{\text{LP}}(S_{\text{cur}})}{k \cdot m} + \frac{1}{4} \cdot {\text{OPT}}_{\text{LP}}(S_{\text{cur}}) \cdot (1-\frac{4}{k \cdot m}) \\
&= \frac{{\text{OPT}}_{\text{LP}}(S_{\text{cur}})}{4}.
\end{align*}
\end{small}

\noindent The lemma follows.
\end{proof}

Back to \algod, let the total number of iterations be $T$. Let $S^{t}_{\text{tar}}$ and $S^{t}_{\text{fut}}$ denote $S_{\text{tar}}(c,s,{x^\ast}^c_{u,s})$ and $S_{\text{fut}}(c,s,{x^\ast}^c_{u,s})$ in the $t$-th iteration of \algod , and $S^{t}_{\text{cur}} = S^{t}_{\text{tar}} \cup S^{t}_{\text{fut}}$. By Lemma \ref{lemma:derandomization}, $f(c,s,{x^\ast}^c_{u,s}) = \max\limits_{\hat{c},\hat{s},\hat{u}} f(\hat{c},\hat{s},{x^\ast}^{\hat{c}}_{\hat{u},\hat{s}}) \geq \mathbb{E}[f(c,s,\alpha)] \geq \frac{\text{OPT}_{\text{LP}}(S_{\text{cur}})}{4}$ for every iteration $t$ in \algod. Therefore, since the solution of \algod\ performs no worse than the randomized approach in the above lemma,

\begin{small}
\begin{align*}
    \text{ALG}(S^1_{\text{tar}}) + \frac{1}{4} \cdot \text{OPT}_{\text{LP}}(S^1_{\text{fut}}) &\geq \frac{1}{4} \cdot \text{OPT}_{\text{LP}}(S^1_{\text{cur}})  \\
    \text{ALG}(S^2_{\text{tar}}) + \frac{1}{4} \cdot \text{OPT}_{\text{LP}}(S^2_{\text{fut}}) &\geq \frac{1}{4} \cdot \text{OPT}_{\text{LP}}(S^2_{\text{cur}}) = \frac{1}{4} \cdot \text{OPT}_{\text{LP}}(S^1_{\text{fut}})\\
    &\vdots\\
    \text{ALG}(S^{T}_{\text{tar}}) + \frac{1}{4} \cdot \text{OPT}_{\text{LP}}(S^{T}_{\text{fut}}) &\geq \frac{1}{4} \cdot \text{OPT}_{\text{LP}}(S^{T-1}_{\text{fut}}).
\end{align*}
\end{small}

\noindent By summing up all the above inequalities, we have
\begin{align*}
    \sum\limits_{t=1}^{T} \text{ALG}(S^{t}_{\text{tar}}) &\geq \frac{\text{OPT}_{\text{LP}}(S^1_{\text{cur}})}{4} - \frac{\text{OPT}_{\text{LP}}(S^{T}_{\text{fut}})}{4}\\
    &= \frac{\text{OPT}}{4} - 0 = \frac{\text{OPT}}{4},
\end{align*}
\noindent which proves the approximation ratio.

Similar to that in \algor, a slight modification on \algod\ is to select only $(c,s,{x^\ast}^c_{u,s})$ that leads to a nonempty target group, so that $T = O(nk)$. We first prove that there does exist such $(c,s,{x^\ast}^c_{u,s})$ that $f(c,s,{x^\ast}^c_{u,s}) \geq \frac{1}{4} \cdot {\text{OPT}}_{\text{LP}}(S_{\text{cur}}(c,s,{x^\ast}^c_{u,s}))$. Consider those cases where $\alpha = {x^\ast}^c_{u,s}$ is too large that the target group is empty. In this case $S_{\text{cur}}(c,s,{x^\ast}^c_{u,s}) = S_{\text{fut}}(c,s,{x^\ast}^c_{u,s})$, which implies $f(c,s,{x^\ast}^c_{u,s}) = \frac{1}{4} \cdot {\text{OPT}}_{\text{LP}}(S_{\text{cur}}(c,s,{x^\ast}^c_{u,s}))$, i.e., exactly the expected value. Suppose that no other $(c,s,{x^\ast}^c_{u,s})$ exists so that $f(c,s,{x^\ast}^c_{u,s}) \geq \frac{1}{4} \cdot {\text{OPT}}_{\text{LP}}(S_{\text{cur}}(c,s,{x^\ast}^c_{u,s}))$, then the expected value should be lower than $\frac{1}{4} \cdot {\text{OPT}}_{\text{LP}}(S_{\text{cur}}(c,s,{x^\ast}^c_{u,s}))$, a contradiction. Thus, such $(c,s,{x^\ast}^c_{u,s})$ must exist, and \algod\ can always choose such $(c,s,{x^\ast}^c_{u,s})$.

The number of iterations is thus also $O(nk)$. For each iteration, there are $O(nmk)$ combinations of pivot parameters. Finding $S_{\text{tar}}(c,s,{x^\ast}^c_{u,s})$ and $S_{\text{fut}}$, as well as executing \stagethree, require $O(n)$ time. To find the optimal pivot parameters in DPS in each iteration, a simple approach is to examine all $O(nmk)$ combinations of pivot parameters. For each $(c,s,{x^\ast}^c_{u,s})$, finding $f(c,s,{x^\ast}^c_{u,s})$ requires $O(E)$ time. Thus, each iteration needs $O(nmk|E|)$ time. Since $T = O(nk)$, the total time complexity of \algod, including the \stageone, is therefore $O(\text{LP}) + O(n^2 m k^2 |E|)$. By using heaps to store the \weight s and reordering the computation, repeated calculations can be avoided to reduce the complexity to $O(\text{LP}) + O(n m k^2 |E|)$. Alternatively, effectively parallelizing the computation of $f(c,s,{x^\ast}^c_{u,s})$ can achieve at most a speedup factor of $nmk$, which further lowers the time complexity to $O(\text{LP}) + O(n m k |E|)$. \hfill \qedsymbol
}

\subsection{Enhancements} \label{subsec:enhancements}
\revise{In the following, we detail enhancements of the \algo\ and \algod\ algorithms. First, we show that \algo\ and \algod\ support values of $\lambda \neq \frac{1}{2}$ via a simple scaling on the inputs. We then design two advanced strategies, including an \textit{advanced LP transformation technique} and a \textit{new focal parameter sampling scheme}, to improve the efficiency of \algo\ and \algod. The LP transformation technique derives a new LP formulation to reduce the number of decision variables and constraints from $O((n+|E|)mk)$ to $O((n+|E|)m)$ by condensing the $x^c_{u,s}$ variables ($k$ is the number of slots). We prove that the optimal objective in the new formulation is exactly that in the original one. The focal parameter sampling scheme maintains a \textit{maximum \weight} $\bar{x^\ast}^c_s$ (detailed later) for each pair of item $c$ and slot $s$ to avoid unnecessary sampling of focal parameters $(c,s,\alpha)$ when $\alpha \geq \max\limits_{u \in V} {x^\ast}^c_{u,s}$, especially for a large $k$. We prove that the sampling results of the new sampling technique and the original one are the same. Therefore, the efficiency of \algo\ can be improved without sacrificing the solution quality. Finally, we extend \algo\ and \algod\ to support \probtwo\ (and also the Social Event Organization (SEO)-type problems) by tailoring \stagethree\ with consideration of the additional VR-related constraints. \opt{short}{Due to the space constraint, we present the details in Section \ref{subsec:enhancements} in the full version \cite{Online}.}}

\opt{full}{
\para{Supporting Other Values of $\lambda$.}
First, observe that $\lambda = 0$ corresponds to a special case of maximizing only the total preference utility, where a simple greedy algorithm can find the exact optimal solution. Assume that $\lambda \neq 0$. To support the cases with $\lambda \neq \frac{1}{2}$, \algo\ first sets the \textit{scaled preference values} for each $(u,c)$ as $\text{p'}(u,c) = \frac{1-\lambda}{\lambda}\text{p}(u,c)$. This value is then used instead of $\text{p}(u,c)$ in the LP relaxation problem, as well as in the computation of $f(c,s,{x^\ast}^c_{u,s})$ in DPS in \algod. 
Since
\begin{align*}
    &(1-\lambda)\cdot\text{p}(u,c) + \lambda\cdot\sum\limits_{v | u \xleftrightarrow{c} v} \tau(u,v,c)\\
    =& 2 \lambda \cdot (\frac{1}{2}\cdot\text{p'}(u,c) + \frac{1}{2}\cdot\sum\limits_{v | u \xleftrightarrow{c} v} \tau(u,v,c)),
\end{align*}
each instance with $\lambda \neq \frac{1}{2}$ can be transformed to an instance with $\lambda = \frac{1}{2}$ by this scaling of input parameters, where \algo\ and \algod\ can achieve the approximation ratio.

\para{Advanced LP Transformation.} Recall that the LP relaxation of the \prob\ problem (denote by $\mathsf{LP}_{\mathsf{SVGIC}}$) is 

$$ \text{max} \sum\limits_{u \in V} \sum\limits_{c \in \mathcal{C}} [ (1-\lambda) \cdot \text{p}(u, c) \cdot x^c_u + \lambda \cdot \sum\limits_{e = (u,v) \in E} ( \tau(u,v,c) \cdot y^c_e) ] $$

\noindent subject to the following constraints:
\begin{small}
\begin{align*}
    \sum_{s=1}^{k} x^c_{u,s} \leq 1, \quad & \forall u \in V, c \in \mathcal{C} \\
    \sum_{c \in \mathcal{C}} x^c_{u,s} = 1, \quad & \forall u \in V, s \in [k] \\
    x^c_u = \sum_{s=1}^{k} x^c_{u,s}, \quad & \forall u \in V, c \in \mathcal{C} \\
    y^c_{e} = \sum_{s=1}^{k} y^c_{e,s}, \quad & \forall e = (u,v) \in E, c \in \mathcal{C} \\
    y^c_{e,s} \leq x^c_{u,s}, \quad & \forall e = (u,v) \in E, s \in [k], c \in \mathcal{C} \\
    y^c_{e,s} \leq x^c_{v,s}, \quad & \forall e = (u,v) \in E, s \in [k], c \in \mathcal{C} \\
    x^c_{u,s}, x^c_u, y^c_{e,s}, y^c_e \geq 0, \quad & \forall u \in V, e \in E, s \in [k], c \in \mathcal{C}. 
\end{align*}
\end{small}

Solving $\mathsf{LP}_{\mathsf{SVGIC}}$, which is the first step of the \algo\ algorithm, deals with $O((n+|E|)mk)$ decision variables and $O((n+|E|)mk)$ constraints. In the following, we transform $\mathsf{LP}_{\mathsf{SVGIC}}$ into a more compact linear program formulation $\mathsf{LP}_{\mathsf{SIMP}}$ with fewer decision variables as follows.

$$ \text{max} \sum\limits_{u \in V} \sum\limits_{c \in \mathcal{C}} [ (1-\lambda) \cdot \text{p}(u, c) \cdot \mathtt{x}^c_u + \lambda \cdot \sum\limits_{e = (u,v) \in E} ( \tau(u,v,c) \cdot \mathtt{y}^c_e) ] $$

\noindent subject to the following constraints:
\begin{small}
\begin{align*}
    \mathtt{x}^c_{u} \leq 1, \quad & \forall u \in V, c \in \mathcal{C} \\
    \sum_{c \in \mathcal{C}} \mathtt{x}^c_{u} = k, \quad & \forall u \in V \\
    \mathtt{y}^c_{e} \leq \mathtt{x}^c_{u}, \quad & \forall e = (u,v) \in E, c \in \mathcal{C} \\
    \mathtt{y}^c_{e} \leq \mathtt{x}^c_{v}, \quad & \forall e = (u,v) \in E, c \in \mathcal{C} \\
    \mathtt{x}^c_{u}, \mathtt{y}^c_{e} \geq 0, \quad & \forall u \in V, e \in E, c \in \mathcal{C}.
\end{align*}
\end{small}

To avoid confusion, we use normal math symbols ($x$ and $y$) to represent decision variables in $\mathsf{LP}_{\mathsf{SVGIC}}$ and Typewriter symbols ($\mathtt{x}$ and $\mathtt{y}$) to denote those in $\mathsf{LP}_{\mathsf{SIMP}}$. Let $\text{OPT}_{\mathsf{SIMP}}$ and $\text{OPT}_{\mathsf{SVGIC}}$ denote the optimal objective values in $\mathsf{LP}_{\mathsf{SIMP}}$ and $\mathsf{LP}_{\mathsf{SVGIC}}$. We have the following observation. 

\begin{observation} \label{scaledLP}
$\text{OPT}_{\mathsf{SIMP}} = \text{OPT}_{\mathsf{SVGIC}}$ always holds. Moreover, given an optimal solution $\mathtt{X}^{\ast} = \{ \mathtt{x}^c_u \}$ for $\mathsf{LP}_{\mathsf{SIMP}}$, there exists an optimal solution ${X^\ast}$ of $\mathsf{LP}_{\mathsf{SVGIC}}$ with ${{x^\ast}}^c_{u,s} = \frac{1}{k} \cdot {\mathtt{x}^{\ast}}^c_u, \forall c, u, s$. 
\end{observation}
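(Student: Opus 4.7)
The plan is to establish $\text{OPT}_{\mathsf{SIMP}} = \text{OPT}_{\mathsf{SVGIC}}$ by proving both inequalities separately through explicit constructions. These constructions simultaneously verify the second claim (that uniformly spreading $\mathtt{x}^{\ast}$ across slots produces an optimal $\mathsf{LP}_{\mathsf{SVGIC}}$ solution), so both parts of the observation are handled in one sweep.

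For the direction $\text{OPT}_{\mathsf{SIMP}} \geq \text{OPT}_{\mathsf{SVGIC}}$, I would take any feasible solution $X = \{x^c_{u,s}, x^c_u, y^c_{e,s}, y^c_e\}$ of $\mathsf{LP}_{\mathsf{SVGIC}}$ and define a candidate $(\mathtt{x},\mathtt{y})$ for $\mathsf{LP}_{\mathsf{SIMP}}$ by aggregating over slots: $\mathtt{x}^c_u := x^c_u = \sum_{s=1}^k x^c_{u,s}$ and $\mathtt{y}^c_e := y^c_e = \sum_{s=1}^k y^c_{e,s}$. The constraints of $\mathsf{LP}_{\mathsf{SIMP}}$ follow immediately: $\mathtt{x}^c_u \leq 1$ from constraint (\ref{ilp:no_replicated_item}); $\sum_c \mathtt{x}^c_u = \sum_s \sum_c x^c_{u,s} = k$ by swapping sums and applying (\ref{ilp:one_slot_one_item}); and $\mathtt{y}^c_e = \sum_s y^c_{e,s} \leq \sum_s x^c_{u,s} = \mathtt{x}^c_u$ (and analogously for $v$) from (\ref{ilp:direct_relation_1})-(\ref{ilp:direct_relation_2}). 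Since the objective functions of the two programs depend only on $x^c_u$ and $y^c_e$ (equivalently $\mathtt{x}^c_u$ and $\mathtt{y}^c_e$), the two objective values coincide, establishing this direction.

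For the reverse direction $\text{OPT}_{\mathsf{SVGIC}} \geq \text{OPT}_{\mathsf{SIMP}}$, I take any optimal solution $\mathtt{X}^{\ast} = \{\mathtt{x}^{\ast c}_u, \mathtt{y}^{\ast c}_e\}$ of $\mathsf{LP}_{\mathsf{SIMP}}$ and explicitly construct the target solution ${x^\ast}^c_{u,s} = \tfrac{1}{k}\mathtt{x}^{\ast c}_u$, ${x^\ast}^c_u = \mathtt{x}^{\ast c}_u$, ${y^\ast}^c_{e,s} = \tfrac{1}{k}\mathtt{y}^{\ast c}_e$, and ${y^\ast}^c_e = \mathtt{y}^{\ast c}_e$. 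Constraint verification is again mechanical: $\sum_s {x^\ast}^c_{u,s} = \mathtt{x}^{\ast c}_u \leq 1$ for (\ref{ilp:no_replicated_item}); $\sum_c {x^\ast}^c_{u,s} = \tfrac{1}{k}\sum_c \mathtt{x}^{\ast c}_u = 1$ for (\ref{ilp:one_slot_one_item}); (\ref{ilp:user_recommended}) and (\ref{ilp:edge_recommended}) hold by construction; and (\ref{ilp:direct_relation_1})-(\ref{ilp:direct_relation_2}) reduce to $\tfrac{1}{k}\mathtt{y}^{\ast c}_e \leq \tfrac{1}{k}\mathtt{x}^{\ast c}_u$, which is exactly the corresponding constraint in $\mathsf{LP}_{\mathsf{SIMP}}$. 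The objective evaluated on this constructed $X$ equals the objective at $\mathtt{X}^{\ast}$, so $\text{OPT}_{\mathsf{SVGIC}} \geq \text{OPT}_{\mathsf{SIMP}}$.

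Combining both directions yields $\text{OPT}_{\mathsf{SIMP}} = \text{OPT}_{\mathsf{SVGIC}}$, and the construction in the second direction directly exhibits the promised optimal $X^{\ast}$ of $\mathsf{LP}_{\mathsf{SVGIC}}$ with ${x^\ast}^c_{u,s} = \tfrac{1}{k}\mathtt{x}^{\ast c}_u$. There is no serious obstacle here: the content of the observation is really that the per-slot structure in $\mathsf{LP}_{\mathsf{SVGIC}}$ is redundant under LP relaxation because the objective is slot-agnostic and constraint (\ref{ilp:one_slot_one_item}) forces the total $x$-mass per user to equal exactly $k$. The only minor point to be careful about is ensuring the auxiliary $y$-variables can be split evenly without violating (\ref{ilp:direct_relation_1})-(\ref{ilp:direct_relation_2}) slot by slot, which works precisely because both $\mathtt{x}$ and $\mathtt{y}$ are scaled by the same factor $1/k$.
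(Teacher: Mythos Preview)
Your proof is correct and follows essentially the same approach as the paper: both directions are established via the same two constructions (aggregating slot variables to pass from $\mathsf{LP}_{\mathsf{SVGIC}}$ to $\mathsf{LP}_{\mathsf{SIMP}}$, and spreading uniformly by $1/k$ to go back), with the only difference being the order in which the two inequalities are presented. Your write-up actually spells out the constraint checks in more detail than the paper does.
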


\begin{proof}
We first prove $\text{OPT}_{\mathsf{SIMP}} \leq \text{OPT}_{\mathsf{SVGIC}}$. Given any feasible solution $\mathtt{X}$ of $\mathsf{LP}_{\mathsf{SIMP}}$, let $X$ denote a solution of $\mathsf{LP}_{\mathsf{SVGIC}}$ such that ${x}^c_{u,s} = \frac{1}{k} \cdot \mathtt{x}^c_u, \forall c, u, s$. By construction, we have $x^c_u = \sum_{s=1}^{k} x^c_{u,s} = k \cdot \frac{1}{k} \cdot \mathtt{x}^c_u = \mathtt{x}^c_u$ for all $c,u,s$, and analogously $y^c_e = \mathtt{y}^c_{e}$ for all $c,e$. Therefore, $X$ is feasible in $\mathsf{LP}_{\mathsf{SVGIC}}$. Moreover, it achieves the same objective value in $\mathsf{LP}_{\mathsf{SVGIC}}$ as $\mathtt{X}$ in $\mathsf{LP}_{\mathsf{SIMP}}$, which implies $\text{OPT}_{\mathsf{SIMP}} \leq \text{OPT}_{\mathsf{SVGIC}}$. We then prove $\text{OPT}_{\mathsf{SIMP}} \geq \text{OPT}_{\mathsf{SVGIC}}$. Given any feasible solution $X$ of $\mathsf{LP}_{\mathsf{SVGIC}}$, let $\mathtt{X}$ denote a solution of $\mathsf{LP}_{\mathsf{SIMP}}$ such that $\mathtt{x}^c_u = x^c_u = \sum_{s=1}^{k} x^c_{u,s}$. Similarly, $\mathtt{X}$ is a feasible solution of $\mathsf{LP}_{\mathsf{SIMP}}$, and it achieves exactly the same objective value in $\mathsf{LP}_{\mathsf{SIMP}}$ as $X$ in $\mathsf{LP}_{\mathsf{SVGIC}}$. Therefore, $\text{OPT}_{\mathsf{SIMP}} \geq \text{OPT}_{\mathsf{SVGIC}}$. Combining the above, we have $\text{OPT}_{\mathsf{SIMP}} = \text{OPT}_{\mathsf{SVGIC}}$, and the second part of the observation follows from the above construction.
\end{proof}

As \algo\ requires only \textit{one} optimal solution for $\mathsf{LP}_{\mathsf{SVGIC}}$, it suffices for \algo\ to first solve the above simplified linear program $\mathsf{LP}_{\mathsf{SIMP}}$ with only $O((n+|E|)m)$ decision variables and constraints. \algo\ then scales the optimal solution by a factor of $\frac{1}{k}$ according to Observation \ref{scaledLP} to find an optimal solution for $\mathsf{LP}_{\mathsf{SVGIC}}$. Therefore, $\mathsf{LP}_{\mathsf{SIMP}}$ effectively improves the efficiency of \algo. 

\para{Advanced Focal Parameter Sampling.} In \algo, a set of focal parameters $(c,s,\alpha)$ is sampled uniformly at random for each iteration. However, for input instances with a large $k$, the utility factor ${x^\ast}^c_{u,s}$ of assigning item $c$ to user $u$ at slot $s$ tends to become smaller, e.g., roughly $O(\frac{1}{k})$, especially after the above LP transformation strategy is applied. Recall that \algo\ compares $\alpha$ to every user eligible for viewing item $c$ at slot $s$ and displays $c$ to user $u$ if ${x^\ast}^c_{u,s} \geq \alpha$. If $\alpha > \max\limits_{u \in V} {x^\ast}^c_{u,s}$, \algo\ essentially remains idle (i.e., making no progress) in the current round and needs to re-sample another set of focal parameters in the next iteration. For instances with large $k$, it is thus more difficult for \algo\ to effectively sample a set of \textit{good} focal parameters without the above property.

To address the above issue, we design an advanced focal parameter sampling scheme as follows. For each pair of $(c,s)$, \algo\ maintains the \textit{maximum \weight} $\bar{x^\ast}^c_s = \max\limits_{u \in V} \{ {x^\ast}^c_{u,s} | u \, \text{is eligible for} \, (c,s)\}$. Specifically, $\bar{x^\ast}^c_s$ is initialized as $\max\limits_{u \in V} {x^\ast}^c_{u,s}$ after $X^\ast$ (the optimal solution of LP) is retrieved, and it is constantly updated along the rounding process (as the number of users eligible for $(c,s)$ decreases). In each iteration, \algo\ first randomly samples a pair $(c,s)$ with probability $\frac{\bar{x^\ast}^c_s}{\sum\limits_{c \in \mathcal{C}, s \in [k]}\bar{x^\ast}^c_s}$, i.e., proportional to the value $\bar{x^\ast}^c_s$. It then samples $\alpha$ uniformly at random from $[0,\bar{x^\ast}^c_s]$. 

Specifically, let $\mathsf{GOOD}$ denote the event that a set of good focal parameters is sampled, i.e., $\alpha$ is at most $\bar{x^\ast}^c_s$ for the selected $(c,s)$ in the original sampling scheme. Moreover, for a set of good focal parameter $(c,s,\alpha)$, let $\mathsf{TARGET}(c,s,\alpha)$ denote the event that \stagethreefull\ ends up with the same behavior as if $(c,s,\alpha)$ is sampled. We have the following observation. Let $\text{Pr}_{\text{orig}}(\mathsf{TARGET}(c,s,\alpha))$ denote the probability that $\mathsf{TARGET}(c,s,\alpha)$ happens in the original sample scheme, and let $\text{Pr}_{\text{adv}}(\mathsf{TARGET}(c,s,\alpha))$ denote the probability of $\mathsf{TARGET}(c,s,\alpha)$ in the advanced sample scheme. The following observation indicates that sampling with the advanced sampling scheme is equivalent (in terms of outcome distribution) to sampling \textit{good} parameters, i.e., picking $(c,s,\alpha)$ directly and only from the pool of focal parameters that lead to a successful assignment of at least one item to one user at some slot.

\begin{observation}
For any set of good focal parameter $(c,s,\alpha)$ such that $\alpha \leq \max\limits_{u \in V} {x^\ast}^c_{u,s}$, $\text{Pr}_{\text{adv}}(\mathsf{TARGET}(c,s,\alpha)) = \text{Pr}_{\text{orig}}(\mathsf{TARGET}(c,s,\alpha) | \mathsf{GOOD})$.
\end{observation}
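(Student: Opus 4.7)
The plan is to match the joint probability density assigned to every \emph{good} triple $(c,s,\alpha)$ under the two sampling schemes (conditioned on $\mathsf{GOOD}$ for the original one), and then note that the event $\mathsf{TARGET}(c,s,\alpha)$ is just an equivalence class of triples that, once the densities agree pointwise, must carry equal probability under the two distributions. This reduces the claim to a bookkeeping calculation of densities over $\mathcal{C} \times [k] \times [0,1]$.

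Concretely, I would first compute the original density. Since $(c,s)$ is picked uniformly and $\alpha$ is uniform on $[0,1]$, every triple has joint density $\frac{1}{mk}$, so
$$
\text{Pr}_{\text{orig}}(\mathsf{GOOD}) = \sum_{c \in \mathcal{C}, s \in [k]} \frac{1}{mk} \cdot \bar{x^\ast}^c_s = \frac{1}{mk}\sum_{c',s'} \bar{x^\ast}^{c'}_{s'}.
$$
Dividing shows that, conditioned on $\mathsf{GOOD}$, every good triple receives density
$\frac{1}{\sum_{c',s'} \bar{x^\ast}^{c'}_{s'}}$.
Next I would carry out the same calculation for the advanced scheme: $(c,s)$ is chosen with probability $\frac{\bar{x^\ast}^c_s}{\sum_{c',s'} \bar{x^\ast}^{c'}_{s'}}$ and $\alpha$ is uniform on $[0,\bar{x^\ast}^c_s]$, giving joint density
$$
\frac{\bar{x^\ast}^c_s}{\sum_{c',s'} \bar{x^\ast}^{c'}_{s'}} \cdot \frac{1}{\bar{x^\ast}^c_s} = \frac{1}{\sum_{c',s'} \bar{x^\ast}^{c'}_{s'}},
$$
which agrees with the conditional original density on the entire support of good triples.

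The only subtlety, and the step worth double-checking, is the equivalence-class structure of $\mathsf{TARGET}(c,s,\alpha)$. By Observation~\ref{observation:poly_many}, two good triples $(c,s,\alpha)$ and $(c,s,\alpha')$ induce identical \stagethree\ behavior iff $\alpha$ and $\alpha'$ lie in the same sub-interval of $[0,\bar{x^\ast}^c_s]$ delimited by consecutive values of $\{{x^\ast}^c_{u,s}\}_{u \in V}$; in particular the support of every such equivalence class is contained in $[0,\bar{x^\ast}^c_s]$, where both densities coincide. Hence integrating the matching densities over any such interval yields identical probabilities, i.e., $\text{Pr}_{\text{adv}}(\mathsf{TARGET}(c,s,\alpha)) = \text{Pr}_{\text{orig}}(\mathsf{TARGET}(c,s,\alpha) \mid \mathsf{GOOD})$, completing the proof.
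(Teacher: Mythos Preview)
Your proof is correct and follows essentially the same approach as the paper. The paper's version explicitly identifies the interval $({x^\ast}^c_{u_1,s},{x^\ast}^c_{u_2,s}]$ of $\alpha$-values constituting $\mathsf{TARGET}(c,s,\alpha)$ and computes its probability under each scheme, whereas you phrase the same computation in terms of matching joint densities on all good triples and then integrate over the equivalence class; the underlying calculation and the key step (the cancellation of $\bar{x^\ast}^c_s$ in the advanced scheme) are identical.
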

\begin{proof}
By construction, $\text{Pr}_{\text{orig}}(\mathsf{GOOD}) = \sum\limits_{c \in \mathcal{C}, s \in [k]} \frac{1}{k \cdot m}(\bar{x^\ast}^c_s)$. Next, by construction, it is straightforward that the advanced sampling scheme only samples good focal parameters. Consider a set of good focal parameter $(c,s,\alpha)$. Let $u_1$ and $u_2$ denote the users such that $u_1 = \argmax\limits_{{x^\ast}^c_{u,s} < \alpha} {x^\ast}^c_{u,s}$ and $u_2 = \argmin\limits_{{x^\ast}^c_{u,s} \geq \alpha} {x^\ast}^c_{u,s}$. By definition, if \stagethree\ chooses any $\alpha \in ({x^\ast}^c_{{u_1},s}, {x^\ast}^c_{{u_2},s}]$, it ends up with the same behavior as if $(c,s,\alpha)$ is sampled, i.e., $\mathsf{TARGET}$ happens. With $(c,s)$ given, this happens with probability $p_\alpha = {x^\ast}^c_{{u_2},s} - {x^\ast}^c_{{u_1},s}$. Note that $u_2$ is guaranteed to exist, and we replace ${x^\ast}^c_{{u_1},s}$ with 0 if $u_1$ does not exist (i.e., the case where every eligible user is assigned $c$ by \stagethreefull). In the advanced sampling scheme, the probability of this event is thus $\text{Pr}_{\text{adv}}(\mathsf{TARGET}(c,s,\alpha)) = \frac{\bar{x^\ast}^c_s}{\sum\limits_{c \in \mathcal{C}, s \in [k]}\bar{x^\ast}^c_s} \cdot p_\alpha$. On the other hand, in the original sampling scheme, the probability of this event is $\text{Pr}_{\text{orig}}(\mathsf{TARGET}(c,s,\alpha)) = \frac{1}{k \cdot m} \cdot p_\alpha$. We have
\begin{align*}
    \text{Pr}_{\text{orig}}(\mathsf{TARGET}(c,s,\alpha) | \mathsf{GOOD}) &= \frac{\text{Pr}_{\text{orig}}(\mathsf{TARGET}(c,s,\alpha))}{\text{Pr}_{\text{orig}}(\mathsf{GOOD})}\\
    &= \frac{\frac{1}{k \cdot m} \cdot p_\alpha}{\sum\limits_{c \in \mathcal{C}, s \in [k]} \frac{1}{k \cdot m}(\bar{x^\ast}^c_s)}\\
    &= \frac{p_\alpha}{\sum\limits_{c \in \mathcal{C}, s \in [k]}\bar{x^\ast}^c_s}\\
    &= \text{Pr}_{\text{adv}}(\mathsf{TARGET}(c,s,\alpha)).
\end{align*}
The observation follows.
\end{proof}

Therefore, the advanced focal sampling scheme effectively improves the efficiency of \algo\ since it discards the non-contributing sampling events in advance. Note that it can also be applied to improve \algod\ such that the events of selecting non-controlling focal parameters will not be simulated in the derandomization process. 
The main algorithm \algo\ with the advanced LP transformation strategy and focal sampling scheme are detailed in Algorithm \ref{alg:modified_main_algo}. The effects of these two speedup techniques are tested in Section \ref{subsec:large_exp_scalability}.

\begin{algorithm}[h]  
    \caption{\algo\ with advanced LP transformation and focal parameter sampling}
    \label{alg:modified_main_algo}
    \begin{algorithmic}[1]
        \Require $X^\ast$
        \Ensure An \configuration\ $\mathbf{A}$
        \State $\mathbf{A}(\hat{u},\hat{s}) \gets \text{NULL}$ for all $\hat{u},\hat{s}$
        \State Construct $\mathsf{LP}_{\mathsf{SVGIC}}$ (Original Relaxed LP)
        \State Transform into $\mathsf{LP}_{\mathsf{SIMP}}$ (Simplified LP)
        \State $\bar{X} \gets$ optimal solution for $\mathsf{LP}_{\mathsf{SIMP}}$
        \State $\bar{x^\ast}^c_s \gets \max\limits_{u \in V} {x^\ast}^c_{u,s}$
        \For{$u \in V, c \in \mathcal{C}, s \in [k]$}
            \State ${x^\ast}^u_{c,s} \gets \frac{1}{k} \cdot \bar{x}^c_u$
        \EndFor
        \While {some entry in $\mathbf{A}$ is $\text{NULL}$} 
            \State Sample $(c,s)$ randomly with probability $\frac{\bar{x^\ast}^c_s}{\sum\limits_{c \in \mathcal{C}, s \in [k]}\bar{x^\ast}^c_s}$
            \State Sample $\alpha \in [0,\bar{x^\ast}^c_s]$ uniformly at random
            \For{$\hat{u} \in V$}
                \If{$\mathbf{A}(\hat{u},s) = \text{NULL}$ and $\mathbf{A}(\hat{u},t) \neq c \, \forall t \neq s$} \\ \hfill \Comment{$\hat{u}$ eligible for $(c,s)$}
                    \If{${x^\ast}^c_{\hat{u},s} \geq \alpha$}
                        \State $\mathbf{A}(\hat{u},s) \gets c$
                    \EndIf
                \EndIf
            \EndFor
            \State Update $\bar{x^\ast}^c_s$
        \EndWhile
        \Return $\mathbf{A}$
    \end{algorithmic}
\end{algorithm}
\normalsize

\para{Extending \algo\ for \probtwo.} To modify \algo\ (and consequently, \algod) to support the teleportation and subgroup size constraints in VR, the LP relaxation problem is first replaced by the corresponding formulation in Section \ref{subsec:hardness_ip}, i.e., with Constraints (\ref{ilp2:indirect_relation_1}), (\ref{ilp2:indirect_relation_2}) and (\ref{ilp2:integrality}). Next, in each iteration of \stagethree, instead of displaying the \roundingitem\ $c$ to all eligible users $\hat{u}$ with ${x^\ast}^c_{u,s} \geq \alpha$, \stagethree\ first checks the number of users already displayed $c$ at the \roundingslot\ $s$. It then iteratively adds \textit{the eligible user with the highest \weight\ of $c$ at slot $s$} until the number of users displayed $c$ at slot $s$ reaches the size constraint $M$ or every user is examined. If the size constraint is reached, \stagethree\ \textit{locks} the item $c$ at slot $s$ by setting ${x^\ast}^c_{u,s} = 0$ for all remaining eligible users, as well as removes $(c,s,\alpha)$ from the set of candidate parameter sets in \algod. Therefore, \algo\ never displays more than $M$ users to the same item in the final \configuration.

\para{Supporting Social Event Organization.} We have also identified \textit{Social Event Organization} (SEO) as another important application of the targeted problem. Specifically, SEO~\cite{KL14KDD,SEO15SIGMOD,SEO16TKDE,SEO17ICDE,XW18WWW} is a line of research that studies the problem of organizing overlapping and conflicting events for users of Event-Based Social Network (EBSN) (e.g. Meetup \cite{Meetup}, Facebook Events \cite{fbevents}, and Douban Location \cite{dblocation}). Given a set of users and a set of events, SEO assigns each user to a single or a series of events, such that the total preferences of the selected users to their assigned events are maximized, while several constraints are satisfied, e.g. traveling cost budget, time conflict avoidance, and event sizes. Consequently, the optimization problems for SEO are closely related to \prob: social event attendees directly correspond to VR shopping users, while social events correspond to the displayed items in VR. \prob\ can be viewed as a general version of the SEO problem that incorporates both social-based utilities to support a series of social events. It is then straightforward to see the above extension for \probtwo\ supports SEO, as the social event size constraints can be modelled as subgroup size constraints.
}

\section{Extensions for Practical Scenarios}\label{sec:extension}

\revise{In this section, we extend \prob\ and \algo\ to support a series of practical scenarios. 1) \textit{Commodity values}. Each item is associated with a commodity value to maximize the total profit.  2) \textit{Layout slot significance}. Each slot location is associated with a different significance weight (e.g., center is better) according to retailing research \cite{slot, shelf}. 3) \textit{Multi-View Display}, where a user can be displayed multiple items in a slot, including one default, personally preferred item in the primary view and multiple items to view with friends in group views, and the primary and group views can be freely switched. 4) \textit{Generalized social benefits}, where social utility can be measured not only pairwise (each pair of friends) but also group-wise (any group of friends). 5) \textit{Subgroup change}, where the fluctuations (i.e., change of members) between the partitioned subgroups at consecutive slots are limited to ensure smooth social interactions as the elapse of time. 6) \textit{Dynamic scenario}, where users dynamically join and leave the system with different moving speeds. \opt{short}{Due to the space constraint, we present the details in Section \ref{sec:extension} in the full version of the paper \cite{Online}.}}

\opt{full}{
\para{A. Commodity Value.} The current objective function, i.e., maximizing the total utility among all users, is not designed from a retailer's perspective. In contrast, it will be more profitable to maximize the \textit{total expected profit} by taking into account the commodity values, while regarding the final \scenario\ utility as the willingness/likelihood of purchases. Therefore, the ILP formulation of \prob\ and \algo\ can be modified to examine the total \scenario\ utility weighted by a commodity value $\omega_c$ for each item $c$. Let $\omega_c$ denote the commodity price of each item $c$. The new objective of \prob\ is the weighted total \scenario\ utility as follows.

$$ \sum\limits_{u \in V} \sum\limits_{s=1}^{k} \omega_{\mathbf{A}^{\ast}(u,s)} \cdot [\text{p}(u, \mathbf{A}^{\ast}(u,s)) + \sum\limits_{v | u \xleftrightarrow{\mathbf{A}^{\ast}(u,s)} v} \tau(u,v,\mathbf{A}^{\ast}(u,s))]. $$

\para{B. Layout Slot Significance.} 
Retailing research \cite{slot, shelf} manifests that different slot locations in the item placement layout have varying significance for users. For example, the slots at the center of an aisle are nine times more important than the slots at the ends \cite{slot}. Vertical slots right at the eye-level of users are also more significant \cite{shelf}. Therefore, it is important for \prob\ to not only select the items but also optimize the corresponding placements on the shelf. Specifically, let $\gamma_s$ represent the relative significance of slot $s$ in the layout. The new objective of \prob\ is the \scenario\ utility, weighted by the relative significance as follows.

$$ \sum\limits_{u \in V} \gamma_s \cdot \sum\limits_{s=1}^{k} [\text{p}(u, \mathbf{A}^{\ast}(u,s)) + \sum\limits_{v | u \xleftrightarrow{\mathbf{A}^{\ast}(u,s)} v} \tau(u,v,\mathbf{A}^{\ast}(u,s))]. $$


\para{C. Multi-View Display} (MVD). \configuration\ with CID allows each user to see only \textit{one} item at each slot. In contrast, MVD \cite{RL14,LH18} has been proposed to support the display of multiple items at the same slot, so that the primary view (for a personally preferred item) and the group view (for common items with friends) can be freely switched. 
Let $\beta$ denote the maximum number of items (i.e., views) displayed for each user in each slot (depending on the user interface). 
an \textit{MVD-supportive} \configuration\ $\mathbb{A}$ maps a tuple $(u,s)$ to an itemset $\mathbb{A}(u,s)$ with at most $\beta$ items, and the new objective function of \prob\ is as follows.

$$ \sum\limits_{u \in V} \sum\limits_{s=1}^{k} \big( \sum\limits_{c \in \mathbb{A}^{\ast}(u,s)} [\text{p}(u,c) + \sum\limits_{v | u \xleftrightarrow{c} v} \tau(u,v,c)] $$ 

\para{D. Generalized Social Benefits.} In \prob\, $\tau(u,v,c)$ considers only the \textit{pair-wise} social interaction, i.e., social discussion and influence between two users viewing the common item. However, existing research on social influence \cite{HJH16KDD, HyperInfluence17, HyperInfluence19} manifests that quantifying the magnitude of social influence from \textit{group-wise} interactions is more general since the pair-wise interaction is a special case of the group-wise interaction. Therefore, let $\tau(u,\mathcal{V},c)$ denote the subgroup social utility for user $u$ when co-displaying $c$ to \textit{a subgroup of friends} $\mathcal{V}$ learned from group-wise social interaction models such as \cite{HJH16KDD}. Let $u \xleftrightarrow[s]{c} \mathcal{V}$ represent the \textit{maximal group} co-display\footnote{Here only the maximal groups, instead of every subgroup, are examined in the objective function; otherwise, duplicated social utilities will be summed up because multiple overlapping subgroups are contained in a maximal groups}, i.e., user $u$ sees item $c$ at slot $s$ with every user in $\mathcal{V}$, and \textit{no friend} of $u$ outside $\mathcal{V}$ also sees $c$ at slot $s$. Thus, the new objective of \prob\ is generalized as follows.

$$ \sum\limits_{u \in V} \sum\limits_{s=1}^{k} [\text{p}(u, \mathbf{A}^{\ast}(u,s)) + \sum\limits_{\mathcal{V} | u \xleftrightarrow[s]{c} \mathcal{V}} \tau(u,\mathcal{V},\mathbf{A}^{\ast}(u,s))].$$


\para{Algorithm Extension for Scenarios A--D.} 
\noindent In the following, we first introduce a new ILP formulation to support the above practical scenarios. Decision variable $x^c_{u,s}$ now indicates whether user $u$ views item $c$ \textit{in the primary view} at slot $s$. On the other hand, let decision variable $w^c_{u,s}$ denote whether $u$ can see $c$ \textit{in some view}, i.e., in the primary view or a group view, at slot $s$. Therefore, $x^c_{u,s} = 1$ implies $w^c_{u,s} = 1$. To consider the layout slot significance, let $w^c_u = \sum\limits_{s=1}^{k} \gamma_s \cdot w^c_{u,s}$ be the weighted sum of all $w$ decision variables, and here $w^c_u$ is not required to be binary. Therefore, $w^c_u$ represents the total significance of the assigned slots for $u$ to view $c$. To capture the maximal co-display, for all subgroups of users $\mathcal{V} \subseteq V$, let $y^c_{u,\mathcal{V},s}$ now indicate the \textit{maximal group} co-display. In other words, $u \xleftrightarrow[s]{c} \mathcal{V}$ if and only if $y^c_{u,\mathcal{V},s} = 1$. Analogously, let $y^c_{u,\mathcal{V}}$ be the weighted sum of $y^c_{u,\mathcal{V},s}$. The new objective function is as follows.


\small
\begin{align*}
\text{max} \quad \sum\limits_{c \in \mathcal{C}} \omega_c \cdot [\sum\limits_{u \in V} \big( \text{p}(u, c) \cdot w^c_u + \sum\limits_{\mathcal{V} \subseteq V} ( \tau(u,\mathcal{V},c) \cdot y^c_{u,\mathcal{V}}) \big)] 
\end{align*}
\normalsize

subject to the constraints,
\footnotesize
\begin{align}
    \sum_{c \in \mathcal{C}} x^c_{u,s} = 1, \quad & \forall u \in V, s \in [k] \label{ilp_ext:one_slot_one_default_item}\\
    \sum_{c \in \mathcal{C}} w^c_{u,s} \leq \beta, \quad & \forall u \in V, s \in [k] \label{ilp_ext:one_slot_many_item}\\
    x^c_{u,s} \leq w^c_{u,s}, \quad & \forall u \in V, s \in [k], c \in \mathcal{C} \label{ilp_ext:default_item_can_see} \\
    \sum_{s=1}^{k} x^c_{u,s} \leq 1, \quad & \forall u \in V, c \in \mathcal{C} \label{ilp_ext:no_replicated_item}\\
    x^c_u = \sum_{s=1}^{k} (\gamma_s \cdot x^c_{u,s}), \quad & \forall u \in V, c \in \mathcal{C} \label{ilp_ext:user_recommended}\\
    y^c_{\mathcal{V}} = \sum_{s=1}^{k} (\gamma_p \cdot y^c_{\mathcal{V},s}), \quad & \forall \mathcal{V} \subseteq V, c \in \mathcal{C} \label{ilp_ext:group_recommended}\\
    y^c_{\mathcal{V},s} \leq x^c_{u,s}, \quad & \forall u \in \mathcal{V}, s \in [k], c \in \mathcal{C} \label{ilp_ext:direct_relation_1} \\
    y^c_{\mathcal{V},s} + x^c_{u,s} \leq 1, \quad & \forall u \notin \mathcal{V}, s \in [k], c \in \mathcal{C} \label{ilp_ext:direct_relation_2} \\
    x^c_{u,s}, y^c_{\mathcal{V},s} \in \{ 0,1 \}, \quad & \forall u \in V, e \in E, s \in [k], c \in \mathcal{C} \label{ilp_ext:integrality}.
\end{align}

\normalsize
Constraint (\ref{ilp_ext:one_slot_one_default_item}) guarantees that a user is displayed exactly one default item (for the primary view) at each slot. Constraint (\ref{ilp_ext:one_slot_many_item}) ensures that at most $\beta$ items can be viewed at one slot in MVD. Constraint (\ref{ilp_ext:default_item_can_see}) states that the default item can be viewed in the MVD, i.e., if $c$ is displayed ($x^c_{u,s} = 1$), then $w^c_{u,s}$ will become 1. Constraint (\ref{ilp_ext:no_replicated_item}) guarantees that the default displayed items for each user are not replicated. Constraint (\ref{ilp_ext:user_recommended}) ensures that $x^c_u$ (significance of the slot where $u$ is displayed $c$) is the weighted sum of $x^c_{u,s}$. Similarly, constraint (\ref{ilp_ext:group_recommended}) 
states that $y^c_{\mathcal{V}}$
is the weighted sum of $y^c_{\mathcal{V},s}$ (on the slot significance). 
Constraints (\ref{ilp_ext:direct_relation_1}) and (\ref{ilp_ext:direct_relation_2}) specify the maximal co-display. Constraint (\ref{ilp_ext:direct_relation_1}) states that $y^c_{\mathcal{V},s}=1$ only if every $x^c_{u,s}=1$, implying that $c$ is displayed to every $u$ in $\mathcal{V}$ at slot $s$. When $y^c_{\mathcal{V},s}=1$, constraint (\ref{ilp_ext:direct_relation_2}) ensures no other $x^c_{u,s}$ is 1 for $u$ outside of $\mathcal{V}$, so that $\mathcal{V}$ is indeed the maximal set. 
Finally, constraint (\ref{ilp_ext:integrality}) ensures the decision variables to be binary, i.e., $\in \{0,1\}$.

The LP relaxation of the above formulation also relaxes the last constraint, so that binary decision variables $x^c_{u,s}, y^c_{\mathcal{V},s}$ become real numbers in $[0,1]$. The other two phases of \algor\ and \algod\ remain unchanged. The commodity value $\omega_c$ and the layout slot significance $\gamma_s$ will not affect the theoretical guarantee of \algor\ and \algod\ because the original proof can be modified by replacing the preference utility and social utility values with weighted versions. Furthermore, in any iteration $t$ in \algo, for any subgroup of users $\mathcal{V}$ with all users $u \in \mathcal{V}$ eligible for $(c,s)$, as the maximal co-displayed subgroups contain at most $(\max|\mathcal{V}|)$ users, Lemma \ref{lemma:ratio} can be analogously modified to upper bound the probability that \textit{at least one} user in $\mathcal{V}$ is not eligible for $(c,s)$ in iteration $(t+1)$ by summing exactly $2 (\max|\mathcal{V}|)$ summation terms as in the last inequality in the proof of Lemma \ref{lemma:ratio}. Consequently, \algor\ and \algod\ can achieve a $2 (\max|\mathcal{V}|)$-approximation. Moreover, the advanced focal parameter sampling scheme can be employed in the above model. 

\para{E. Subgroup Changes.} While the partitions of subgroups in \algo\ enable flexible configurations and facilitates social discussion, drastic subgroup changes between adjacent slots may undermine the smoothness of social interactions and discussions within the group when users are exploring the VE. To address this issue, one approach is to incorporate a constraint on \textit{edit distances} between the partitioned induced subgroups at consecutive slots. Specifically, a pair of friends $u$ and $v$ in the same subgroup (i.e., viewing a common item) at slot $s$ but separated into different subgroups at slot $s+1$ will contribute $1$ to the edit distance between the two slots. To reduce the subgroup changes, IP of the extended \algo\ discourages the intermediate solution (the \fragsol) to introduce large subgroup changes by constraining the total edit distances across all consecutive slots in the \configuration. In the rounding scheme, \algo\ examines the potential outcomes before co-displaying a \roundingitem\ to the target subgroup. It voids the current iteration and re-sample another set of pivot parameters, if the solution incurs a large subgroup change. Finally, after the complete \configuration\ is constructed, \algo\ can facilitate a local search by exchanging the sub-configuration at different slots to reduce the total subgroup change.

\para{F. Dynamic Scenario.} To support the dynamic join and leave of users with different walking speeds, executing the whole \algo\ for every dynamic event is computationally intensive. Therefore, one approach is to first solve the relaxed LP for the partial \prob\ instance with the new setting, whereas the majority of \weight s remain the same (temporarily regarded as constants in LP), so that the \weight s are updated efficiently. \stagethree\ then assigns new users to existing target subgroups according to their \weight s, and it also invokes local search to examine the potential profit from exchanging users among existing target subgroups. Moreover, similar to in \probtwo, \textit{teleportation} in VR enables a user to move directly to her friends at a different slot. Therefore, when two users are co-displayed an item (directly or indirectly) but are far apart from each other in the store, the shopping application could suggest one of them to teleport to the other's location.
}

\section{Experiments} \label{sec:exp}
In this section, we evaluate the proposed \algor\ and \algod\ along with various baseline algorithms on three real datasets. \opt{short}{We also build a prototype of a VR store with Unity and SteamVR to conduct a user study.}\opt{full}{Moreover, a case study on the real dataset is provided to demonstrate the characteristics of solutions derived from different approaches. Finally, we build a prototype of a VR store with Unity and SteamVR to conduct a user study.}

\subsection{Experiment Setup and Evaluation Plan} \label{subsec:setup}
\para{Datasets.} To evaluate the proposed algorithms, three real datasets are tested in the experiment. The first dataset \textit{Timik} \cite{Timik} is a 3D VR social network containing 850k vertices (users) and 12M edges (friendships) with 12M check-in histories of 849k virtual Point-of-Interests (POIs). The second dataset, \textit{Epinions}\cite{Epinions}, is a website containing the reviews of 139K products and a social trust network with 132K users and 841K edges. The third dataset, \textit{Yelp} \cite{Yelp}, is a location-based social network (LBSN) containing 1.5M vertices, 10M edges, and 6M check-ins. For Timik and Yelp, we follow the settings in \cite{CYS18,AG19,CC17} to treat POIs in the above datasets as the candidate items in \prob. The preference utility and social utility values are learned by the PIERT learning framework \cite{YL18TOIS} which jointly models the social influence between users and the latent topics of items. Following the scales of the experiments in previous research \cite{SD17,SBR15}, the default number of slots $k$, number of items $m$, and size of user set $n$ selected from the social networks to visit a VR store are set as 50, 10000, and 125, respectively.

\para{Baseline Methods.} We compare \algor\ and \algod\ with five baseline algorithms: Personalized Top-$k$ (PER), Fairness Maximization in Group recommendation (FMG) \cite{SD17}, Social-aware Diverse and Preference selection (SDP) \cite{CYS18}, Group Recommendation and Formation (GRF) \cite{SBR15}, and Integer Programming (IP). PER and FMG correspond to the two baseline approaches outlined in Section \ref{sec:intro}. Specifically, PER retrieves the top-$k$ preferred items for each user (the \personalized\ approach), while FMG selects a bundled itemset for all users as a group (the \group\ approach) with considerations of fairness of preference among the users. SDP selects socially-tight subgroups to display their preferred items, which corresponds to the subgroup approach outlined in Section \ref{sec:algo}. GRF splits the input users into subgroups with similar item preferences without considering the social network topology, which can be viewed as a variation of the subgroup approach where the subgroups are partitioned based on preferences instead of social connections. Finally, IP is the integer program formulated in Section \ref{subsec:hardness_ip} that finds the optimal solutions of small \prob\ instances by Gurobi \cite{Gurobi}. All algorithms are implemented in an HP DL580 Gen 9 server with four 3.0 GHz Intel CPUs and 1TB RAM. Each result is averaged over 50 samples. \opt{short}{The datasets and algorithms used are released as a public download in \cite{OnlineRepo}.}

\para{Evaluation Metrics.} To evaluate the algorithms considered for \prob\ and analyze the returned \configuration s, we introduce the following metrics: 1) total \scenario\ utility achieved, 2) total execution time (in seconds), 3) the percentages of personal preference utility (\textit{Personal\%}) and social utility (\textit{Social\%}) in total \scenario\ utility, 4) the percentage of \textit{Inter}-subgroup edges (\textit{Inter\%}) and \textit{Intra}-subgroup edges (\textit{Intra\%}) in the returned partition of subgroups, 5) the average network density among partitioned subgroups, normalized by the average density of the original social network, 6) the percentage of friend pairs viewing common items together (\textit{Co-display\%}), 7) the percentage of users viewing items alone (\textit{Alone\%}), and 8) \textit{regret ratio} \revise{(a fairness measure detailed later in Section \ref{subsec:group_formation_exp}.)} \opt{full}{Moreover, for evaluating the performance of all methods in \probtwo, we measure 9) \textit{feasibility ratio}: the number of feasible solutions divided by the total number of instances, and 10) \textit{group size constraint violation}: the number of partitioned subgroups exceeding the predefined size constraint.}

\para{Evaluation Plan.} \revise{To evaluate the performance of the above algorithms, \opt{full}{in Section \ref{subsec:small_exp}, }we use IP to derive the optimal total \scenario\ utility on small datasets.\footnote{Since solving IP is $\mathsf{NP}$-hard, Gurobi cannot solve large instances within hours.} The social networks and items in the small datasets are respectively sampled by random walk and uniform sampling from Timik according to the setting of \cite{AN15VLDB}.\opt{short}{ Due to the space constraint, please see Section 6.2 in the full version \cite{Online} for the experimental results on small datasets.} Next, in Section \ref{subsec:large_exp_sensitivity}, we evaluate the efficacy of \algor\ and \algod\ in large datasets with input scales following previous research \cite{SD17,SBR15}, while conducting experiments on different inputs (the $\text{p}$ and $\tau$ values) generated by PIERT~\cite{YL18TOIS} (default), AGREE, and GREE~\cite{DC18}. We examine the efficiency of all algorithms, including various configurations of mixed integer programming (MIP) algorithms, in Section \ref{subsec:large_exp_scalability}. We then compare the algorithms on the aforementioned group formation-related performance metrics in Section \ref{subsec:group_formation_exp}. A case study on a 2-hop ego network in Yelp is shown in \opt{full}{Section \ref{subsec:case_study}}\opt{short}{Section 6.6 in \cite{Online}} to examine the subgroup partition patterns in different algorithms. Result of a sensitivity test on $r$, an important algorithmic parameter of the deterministic \algod\ algorithm and experimental results on the \probtwo\ problem are reported in \opt{short}{Sections 6.7 and 6.8 in \cite{Online}}\opt{full}{Sections \ref{subsec:diff_r} and \ref{subsec:subgroup_constraint}}, respectively. Finally, we build a prototype of VR store with Unity 2017.1.1.1 (64bit), Photon Unity Network free 1.86, SteamVR Plugin 1.2.2, VRTK, and 3ds Max 2016 for hTC VIVE HMD to validate the proposed objective in modelling \prob. We detail the user study in Section \ref{sec:user_study}.}


\opt{full}{
\begin{figure}[t]
	\centering
	\subfigure[][\centering Total \scenario\ utility \newline vs. size of user set ($n$).] {\
		\centering \includegraphics[width = 0.44 \columnwidth]{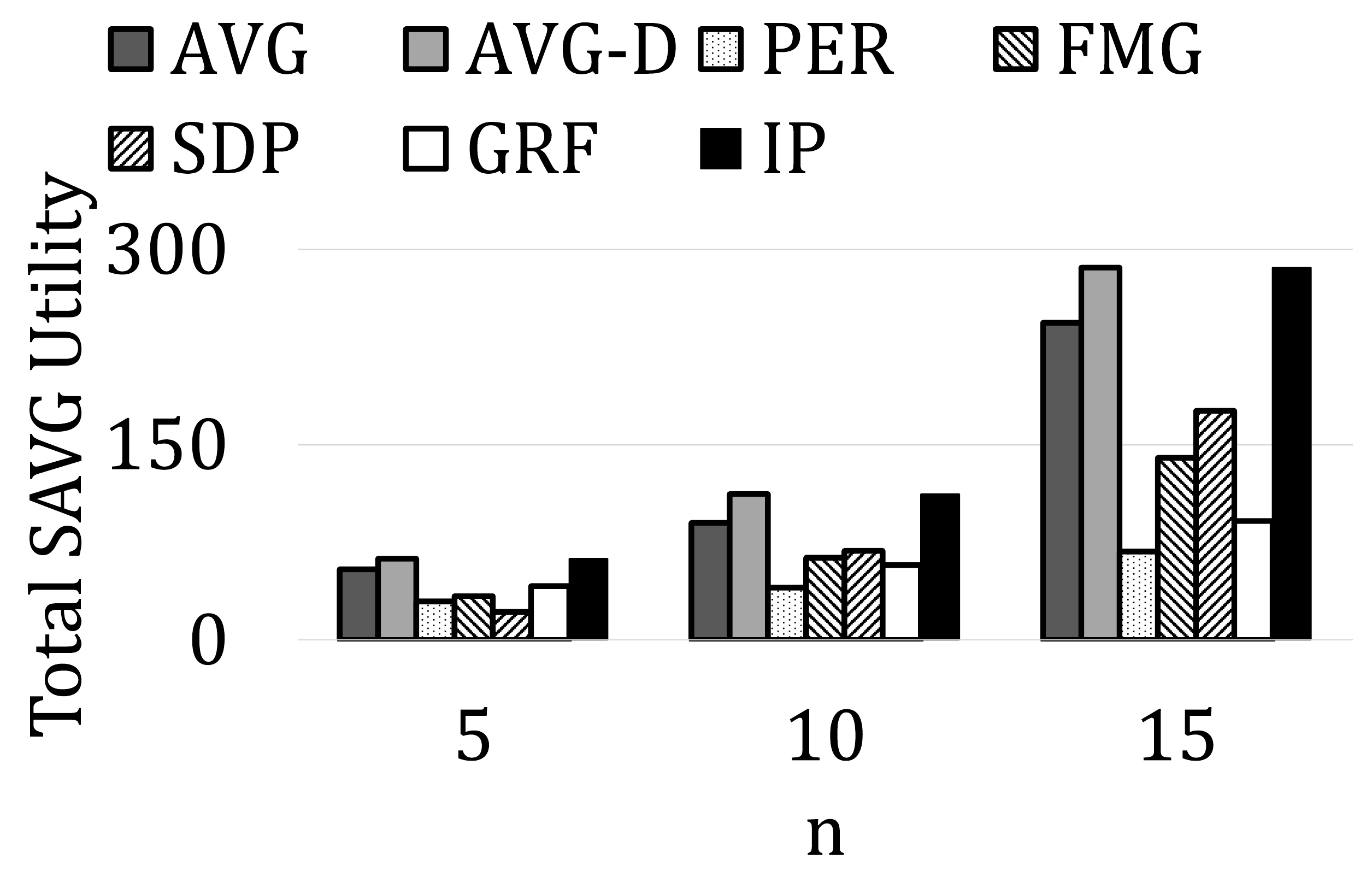}
		\label{fig:small_obj_n} } 
	\subfigure[][\centering Execution time vs. \newline size of user set ($n$).] {\
		\centering \includegraphics[width = 0.44 \columnwidth]{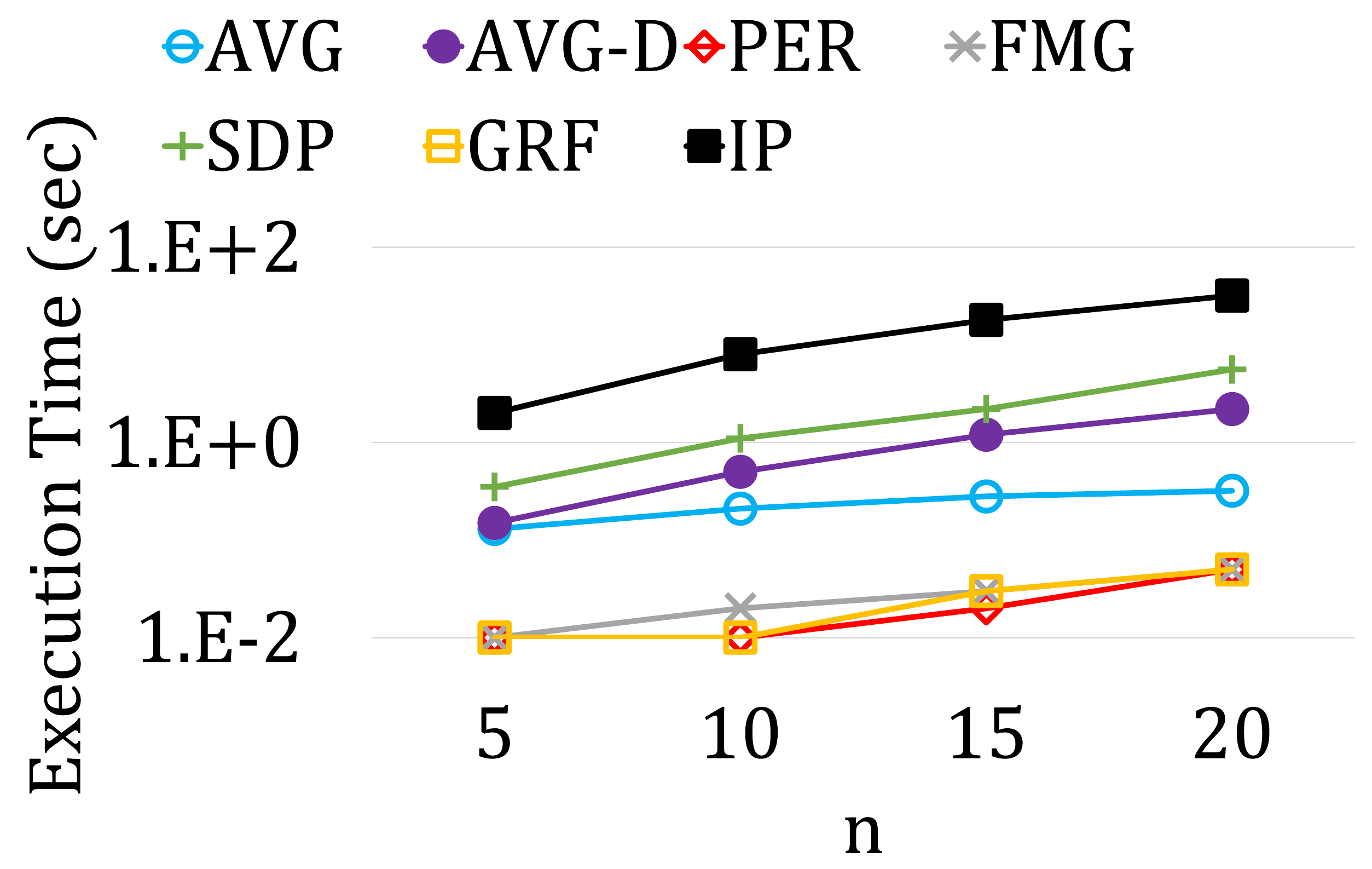}
		\label{fig:small_time_n} }
	\subfigure[][\centering Total \scenario\ utility \newline vs. size of item set ($m$).] {\
		\centering \includegraphics[width = 0.44 \columnwidth]{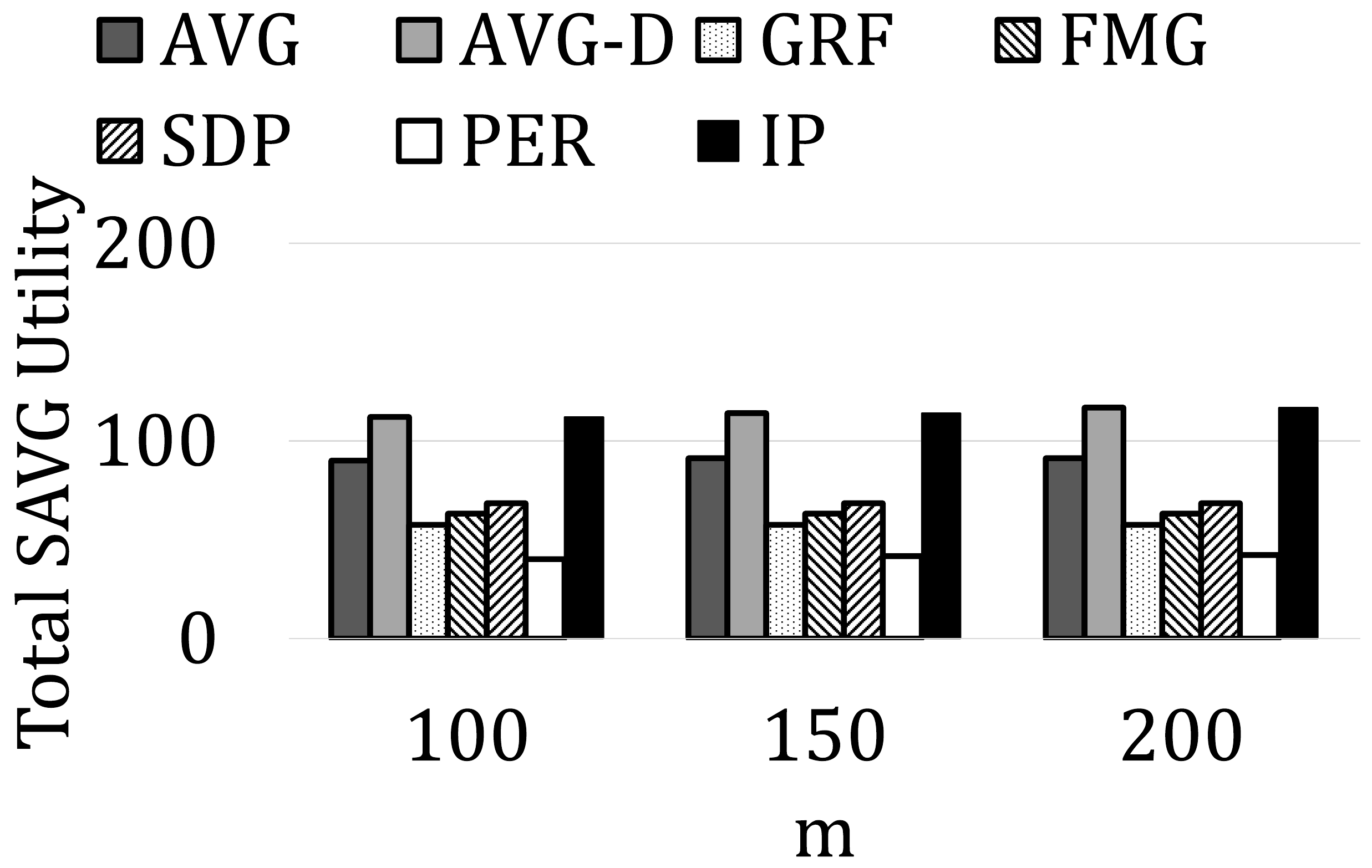}
		\label{fig:small_obj_m} } 
	\subfigure[][\centering Execution time vs. \newline size of item set ($m$).] {\
		\centering \includegraphics[width = 0.44 \columnwidth]{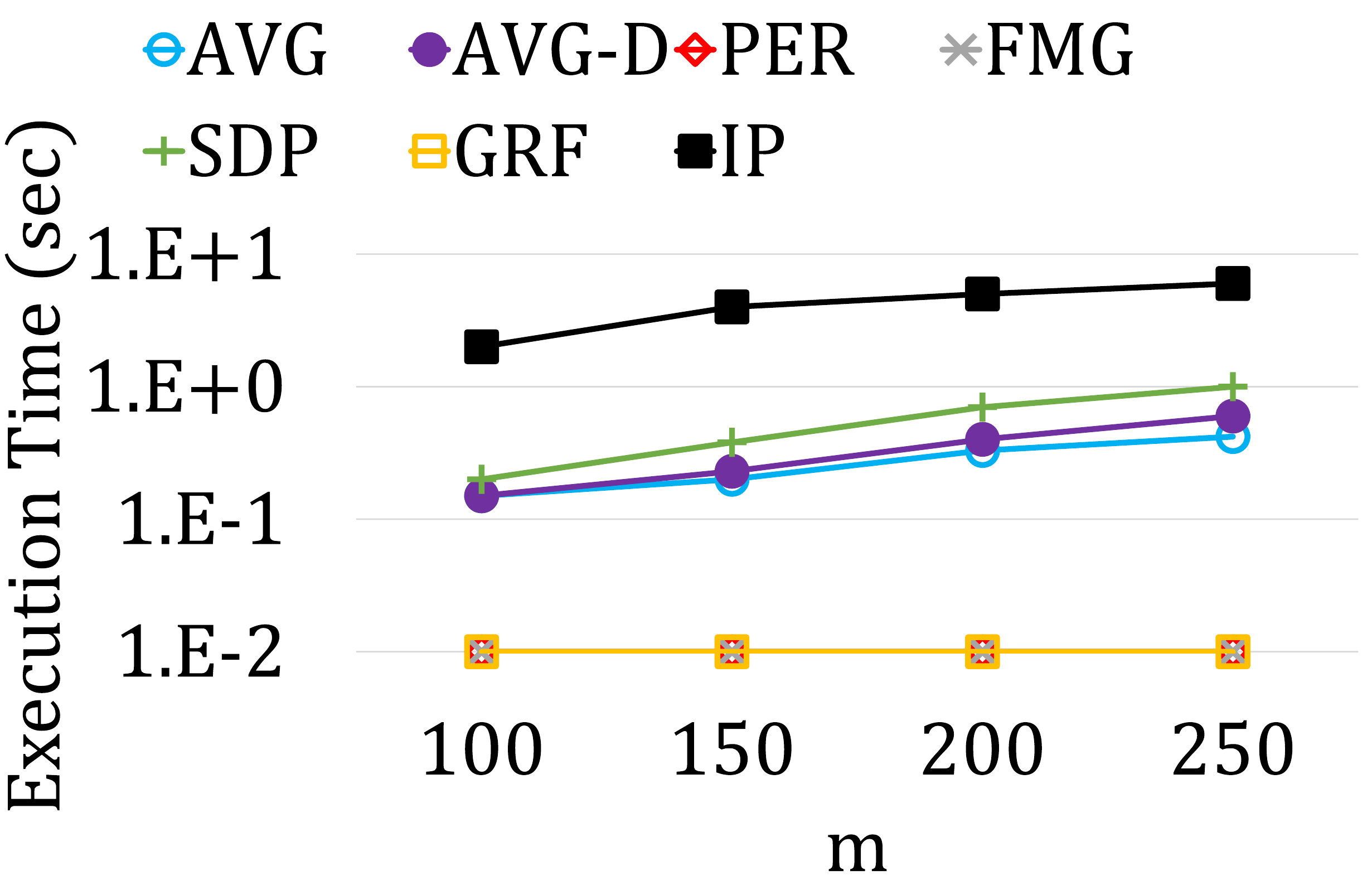}
		\label{fig:small_time_m} }		
	\subfigure[][\centering Total \scenario\ utility \newline vs. number of slots ($k$).] {\
		\centering \includegraphics[width = 0.44 \columnwidth]{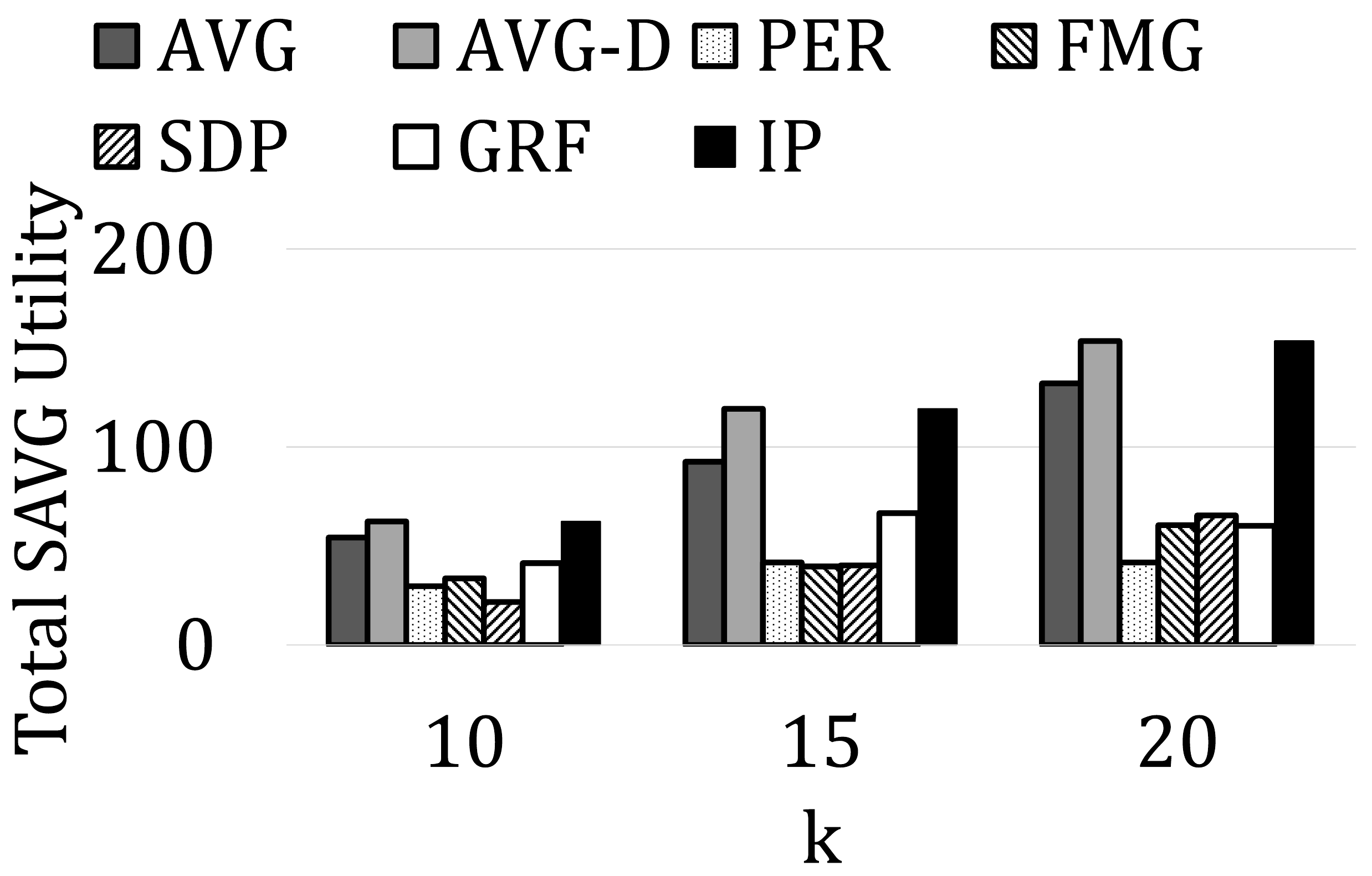}
		\label{fig:small_obj_k} } 
	\subfigure[][\centering Execution time vs. \newline number of slots ($k$).] {\
		\centering \includegraphics[width = 0.44 \columnwidth]{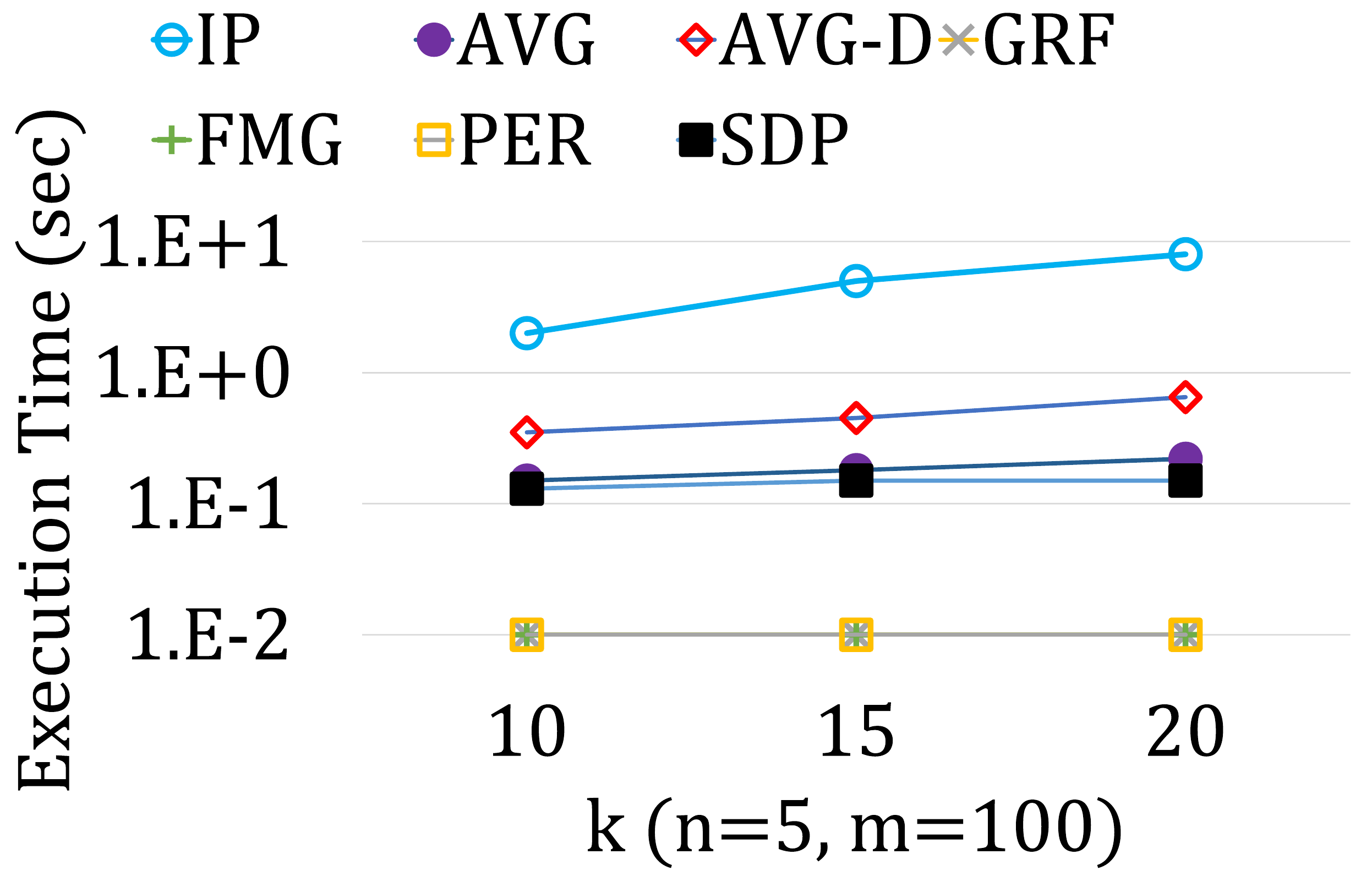}
		\label{fig:small_time_k} } 
	\caption{Comparisons on small datasets.}
	\label{exp:small_exp}
\end{figure}
}

\opt{full}{
\subsection{Comparisons on Small Datasets}
\label{subsec:small_exp}

Figure \ref{fig:small_obj_n} manifests that \algor\ and \algod\ outperform the other approaches regarding the total \scenario\ utility for different $n$ (i.e., the size of user set). Solutions of \algod\ are close to optimal since \algod\ extracts the target subgroups by jointly optimizing the current and expected future total \scenario\ utility in each iteration. \algod\ outperforms other baselines by at least 50.8\% to 62.8\% as $n$ grows, because the social interactions on extracted target subgroups become increasingly important when the size of user set increases. In contrast, the values of total \scenario\ utility of PER grows slowly because it is not designed to cope with the interplay and trade-off between personal preferences and social interactions. Moreover, while FMG and GRF seem to benefit from large values of $n$, their solution qualities are still limited due to the fixed partition of subgroups generated without leveraging CID.

\begin{figure}[tp]
  \centering \includegraphics[width = 0.95\columnwidth]{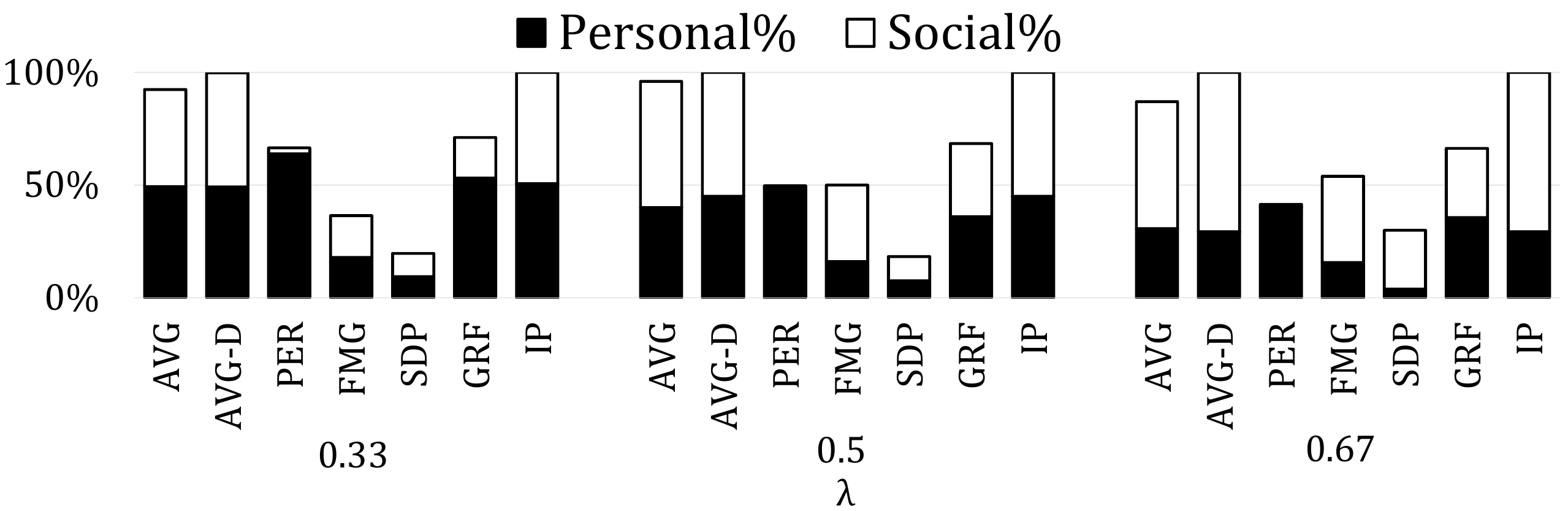}
  \caption{Normalized total \scenario\ utility of diff. $\lambda$.}
  \label{fig:small_obj_lambda}
\end{figure}

\opt{full}{Figure \ref{fig:small_obj_m} presents the total \scenario\ utility under varied numbers of items ($m$). However, $m$ does not seem to affect the total \scenario\ utility by much since any user's top preferred items are already contained in the top-100 items.} Moreover, Figure \ref{fig:small_obj_k} presents the total \scenario\ utility under varied number of slots ($k$). \algod\ and \algor\ significantly outperform the baselines by at least 134.7\% and 102.1\% in terms of the total \scenario\ utility when $k$ grows to 20, because \stagethree\ leverages the flexibility provided by \notion\ to optimize the social utility for different slots, which is beneficial for a large $k$ as it becomes difficult to find more commonly interested items for static subgroup members. \opt{short}{The values of total \scenario\ utility with regard to different numbers of items ($m$) are almost the same when $n$ and $k$ are fixed (please refer to \cite{Online} due to space constraint).} \opt{short}{Figures \ref{fig:small_time_n} and \ref{fig:small_time_m} show the execution time by varying the size of user set ($n$) and the size of item set ($m$).}\opt{full}{Figures \ref{fig:small_time_n}, \ref{fig:small_time_m} and \ref{fig:small_time_k} show the execution time by varying the size of user set ($n$), the size of item set ($m$), and the number of slots ($k$).} The running times of \algor\ and \algod\ are at most 7.5\% and 17.4\% of that of IP. \algor\ and \algod\ require slightly more time than PER, GRF, and FMG, since the baseline approaches focus only on one of preference utility, social utility, or subgroup partition, instead of considering all of them jointly. Figure \ref{fig:small_obj_lambda} evaluates the impact of $\lambda$ on the total \scenario\ utility of all schemes normalized by that of IP. The normalized total \scenario\ utilities of FMG and SDP improve when the social utility becomes more important as $\lambda$ grows. However, it is difficult for them to address the diverse personal preferences. In contrast, PER achieves the highest preference utility and the lowest social utility, but tends to generate a small total \scenario\ utility. 
}

\subsection{Sensitivity Tests on Large Datasets}
\label{subsec:large_exp_sensitivity}
\begin{figure}[tp]
  \centering \includegraphics[width = 0.95\columnwidth]{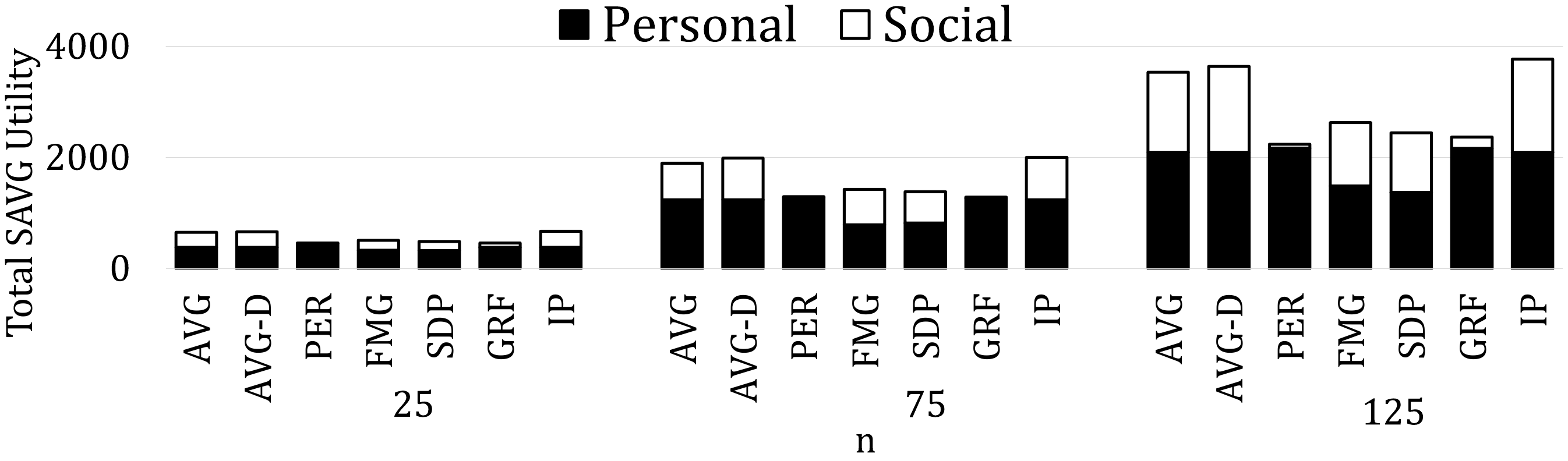}
  \caption{Total \scenario\ utility vs. size of user set ($n$).}
  \label{fig:timik_obj_n}
\end{figure}
\begin{figure}[tp]
  \centering \includegraphics[width = 0.95\columnwidth]{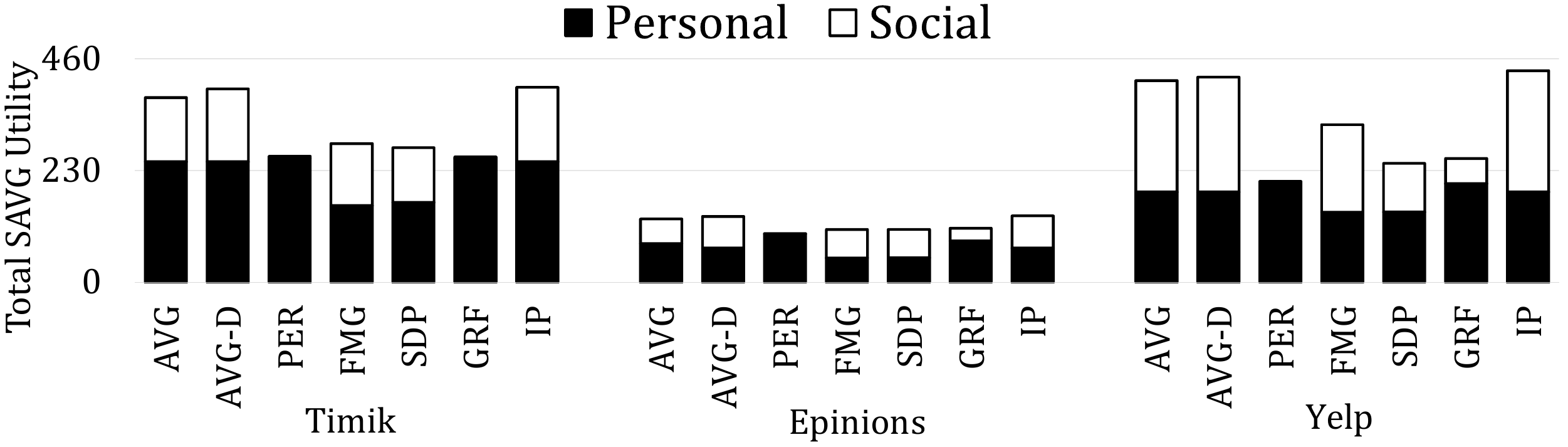}
  \caption{Total \scenario\ utility in diff. datasets.}
  \label{fig:dataset_obj}
\end{figure}
\begin{figure}[tp]
  \centering \includegraphics[width = 0.95\columnwidth]{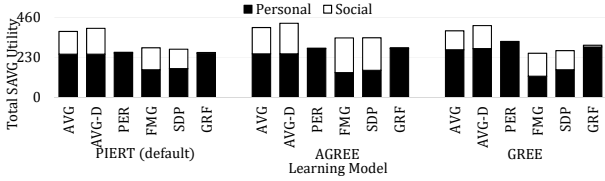}
  \caption{Total \scenario\ utility vs. different input.}
  \label{fig:timik_input}
\end{figure}

We evaluate the efficiency and efficacy of \algor\ and \algod\ in large datasets with the scales of the three dimensions following previous research \cite{SD17,SBR15}, i.e., $m=10000$, $k=50$, and $n=125$. 
Figure \ref{fig:timik_obj_n} presents the total \scenario\ utility by varying the sizes of user set in Timik. The results manifest that \algor\ and \algod\ outperform all baselines by at least 30.1\%, while \algod\ is slightly better than \algor\ since it selects better pivot parameters for \stagethree. \revise{Moreover, the returned objective values of \algor\ and \algod\ respectively achieve at least 93.7\% and 96.4\% of the objective value of IP, manifesting that our algorithms are effective.} Compared with GRF, the improvement of \algod\ grows from 43.6\% to 54.6\% as $n$ increases, since GRF splits the users into subgroups without considering social relations, but social interactions among \emph{close friends} become more important for a larger group. By striking a balance between preference and social utility, \algor\ and \algod\ achieve a greater total \scenario\ utility. Compared with PER and GRF, FMG achieves a higher social utility but a lower preference utility because it displays a universal configuration to all users.

Figure \ref{fig:dataset_obj} compares the results on Timik, Yelp, and Epinions. The social utility obtained in Epinions is lower than in Yelp due to the sparser social relations in the review network, and group consensus thereby plays a more important role in Yelp. 
Despite the different characteristics of datasets, \algor\ and \algod\ prevail in all datasets and outperform all baselines since \stagethree\ operates on the \weight s from the optimal LP solution, which does strike a good balance among all factors. FMG and SDP benefit from the high social utility in Yelp and outperform PER. By contrast, PER performs nearly as good as FMG and SDP in Epinions since the social utility is lower. 

\revise{Next, to examine the influence of input models on the tackled problem, Figure \ref{fig:timik_input} shows the experimental results on different inputs generated by PIERT~\cite{YL18TOIS} (default), AGREE and GREE~\cite{DC18}. PIERT jointly learns the preference and social utilities by modeling the social influence between users and the latent topics of items. For AGREE and GREE~\cite{DC18}, the former assumes the social influence between users is equal, and the later learns sophisticated weights of the triple (user, user, item). \algo\ and \algod\ outperform the baselines with regards to all different input models, manifesting that our method is generic to different distributions of inputs. Note that the social utilities returned by \algor\ and \algod\ with PIERT and AGREE are slightly greater than the ones with GREE. The result manifests that \algor\ and \algod\ can select better items for users to enjoy if social utilities are different across items.}

\subsection{Scalability Tests on Large Datasets}
\label{subsec:large_exp_scalability}
\begin{figure}[tp]
	\centering
	\subfigure[b][\centering Execution time vs. \newline size of user set ($n$).] {\
		\centering \includegraphics[width = 0.44 \columnwidth]{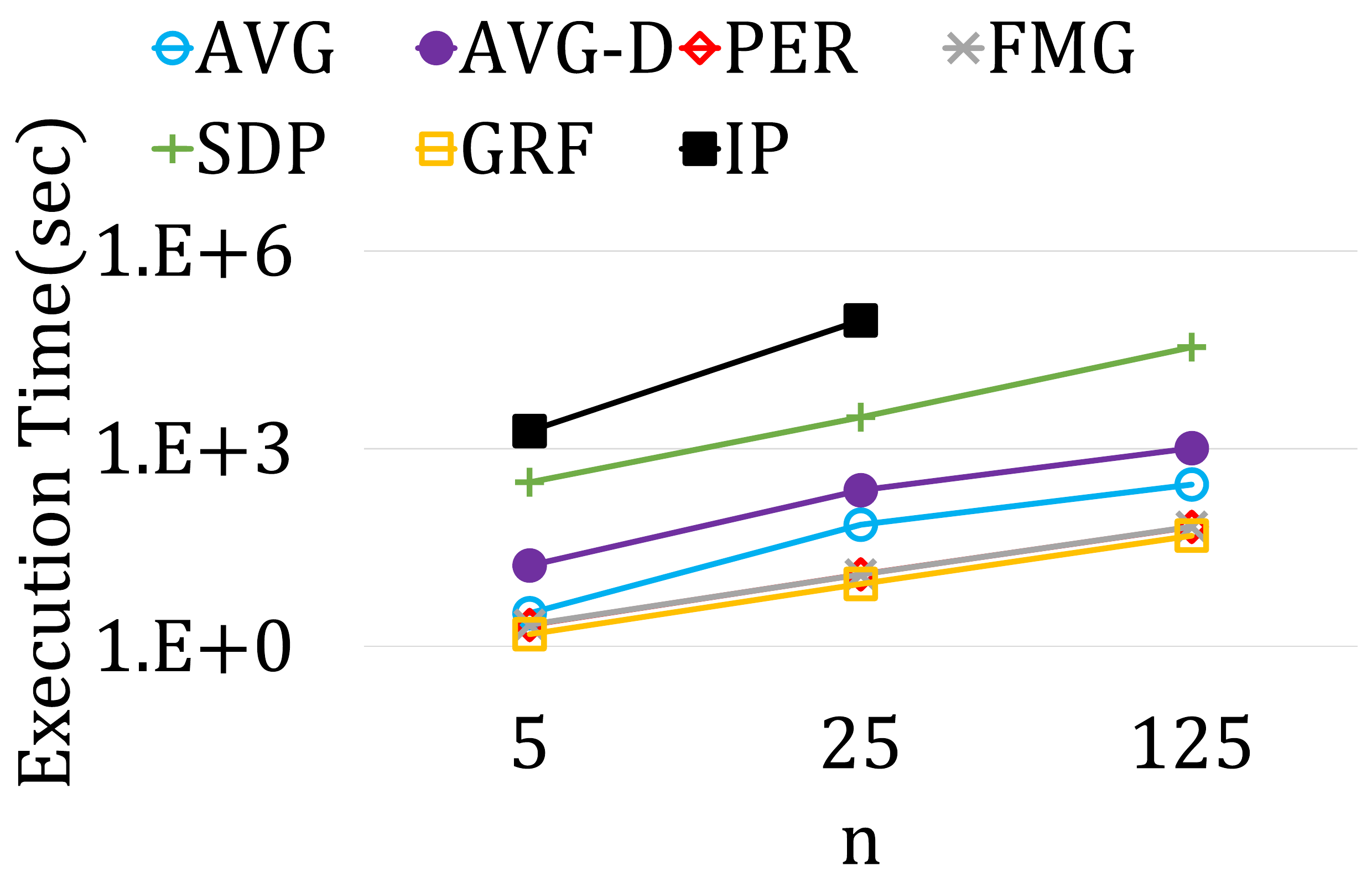}
		\label{fig:timik_time_n} } 
	\subfigure[b][\centering Execution time vs. \newline size of item set ($m$).] {\
		\centering \includegraphics[width = 0.44 \columnwidth]{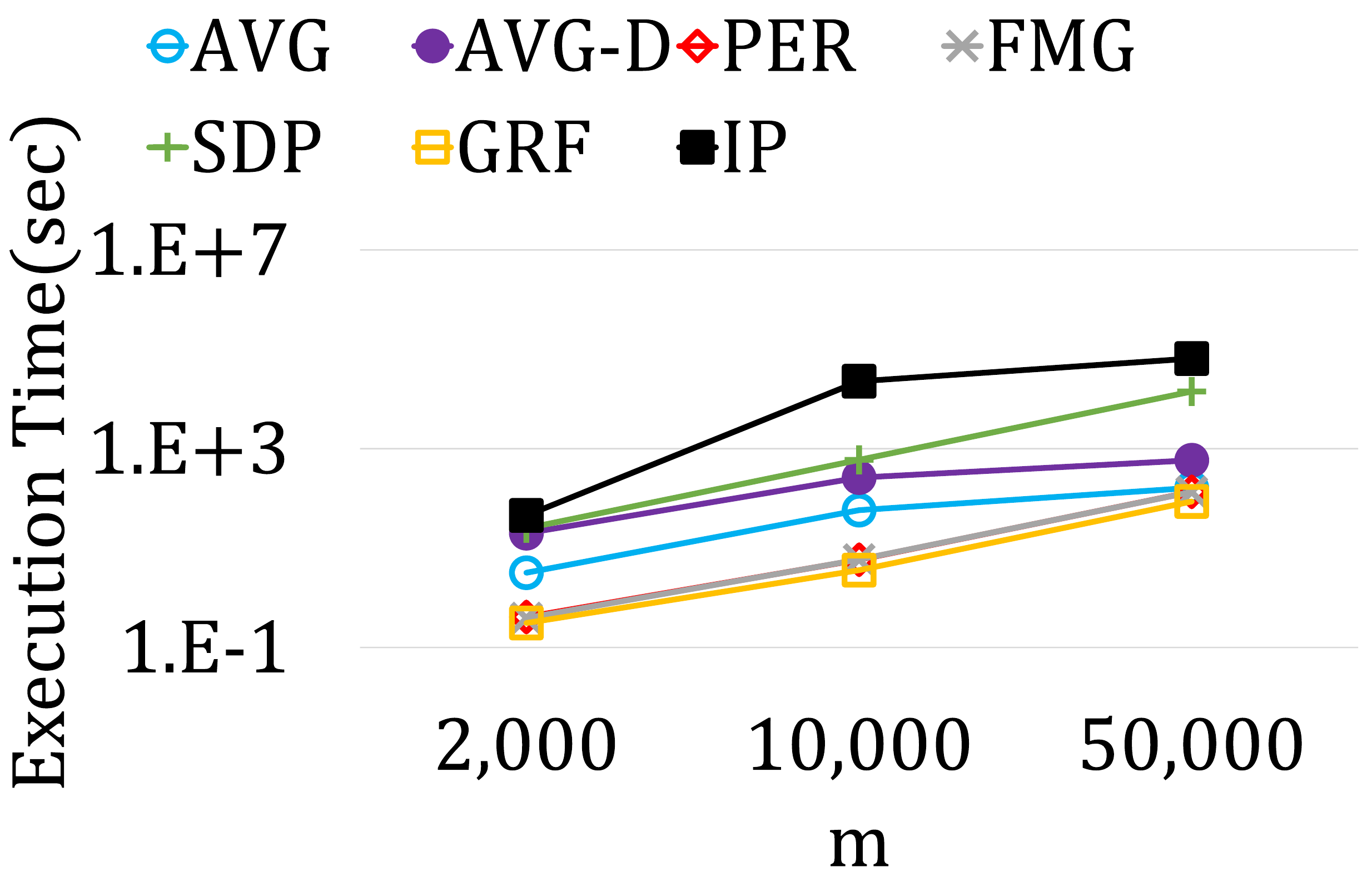}
		\label{fig:timik_time_m} }
	\caption{Execution time in Yelp.}
	\label{fig:large_scalability}
\end{figure}
\opt{short}{ 
\begin{figure}[tp]
  \centering \includegraphics[width = 0.48\columnwidth]{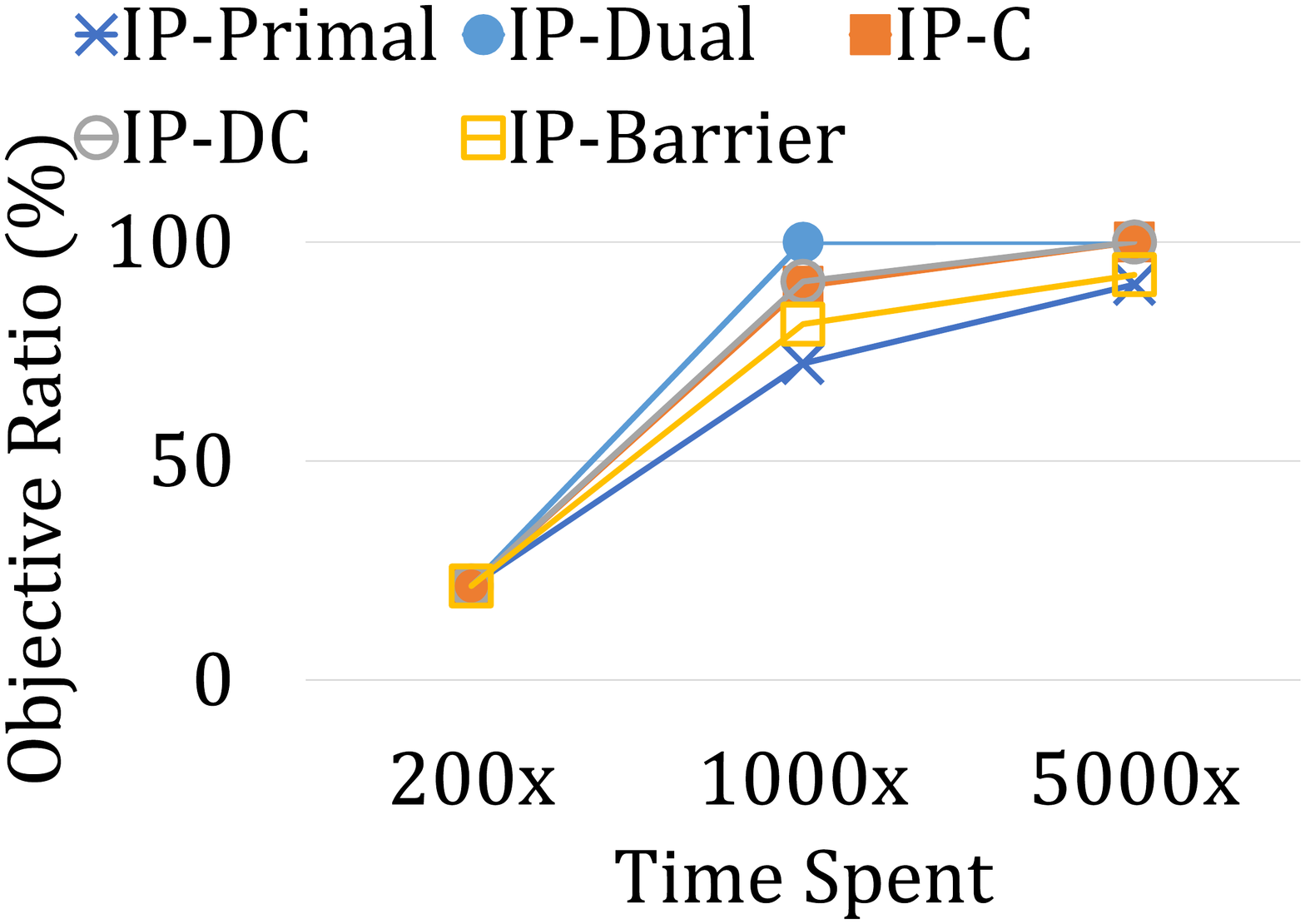}
  \caption{Results of different IP heuristics.}
  \label{fig:timik_IP_tradeoff}
\end{figure}
}
\opt{full}{ 
\begin{figure}[tp]
	\centering
	\subfigure[b][\centering Results of different IP heuristics.] {\
		\centering \includegraphics[width = 0.44 \columnwidth]{fig/large_scale/timik_IP_tradeoff.pdf}
		\label{fig:timik_IP_tradeoff} } 
	\subfigure[b][\centering Effects of speedup strategies.] {\
		\centering \includegraphics[width = 0.44 \columnwidth]{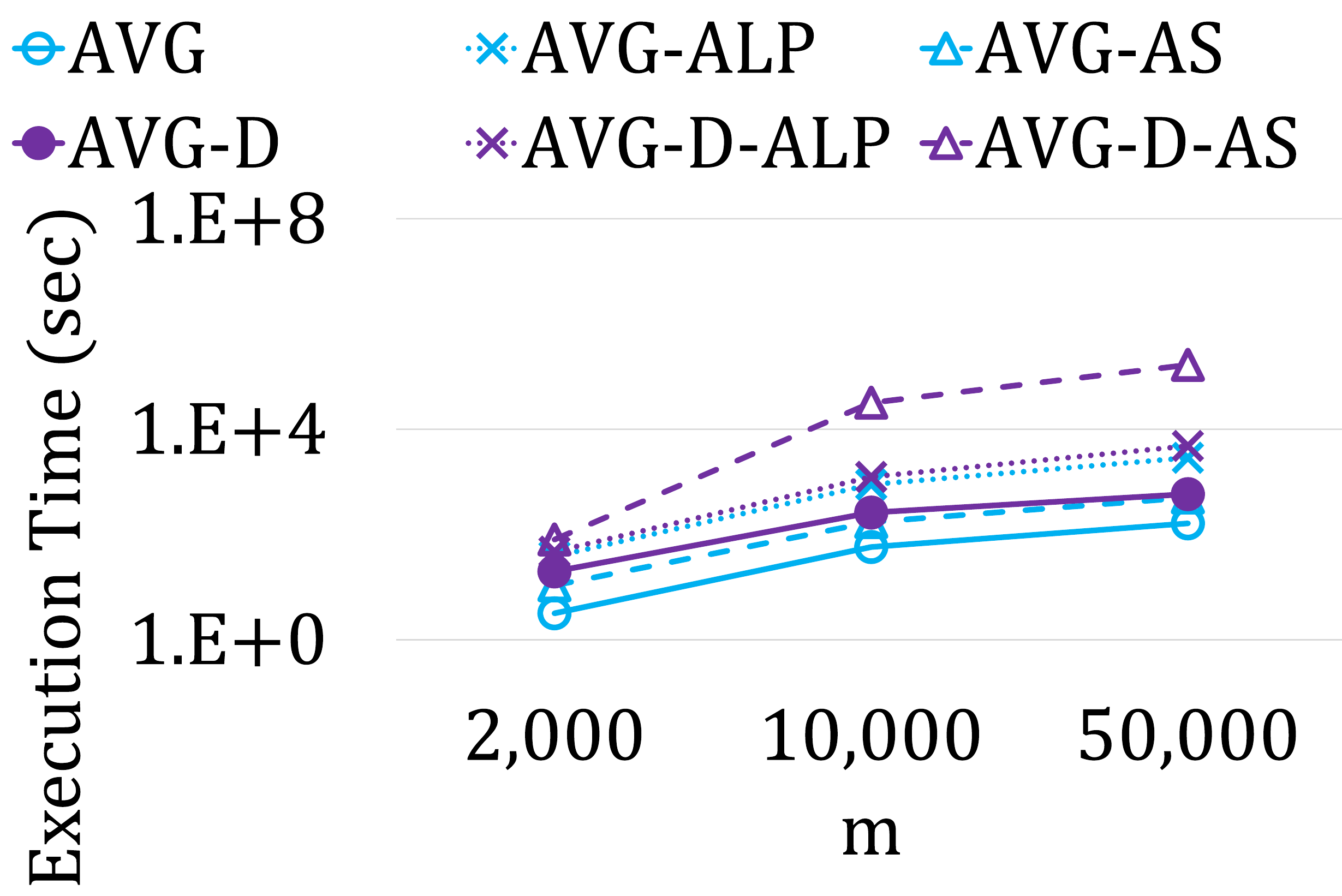}
		\label{fig:timik_time_speedup} }
	\caption{Sensitivity Test on Efficiency and Efficacy.}
	\label{fig:new_sensitivity_tests}
\end{figure}
}
Figure \ref{fig:timik_time_n} presents the execution time in Yelp with different $n$. IP cannot terminate within 24 hours when $n \geq 25$, and SDP needs 300 seconds to return a solution even when $n=5$. Figure \ref{fig:timik_time_m} shows the execution time with different $m$. Note that \algor\ and \algod\ are both more scalable to $m$ than the baseline approaches because \stagethree\ exploits the \fragsol\ without $m$ in the complexity. Although \algod\ provides a stronger theoretical guarantee, the scalability of \algor\ on $n$ is better than \algod\ because \algor\ samples the target subgroups randomly. However, practical VR applications, e.g., VRChat \cite{VRChat} and IrisVR \cite{IrisVR}, seldom have $n > 25$. 

\opt{full}{\algo, \algod, and IP are all integer program (IP)-based approaches, which means solving an IP (or a relaxed linear program) is the first step. For a fixed $k$ (number of display slots for a user), when $n$ and $m$ grow, the numbers of variables and constraints in IP and LP also increase. However, as the number of \textit{core decisions} (which items to be displayed in the $k$ slots) remains the same, an increasing number of newly introduced constraints become \textit{dominated} (i.e., redundant) and thus pre-solved (relaxed) by the highly optimized commercial IP/LP solvers. Accordingly, the empirical complexity of the problem models \textit{after} pre-solving will not grow explicitly with $n$ and $m$, rendering the practical computational cost of solving the IP/LP more scalable regarding $n$ and $m$. Moreover, in \algo\ and \algod, after the optimal fractional solution for the LP is retrieved, decision variables with values of 0 or 1 do not require the randomized/deterministic rounding procedure and can be directly filled into the final solution. This usually happens when 1) some user has a very strong preference toward a specific item such that viewing the item by her own still outweighs viewing other items with friends, and 2) some item is extremely popular and thus is co-displayed to a majority of users. As $n$ and $m$ increases in this case, the total number of decision variables $x^\ast(u,\cdot)$ grows, but the summation of all decision variables remains constant in Constraint (\ref{ilp:one_slot_one_item}). Therefore, more decision variables have values close to 0 in this case. On the other hand, all other baselines are based on local greedy heuristics that need either a linear scan on all users ($O(n)$) or all items ($O(m)$) at each step, and do not benefit from the above ``decision dilution'' phenomenon.}

\revise{To examine the performance of different mixed integer programming (MIP) algorithms, we further conducted experiments with Primal-first Mixed Integer Programming (IP-Primal), Dual-first Mixed Integer Programming (IP-Dual), Concurrent Mixed Integer Programming (IP-C), Deterministic Concurrent Mixed Integer Programming (IP-DC), and the Barrier Algorithm (IP-Barrier) in the Gurobi \cite{Gurobi} package. Figure \ref{fig:timik_IP_tradeoff} shows the trade-off between efficiency and efficacy of five different IP algorithms on the Timik dataset with the default parameters ($(k,m,n) = (50,10000,125)$). For every MIP algorithm, we evaluate the solution quality of different algorithms with the running time constrained by 200, 1000 and 5000 times the running time of our proposed \algod\ algorithm for the same instance to compare the obtained solutions in different time limits. The y-axis shows the objective value normalized by the solution of \algod, which manifests none of the 5 baselines achieves any solution better than that of \algod\ in 5000X of the running time of \algod. As such, although there is some subtle difference in performance across different MIP algorithms, none of the examined algorithms shows a reasonable scalability. \opt{short}{Please see the full version \cite{Online} for more discussions on the scalability.}}

\opt{full}{Figure \ref{fig:timik_time_speedup} shows the effects of the speedup strategies on \algo\ and \algod. \algo--ALP and \algod--ALP are \algo\ and \algod\ without the advanced LP transformation technique, respectively, while \algo--AS and \algod--AS are \algo\ and \algod\ without the advanced focal parameter sampling technique, respectively. The results manifest that all speedup strategies effectively boost the efficiency of \algo\ and \algod. For \algo, the advanced LP transformation plays a more dominant role than the advanced focal parameter sampling scheme since the computational cost of solving the linear program is more evident in \algo. Conversely, as \algod\ needs to carefully examine every combination of \textit{good} focal parameters (i.e., those sets of parameters that do assign an item to at least one user), the advanced sampling technique has a significant effect in filtering out bad focal parameters. Therefore, the effect of the advanced LP transformation pales in comparison to that of the advanced sampling technique in \algod.}

\opt{short}{
\begin{figure}[tb]
	\centering
	\subfigure[][\centering Inter/Intra\% and subgroup density (Timik).] {\
		\centering \includegraphics[width = 0.44 \columnwidth]{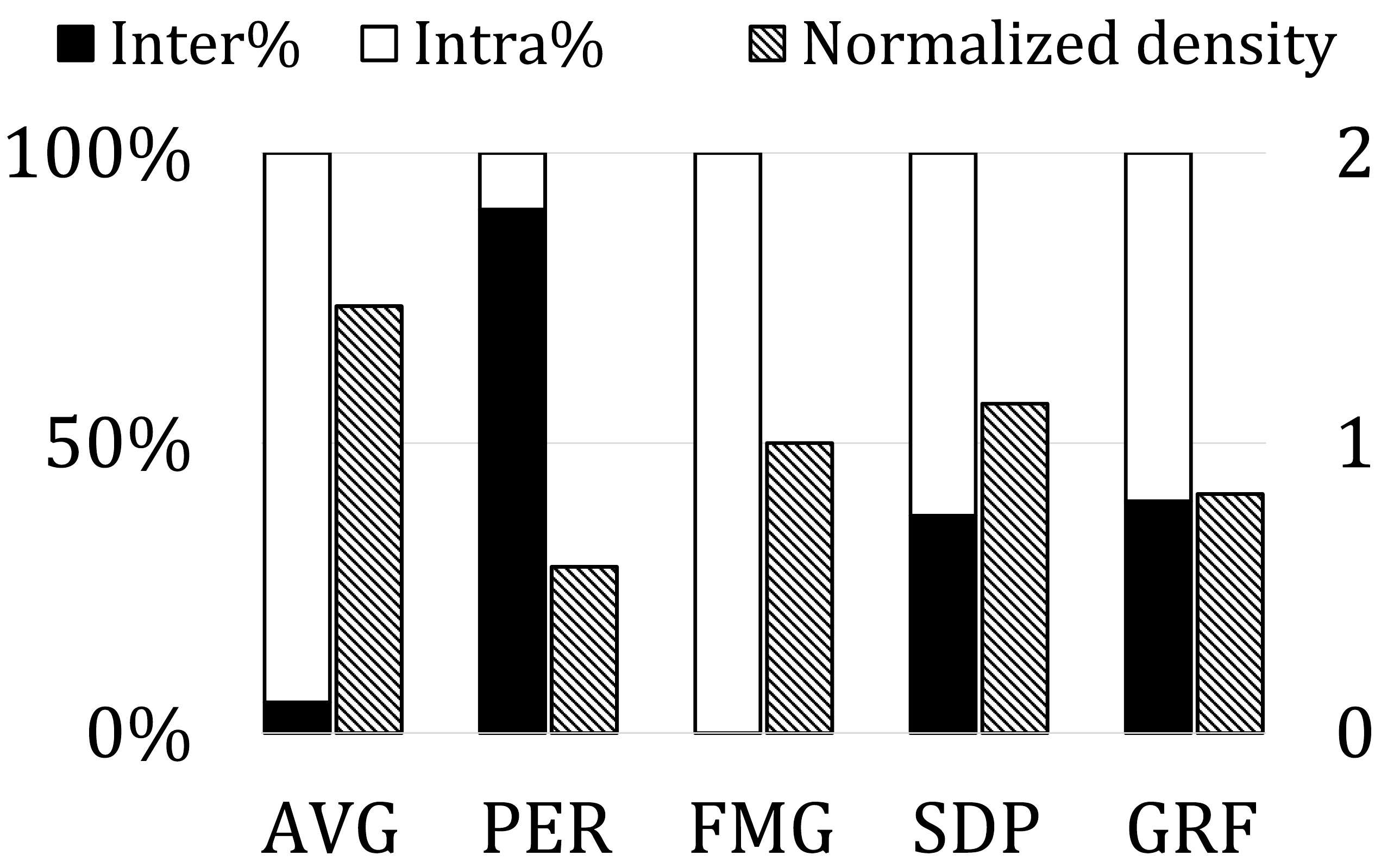}
		\label{fig:timik_inter_intra_density} }
	\subfigure[][\centering Inter/Intra\% and subgroup density (Epinions).] {\
		\centering \includegraphics[width = 0.44 \columnwidth]{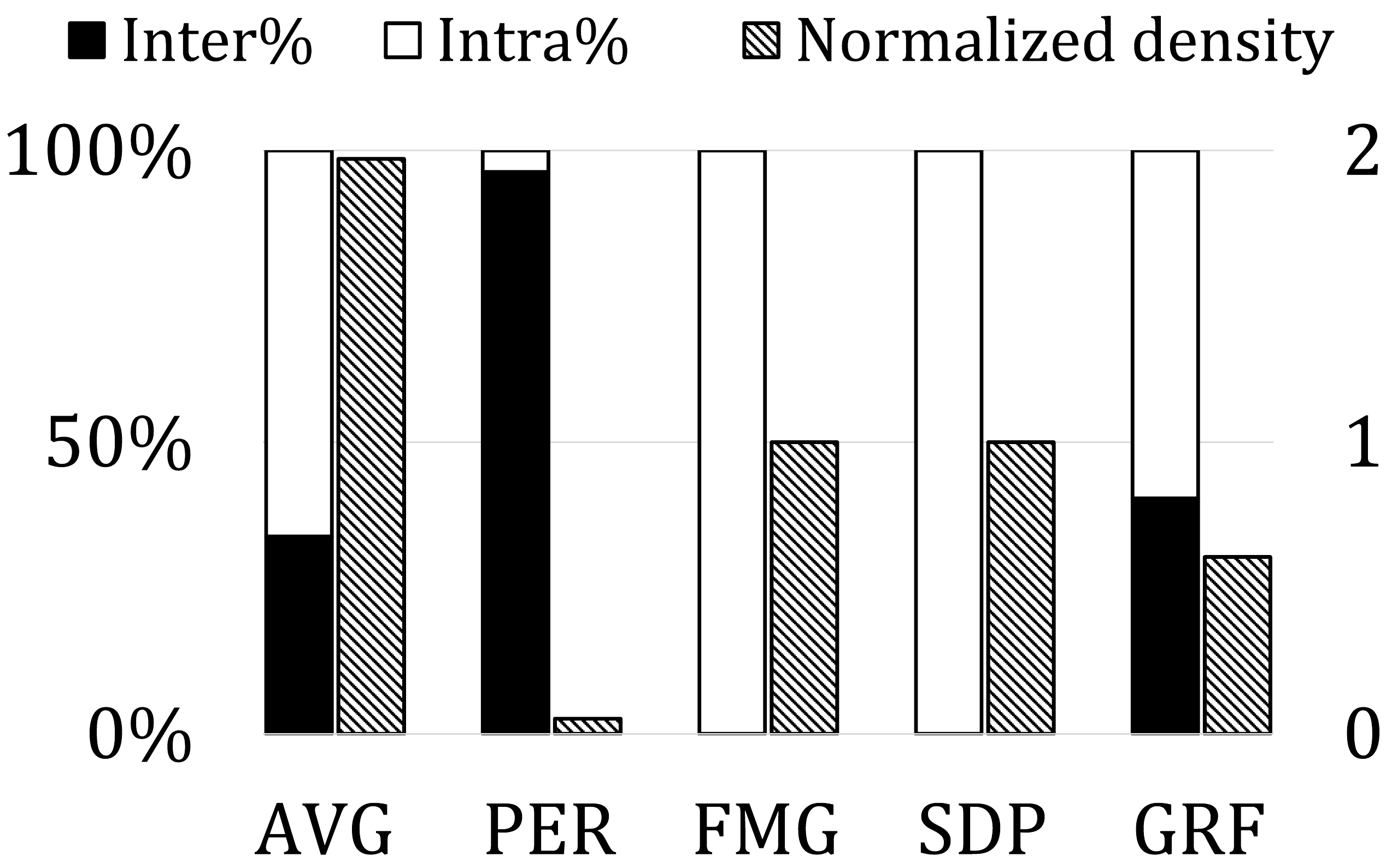}
		\label{fig:ep_inter_intra_density} }
		
	\subfigure[][\centering CDF of Regret ratio (Timik).] {\
		\centering \includegraphics[width = 0.44 \columnwidth]{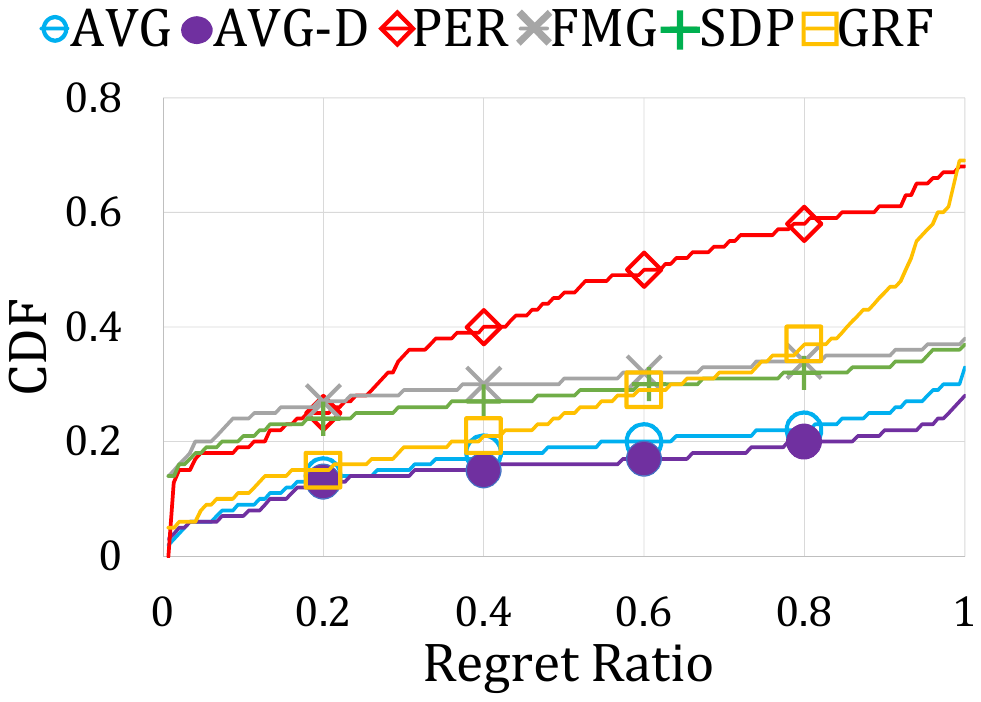}
		\label{fig:timik_regret} } 
	\subfigure[][\centering CDF of Regret ratio (Epinions).] {\
		\centering \includegraphics[width = 0.44 \columnwidth]{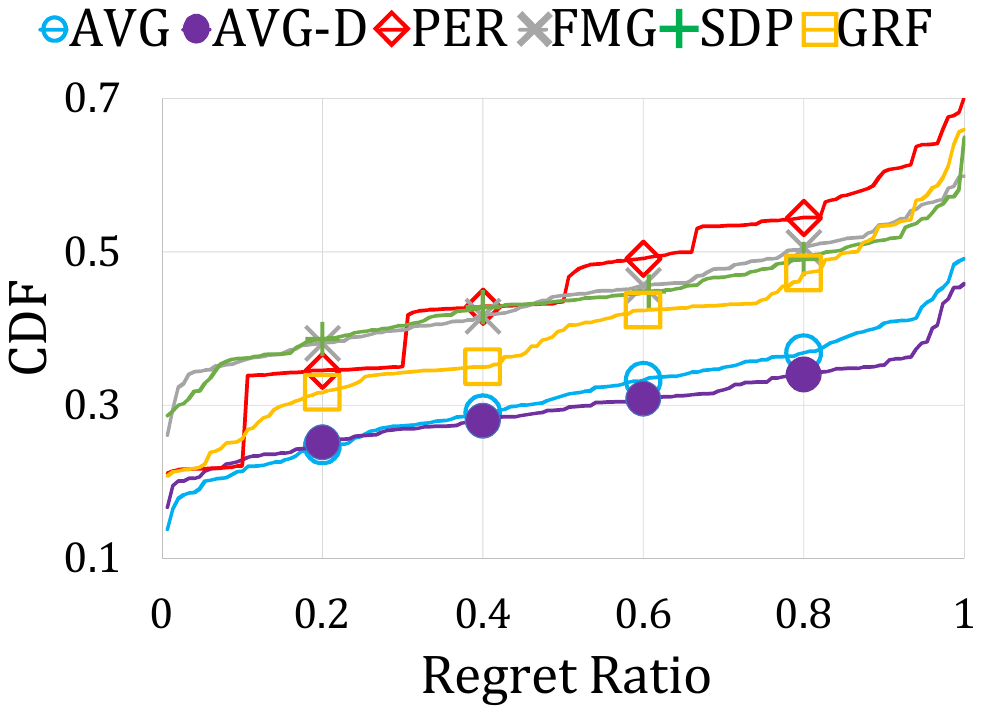}
		\label{fig:ep_regret} }
		
	\caption{Comparisons on subgroup metrics.}
	\label{exp:group_all}
\end{figure}
}

\opt{full}{	
\begin{figure*}[tb]
	\centering
	\subfigure[][\centering Inter/Intra\% and subgroup density (Timik).] {\
		\centering \includegraphics[width = 0.44 \columnwidth]{fig/subgroup_metrics/timik_density.pdf}
		\label{fig:timik_inter_intra_density} } 
	\subfigure[][\centering Inter/Intra\% and subgroup density (Epinions).] {\
		\centering \includegraphics[width = 0.44 \columnwidth]{fig/subgroup_metrics/ep_density.pdf}
		\label{fig:ep_inter_intra_density} }
	\subfigure[][\centering Inter/Intra\% and subgroup density (Yelp).] {\
		\centering \includegraphics[width = 0.44 \columnwidth]{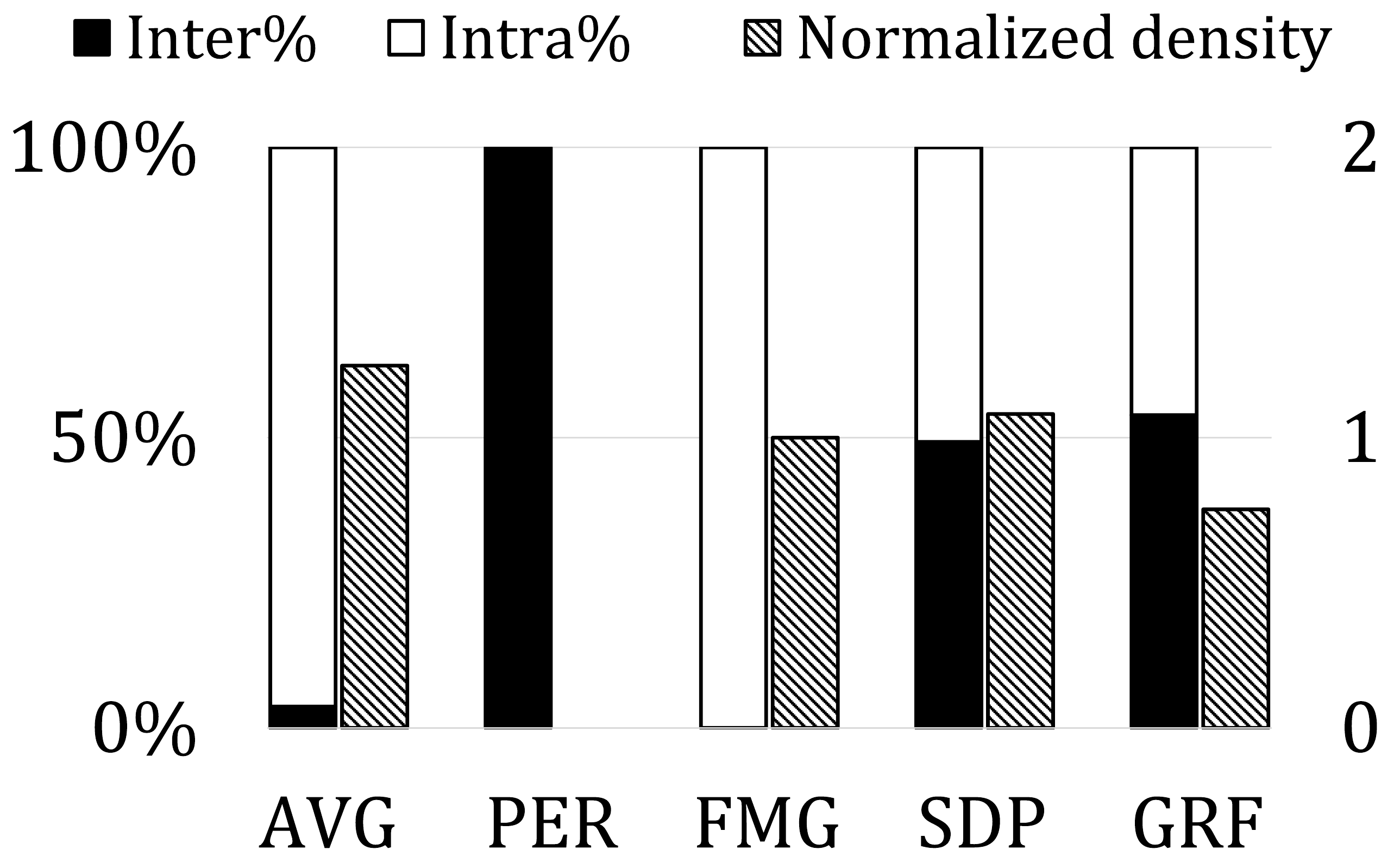}
		\label{fig:yelp_inter_intra_density} } 

	\subfigure[][\centering Share rate and alone rate (Timik).] {\
		\centering \includegraphics[width = 0.44 \columnwidth]{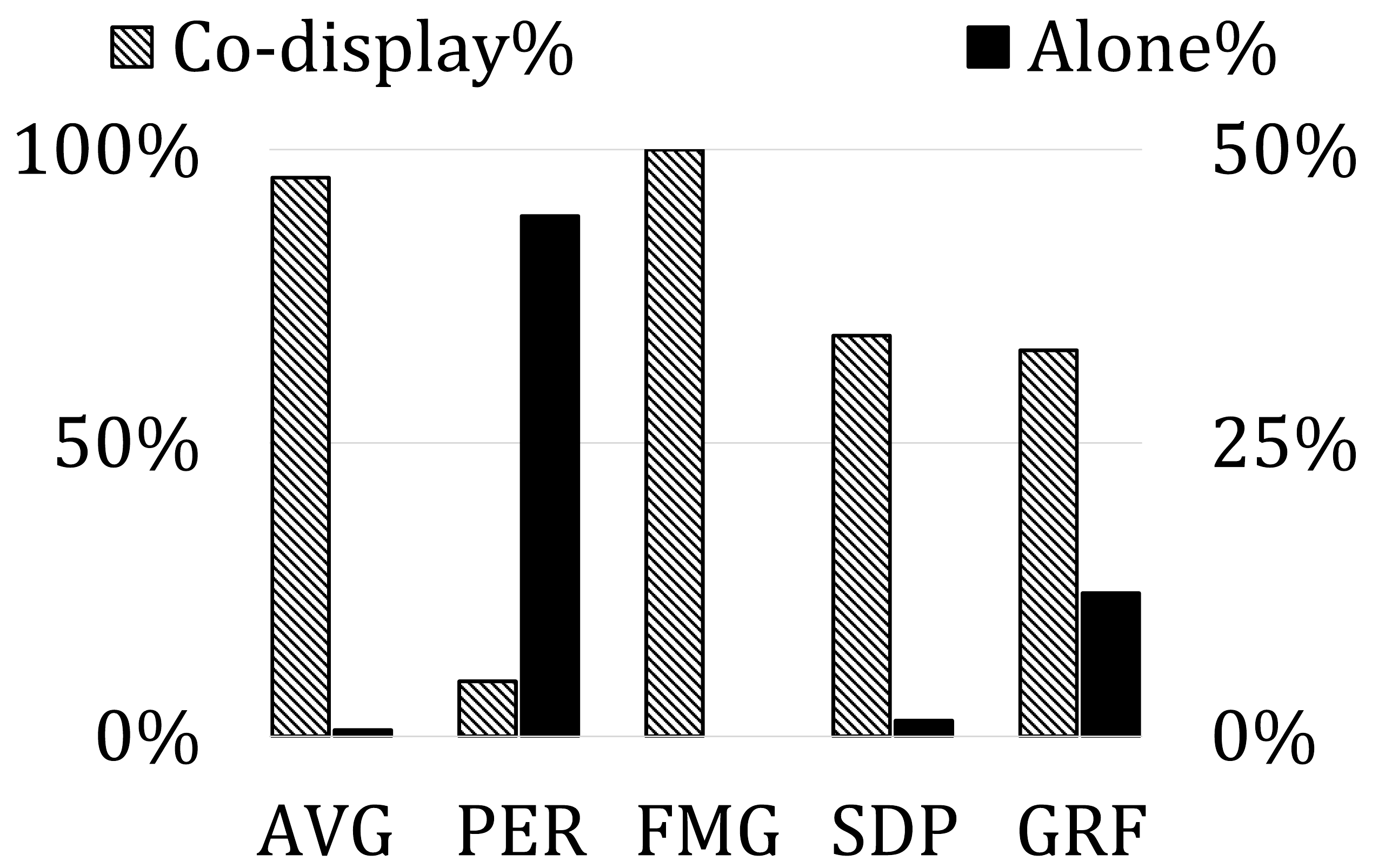}
		\label{fig:timik_share_alone} } 
	\subfigure[][\centering Share rate and alone rate (Epinions).] {\
		\centering \includegraphics[width = 0.44 \columnwidth]{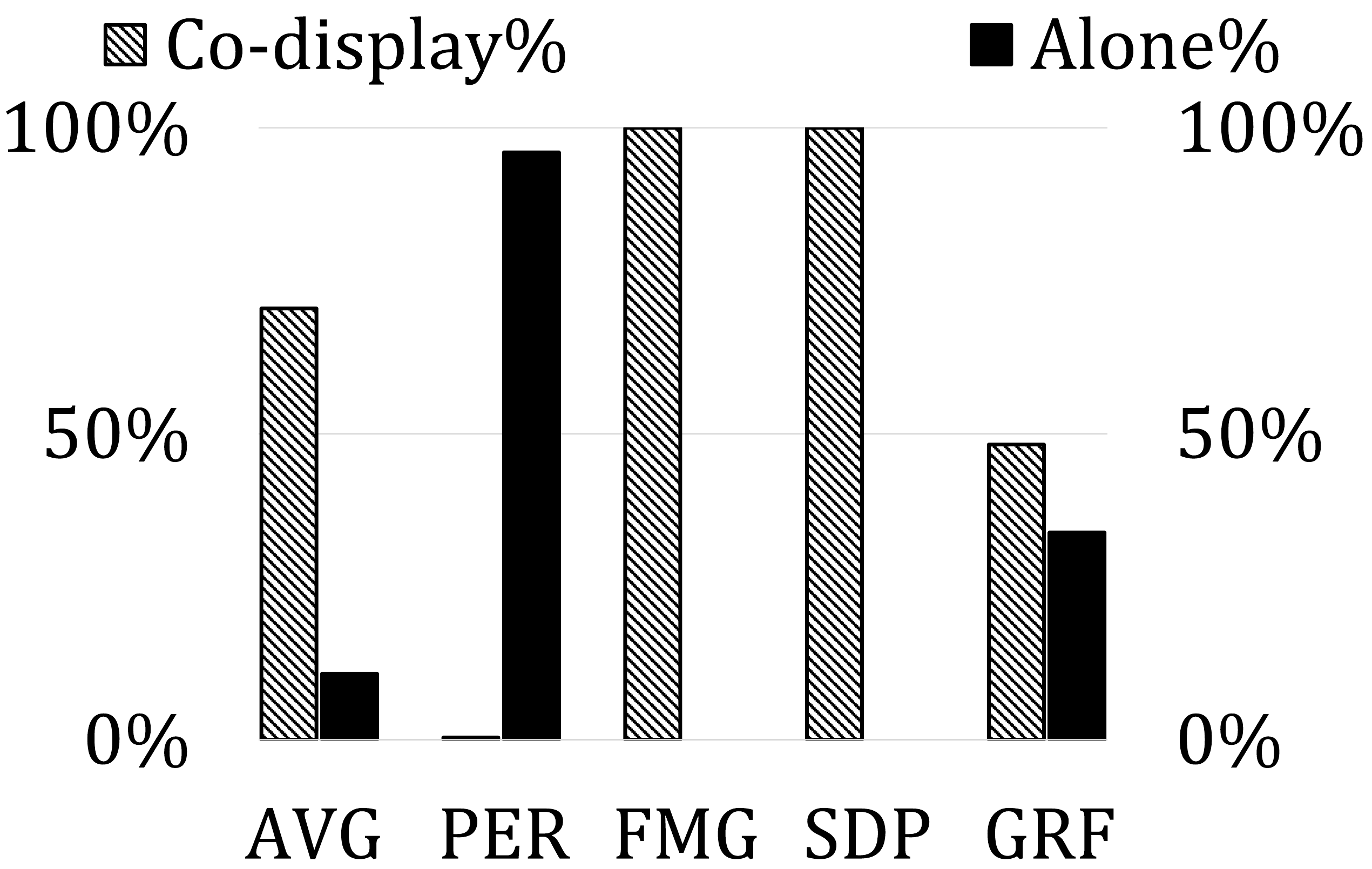}
		\label{fig:ep_share_alone} }
	\subfigure[][\centering Share rate and alone rate (Yelp).] {\
		\centering \includegraphics[width = 0.44 \columnwidth]{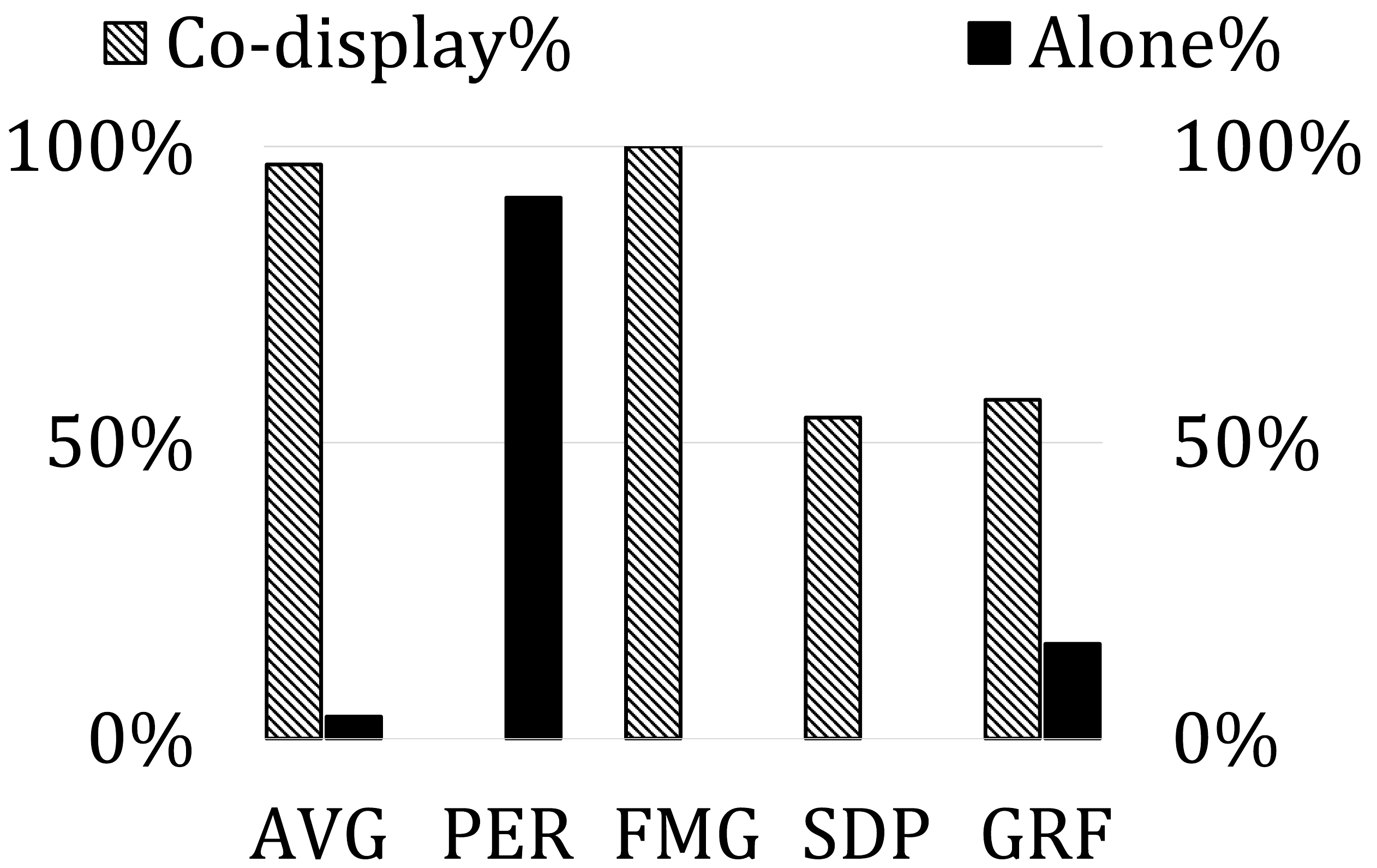}
		\label{fig:yelp_share_alone} }
	
	\subfigure[][\centering Regret ratio (Timik).] {\
		\centering \includegraphics[width = 0.44 \columnwidth]{fig/subgroup_metrics/timik_regret.pdf}
		\label{fig:timik_regret} } 
	\subfigure[][\centering Regret ratio (Epinions).] {\
		\centering \includegraphics[width = 0.44 \columnwidth]{fig/subgroup_metrics/ep_regret.pdf}
		\label{fig:ep_regret} }
	\subfigure[][\centering Regret ratio (Yelp).] {\
		\centering \includegraphics[width = 0.44 \columnwidth]{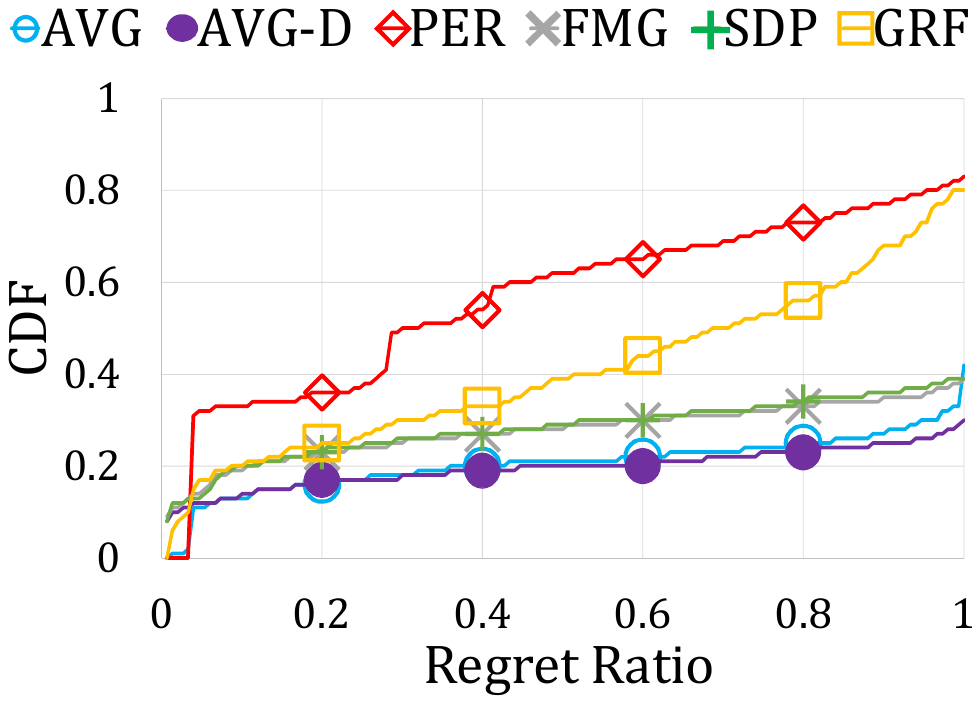}
		\label{fig:yelp_regret} } 
	\caption{Comparisons on subgroup metrics.}
	\label{exp:group_all}
\end{figure*}
}

\subsection{Comparisons on Subgroup Metrics}
\label{subsec:group_formation_exp}

\opt{short}{ 
In the following, we analyze the subgroups in the \configuration s returned from all algorithms in terms of subgroup-related metrics. Figures \ref{fig:timik_inter_intra_density} and \ref{fig:ep_inter_intra_density} compare the ratios of Inter-/Intra-subgroup edges averaged across all slots in the \configuration s in the Timik and Epinions datasets, respectively. It also shows the average network density among the partitioned subgroups normalized by the original density of the input social network. The results from all datasets indicate that the majority of preserved edges by \algo\ are within the same subgroups (large Intra\%). FMG achieves 100\% in Intra\%, 0\% in Inter\%, and 1 in normalized density because it consistently views the whole network as a large subgroup. In contrast, PER has a high Inter\% as it separates most users into independent subgroups to display their favorite items. There exist a small subset of widely liked or adopted items in Epinions that appear as most users' favorite (hence the small nonzero Intra\% of PER), while famous VR locations (e.g., transportation hubs in Timik) are inclined to be associated with high preference utilities by the exploited recommendation learning models as they generate a lot of check-ins among all users. Therefore, they have a higher chance to be co-displayed by PER. Among all methods, \algo\ achieves the largest normalized density as \stagethree\ carefully considers the \weight s to partition the network into dense communities. 

\revise{Figures \ref{fig:timik_regret} and \ref{fig:ep_regret} report the Cumulative Distribution Function (CDF) of \textit{regret ratios} of all algorithms in the Timik and Epinions datasets. The regret ratio reg$(u)$ \cite{KL14KDD} is a game-theory based measurement for the satisfaction of individual users and the overall \textit{fairness} of the solution. For each user $u$, the regret ratio reg$(u)$, and its converse, \textit{happiness} ratio $\text{hap}(u)$, are defined as follows.
\vskip -0.15in
$$\text{reg}(u) \equiv 1 - \text{hap}(u); \quad \text{hap}(u) \equiv \frac{\sum\limits_{c \in \mathbf{A}^{\ast}(u,\cdot)} \text{w}_{\textbf{A}}(u,c)}{\max\limits_{\mathcal{C}_u} \sum\limits_{c \in \mathcal{C}_u} \Bar{\text{w}}_{\textbf{A}}(u,c)}$$
\noindent where the numerator in $\text{hap}(u)$ is the achieved \scenario\ utility, the denominator with $\Bar{\text{w}}_{\textbf{A}}(u,c) = (1-\lambda)\cdot\text{p}(u,c) + \lambda\cdot\sum_{v \in V} \tau(u,v,c)$ is an \textit{upper bound} of possible \scenario\ utility when all users view $c$ together with user $u$, and $\mathcal{C}_u$ is a $k$-itemset, corresponding to a very optimistic scenario favoring $u$ the most. Note that the second term of $\Bar{\text{w}}_{\textbf{A}}(u,c)$ is different from $\text{w}_{\textbf{A}}(u,c)$ in Definition \ref{def:savg_utility}. In other words, the upper bound is the \scenario\ utility of $u$ if she dictates the whole \configuration\ selfishly. Therefore, a high $\text{hap}(u)$ (equivalently, a low reg$(u)$) implies that user $u$ is relatively satisfied with the \configuration, and \textit{fairness} among the users can be observed by inspecting the distribution of regret ratio in the final configurations.}

\algor\ and \algod\ consistently have the lowest regret ratios. Among the other approaches, PER incurs the highest regret ratios for users in all datasets, since it does not foster social interactions in shared views. Consistent with the performances on \scenario\ utility, FMG and SDP outperform PER and GRF in the Timik dataset, but their performances are comparable in Epinions because the sparser social relations in the review network generates lower social utility. Interestingly, in Timik, some users in GRF are highly satisfied (as the CDF of GRF matches well with that of \algor\ and \algod\ from the beginning) but some others have very high regret ratios (as the CDF dramatically rises near the end). This indicates that a portion of the users in GRF may actually have their preferences sacrificed, i.e., they are forced to share views on uninterested items with other subgroup members. In contrast, FMG and SDP show flatter CDFs, implying the user preferences are more balanced. However, users in FMG and SDP consistently have regret ratios over 20\%, while the regret ratios seldom exceed 20\% in \algor/\algod; this is because the randomly chosen pivot parameters (in \algo) and the deterministically optimized one (in \algod) can effectively form dense subgroups with similar item preferences. More experimental results on subgroup metrics can be found in Section 6.5 in the full version \cite{Online}.}

\opt{full}{ 
In the following, we analyze the subgroups in the \configuration s returned from all algorithms\footnote{Some of the results of \algod\ are omitted as they are similar to \algor.} in terms of subgroup-related metrics. Figures \ref{fig:timik_inter_intra_density}, \ref{fig:ep_inter_intra_density}, and \ref{fig:yelp_inter_intra_density} compare the ratios of Inter-/Intra-subgroup edges averaged across all slots in the \configuration s in the Timik, Yelp, and Epinions datasets, respectively. It also shows the average network density among the partitioned subgroups normalized by the original density of the input social network. The results from all datasets indicate that the majority of preserved edges by \algo\ are within the same subgroups (large Intra\%). FMG achieves 100\% in Intra\%, 0\% in Inter\%, and 1 in normalized density because it consistently views the whole network as a large subgroup. In contrast, PER has a high Inter\% as it separates most users into independent subgroups to display their favorite items. This phenomenon is more obvious in Yelp (with a 100\% Inter\%, i.e., \textit{every} user is left alone as a tiny subgroup) than in Timik (about 30\% of the social edges are Intra-subgroup edges) or Epinions (a nonzero Intra\%). This is because Yelp is a product-review application, where the POIs (restaurants, stores, service centers, etc.) are highly diversified, making it more difficult for different users to have aligned preferences, e.g., user A has her $k$-th favorite POI (indicated by check-in records) identical to user B's $k$-th favorite POI, such that PER happens to co-display the POI to them. On the other hand, there exist a small subset of widely liked or adopted items in Epinions that appear as most users' favorite (hence the small nonzero Intra\% of PER), while famous VR locations (e.g., transportation hubs in Timik) are inclined to be associated with high preference utilities by the exploited recommendation learning models as they generate a lot of check-ins among all users. Therefore, they have a higher chance to be co-displayed by PER.

Among all methods, \algo\ achieves the largest normalized density as \stagethree\ carefully considers the \weight s to partition the network into dense communities. The normalized density achieved by \algo\ in Yelp is the lowest among the three datasets due to the aforementioned issue of diversified interests but is still higher than all other algorithms. As a result, \algo\ has the most abundant social connections among subgroup members, which usually trigger enthusiastic discussions and purchases. It is worth noting that VR users generally interact with more strangers during the trip, whereas users in traditional LBSN mainly interact with friends in their spatial proximity. Therefore, the local community structures in Timik are less apparent than those of Yelp. As such, despite adopting different partitioning criteria, the average densities of the subgroups retrieved by SDP and GRF do not differ much in Timik.

Figures \ref{fig:timik_share_alone}, \ref{fig:yelp_share_alone}, and \ref{fig:ep_share_alone} illustrate the Co-display\% and Alone\% among the returned \configuration s. As shown, \algo\ has a high Co-display\% near $1.0$ (implying almost all pairs of friends are sharing the views of common items) and a near-zero Alone\% (showing that almost no users are left alone in the configuration) in all datasets. Note that the Co-display\% is based on friend pairs and the Alone\% is calculated based on all users, i.e., they are not complementary statistics (thus do not sum up to 100\%). 
Although FMG achieves 100\% in Co-display\% and 0\% in Alone\%, it sacrifices the preferences of group members by forming huge subgroups for co-display. On the other hand, both \algo\ and GRF are able to maintain high values of Co-display\% while taking into account the personal preferences. In all datasets, GRF leaves a considerably high portion of users alone because it forms the subgroups according to item preferences. Therefore, some users with unique profiles of interests are more inclined to be left alone. 
The only other method with a higher Alone\% than GRF is PER, which does not facilitate shared views. 

Figures \ref{fig:timik_regret}, \ref{fig:yelp_regret}, and \ref{fig:ep_regret} report the CDF of regret ratios of all algorithms in the the Timik, Yelp, and Epinions datasets. The regret ratio reg$(u)$ \cite{KL14KDD} is a game-theory based measurement for satisfaction of individual users and the overall \textit{fairness} of the solution. For each user $u$, the regret ratio reg$(u)$, and its converse, \textit{happiness} ratio $\text{hap}(u)$, are defined as
$$\text{reg}(u) \equiv 1 - \text{hap}(u), \quad \text{hap}(u) \equiv \frac{\sum\limits_{c \in \mathbf{A}^{\ast}(u,\cdot)} \text{w}_{\textbf{A}}(u,c)}{\max\limits_{\mathcal{C}_u} \sum\limits_{c \in \mathcal{C}_u} \Bar{\text{w}}_{\textbf{A}}(u,c)}$$
\noindent where the numerator in $\text{hap}(u)$ is the achieved \scenario\ utility, the denominator with $\Bar{\text{w}}_{\textbf{A}}(u,c) = (1-\lambda)\cdot\text{p}(u,c) + \lambda\cdot\sum_{v \in V} \tau(u,v,c)$ is an \textit{upper bound} of possible \scenario\ utility when all users view $c$ together with user $u$, and $\mathcal{C}_u$ is a $k$-itemset, corresponding to a very optimistic scenario favoring $u$ the most. Note that the second term of $\Bar{\text{w}}_{\textbf{A}}(u,c)$ is different from $\text{w}_{\textbf{A}}(u,c)$ in Definition \ref{def:savg_utility}. In other words, the upper bound is the \scenario\ utility of $u$ if she dictates the whole \configuration\ selfishly. Therefore, a high $\text{hap}(u)$ (equivalently, a low reg$(u)$) implies that user $u$ is relatively satisfied with the \configuration, and \textit{fairness} among the users can be observed by inspecting the distribution of regret ratio in the final configurations.
\algor\ and \algod\ consistently have the lowest regret ratios. Among the other approaches, PER incurs the highest regret ratios for users in all datasets, since it does not foster social interactions in shared views. Consistent with the performances on \scenario\ utility, FMG and SDP outperform PER and GRF in the Timik and Yelp datasets, but their performances are comparable in Epinions because the sparser social relations in the review network generates lower social utility. Interestingly, in Timik and Yelp, some users in GRF are highly satisfied (as the CDF of GRF matches well with that of \algor\ and \algod\ from the beginning) but some others have very high regret ratios (as the CDF dramatically rises near the end). This indicates that a portion of the users in GRF may actually have their preferences sacrificed, i.e., they are forced to share views on uninterested items with other subgroup members. In contrast, FMG and SDP show flatter CDFs, implying the user preferences are more balanced. However, users in FMG and SDP consistently have regret ratios over 20\%, while the regret ratios seldom exceed 20\% in \algor/\algod; this is because the randomly chosen pivot parameters (in \algo) and the deterministically optimized one (in \algod) can effectively form dense subgroups with similar item preferences. }

\opt{full}{
\subsection{A Case Study on Partitioning Subgroups} \label{subsec:case_study}

\begin{figure*}[tp]
  \centering \includegraphics[width = 0.95\textwidth]{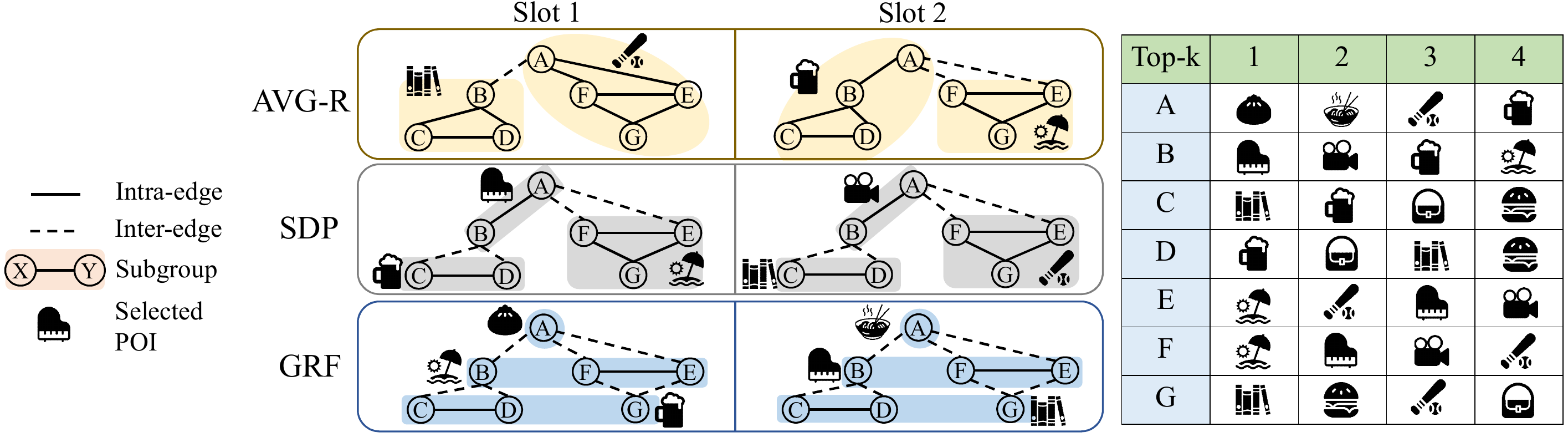}
  \caption{Illustration of subgroup partitioning approaches.}
  \label{fig:case_study}
\end{figure*}

Figure \ref{fig:case_study} depicts a 2-hop ego network (the subnetwork consisting of all 2-hop friends) of a user A in Yelp and two slots with the highest regret. Note that user A is studied here because she has a unique profile of preference that does not resemble any of her friends (B, E, and F), as shown in the table listing the top-4 POIs of the preference utility for each user. The shaded areas illustrate the partitioned subgroups at the specific slot, while solid and dashed lines depict intra-subgroup and inter-subgroup social edges. The icon beside each subgroup represents the selected POI (displayed item). 

As indicated in the table, A has a unique profile of preference that does not resemble any of her friends B, E and F. Therefore, A and her friends are partitioned into distinct subgroups by GRF. On the other hand, \algo\ assigns a baseball field to $\{A,E,F,G\}$ at slot 1, and a bar to $\{A,B,C,D\}$ at slot 2. Therefore, \algo\ is able to capture different interests of A and select proper subgroups of friends for them. However, A is not co-displayed the library with $\{B,C,D\}$ at slot 1 and the beach with $\{E,F,G\}$ at slot 2 because none of the two POIs is in A's favor. On the other hand, SDP partitions the ego-network into three cliques based on the network topology and then assigns a music place and a movie theater to $\{A,B\}$, favoring B's but sacrificing A's interests. Moreover, A is inclined to be dissatisfied for not joining $\{E,F,G\}$ to the baseball field (the third preferred POI) at slot 2. GRF gathers the users with similar interests and leaves A alone, since none of her friends is a big fan of the two restaurants. Therefore, the regret ratios for A are 35.2\% (SDP), 41.2\% (GRF), and 19.6\% (\algo) accordingly. Indeed, this case study manifests that flexible partitions of subgroups are crucial the situations of friends not necessarily sharing similar interests. 
}

\opt{full}{
\subsection{Sensitivity Test of \algod} \label{subsec:diff_r}
\begin{figure}[tp]
	\centering
	\subfigure[][\centering Total \scenario\ utility vs. diff. $r$.] {\
		\centering \includegraphics[width = 0.44 \columnwidth]{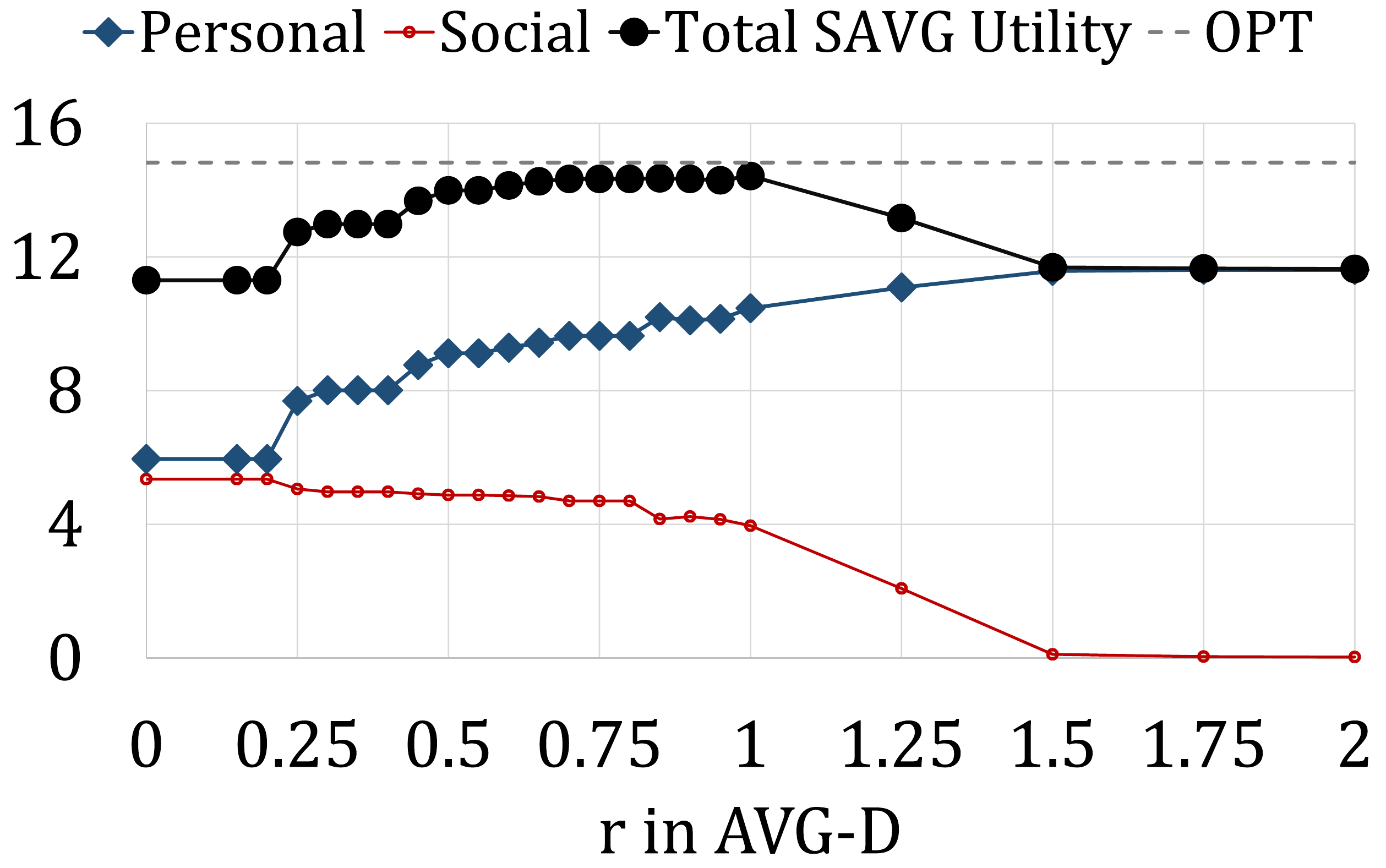}
		\label{fig:diff_r_obj} } 
	\subfigure[][\centering Execution time vs. diff. $r$.] {\
		\centering \includegraphics[width = 0.44 \columnwidth]{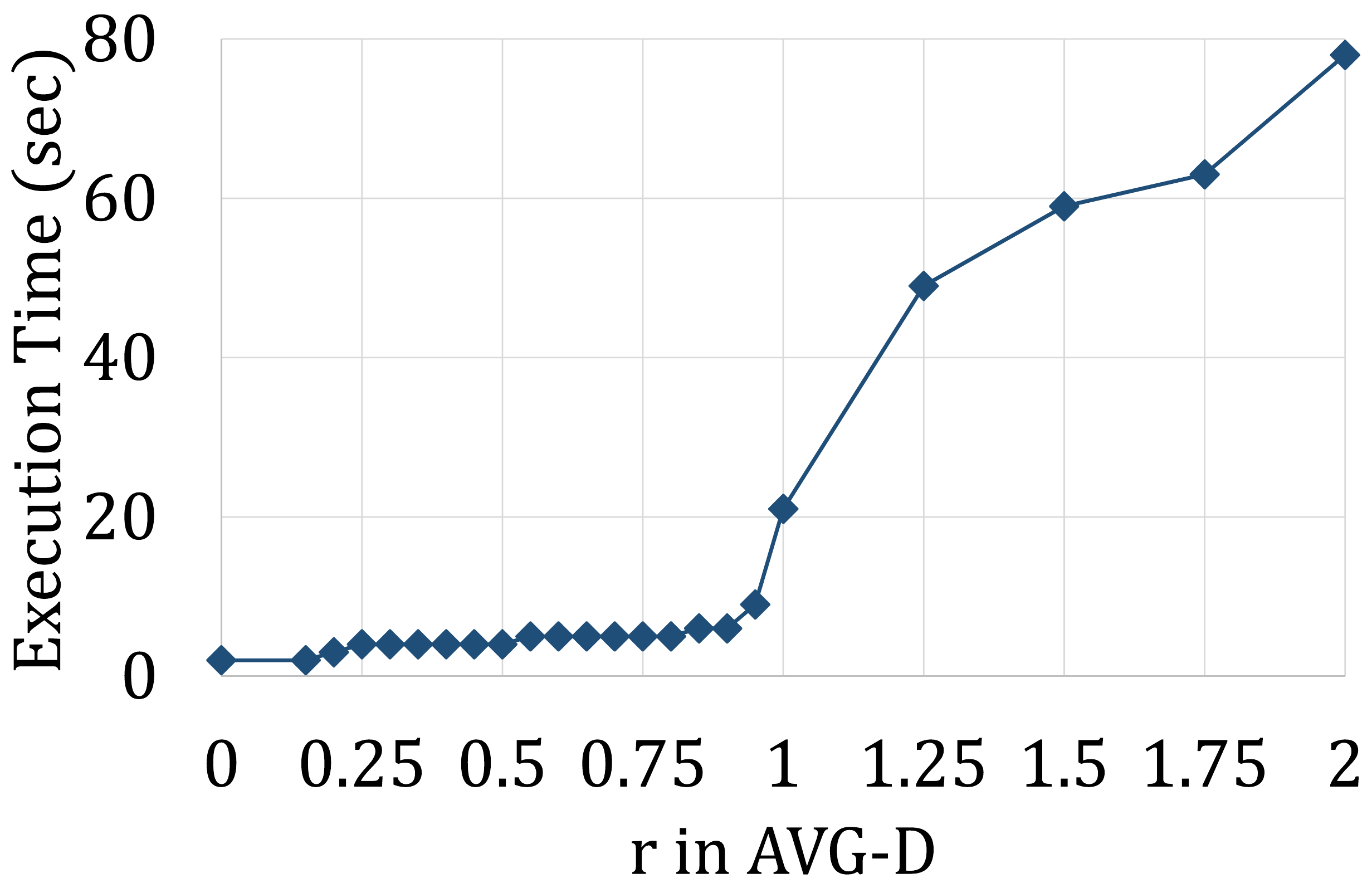}
		\label{fig:diff_r_time} } 
	\subfigure[][\centering Normalized density vs. diff. $r$.] {\
		\centering \includegraphics[width = 0.44 \columnwidth]{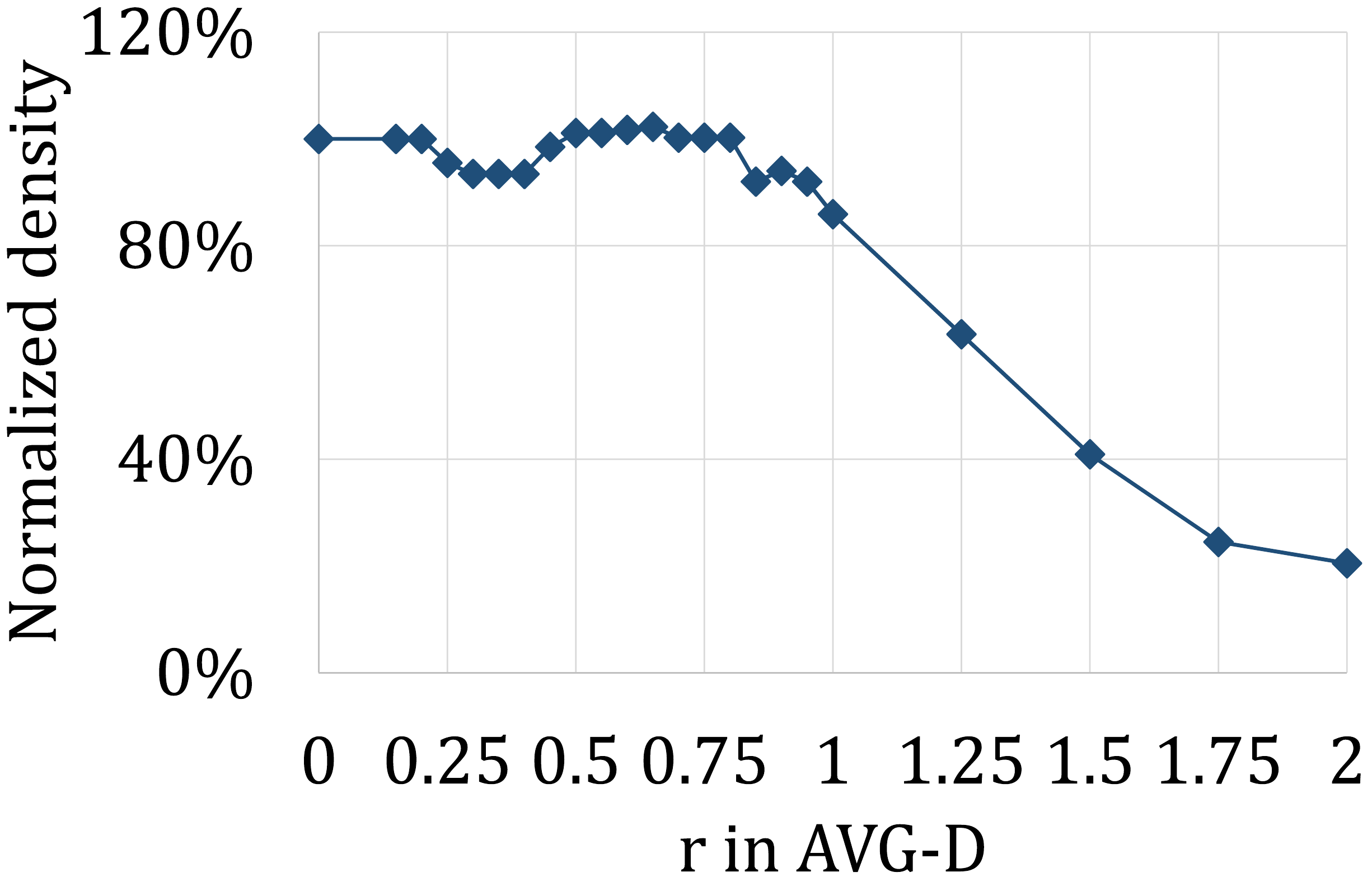}
		\label{fig:diff_r_density} } 
	\subfigure[][\centering Inter/Intra\% vs. diff. $r$.] {\
		\centering \includegraphics[width = 0.44 \columnwidth]{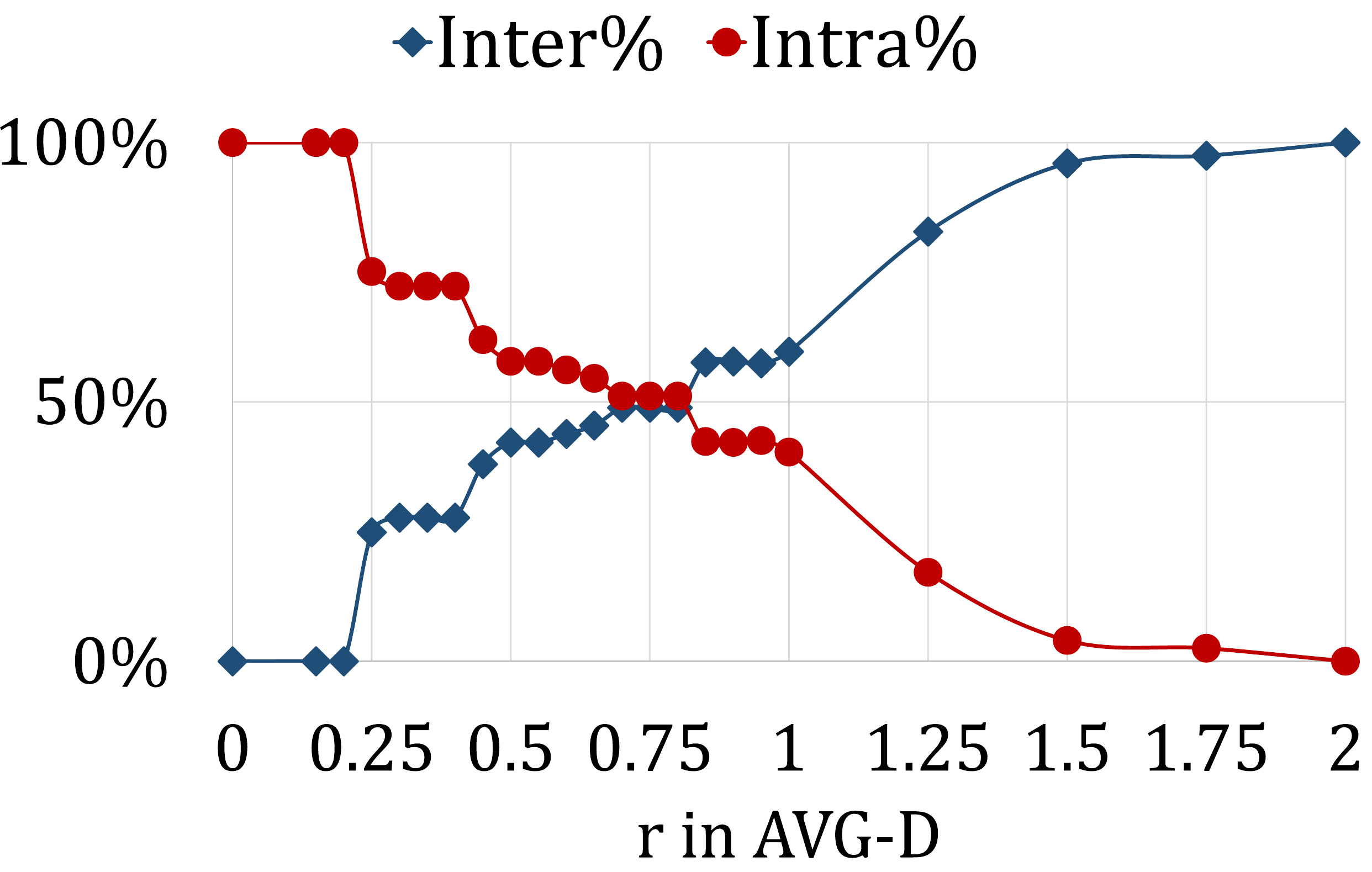}
		\label{fig:diff_r_inter_intra} } 

	\caption{Sensitivity test on $r$ in \algod.}
	\label{fig:diff_r}
\end{figure}
In the following, we compare \algod\ with the optimal solution in the Timik dataset. Figure \ref{fig:diff_r_obj} shows that with $r$ values from 0.7 to 1.0, \algod\ finds nearly-optimal solutions as \stagethreefull\ with $r$ balances between the current increment in \scenario\ utility and the potential \scenario\ utility in the future. This phenomenon is also observed in other datasets. However, setting $r = 0.25$ is still promising as it achieves 86.1\% of the optimal objective, which is consistent with the theoretical guarantee. Figure \ref{fig:diff_r_obj} also manifests that \algod\ with small values of $r$ tends to resemble the group approach in Section \ref{sec:intro} (i.e., displaying the same items to every user). For instance, \algod\ finds a large subgroup of all users with $r \leq 0.2$. \algod\ gradually becomes more alike to the personalized approach as $r$ grows (e.g., the social utility is close to 0 for $r\geq 1.5$, implying almost every user is displayed her own favorite items). This is actually consistent with the design of \algod, since 1) a small $r$ essentially ignores the potential utility gain in the future from allocating other items than the current \roundingitem, leading to a greedy behavior of always including every eligible user into the targeted subgroup in \stagethree. Consequently, running every iteration with $r$ close to 0 finds a large subgroup of all users in every slot; 2) a large $r$ prioritizes only the future gain and thereby tends to select very few users into the targeted subgroup. This reluctant behavior in the extreme (consider $r = \infty$) thus leads to subgroups consisting of only one user, essentially simulating the personalized approach. In summary, this again manifests that \algod\ can strike a good balance between the personalized and group approaches.

Figure \ref{fig:diff_r_time} depicts the execution time of \algod\ with different values of $r$. Intuitively, when \algod\ chooses fewer users in the current targeted subgroup, more iterations are required to construct a complete \configuration. For $n$ users and $k$ slots, the total number of iterations is $k$ if every targeted subgroup includes all $n$ users but would increase to $kn$ if every iteration displays the \roundingitem\ to only one user. Therefore, \algod\ has a larger execution time with larger values of $r$. Figures \ref{fig:diff_r_density} and \ref{fig:diff_r_inter_intra} further show the average normalized density and Inter/Intra\% among the subgroups in the solutions of \algod\ with different values of $r$. Consistent with the above discussion, the behavior of \algod\ essentially spans a spectrum from the personalized to the group approach.
}

\opt{full}{
\begin{figure}[tp]
	\centering
	\subfigure[][\centering Total violation vs. subgroup size constraint (Timik, $n=25$).] {\
		\centering \includegraphics[width = 0.44 \columnwidth] {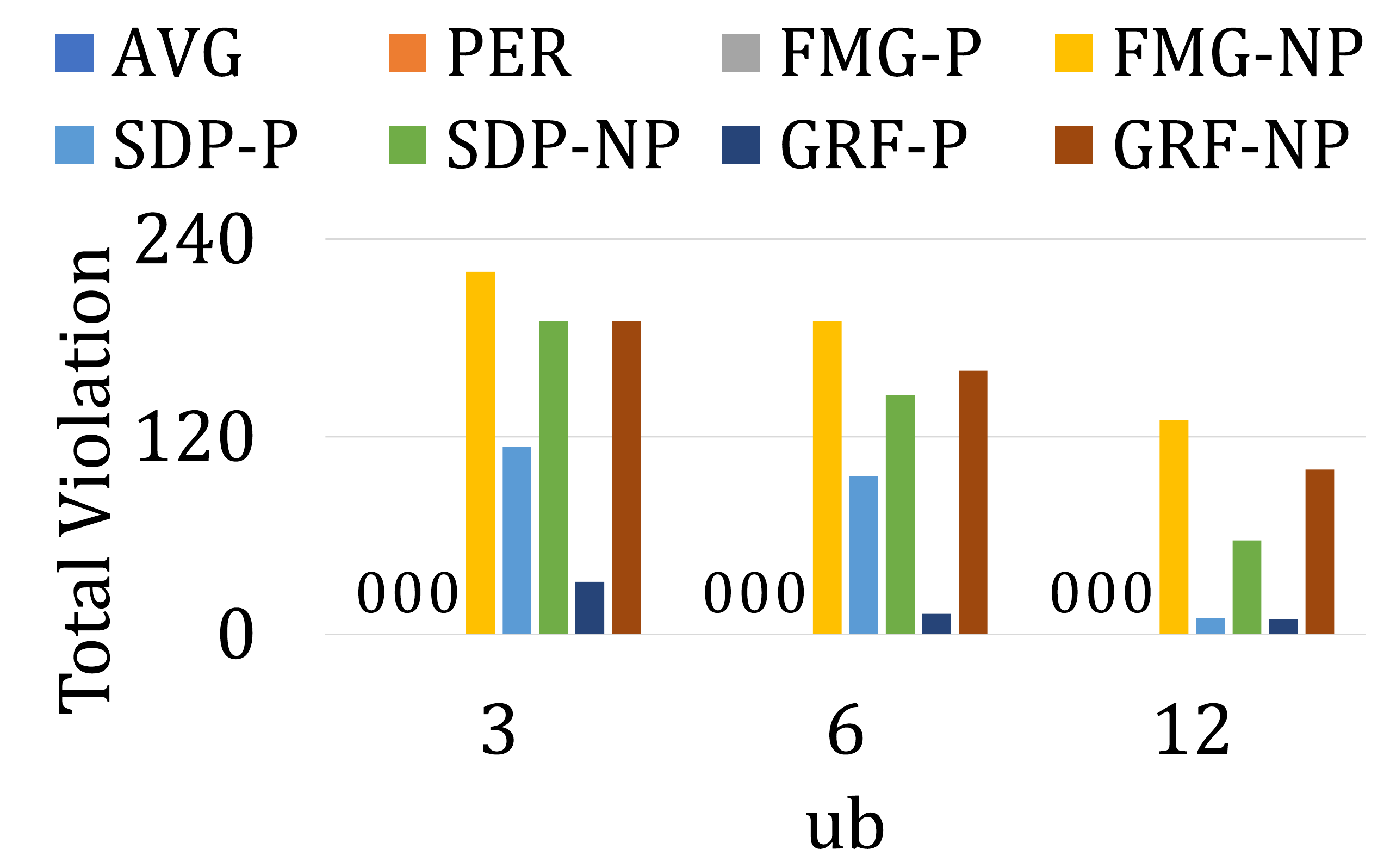}
		\label{fig:violation_M_Timik25} } 
	\subfigure[][\centering Total violation vs. subgroup size constraint (Epinions, $n=15$).] {\
		\centering \includegraphics[width = 0.44 \columnwidth] {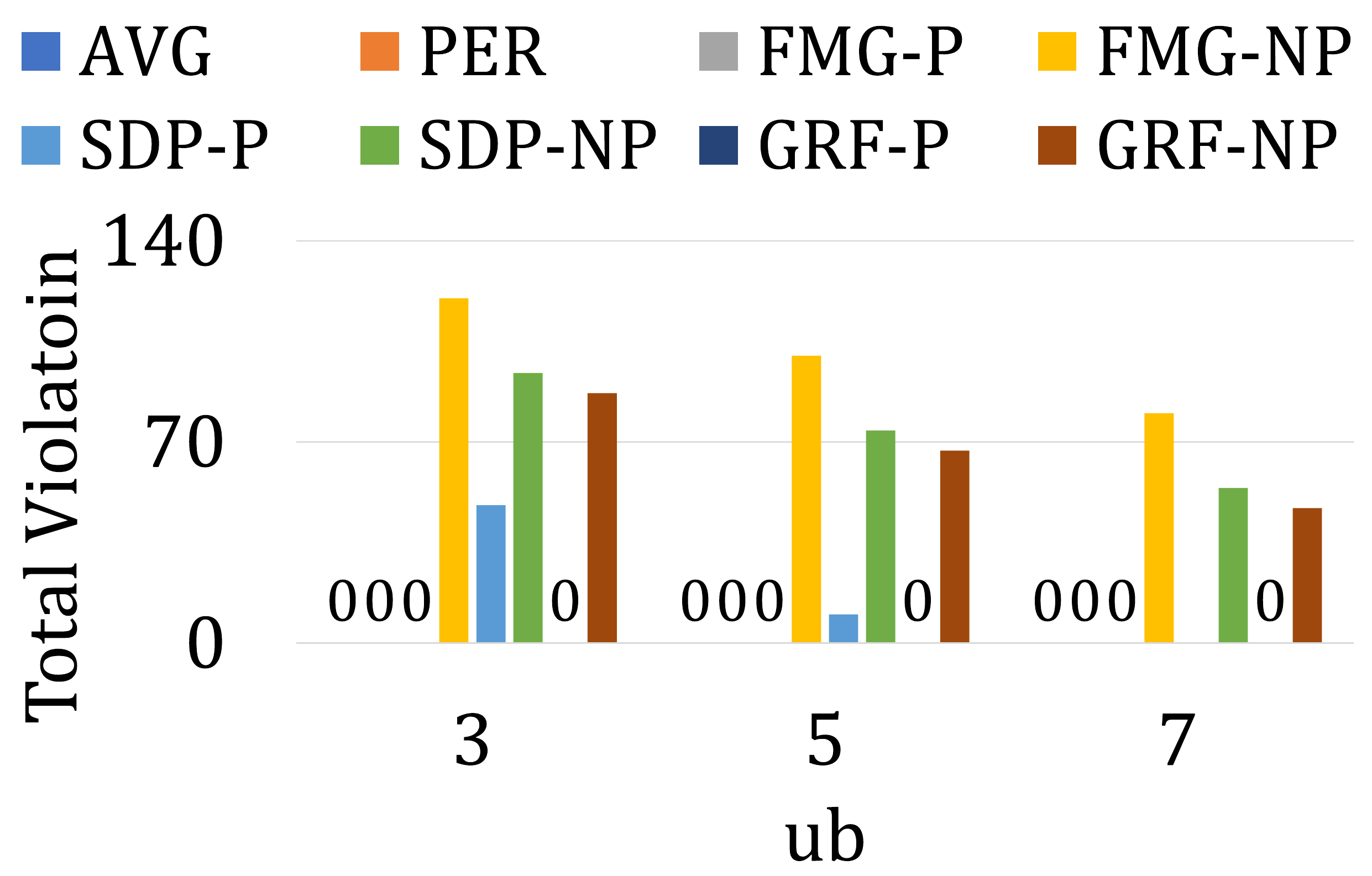}
		\label{fig:violation_M_Ep15} }
	\caption{Comparisons on \probtwo.}
	\label{exp:subgroup_constraint}
\end{figure}

\begin{figure}[tp]
    \centering
    \includegraphics[width = 0.9 \columnwidth] {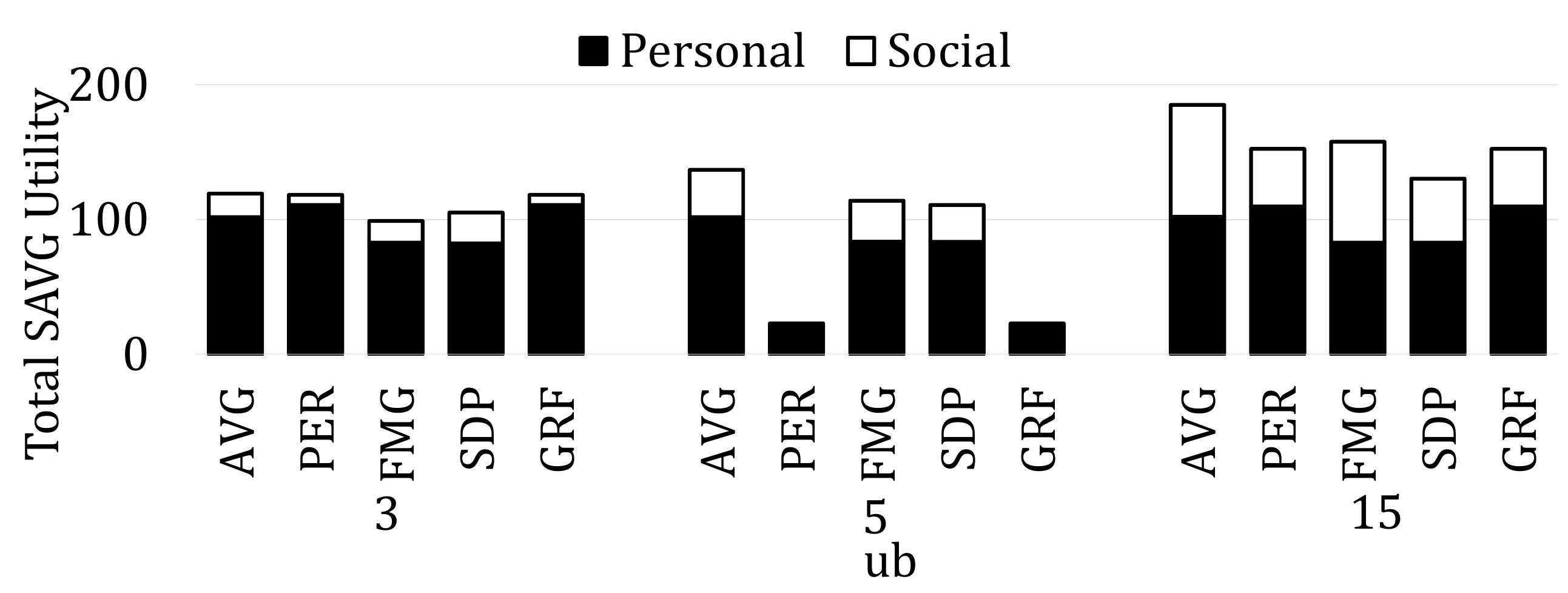}
    \caption{Total \scenario\ utility vs. subgroup size constraint (Timik, $n=15$).}
    \label{fig:obj_M_Timik15}
\end{figure}

\begin{figure}[tp]
    \centering
    \includegraphics[width = 0.9 \columnwidth] {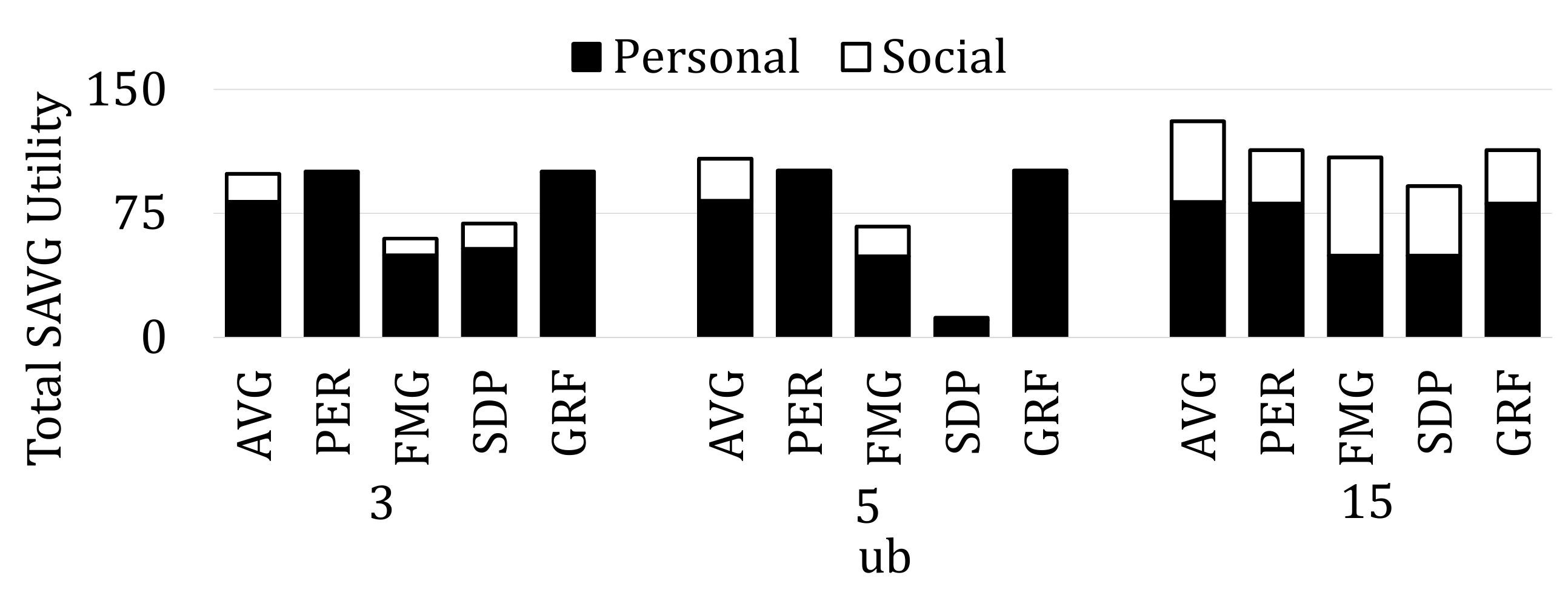}
    \caption{Total \scenario\ utility vs. subgroup size constraint (Epinions, $n=15$).}
    \label{fig:obj_M_Ep15}
\end{figure}

\subsection{Experimental Results for \probtwo} \label{subsec:subgroup_constraint}
In the following, we investigate the behavior of all methods in \probtwo, where the default discount factor is 0.5, and the subgroup size constraint $M$ is varied. As none of the baseline algorithms considers the subgroup size constraint, for \probtwo\ with subgroup size constraint $M$, we first pre-partition the user set in to $\lceil \frac{N}{M} \rceil$ subgroups with balanced sizes. The results of all baseline algorithms on the partitioned subproblems are then aggregated into the final returned \configuration\ for them. Figures \ref{fig:violation_M_Timik25} and \ref{fig:violation_M_Ep15} show the total violation of subgroup size constraint (in total number of users) of all methods aggregated over all display slots in a total of 10 sampled instances in Timik (with $n=25$) and Epinions (with $n=15$), respectively. The suffix ``-P'' in the method name represents prepartitioning of the user set into $\lceil \frac{N}{M} \rceil$ subgroups with balanced sizes, and the suffix ``-NP'' indicates no prepartitioning is applied. The results manifest that the pre-partitioning helps decrease the total violation of the subgroup size constraint for all baseline methods except PER because the methods do not take the cap on subgroup sizes into account. \algo\ never violates the constraint since the greedy cutoff in \stagethree\ prevents large targeted subgroups. PER also achieves 100\% feasibility since it does not consider social interactions. Among the baseline methods that consider social interactions, GRF incurs the lowest violation since only it partitions the user set into subgroups based on preference. However, it still has a low feasibility if not used with the pre-partition technique.

Figures \ref{fig:obj_M_Timik15} and \ref{fig:obj_M_Ep15} compares the total \scenario\ utility achieved by all methods \textit{with pre-partitioning} in Timik and Epinions, respectively, with $n=15$ and the subgroup size constraint varies from 3 to 15, where infeasible solutions achieves a total \scenario\ utility of 0. Note that all the baseline methods, except for PER, could still violate the subgroup size constraint even incorporated with the prepartitioning technique. This is because they from time to time display the same item to different pre-partitioned subgroups (or, in the case of GRF, subgroups of them) at the same display slot. \algo\ consistently outperforms all other methods except for the cases where $M$ is very small (3 in Epinions). GRF achieves high total \scenario\ utility in Epinions as it achieves 100\% feasibility and also selects proper distinct and preferred items for separated small subgroups of users in the sparse Epinions network. However, GRF has a surprisingly low feasiblity in Timik (around 20\%), leading to a low total \scenario\ utility. This is because GRF displays the commonly-preferred popular VR locations in Timik to almost all pre-partitioned subgroups at the first few display slots due to its greedy (and thus deterministic) algorithmic behavior in selecting display items for subgroups. Therefore, while its accumulative violation is the lowest (among all methods that possibly violate the constraint), it often finds slightly oversized subgroups at slots 1 and 2, and thereby has a low feasibility.
}

\begin{figure}[tp]
	\centering
	\subfigure[][Percentage of diff. $\lambda$.] {\
		\centering \includegraphics[width = 0.44 \columnwidth] {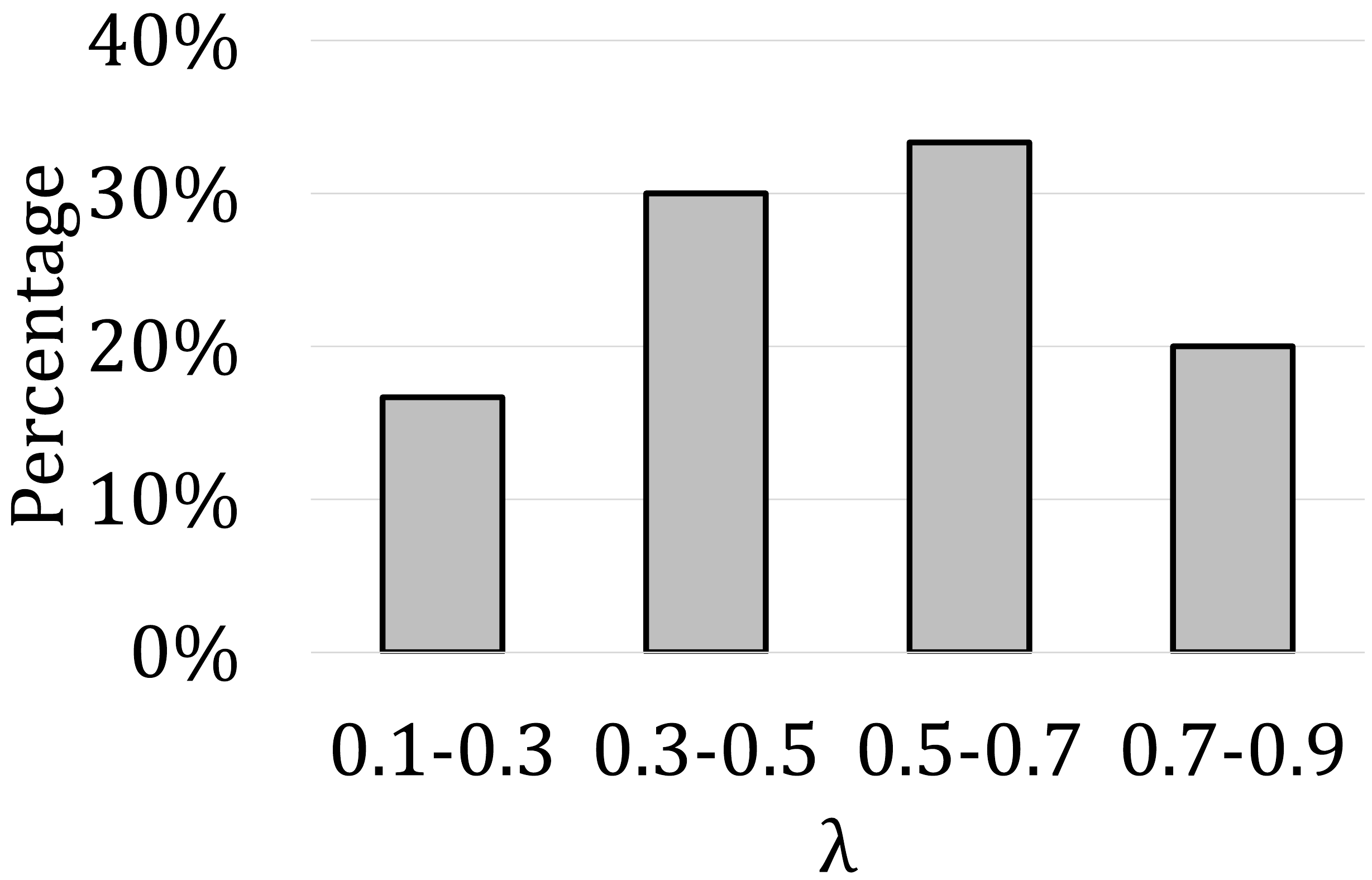}
		\label{fig:user_lambda} }
	\subfigure[][Total \scenario\ utility and user satisfaction.] {\
		\centering \includegraphics[width = 0.44 \columnwidth] {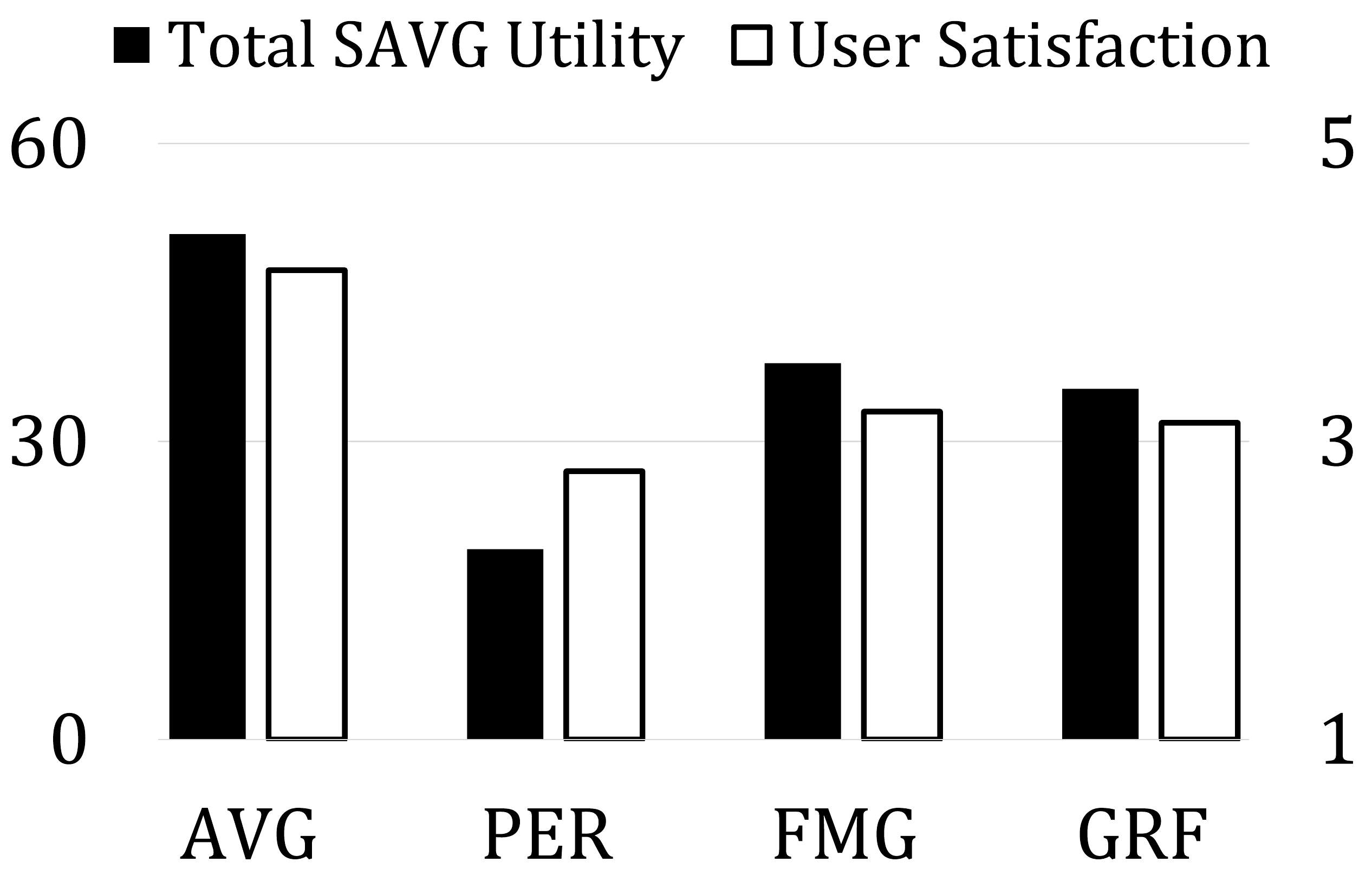}
		\label{fig:user_obj} }
	\subfigure[][Inter/Intra\% and subgroup density.] {\
		\centering \includegraphics[width = 0.44 \columnwidth] {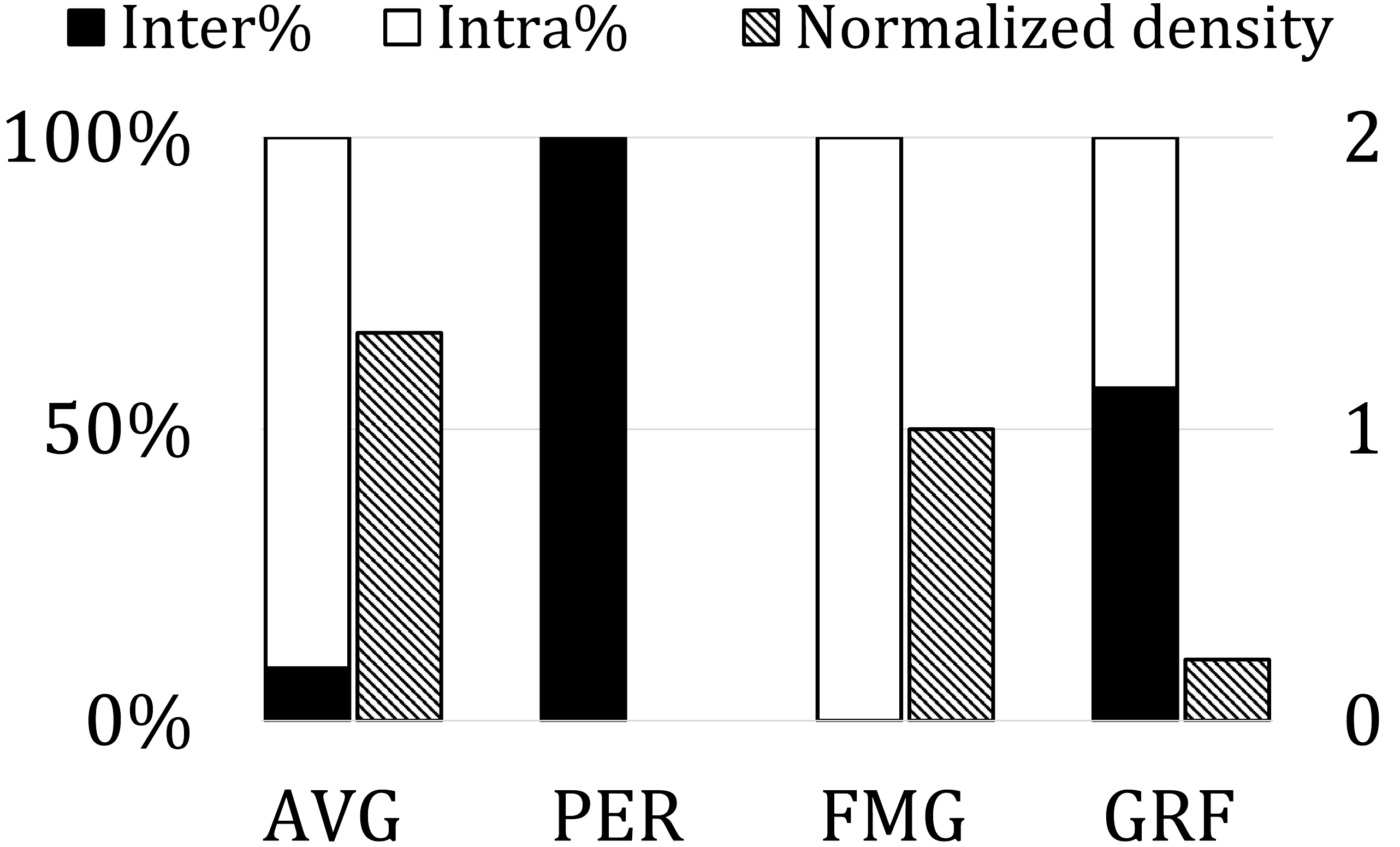}
		\label{fig:user_density} }
	\subfigure[][Co-display rate and alone rate.] {\
		\centering \includegraphics[width = 0.44 \columnwidth] {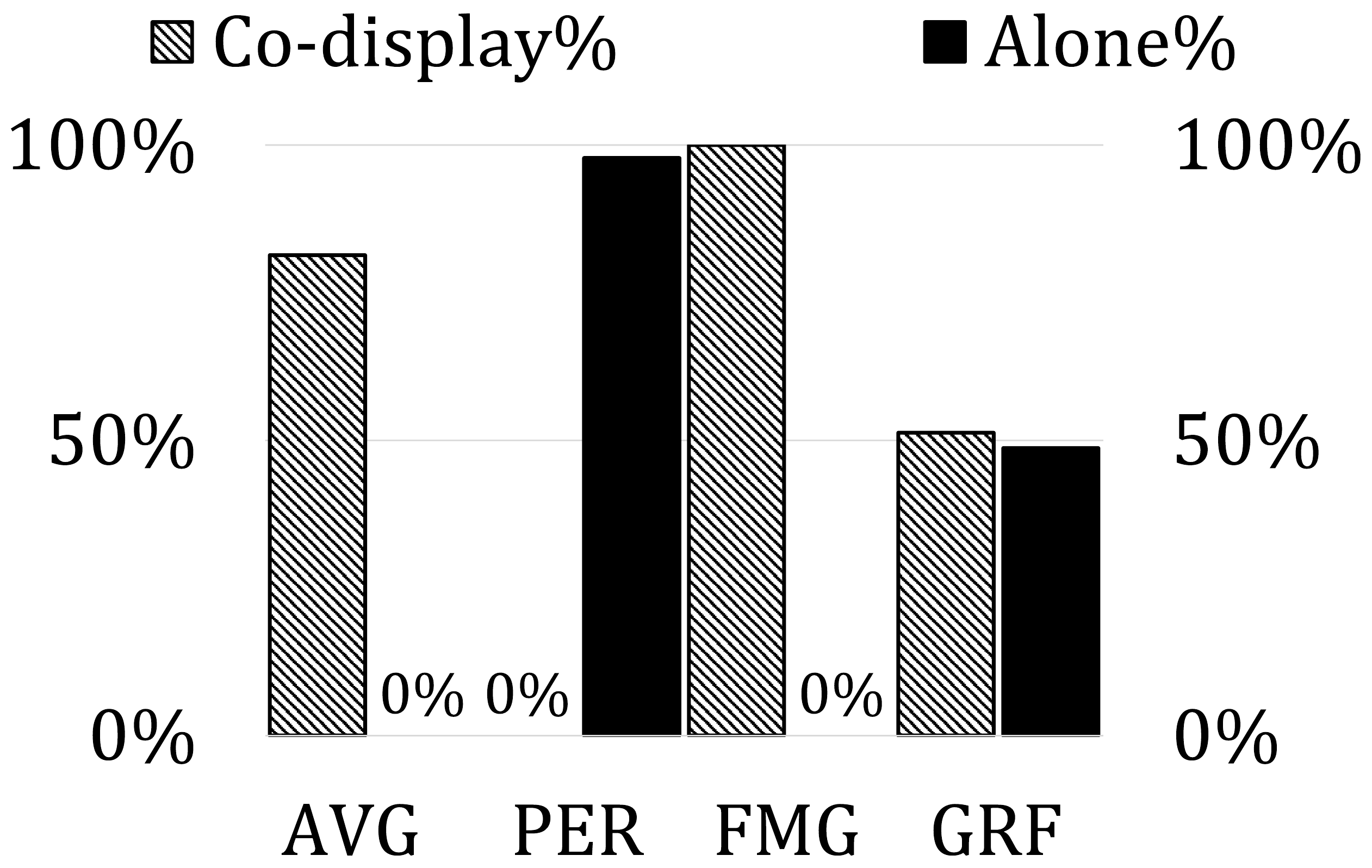}
		\label{fig:user_share_alone} }
	\caption{Comparisons on user study.}
	\label{exp:user_study}
\end{figure}

\subsection{User Study} \label{sec:user_study}

For the user study, we recruit 44 participants
to visit our VR store. Their social networks, preference utility, and $\lambda$ are pre-collected with questionnaires, which follows the setting of \cite{SBR15}, and the social utility is learned by PIERT \cite{YL18TOIS}. A Likert scale questionnaire \cite{likert} is used to find the preference utility of items, where users are allowed to discuss the products so that the social utility can be learned. 
Finally, they are asked to provide $\lambda$ in [0,1]. We investigate the following research question: 
After experiencing the VR store, are the participants satisfied with the \configuration s generated by \algo, PER, FMG, and GRF? User feedbacks are collected in Likert score \cite{likert} from 1 to 5 (very unsatisfactory, unsatisfactory, average, satisfactory, and very satisfactory). Each group of participants visits the VR stores twice via hTC VIVE with the items selected by each scheme in randomized order. 

Figure \ref{fig:user_lambda} reports that $\lambda$ values specified by the users range from 0.15 to 0.85 with the average as 0.53, indicating that both personal preferences and social interactions are essential in VR group shopping. 
Figure \ref{fig:user_obj} compares the total \scenario\ utility as well as the recorded user satisfaction of each method. \algo\ outperforms the baselines by at least 34.2\% and 29.6\% in terms of the average total \scenario\ utility and average user satisfaction, respectively. The difference of \algo\ is statistically significant (p-value $\leq$ 0.019 $<$ 0.05). 
It is worth noting that the correlation between the \scenario\ utility and user satisfaction is high (Spearman correlation 0.835; Pearson correlation 0.814), which manifest that the \scenario\ utility is a good estimation of user satisfaction.

Figures \ref{fig:user_density} and \ref{fig:user_share_alone} report the subgroup metrics in the user study datasets. GRF, which separates users into subgroups according to preference similarities, returns a low normalized density (0.21), i.e., users in the same subgroup tends to be strangers. Compared with the results in large-scale datasets (Figure \ref{exp:group_all}), GRF performs worse here since the normalized density is more sensitive when the user set is relatively small. In contrast, AVG flexibly assigns proper items to different subgroups of friends such that the normalized density is greater than 1 and the alone rate is 0\%.

\section{Conclusion} \label{sec:conclusion}

To the best of our knowledge, there exists no prior work tackling flexible configurations under the envisaged scenario of VR group shopping. In this paper, we formulate the \prob\ problem to retrieve the optimal \configuration\ that jointly maximizes the preference and the social utility, \revise{and prove \prob\ is $\mathsf{NP}$-hard to approximate within $\frac{32}{31} - \epsilon$}. We introduce an IP model and design a novel 4-approximation algorithm, \algo, and its deterministic version, \algod, by exploring the idea of \stagethreefull\ (\stagethree) that forms subgroups of friends to display them the same items. 
Experimental results on real VR datasets manifest that our algorithms outperform baseline approaches by at least 30.1\% in terms of solution quality.

\bibliographystyle{abbrv}
\bibliography{bibliography/ref.bib}
\balance

\end{document}